\begin{document}
\title{Mappings of open quantum systems onto chain representations and Markovian embeddings}
\titlerunning{Chain mappings of open quantum systems}
\author{M.P. Woods\inst{1,}\inst{2}  \and R. Groux\inst{3} \and A.W. Chin\inst{1} \and S.F. Huelga\inst{1} \and and M.B. Plenio\inst{1,}\inst{2}
}                     
\institute{Institute f\"ur Theoretische Physik, Universitatet Ulm, D-89069 Ulm, Germany \and QOLS, Blackett Laboratory, Imperial College London, SW7 2BW, United Kingdom \and Lyc\'ee Polyvalent Rouvi\`ere, Rue Sainte Claire Deville. BP 1205, 83070 Toulon, France\\
\email{mischa.woods05@imperial.ac.uk,\, roland.groux@orange.fr,\, alex.chin@uni-ulm.de,\\ susana.huelga@uni-ulm.de,\, martin.plenio@uni-ulm.de}}

\date{Received: date / Accepted: date}
%

%
\maketitle
\begin{abstract}
We derive a sequence of measures whose corresponding Jacobi matrices
have special properties and a general mapping of an open quantum system
onto 1D semi infinite chains with only nearest neighbour interactions. Then we proceed
to use the sequence of measures and the properties of the Jacobi matrices to
derive an expression for the spectral density describing the open quantum system
when an increasing number of degrees of freedom in the environment have been embedded into the system. Finally, we derive
convergence theorems for these residual spectral densities.
\end{abstract}
\section{Introduction}
\subsection{Background}\label{Background}
Real quantum mechanical systems are never found in complete isolation, but 
invariably coupled to a macroscopically large number of "environmental" degrees 
of freedom, such as those provided by electromagnetic field modes, density 
fluctuations of the surrounding media (phonons) or ensembles of other quantum 
systems, like electronic or nuclear spins \cite{weiss,OQS,rh}. The fact that 
the environment is totally or partially inaccessible to experimental probing 
in these so called "open quantum systems" leads to the appearance of an effectively 
irreversible dynamics for the quantum system's observables, and mediate the 
fundamental processes of energy relaxation, phase decoherence and, possibly, 
the thermalization of the subsystem. 

Accurate numerical or analytical description of general open quantum systems dynamics
appears, prima facie, to be extremely difficult due to the large (often infinite) 
number of bath and system variables which need to be accounted for. When the 
environmental degrees of freedom are modeled as a {\em bath} of harmonic oscillators 
exact path integral solutions can be available but are rarely of practical use 
\cite{SB exp cut,weiss,OQS}. Hence assumptions such as weak system-environment coupling 
and vanishing correlation times of the environment, i.e. the Born-Markov approximation, 
are often invoked to obtain compact and efficiently solvable equations. These approaches 
suffer the drawback that their accuracy is hard to certify and that they become simply 
incorrect in many important situations. Indeed, our increasing ability to observe and 
control quantum systems on ever shorter time and length scales is constantly revealing 
new roles of noise and quantum coherence in important biological and chemical processes \cite{Engel,Mohseni,MartinSusana,Caruso,BioGen} and requires an accurate but efficient
description of the system-environment interaction that go well beyond the Born-Markov 
approximation \cite{BioGen,BioGen2} in order to understand the interaction of intrinsic 
quantum dynamics and environmental noise. In many biological, chemical and solid-state 
systems, deviations from strict Markovianity, which can be explicitly quantified 
\cite{Angel,BreuerNon,jens}, are significant and methods beyond standard perturbative 
expansions are required for their efficient description. A number of techniques have been 
developed to operate in this regime. Those include polaron approaches \cite{polaron}, the 
quasi-adiabatic path-integral (QUAPI) method \cite{quapi}, the hierarchical equation of 
motion approach \cite{aki} and extensions of the quantum state diffusion description to 
non-Markovian regimes \cite{strunz}.

Here we will focus on the exploration of the mathematics of an exact, analytical mapping 
of the standard model of a quantum system interacting with a continuum of harmonic oscillators 
to an equivalent model in which the system couples to one end of a chain nearest-neighbour 
coupled harmonic oscillators, as illustrated in (a) and (b) of Fig. (\ref{fig:myFig}). This mapping has permitted the formulation of an efficient algorithm
for the description of the system-environment coupling for arbitrary spectral densities of 
the environment fluctuations \cite{Alex,Javi}. This new mapping was originally intended 
just as a practical means of implementing t-DMRG which would avoid approximate determination 
of the chain representation using purely numerical, and often unstable, transforms \cite{Bulla}. 
It was quickly realised that the scope of the mapping is much broader. Indeed, the mapping 
itself provides an extremely intuitive and powerful way of analysing universal properties of 
open quantum systems, that is, independent of the numerical method used to simulate the dynamics 
\cite{Alex}. This conclusion was implied also in \cite{Bassano}, where the authors also developed 
a, in principle, rather different chain representation of a harmonic environment using an 
iterative propagator technique \cite{Bassano}.

Both of these theories establish chain representations as a novel and direct way of looking 
at how energy and correlations propagate into the environment in real time.  In the chain 
picture the interactions cause excitations to propagate away from the system, allowing a 
natural, causal understanding of Markovian and non-Markovian dissipation in terms of the 
properties of the chain's couplings and frequencies.  An intuitive account of the physics 
of chain representations, non-Markovian dynamics and irreversibility for the method in 
\cite{Javi,Alex} is given in \cite{bookchap}, and an interpretation of chain parameters 
in terms of time-correlations for the method in \cite{Bassano} can be found in \cite{irene new paper}.

\subsection{This communication}
The main goals of this paper can be split into two groups. Firstly, we aim at developing a general framework for mapping the environment of an open quantum system onto semi infinite 1D chain representations with nearest neighbour interactions where the system only couples to the first element in the chain. In these chain representations, there is a natural and systematic way to "embed" degrees of freedom of the environment into the system (by "embed", we mean to redefine what we call system and environment by including some of the environmental degrees of freedom in the system) (c) of figure (\ref{fig:myFig}). One can make a non-Markovian system-environment interaction more Markovian by embedding some degrees of freedom of the environment into the system, a technique already employed in certain situations in quantum optics \cite{pseudomodes} and that recently has been demonstrated in \cite{Bassano}. What remains unclear however, is to quantify how efficient such procedures can be as well as determining the best way of performing the embedding. We will show that those issues can be efficiently addressed with the formalism presented in this manuscript.

This general formalism allows the comparison between different chain mappings. Thus the second aim is to develop a method for understanding how Markovian such embeddings are by finding explicit analytical formulas for the spectral densities of the embeddings corresponding to the different chain mappings. Furthermore, we also derive universal convergence theorems for the spectral densities corresponding to the embedded systems and give rigourous conditions for when these limiting cases are achieved. This paves the way for a system interacting with a complex environment to be recast by moving the boundary of the system and environment, so that the non-trivial parts of the environment are embedded in the new effective "system" and the homogeneous chain represents the new, and much simpler, "environment" - See (c) in figure (\ref{fig:myFig}) for a pictorial illustration.  The advantage of this is that the residual part of the environment  might be simple enough for some of the approximations mentioned in section \ref{Background} to be applied, enabling us to integrate out these modes and dramatically reduce the number of sites of the chain that have to be accounted for explicitly.
In order to achieve this, we have to first develop new mathematical tools and theorems regarding secondary measures and Jacobi operators, greatly extending and developing the application of orthogonal polynomials that was used in the original chain mapping of \cite{Alex}. These results might be useful also in other areas of mathematics and mathematical physics which are related to the theory of Jacobi operators such as the Toda lattice \cite{Gerard}.

We show that this generalised chain mapping reduces to two known results mentioned in section \ref{Background} under special conditions. For the first known result \cite{Javi,Alex}, the general method developed in this paper gives analytical and non-iterative expressions for the spectral densities corresponding to the embeddings. For the other special case \cite{Bassano}, we derive calculable conditions for when the spectral densities corresponding to the embedding converge, -an aspect not addressed in \cite{Bassano}.
As seen in the examples, we apply this technique to derive exact solutions for the family of Spin-Boson models which will allow us to illustrate how the different embedding methods are related. In addition, the method developed here is also valid when the spectral density of the system-enviroment interaction has a gap in its support region. This is of practical interest as there are open quantum systems (such as photonic crystals \cite{cristal}) that naturally exhibit such a spectral density and hence can only be mapped onto a chain using the method presented in this paper.

The contents of each section is as follows: Section \ref{Secondary measures} is concerned with deriving the necessary mathematical tools for the application to open quantum systems in the subsequent sections.  We start by introducing some elementary results in the field of orthogonal polynomials in section \ref{Introduction to notation and basic tools} (this section also helps to introduce notation). In section \ref{Derivation of the sequence of secondary normalised measures} we focus on deriving a formula which makes explicit a sequence of secondary measures solely in terms of the initial measure and its orthogonal polynomials. We point out that although authors such as Gautschi introduce the concept and definition of secondary measures, here we provide for the first time an analytical closed expression for them in terms of the initial measure and its orthogonal polynomials. This result will turn out to be a vital ingredient in the development of the subsequent chapters.  Moreover, Gautschi states that the general solution we have found is unknown \cite{Gautschi} p16-17.
In section \ref{Derivation of the Jacobi matrix theorem} we study the properties of the 3-term recursion coefficients of the orthogonal polynomials generated from their measures. We then define beta normalised measures which have more general and useful properties. The main result of this section is our new theorem regarding the Jacobi matrix. This new theorem will be used extensively in section \ref{Chain mappings of open quantum systems and Markovian embeddings}.
Section \ref{Chain mappings of open quantum systems and Markovian embeddings} is concerned with chain mappings for open quantum systems and embeddings of environmental degrees of freedom into the system. In section \ref{Generalised bosonic mapping} we develop a general framework for mapping open quantum systems which are linearly coupled to an environment onto a representation where the environment is a semi-infinite chain with nearest neighbour couplings and define two special cases of particular interest, - the particle and phonon mappings.  In section \ref{Connection between the phonon mapping to previous work and the sequence of partial spectral densities}, we investigate the relation between this work and recent work by \cite{Bassano}. In order to do this we make extensive use of the relations developed in section 2. We show how their mapping is a special case of the work presented here and find analytical non iterative solutions to quantities such as the sequence of residual spectral densities. Section \ref{Sequence of partial spectral densities for the particle mapping case} is dedicated to deriving the formulas for the sequence of residual spectral densities for the particle case mapping. In section \ref{convergence section} we develop convergence theorems for the sequence of residual spectral densities. We show rigorously that the sequence converges under certain conditions and give the universal functions the spectral densities converge to for the particle and phonon cases. The conditions for which the sequences converge are stated in terms of the initial spectral density.
In section \ref{examples section} we give explicit analytic examples for the family of spectral densities of the Spin-Boson model for the particle and phonon mapping cases.\\
The main new results of this article are theorems (\ref{1st main theorem}), (\ref{jacobi matrix theorem}), (\ref{generalised mapping theorem}), (\ref{the partial sd generade from a measure in Bassano et all}), (\ref{the partial sd Bassano et all equiv theorem}), (\ref{theorem for the sequence of sd in the particle m case}), (\ref{SD convergence thorem}) and corollaries (\ref{the other main corrollary}) and (\ref{the phonon mapping corollary rof psd}).
\section{Secondary measures}\label{Secondary measures}
\subsection{Introduction to notation and basic tools}\label{Introduction to notation and basic tools}
\begin{definition}\label{meausre defn}
Let us consider a \textit{measure} $d\mu(x)=\bar\mu(x)dx$ with real support intervals $I_1,I_2,I_3,\ldots,I_k$ and $I=[\textup{inf}\{y,\,y\!\in I_1\cup I_2\cup I_3\cup\ldots \cup I_k\}, \textup{sup}\{y,\,y\!\in I_1\cup I_2\cup I_3\cup\ldots \cup I_k\}]=[a,b]$ with $a<b,$ which has finite moments:
\begin{equation}
C_n(d\mu)=\int_a^bx^nd\mu(x) \quad n=0,1,2,\ldots,\;\
\end{equation}
with $\bar \mu(x)\geq0$ over $I$.\\
We call $d\mu$ \textup{gapless} measure if \begin{equation}
I=I_1\cup I_2\cup I_3\cup\ldots \cup I_k
\end{equation}
and \textup{gapped} measure if 
\begin{equation}
I\neq I_1\cup I_2\cup I_3\cup\ldots \cup I_k.
\end{equation}
If not stated otherwise, a measure can be gapped or gapless and $a,b$ finite or infinite.
\end{definition}
\begin{remark}\label{measure detials}
The distinction between $a,b$ finite or infinite is important because as we will see, a few of the results in this paper (most noticeably, theorems \ref{convergence of bounded mesures} and \ref{SD convergence thorem}), are restricted to the finite interval case. We also point out that another important classification of measures is that of \textit{gapped} and \textit{gappless} measures. Gapped measures also have much physical relevance and many of the theorems developed here apply equally well to these cases (see remark (\ref{gaps}) for more details). It is also worth noting that in the case that $I$ is an unbounded interval, one has to be especially cautious when defining a measure, that all the moments are finite. Later on in this article (in proof to theorem \ref{generalised mapping theorem}), we will define measures as functions of spectral densities. Hence, for the case of spectral densities with unbounded support, this spectral density will only define a measure if the corresponding measure has finite moments. In practice, this is not really much of a restriction as the unbounded spectral densities found in the literature do have finite moments (such as in example \ref{The power law spectral densities with exponential cut offff}).\\
As a final remark regarding measures, we note that the above definition is that of a positive measure. Some of the basic theorems and definitions of this section are also valid when the measure is not strictly positive. However, such cases are irrelevant for the new theorems developed here because a spectral density is a non-negative function. For this reason, we deal with only positive measures. 
\end{remark}
\begin{definition}\label{def inner product}
Let $\mathbb{P}$ denote the space of real polynomials. Then, for any $u(x)$ and $v(x)$ $\in\mathbb{P}$ we will define an \textup{inner product} as
\begin{equation}
\langle u,v\rangle_{\bar\mu}=\int_a^b u(x)v(x)d\mu(x).
\end{equation}
\end{definition}
\begin{definition}\label{def of Pn} We call $\{P_n(d\mu;x)\}_{n=0}^\infty$  the set of \textup{real orthonormal polynomials with respect to measure $d\mu$} where each polynomial $P_n$ is of degree  $n$, if they satisfy
\begin{equation}\label{orthognality conditions ofr norm pols eq}
 \langle P_n(d\mu),P_m(d\mu)\rangle_{\bar\mu}=\delta_{nm} \quad n,m=0,1,2,\ldots\ .
\end{equation}
\end{definition}
Similarly,
\begin{definition}\label{def of monic} We call $\{\pi_n(d\mu;x)\}_{n=0}^\infty$ the set of \textup{real monic polynomials with respect to measure $d\mu$} where each polynomial $\pi_n$ is of degree $n$ if they satisfy
\begin{equation}\label{monic poly defn}
\pi_n(d\mu;x)=P_n(d\mu;x)/a_n \quad n=0,1,2,\ldots\,,
\end{equation}
where $a_n=a_n(d\mu)$ is the leading coefficient of $P_n(d\mu;x)$.
\end{definition}
\begin{theorem}
For any measure $d\mu(x)$, there always exists a set of real orthonormal polynomials and real monic polynomials.
\end{theorem}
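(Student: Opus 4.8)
The plan is to construct these polynomials by Gram--Schmidt orthogonalization of the monomial basis $\{1,x,x^2,\ldots\}$ of $\mathbb{P}$ with respect to the bilinear form $\langle\cdot,\cdot\rangle_{\bar\mu}$ of Definition \ref{def inner product}. The first step is to check that $\langle\cdot,\cdot\rangle_{\bar\mu}$ is genuinely an inner product on $\mathbb{P}$. Symmetry and bilinearity are immediate, and $\langle u,v\rangle_{\bar\mu}$ is finite for every $u,v\in\mathbb{P}$ because, after expanding $uv$ into monomials, it is a finite linear combination of the moments $C_n(d\mu)$, which are finite by Definition \ref{meausre defn}. The substantive point is positive definiteness: one must show that $\langle u,u\rangle_{\bar\mu}=\int_a^b u(x)^2\,d\mu(x)=0$ forces $u\equiv0$.

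For this I would argue from the structure of the support. Since $\bar\mu(x)\ge0$ on $I$, the integrand $u(x)^2\bar\mu(x)$ is non-negative, so $\langle u,u\rangle_{\bar\mu}=0$ implies $u(x)^2\bar\mu(x)=0$ for Lebesgue-almost every $x\in I$. The support of $d\mu$ is (up to a null set) the union $I_1\cup\dots\cup I_k$ of the support intervals, and because $I=[a,b]$ with $a<b$ is their convex hull, at least one $I_j$ is a nondegenerate interval carrying positive mass; hence $\{x:\bar\mu(x)>0\}$ has positive Lebesgue measure and, in particular, is infinite. A polynomial vanishing on an infinite set is identically zero, so $u\equiv0$. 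This establishes that $\langle\cdot,\cdot\rangle_{\bar\mu}$ is a positive-definite inner product on $\mathbb{P}$.

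With positive definiteness secured, the construction is routine. Set $\pi_0(d\mu;x)=1$ and define inductively
\begin{equation}
\pi_n(d\mu;x)=x^n-\sum_{j=0}^{n-1}\frac{\langle x^n,\pi_j(d\mu)\rangle_{\bar\mu}}{\langle \pi_j(d\mu),\pi_j(d\mu)\rangle_{\bar\mu}}\,\pi_j(d\mu;x),\qquad n\ge1 .
\end{equation}
By induction each $\pi_j$ is monic of degree exactly $j$, hence nonzero, so by positive definiteness every denominator $\langle \pi_j(d\mu),\pi_j(d\mu)\rangle_{\bar\mu}$ is strictly positive and the recursion is well defined; the subtracted terms have degree $\le n-1$, so $\pi_n$ is again monic of degree $n$, and the standard Gram--Schmidt computation gives $\langle \pi_n(d\mu),\pi_m(d\mu)\rangle_{\bar\mu}=0$ for $m<n$. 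These are the real monic polynomials of Definition \ref{def of monic}. Normalizing, $P_n(d\mu;x)=\pi_n(d\mu;x)/\sqrt{\langle \pi_n(d\mu),\pi_n(d\mu)\rangle_{\bar\mu}}$ yields polynomials of degree $n$ with $\langle P_n(d\mu),P_m(d\mu)\rangle_{\bar\mu}=\delta_{nm}$, i.e. the real orthonormal polynomials of Definition \ref{def of Pn}, consistent with Definition \ref{def of monic} via the leading coefficient $a_n=1/\sqrt{\langle \pi_n(d\mu),\pi_n(d\mu)\rangle_{\bar\mu}}>0$.

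The only real obstacle is the positive-definiteness step, and within it the observation that $d\mu$ cannot be concentrated on finitely many points — which is precisely why Definition \ref{meausre defn} builds the support out of nondegenerate intervals with $a<b$; everything else is standard bookkeeping. This is a classical fact (cf. \cite{Gautschi}); note that the statement only asserts existence, so I would not dwell on the (easy) uniqueness of $\pi_n$ and of $P_n$ up to sign.
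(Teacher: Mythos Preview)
Your argument is correct: the Gram--Schmidt construction, driven by positive definiteness of $\langle\cdot,\cdot\rangle_{\bar\mu}$ (which you justify via the infinitude of the support), is exactly the classical route. The paper itself does not give a proof at all but simply refers to \cite{Gautschi}, so there is nothing to compare against beyond noting that what you have written is essentially the standard textbook argument the citation points to.
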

\begin{proof} See \cite{Gautschi}.\end{proof}
\begin{theorem}\label{monic recursion relation}
The monic polynomials $\pi_n(d\mu;x)$ satisfy the three term recurrence relation
\begin{eqnarray}
\pi_{n+1}(d\mu;x)&=(x-\alpha_n)\pi_n(d\mu;x)-\beta_n\pi_{n-1}(d\mu;x)& \quad n=0,1,2,\ldots\,,\label{3-term recurrece eq}\\
&\pi_{-1}(d\mu;x):=0,&
\end{eqnarray}
where
\begin{eqnarray}
\alpha_n&=&\alpha_n(d\mu)=\frac{\langle x\pi_n(d\mu),\pi_n(d\mu)\rangle_{\bar\mu}}{\langle \pi_n(d\mu),\pi_n(d\mu)\rangle_{\bar\mu}} \quad n=0,1,2,\ldots,\label{the alpha def eq}\\
\beta_n&=&\beta_n(d\mu)=\frac{\langle \pi_n(d\mu),\pi_n(d\mu)\rangle_{\bar\mu}}{\langle \pi_{n-1}(d\mu),\pi_{n-1}(d\mu)\rangle_{\bar\mu}} \quad n=1,2,3,\ldots \;.\label{bet def eq}
\end{eqnarray}
\end{theorem}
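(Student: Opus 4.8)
The plan is to establish the recurrence by a dimension-counting/expansion argument in the basis of monic orthogonal polynomials, then read off the coefficients by testing against suitable polynomials under the inner product $\langle\cdot,\cdot\rangle_{\bar\mu}$. First I would note that, since $\pi_n(d\mu;x)$ has degree exactly $n$ with leading coefficient $1$, the polynomial $x\pi_n(d\mu;x)$ has degree $n+1$ with leading coefficient $1$, so the difference $x\pi_n(d\mu;x)-\pi_{n+1}(d\mu;x)$ has degree at most $n$. Because $\{\pi_0(d\mu),\pi_1(d\mu),\ldots,\pi_n(d\mu)\}$ is a basis of the space of polynomials of degree at most $n$ (they are nonzero and pairwise orthogonal, hence linearly independent, and there are $n+1$ of them), we may write
\begin{equation}
x\pi_n(d\mu;x)-\pi_{n+1}(d\mu;x)=\sum_{k=0}^{n}c_{n,k}\,\pi_k(d\mu;x)
\end{equation}
for suitable real scalars $c_{n,k}$. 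The task is then to show that only the $k=n$ and $k=n-1$ terms survive and to identify the corresponding coefficients with $\alpha_n$ and $\beta_n$.

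Next I would extract the coefficients by taking the inner product of both sides with $\pi_j(d\mu;x)$ for $0\le j\le n$ and using orthogonality, which gives
\begin{equation}
c_{n,j}\,\langle\pi_j(d\mu),\pi_j(d\mu)\rangle_{\bar\mu}=\langle x\pi_n(d\mu),\pi_j(d\mu)\rangle_{\bar\mu}-\langle\pi_{n+1}(d\mu),\pi_j(d\mu)\rangle_{\bar\mu}=\langle \pi_n(d\mu),x\pi_j(d\mu)\rangle_{\bar\mu},
\end{equation}
where the second equality uses $j\le n<n+1$ so the $\pi_{n+1}$ term vanishes by orthogonality, together with the symmetry of the inner product and the fact that multiplication by $x$ is self-adjoint for this inner product (it is just moving the factor $x$ inside the integral). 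Now for $j\le n-2$, the polynomial $x\pi_j(d\mu;x)$ has degree $j+1\le n-1<n$, so it lies in the span of $\pi_0(d\mu),\ldots,\pi_{n-1}(d\mu)$ and is therefore orthogonal to $\pi_n(d\mu)$; hence $c_{n,j}=0$ for all $j\le n-2$. This leaves only $c_{n,n}$ and $c_{n,n-1}$, so writing $\alpha_n:=c_{n,n}$ and $\beta_n:=c_{n,n-1}$ yields exactly the recurrence \eqref{3-term recurrece eq}, with the convention $\pi_{-1}(d\mu;x):=0$ handling the $n=0$ case (where there is no $\beta_0$ term).

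Finally I would identify the two surviving coefficients with the stated formulas. For $\alpha_n=c_{n,n}$, specializing the boxed relation above to $j=n$ gives $\alpha_n\langle\pi_n(d\mu),\pi_n(d\mu)\rangle_{\bar\mu}=\langle x\pi_n(d\mu),\pi_n(d\mu)\rangle_{\bar\mu}$, which is precisely \eqref{the alpha def eq}. For $\beta_n=c_{n,n-1}$, specializing to $j=n-1$ gives $\beta_n\langle\pi_{n-1}(d\mu),\pi_{n-1}(d\mu)\rangle_{\bar\mu}=\langle\pi_n(d\mu),x\pi_{n-1}(d\mu)\rangle_{\bar\mu}$; then I would observe that $x\pi_{n-1}(d\mu;x)=\pi_n(d\mu;x)+(\text{lower order})$, and since the lower-order part is orthogonal to $\pi_n(d\mu)$, the right-hand side equals $\langle\pi_n(d\mu),\pi_n(d\mu)\rangle_{\bar\mu}$, giving \eqref{bet def eq}. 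I do not expect a genuine obstacle here — the only mild subtlety worth stating carefully is that $\langle\pi_n(d\mu),\pi_n(d\mu)\rangle_{\bar\mu}>0$ for every $n$ (so the divisions defining $\alpha_n,\beta_n$ make sense), which follows because $d\mu$ is a positive measure whose support contains infinitely many points — indeed $I=[a,b]$ with $a<b$ — so no nonzero polynomial can vanish $d\mu$-almost everywhere. This positivity also shows $\beta_n>0$, which I would record as it is used implicitly later.
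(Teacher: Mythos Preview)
Your argument is correct and is exactly the standard textbook proof; the paper itself does not give a proof but simply cites Gautschi, so you have filled in precisely the argument being referenced. The only remark is that your positivity check on $\langle\pi_n,\pi_n\rangle_{\bar\mu}$ is the right thing to note, and it is consistent with the paper's standing assumption that $d\mu$ is a positive measure on a nondegenerate interval.
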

\begin{proof} See \cite{Gautschi}.\end{proof}

\begin{definition} We will define $\beta_0(d\mu)$ by
\begin{equation}
\beta_0(d\mu)=\langle \pi_0(d\mu),\pi_0(d\mu)\rangle_{\bar\mu}.\label{the beta 0 def eq}
\end{equation}
\end{definition}
\begin{corollary}\label{cor C0 beta0}
$\beta_0(d\mu)=C_0(d\mu)$.
\end{corollary}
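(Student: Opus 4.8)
The plan is to unwind the definitions until the claim becomes an evaluation of a single integral. The key observation is that the monic polynomial of degree zero is forced to be the constant polynomial $1$. Indeed, by Definition \ref{def of monic} we have $\pi_0(d\mu;x)=P_0(d\mu;x)/a_0(d\mu)$, where $a_0(d\mu)$ is the leading coefficient of $P_0(d\mu;x)$; but $P_0(d\mu;x)$ is a polynomial of degree $0$, hence itself a constant, so it coincides with its own leading coefficient and the quotient is identically $1$. (One does not even need the orthonormality of $P_0$ here, only that it has degree $0$ and nonzero leading coefficient, which is implicit in Definition \ref{def of Pn}.)

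With $\pi_0(d\mu;x)\equiv 1$ in hand, I would substitute into the defining equation \eqref{the beta 0 def eq} for $\beta_0$ and then apply Definition \ref{def inner product}:
\begin{equation}
\beta_0(d\mu)=\langle \pi_0(d\mu),\pi_0(d\mu)\rangle_{\bar\mu}=\int_a^b 1\cdot 1\; d\mu(x)=\int_a^b d\mu(x).
\end{equation}
Finally, recognising the last integral as the zeroth moment from Definition \ref{meausre defn}, namely $C_0(d\mu)=\int_a^b x^0\,d\mu(x)=\int_a^b d\mu(x)$, gives $\beta_0(d\mu)=C_0(d\mu)$, as required.

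There is essentially no obstacle here: the statement is a direct consequence of the normalisation conventions, and the only point that deserves an explicit sentence is why $\pi_0\equiv 1$ (degree-zero monic polynomial), after which the computation is a one-line integral identity. The finiteness of $C_0(d\mu)$ is guaranteed by the hypothesis in Definition \ref{meausre defn} that all moments are finite, so the inner product is well defined.
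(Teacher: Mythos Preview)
Your proof is correct and follows essentially the same approach as the paper: observe that $\pi_0(d\mu;x)\equiv 1$ because it is the monic polynomial of degree zero, then compute $\beta_0(d\mu)=\langle 1,1\rangle_{\bar\mu}=\int_a^b d\mu(x)=C_0(d\mu)$. You simply spell out in more detail why $\pi_0\equiv 1$, which the paper states in one line.
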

\begin{proof}
We note that from definition (\ref{def of monic}), $\pi_0(d\mu;x)=1$ for all measures $d\mu(x)$. Hence,
\begin{equation}
\beta_0(d\mu)=\langle \pi_0(d\mu),\pi_0(d\mu)\rangle_{\bar\mu}=\int_a^bd\mu(x)=C_0(d\mu).
\end{equation}\qed \end{proof}
\begin{theorem}
When the measure $d\mu$ has bounded support, the $\alpha_n(d\mu)$ and $\beta_n(d\mu)$ coefficients are bounded by
\begin{eqnarray}
&a<\alpha_n(d\mu)<b &\quad n=0,1,2,\ldots\,,\ \\
&0<\beta_n(d\mu)\leq max(a^2,b^2)&\quad n=0,1,2,\ldots\,.
\end{eqnarray}
\end{theorem}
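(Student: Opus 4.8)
The plan is to establish the two pairs of inequalities separately, using in each case only the positivity of $d\mu$ on $[a,b]$ together with the defining formulas \eqref{the alpha def eq}--\eqref{bet def eq} and the three--term recurrence \eqref{3-term recurrece eq}. Throughout I abbreviate $\pi_n:=\pi_n(d\mu;x)$ and set $h_n:=\langle\pi_n,\pi_n\rangle_{\bar\mu}$. Note first that $h_n>0$ for every $n$: $\pi_n$ is monic, hence not the zero polynomial, and $d\mu=\bar\mu(x)\,dx$ is a nontrivial absolutely continuous positive measure (it supports orthonormal polynomials of all degrees), so $\int_a^b\pi_n^2\,d\mu$ cannot vanish. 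In particular $\beta_n=h_n/h_{n-1}>0$ for all $n\geq1$, which is the lower bound claimed for $\beta_n$.

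For the bounds on $\alpha_n$, observe that on the support $I\subseteq[a,b]$ one has the pointwise inequalities $a\,\pi_n(x)^2\leq x\,\pi_n(x)^2\leq b\,\pi_n(x)^2$ since $\pi_n(x)^2\geq0$. Integrating against $d\mu$ and dividing by $h_n>0$ gives $a\leq\alpha_n\leq b$ immediately. To upgrade these to strict inequalities I would note that the integrand $(b-x)\pi_n(x)^2\bar\mu(x)$ is non-negative on $[a,b]$ and strictly positive on the set $\{\bar\mu>0\}\cap(a,b)$ with the (at most $n$) zeros of $\pi_n$ removed; since $\bar\mu\geq0$ with $\int_a^b\bar\mu\,dx=C_0(d\mu)>0$, this set has positive Lebesgue measure, whence $\langle x\pi_n,\pi_n\rangle_{\bar\mu}<b\,h_n$, i.e.\ $\alpha_n<b$. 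The inequality $\alpha_n>a$ follows symmetrically from $(x-a)\pi_n(x)^2\bar\mu(x)$.

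The only step requiring a small idea is the upper bound on $\beta_n$. Here I would rewrite $h_n$ using orthogonality: since $\pi_{n-1}$ is monic of degree $n-1$, $x\pi_{n-1}$ is monic of degree $n$, so $x\pi_{n-1}-\pi_n$ has degree $\leq n-1$ and is therefore $\langle\cdot,\cdot\rangle_{\bar\mu}$-orthogonal to $\pi_n$; consequently
\begin{equation}
h_n=\langle\pi_n,\pi_n\rangle_{\bar\mu}=\langle\pi_n,x\pi_{n-1}\rangle_{\bar\mu}=\langle x\pi_n,\pi_{n-1}\rangle_{\bar\mu}.
\end{equation}
Applying the Cauchy--Schwarz inequality to the last inner product and using $x^2\leq\max(a^2,b^2)$ on $[a,b]$,
\begin{equation}
h_n\leq\langle x\pi_n,x\pi_n\rangle_{\bar\mu}^{1/2}\,\langle\pi_{n-1},\pi_{n-1}\rangle_{\bar\mu}^{1/2}\leq\bigl(\max(a^2,b^2)\,h_n\bigr)^{1/2}\,h_{n-1}^{1/2}.
\end{equation}
Squaring and dividing by $h_n>0$ yields $h_n\leq\max(a^2,b^2)\,h_{n-1}$, i.e.\ $\beta_n=h_n/h_{n-1}\leq\max(a^2,b^2)$, completing the proof. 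I expect the boundedness of the support to enter exactly twice --- once via $a\leq x\leq b$ for the $\alpha_n$ and once via $x^2\leq\max(a^2,b^2)$ for the $\beta_n$ --- and no further analytic input to be needed; the only mildly delicate point is the strictness $a<\alpha_n<b$, which is where the absolute continuity (equivalently, the infiniteness of the support) of $d\mu$ is used.
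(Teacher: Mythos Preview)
Your argument is correct and is essentially the standard proof that appears in Gautschi's monograph, which is all the paper itself invokes here (the paper's ``proof'' is simply the citation to Gautschi). One small remark: the stated range $n=0,1,2,\ldots$ for the $\beta_n$ upper bound is a slip in the paper's statement, since $\beta_0=C_0(d\mu)$ carries no such bound in general; your proof rightly treats only $n\geq1$, which is the meaningful case.
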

\begin{proof}
See \cite{Gautschi}.
\end{proof}
\begin{theorem}\label{recurrence for monics}
The orthonormal polynomials $P_n(d\mu;x)$ satisfy the three term recurrence relation
\begin{eqnarray}
&t_nP_{n+1}(d\mu;x)=(x-s_n)P_n(d\mu;x)-t_{n-1}P_{n-1}(d\mu;x)\quad n=0,1,2,\ldots\,,&\quad\\
&P_{-1}(d\mu;x):=0,\quad\ P_0(d\mu;x)=1/\sqrt{\beta_0(d\mu)},&
\end{eqnarray}
where
\begin{eqnarray}
s_n=s_n(d\mu)&=&\alpha_n(d\mu)\quad n=0,1,2,\ldots,\\
t_n=t_n(d\mu)&=&\sqrt{\beta_{n+1}(d\mu)}\quad n=0,1,2,\ldots \;.\label{the t beta rel eq}
\end{eqnarray}
\end{theorem}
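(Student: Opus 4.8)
The plan is to derive the orthonormal recurrence directly from the monic recurrence of Theorem \ref{monic recursion relation} together with the defining relation $\pi_n(d\mu;x)=P_n(d\mu;x)/a_n(d\mu)$ from Definition \ref{def of monic}. First I would pin down how the leading coefficients $a_n=a_n(d\mu)$ are tied to the $\beta_n$. Since $\langle P_n,P_n\rangle_{\bar\mu}=1$ and $\pi_n=P_n/a_n$, we get $\langle\pi_n,\pi_n\rangle_{\bar\mu}=1/a_n^2$; substituting this into the definition (\ref{bet def eq}) of $\beta_n$ yields $\beta_n(d\mu)=a_{n-1}^2/a_n^2$, equivalently $a_n/a_{n-1}=1/\sqrt{\beta_n(d\mu)}$ for $n\geq 1$. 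The $n=0$ anchor comes from normalisation: $P_0=a_0\pi_0=a_0$, so $1=\langle P_0,P_0\rangle_{\bar\mu}=a_0^2\langle\pi_0,\pi_0\rangle_{\bar\mu}=a_0^2\,\beta_0(d\mu)$ by (\ref{the beta 0 def eq}), giving $P_0=a_0=1/\sqrt{\beta_0(d\mu)}$ as claimed, while $P_{-1}:=0$ is inherited from $\pi_{-1}:=0$.

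Next I would substitute $\pi_m=P_m/a_m$ into the three-term recurrence (\ref{3-term recurrece eq}) and multiply through by $a_{n+1}$, obtaining
\[
P_{n+1}(d\mu;x)=\frac{a_{n+1}}{a_n}\bigl(x-\alpha_n(d\mu)\bigr)P_n(d\mu;x)-\frac{a_{n+1}}{a_{n-1}}\,\beta_n(d\mu)\,P_{n-1}(d\mu;x).
\]
Applying the ratio identity above, $a_{n+1}/a_n=1/\sqrt{\beta_{n+1}(d\mu)}$ and $a_{n+1}/a_{n-1}=1/\sqrt{\beta_{n+1}(d\mu)\beta_n(d\mu)}$, so the last coefficient collapses to $\beta_n/\sqrt{\beta_{n+1}\beta_n}=\sqrt{\beta_n(d\mu)/\beta_{n+1}(d\mu)}$. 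Multiplying the whole line by $\sqrt{\beta_{n+1}(d\mu)}$ then produces
\[
\sqrt{\beta_{n+1}(d\mu)}\,P_{n+1}(d\mu;x)=\bigl(x-\alpha_n(d\mu)\bigr)P_n(d\mu;x)-\sqrt{\beta_n(d\mu)}\,P_{n-1}(d\mu;x),
\]
which is exactly the asserted relation once we read off $s_n(d\mu)=\alpha_n(d\mu)$ and $t_n(d\mu)=\sqrt{\beta_{n+1}(d\mu)}$ (hence $t_{n-1}(d\mu)=\sqrt{\beta_n(d\mu)}$), matching (\ref{the t beta rel eq}).

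The only point needing a little care is the $n=0$ case, where the monic recurrence (\ref{3-term recurrece eq}) contains the term $\beta_0\pi_{-1}$: since $\pi_{-1}=0$ this term vanishes, so dividing by $a_1$ and rescaling is harmless and the resulting equation for $P_1$ is consistent with $P_{-1}=0$. I do not expect any genuine obstacle — the statement is essentially a bookkeeping consequence of Theorem \ref{monic recursion relation}, Definition \ref{def of monic} and the normalisation of $P_0$; the one thing to keep straight throughout is the index shift between $\beta_{n+1}$ (appearing in $t_n$) and $\beta_n$ (appearing in $t_{n-1}$).
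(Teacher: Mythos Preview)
Your derivation is correct and is precisely the standard argument: pass from the monic recurrence to the orthonormal one via $\pi_n=P_n/a_n$ together with $\beta_n=a_{n-1}^2/a_n^2$. The paper itself gives no proof here at all---it simply cites Gautschi's book---so there is nothing to compare against; what you have written is essentially the derivation one finds in that reference. The one tacit convention worth flagging is that $a_{n+1}/a_n=1/\sqrt{\beta_{n+1}}$ presumes the orthonormal polynomials are normalised with positive leading coefficients (otherwise only $|a_{n+1}/a_n|$ is fixed); this is consistent with the statement since $t_n=\sqrt{\beta_{n+1}}>0$ forces $a_{n+1}/a_n>0$ anyway, and the paper itself is careful about this sign later on (cf.\ the $\kappa_n$ in Eq.~(\ref{an quatioen in d eq})).
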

\begin{proof} See \cite{Gautschi}.\end{proof}
\begin{definition}\label{defn of 2nd poly} We will call $Q_n(d\mu;x)$ the \textup{secondary polynomial}\footnote{also known as \textit{polynomial of the second kind.}} associated with polynomial $P_n(d\mu;x)$ defined by
\begin{equation}\label{2nd poly}
Q_n(d\mu;x)=\int_a^b\frac{P_n(d\mu;t)-P_n(d\mu;x)}{t-x}d\mu(t),\quad n=0,1,2,\ldots
\end{equation}
\end{definition}
\begin{lemma}
The polynomials $Q_n(d\mu;x),$ $n=1,2,3,\ldots$ are real polynomials of degree $n-1$ and $Q_0=0$.
\end{lemma}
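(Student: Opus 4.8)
The plan is to turn the divided difference in the integrand into an honest polynomial in the integration variable $t$ (with coefficients depending polynomially on $x$) and then integrate term by term, using that all moments are finite.

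First I would dispose of $n=0$: by Definition \ref{def of monic} (equivalently Theorem \ref{recurrence for monics}) $P_0(d\mu;x)=1/\sqrt{\beta_0(d\mu)}$ is a nonzero constant, so $P_0(d\mu;t)-P_0(d\mu;x)\equiv 0$ and hence $Q_0=0$. For $n\geq 1$, write $P_n(d\mu;x)=\sum_{k=0}^n c_k x^k$ with real coefficients $c_k=c_k(d\mu)$ and leading coefficient $c_n=a_n(d\mu)\neq 0$. Using the elementary identity $\frac{t^k-x^k}{t-x}=\sum_{j=0}^{k-1}t^j x^{k-1-j}$ for $k\geq 1$ (the $k=0$ term contributes nothing), the integrand becomes $\sum_{k=1}^n c_k\sum_{j=0}^{k-1}t^j x^{k-1-j}$, a polynomial of degree $\leq n-1$ in $t$ whose coefficients are polynomials in $x$. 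Since the moments $C_j(d\mu)$ for $j=0,\dots,n-1$ are finite by Definition \ref{meausre defn}, term-by-term integration is legitimate and yields
\begin{equation}
Q_n(d\mu;x)=\sum_{k=1}^n c_k\sum_{j=0}^{k-1}C_j(d\mu)\,x^{k-1-j},
\end{equation}
which is visibly a real polynomial in $x$.

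To pin down the degree, I would extract the coefficient of $x^{n-1}$: it comes only from the term $k=n$, $j=0$, and equals $c_n C_0(d\mu)=a_n(d\mu)\,\beta_0(d\mu)$ by Corollary \ref{cor C0 beta0}; this is nonzero because $a_n(d\mu)\neq 0$ and $\beta_0(d\mu)=C_0(d\mu)>0$ for a (nontrivial) positive measure. Hence $Q_n(d\mu;\cdot)$ has degree exactly $n-1$.

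I do not expect a genuine obstacle here; the only point requiring a word of care is the justification of interchanging sum and integral, which is immediate since the integrand is a finite polynomial combination of the powers $t^0,\dots,t^{n-1}$ and all the corresponding moments are finite by hypothesis. It is worth flagging that this finiteness is precisely what Definition \ref{meausre defn} builds in, so the argument is valid verbatim whether the support interval $I$ is bounded or unbounded.
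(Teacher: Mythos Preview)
Your proof is correct and follows essentially the same approach as the paper: expand $P_n$ as $\sum_k c_k x^k$, use the identity $t^k-x^k=(t-x)\sum_{j}t^jx^{k-1-j}$ to remove the denominator, integrate term by term, and observe that the coefficient of $x^{n-1}$ is $a_n\,C_0(d\mu)\neq 0$. The paper's proof is just a terser version of exactly this argument; your added remarks on finiteness of moments and the $n=0$ case are sound and make the argument more self-contained.
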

\begin{proof} Follows from writing $P_n(d\mu;x)$ in the form $P_n(d\mu;x)=\sum_{q=0}^n k_qx^q$ , using the identity $t^q-x^q=(t-x)\sum_{p=0}^{p=q-1}t^px^{q-1-p},\quad q=1,2,3,\ldots$ to cancel the denominator in Eq. (\ref{2nd poly}) and noting that $C_0(d\mu)>0$.\qed
\end{proof}
\begin{definition}\label{sitjes def}
The \textup{Stieltjes Transformation} of the gapless measure $d\mu(x)=\bar \mu(x)dx$ is defined by\textup{\cite{Berg}}
\begin{equation}
S_{\bar\mu}(z)=\int_a^b\frac{d\mu(x)}{z-x},
\end{equation}
where $z\in \mathbb{C}-[a,b]$.
\end{definition}
This is a function vanishing at infinity and analytic in the whole complex plane
with the interval $[a,b]$ removed. (If $-a = b = +1$, then $S_{\bar \mu}$ is analytic separately
in Im $z > 0$ and Im $z < 0$, the two branches being different in general.)
\begin{theorem}\label{theorem stiejes}
If a gapless measure $d\rho(x)=\rho(x)dx$ has Stieltjes transformation given by
\begin{equation}\label{eq sit rel}
S_{\bar\rho}(z)=z-C_1(d\mu)-\frac{1}{S_{\bar\mu}(z)},
\end{equation}
with $C_0(d\mu)=1,$ then the secondary polynomials $\{Q_n(d\mu;x)\}_{n=1}^{\infty}$ form an orthogonal family for the induced inner product of $d\rho(x)$ and $d\mu(x)$ is also gapless.
\end{theorem}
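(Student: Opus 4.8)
The plan is to reduce the statement to the identity $Q_{n}(d\mu;\cdot)=P_{n-1}(d\rho;\cdot)$ for all $n\ge 1$, from which the orthonormality (hence orthogonality) of the secondary polynomials with respect to $d\rho$ is immediate. I would establish this identity by showing that $\{Q_{n}(d\mu;\cdot)\}$ and $\{P_{n-1}(d\rho;\cdot)\}$ satisfy the same three-term recurrence with the same initial data, using Theorem~\ref{recurrence for monics}.

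First, from $C_0(d\mu)=1$ one gets $\beta_0(d\mu)=1$ (Corollary~\ref{cor C0 beta0}) and $\alpha_0(d\mu)=C_1(d\mu)$, so $P_0(d\mu;x)=1$ and $P_1(d\mu;x)=(x-\alpha_0(d\mu))/\sqrt{\beta_1(d\mu)}$, whence Definition~\ref{defn of 2nd poly} gives $Q_0(d\mu)=0$ and $Q_1(d\mu)=1/\sqrt{\beta_1(d\mu)}$. Next, inserting the recurrence of Theorem~\ref{recurrence for monics} for $P_n(d\mu;\cdot)$ into Definition~\ref{defn of 2nd poly} shows that the secondary polynomials obey the \emph{same} recursion, $t_n(d\mu)\,Q_{n+1}(d\mu;x)=(x-s_n(d\mu))Q_n(d\mu;x)-t_{n-1}(d\mu)Q_{n-1}(d\mu;x)$, for $n\ge1$ (the extra term produced by the manipulation, $\int P_n(d\mu;t)\,d\mu(t)$, vanishes for $n\ge 1$ by orthogonality to $P_0$). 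Writing $\widetilde P_n:=Q_{n+1}(d\mu;\cdot)$, a polynomial of degree $n$, this says $\widetilde P_n$ satisfies the three-term recurrence with coefficients $s_n=\alpha_{n+1}(d\mu)$, $t_n=\sqrt{\beta_{n+2}(d\mu)}$, seeded by $\widetilde P_{-1}=0$, $\widetilde P_0=1/\sqrt{\beta_1(d\mu)}$.

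The crux is to check that this is exactly the recurrence of the orthonormal polynomials of $d\rho$, i.e. $\alpha_n(d\rho)=\alpha_{n+1}(d\mu)$, $\beta_0(d\rho)=\beta_1(d\mu)$, and $\beta_n(d\rho)=\beta_{n+1}(d\mu)$ for $n\ge1$. For this I would pass through the Jacobi continued fraction: classically $S_{\bar\mu}(z)$ is the limit of the $J$-fraction with numerator coefficients $\beta_n(d\mu)$ and denominator shifts $z-\alpha_n(d\mu)$, its convergents being ratios $Q_n(d\mu;z)/P_n(d\mu;z)$ of first- and second-kind polynomials, the convergence being guaranteed because $d\mu$ has bounded support. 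Since $\beta_0(d\mu)=1$, inverting gives $1/S_{\bar\mu}(z)=z-\alpha_0(d\mu)-(\text{the tail }J\text{-fraction with data }\alpha_k(d\mu),\beta_k(d\mu),\ k\ge1)$; substituting into the hypothesis and using $C_1(d\mu)=\alpha_0(d\mu)$, the two leading ``$z$'' terms and the two constants cancel, leaving $S_{\bar\rho}(z)=\beta_1(d\mu)\big/\big(z-\alpha_1(d\mu)-\beta_2(d\mu)/(z-\alpha_2(d\mu)-\cdots)\big)$. This is again a positive $J$-fraction (each $\beta_{k+1}(d\mu)>0$, as $d\mu$ has infinitely many points of increase), and since $d\rho$ has bounded support its moment problem is determinate, so its $J$-fraction expansion is unique; comparing the two expansions yields the coefficient identities. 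Matching seeds and recursions then gives $Q_{n}(d\mu;\cdot)=P_{n-1}(d\rho;\cdot)$, hence $\langle Q_n(d\mu),Q_m(d\mu)\rangle_{\bar\rho}=\delta_{nm}$ for $n,m\ge1$.

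For the gapless assertion: $S_{\bar\mu}(z)$ has imaginary part of fixed sign off the real axis, hence no zeros there, so $1/S_{\bar\mu}$ and therefore $S_{\bar\rho}$ are analytic on $\mathbb{C}\setminus[a,b]$ and vanish at infinity, forcing $\operatorname{supp}(\rho)\subseteq[a,b]$; applying the Sokhotski--Plemelj formula to the hypothesis produces the explicit density $\bar\rho(x)=\bar\mu(x)/|S_{\bar\mu}(x+i0)|^2$ on $(a,b)$, which is strictly positive exactly where $\bar\mu>0$, so $d\rho$ is gapless whenever $d\mu$ is. I expect the continued-fraction/uniqueness step to be the main obstacle: one must make rigorous that the analytic relation between $S_{\bar\mu}$ and $S_{\bar\rho}$ is equivalent to a ``shift by one'' of the Jacobi parameters, which is where bounded support (equivalently, a determinate moment problem) is genuinely used; a secondary point is ensuring that $d\rho$ has infinitely many points of support so every $P_n(d\rho)$ exists, which follows since $\beta_n(d\rho)=\beta_{n+1}(d\mu)>0$.
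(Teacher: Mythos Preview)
The paper does not supply its own proof of this theorem: it simply cites Sherman and Groux. So there is no ``paper's proof'' to compare against; your proposal stands on its own.

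Your approach is correct and in fact anticipates machinery the paper only develops later. The identification $\alpha_n(d\rho)=\alpha_{n+1}(d\mu)$, $\beta_n(d\rho)=\beta_{n+1}(d\mu)$ that you extract from the continued-fraction tail is precisely what the paper records afterwards as Theorem~\ref{second 1st meausre theorem} (attributed to Gautschi), and your manipulation of the $J$-fraction is the same mechanism the paper formalises in Lemma~\ref{lemma continued frac} and Theorem~\ref{the dn beta theorem}. Two small remarks. First, you invoke bounded support and determinacy of the moment problem to justify uniqueness of the $J$-fraction parameters, but this is not needed: the matching of $\alpha$'s and $\beta$'s is a purely formal identity between the asymptotic expansions of $S_{\bar\mu}$ and $S_{\bar\rho}$ at infinity (equivalently, a moment identity), valid whether or not the continued fraction converges. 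Second, on the gapless clause you argue ``$d\mu$ gapless $\Rightarrow$ $d\rho$ gapless'', whereas the theorem as stated has the implication in the opposite direction (the hypothesis is that $d\rho$ is gapless). Your density formula $\bar\rho(x)=\bar\mu(x)/|S_{\bar\mu}(x+i0)|^{2}$ in fact gives both directions at once, since it shows $\bar\rho$ and $\bar\mu$ have the same zero set on $(a,b)$; you should phrase it that way to match the statement.
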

\begin{proof} See \cite{Sherman}, or \cite{roland} for a direct proof.\end{proof}

\subsection{Derivation of the sequence of secondary normalised measures}\label{Derivation of the sequence of secondary normalised measures}
\begin{definition}\label{def dary measure}
For two gapless measures $d\rho(x)$ and $d\mu(x)$ satisfying Eq. \textup{(\ref{eq sit rel})}, we call  $d\rho(x)$ the \textup{secondary measure} associated with $d\mu(x)$.
\end{definition}
\begin{definition}\label{def 2nd dary measure}
We call the sequence of gapless measures $d\mu_0,$ $d\mu_1,$ $d\mu_2,\ldots$ generated from a gapless measure $d\mu_0$ by
\begin{eqnarray}
S_{\bar \rho_{n+1}}(z)&=&z-C_1(d\mu_n)-\frac{1}{S_{\bar \mu_n}(z)} \quad n=0,1,2,\ldots\,,\label{norm 2nd meausres eqs1}\\
d\mu_n(x)&=&\bar \mu_n(x)dx \quad n=0,1,2,\ldots\,,\label{norm 2nd meausres eqs2}\\
\bar \mu_n(x)&=&\frac{\bar\rho_n(x)}{C_0(d\rho_n)} \quad n=1,2,3,\ldots\,,\label{norm 2nd meausres eqs3}
\end{eqnarray}
the \textup{sequence of normalised secondary measures}, where $C_0(d\mu_0)=1$.
\end{definition}
We note tha all sequences of normalised secondary measures are gapless by definition. This sequence of measures is a slight adaptation from a basic result in the theory of orthogonal polynomials. In the standard version, the sequence of measures are not normalised. The fact that these objects are actually positive measures, is a well known result of the basic theory. See \cite{Sherman} or \cite{roland}.
\begin{corollary}\label{normalised sencond coll}
All measures in a sequence of normalised secondary measures have their zeroth moment equal to unity.
\end{corollary}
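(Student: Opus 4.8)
The plan is to prove the statement by a short induction on $n$, using only definition (\ref{def 2nd dary measure}); the whole point is that the rescaling (\ref{norm 2nd meausres eqs3}) has been built in precisely so as to fix the zeroth moment at $1$ at every stage.

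The base case $n=0$ is immediate, since $C_0(d\mu_0)=1$ is imposed in definition (\ref{def 2nd dary measure}). For the inductive step I assume $C_0(d\mu_n)=1$. Because $C_0(d\mu_n)=1$, Eq. (\ref{norm 2nd meausres eqs1}) is exactly the Stieltjes relation (\ref{eq sit rel}) with $d\mu=d\mu_n$, so Theorem \ref{theorem stiejes} applies and $d\rho_{n+1}$ is a gapless positive measure; its total mass $C_0(d\rho_{n+1})=\int_a^b\bar\rho_{n+1}(x)\,dx$ is finite and strictly positive, as I verify below. Then Eq. (\ref{norm 2nd meausres eqs3}) gives directly
\begin{equation}
C_0(d\mu_{n+1})=\int_a^b\bar\mu_{n+1}(x)\,dx=\frac{1}{C_0(d\rho_{n+1})}\int_a^b\bar\rho_{n+1}(x)\,dx=\frac{C_0(d\rho_{n+1})}{C_0(d\rho_{n+1})}=1,
\end{equation}
which closes the induction.

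The only step that is not purely formal is the claim that $C_0(d\rho_{n+1})$ is finite and nonzero, so that the division in (\ref{norm 2nd meausres eqs3}) is legitimate. I would make this explicit by computing it: expanding $S_{\bar\mu_n}(z)=z^{-1}+C_1(d\mu_n)z^{-2}+C_2(d\mu_n)z^{-3}+\cdots$ at $z=\infty$ and substituting into Eq. (\ref{norm 2nd meausres eqs1}) gives $S_{\bar\rho_{n+1}}(z)=\big(C_2(d\mu_n)-C_1(d\mu_n)^2\big)z^{-1}+\cdots$, so that
\begin{equation}
C_0(d\rho_{n+1})=C_2(d\mu_n)-C_1(d\mu_n)^2=\int_a^b\big(x-C_1(d\mu_n)\big)^2\,d\mu_n(x)=\beta_1(d\mu_n).
\end{equation}
This is finite because all moments of $d\mu_n$ are finite by definition (\ref{meausre defn}), and it is strictly positive unless $d\mu_n$ is concentrated at a single point — a degenerate case that does not occur here, since the support of any measure in the sense of definition (\ref{meausre defn}) spans an interval $[a,b]$ with $a<b$. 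Every remaining step is the one-line computation displayed above, so in the write-up I would keep the argument brief and reserve the emphasis for this non-degeneracy point, which is the only real obstacle.
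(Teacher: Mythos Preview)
Your proof is correct and follows essentially the same approach as the paper: the paper simply takes the zeroth moment of Eq.~(\ref{norm 2nd meausres eqs3}) to get $C_0(d\mu_n)=C_0(d\rho_n)/C_0(d\rho_n)=1$ for $n\geq 1$ and cites the definition for $n=0$. Your inductive framing and the explicit check that $C_0(d\rho_{n+1})=C_2(d\mu_n)-C_1(d\mu_n)^2>0$ add useful rigor (this quantity reappears in the paper as $d_n$), but the core argument is identical.
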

\begin{proof} By taking zeroth moment of both sides of Eq. (\ref{norm 2nd meausres eqs3}) we find
\begin{equation}
C_0(d\mu_n)=\frac{C_0(d\rho_n)}{C_0(d\rho_n)}=1 \quad n=1,2,3,\ldots\,.
\end{equation}
We also have that $C_0(d\mu_0)=1$ by definition (\ref{def 2nd dary measure}). \qed \end{proof}
\begin{lemma}
\begin{equation}
C_n(d\rho_{m+1})=C_{n+2}(d\mu_m)-C_1(d\mu_m)C_{n+1}(d\mu_m)-\sum_{s=0}^{n-1}C_s(d\rho_{m+1})C_{n-s}(d\mu_m)\label{Rolands R eq}
\end{equation}
$n,m=0,1,2,\ldots\,.$
\end{lemma}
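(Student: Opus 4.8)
The plan is to pass to the Stieltjes transform, expand everything as an asymptotic series about $z=\infty$, and read off the claimed recursion by matching the coefficients of the powers of $1/z$. The whole identity is, in effect, the statement that the defining relation \textup{(\ref{norm 2nd meausres eqs1})} for the normalised secondary measures, once the denominator $S_{\bar\mu_n}$ is cleared, becomes a bilinear relation between the moment generating functions of $d\rho_{m+1}$ and $d\mu_m$; equating like powers then produces exactly the displayed formula.

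Concretely, I would first recall the elementary fact that for any measure $d\nu$ with finite moments one has the expansion $S_{\bar\nu}(z)=\sum_{k\ge 0}C_k(d\nu)z^{-k-1}$ as $z\to\infty$ (a convergent Laurent series for $|z|>b$ in the bounded-support case, and an asymptotic expansion otherwise). Applying this to $d\mu_m$, and using $C_0(d\mu_m)=1$ from Corollary \ref{normalised sencond coll}, gives $S_{\bar\mu_m}(z)=z^{-1}+\sum_{k\ge 1}C_k(d\mu_m)z^{-k-1}$; applying it to $d\rho_{m+1}$ — which is a genuine measure by the remarks following Definition \ref{def 2nd dary measure} — gives $S_{\bar\rho_{m+1}}(z)=\sum_{n\ge 0}C_n(d\rho_{m+1})z^{-n-1}$. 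I would then multiply \textup{(\ref{norm 2nd meausres eqs1})} (with $n=m$) through by $S_{\bar\mu_m}(z)$ to obtain the denominator-free identity
\[
S_{\bar\rho_{m+1}}(z)\,S_{\bar\mu_m}(z)=\bigl(z-C_1(d\mu_m)\bigr)S_{\bar\mu_m}(z)-1 .
\]
Substituting the two series and using $C_0(d\mu_m)=1$, the right-hand side collapses to $\sum_{\ell\ge 0}\bigl(C_{\ell+1}(d\mu_m)-C_1(d\mu_m)C_\ell(d\mu_m)\bigr)z^{-\ell-1}$, whose $z^{-1}$ coefficient vanishes, so that it genuinely begins at order $z^{-2}$ — consistent with the Cauchy product on the left, whose coefficient of $z^{-n-2}$ is $\sum_{s=0}^{n}C_s(d\rho_{m+1})C_{n-s}(d\mu_m)$.

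Equating coefficients of $z^{-n-2}$ for each $n\ge 0$ then yields $\sum_{s=0}^{n}C_s(d\rho_{m+1})C_{n-s}(d\mu_m)=C_{n+2}(d\mu_m)-C_1(d\mu_m)C_{n+1}(d\mu_m)$, and since $C_0(d\mu_m)=1$ the $s=n$ term on the left is precisely $C_n(d\rho_{m+1})$; splitting it off and rearranging gives the asserted formula, uniformly in $m$. The only point requiring care is the analytic bookkeeping when the support of $d\mu_m$ (hence of $d\rho_{m+1}$) is unbounded: there the Laurent series need not converge, so the coefficient-matching step should be phrased in terms of asymptotic expansions at infinity — which are unique — or, equivalently, by repeatedly differentiating both sides at $z=\infty$. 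Everything else is a routine Cauchy-product manipulation, and it is the normalisation $C_0(d\mu_m)=1$ supplied by Corollary \ref{normalised sencond coll} that makes the leading orders align so cleanly and lets the $s=n$ term isolate $C_n(d\rho_{m+1})$.
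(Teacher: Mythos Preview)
Your proof is correct and follows essentially the same route as the paper: expand both Stieltjes transforms as moment generating series at $z=\infty$, clear the denominator in \eqref{norm 2nd meausres eqs1} to obtain $S_{\bar\rho_{m+1}}S_{\bar\mu_m}=(z-C_1(d\mu_m))S_{\bar\mu_m}-1$, and match coefficients of $z^{-n-2}$. Your treatment is if anything slightly more careful than the paper's (you flag the asymptotic-versus-convergent distinction for unbounded support and note explicitly why the $z^{-1}$ term on the right vanishes), but the argument is the same; incidentally, the paper's displayed equation \eqref{muti sitges eq} has a typo --- the factor on the right should be $S_{\bar\mu_0}$, not $S_{\bar\rho_1}$, as the subsequent line confirms and as you have it.
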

\begin{proof}
For simplifity we will prove Eq. (\ref{Rolands R eq}) for $m=0$ as the generalisation is trivial. By Taylor expanding $S_{\bar\mu_0}(z)$ and $S_{\bar\rho_1}(z)$ in $x=1/z$ we find
\begin{eqnarray}
S_{\bar\mu_0}(z)&=&\sum_{n=0}^\infty \frac{C_n(d\mu_0)}{z^{n+1}}\quad\text{as}\,z\rightarrow\infty,\label{expansion mu bar}\\
S_{\bar\rho_1}(z)&=&\sum_{n=0}^\infty \frac{C_n(d\rho_1)}{z^{n+1}}\quad\text{as}\,z\rightarrow\infty.\label{expansion rho bar}
\end{eqnarray}
From Eq. (\ref{eq sit rel}) we have
\begin{equation}\label{muti sitges eq}
S_{\bar\rho_1}(z)S_{\bar\mu_0}(z)=\left(z-C_1(d\mu_0)\right)S_{\bar\rho_1}(z)-1.
\end{equation}
Hence substituting Eq. (\ref{expansion mu bar}) and (\ref{expansion rho bar}) into Eq. (\ref{muti sitges eq}) we find
\begin{equation}
\sum_{n,m=0}^\infty C_n(d\mu_0)C_m(d\rho_1)x^{n+m+2}=\left(\frac{1}{x}-C_1(d\mu_0) \right)\sum_{s=0}^\infty C_s(d\mu_0)x^{s+1}-1.
\end{equation}
By comparing terms of the same power in $x$ and taking into account $C_0(d\mu_0)=1$ we deduce
\begin{equation}\label{the intermediate Rol eq}
\sum_{n=0}^m C_n(d\mu_0)C_{m-n}(d\rho_1)=C_{m+2}(d\mu_0)-C_1(d\mu_0)C_{m+1}(d\mu_0)\quad m=0,1,2,\ldots\,.
\end{equation}
By a change of variable in Eq. (\ref{the intermediate Rol eq}) we finally arrive at Eq. (\ref{Rolands R eq}). \qed
\end{proof}
\begin{lemma}\label{lemma continued frac} A sequence of normalised secondary measures $d\mu_0$, $d\mu_1$, $d\mu_2$,\dots , $d\mu_n$, can be written as a continued fraction of the form
\begin{equation}\label{continued frac 1}
S_0(z)=\cfrac{1}{z-C_{1,0}-\cfrac{d_0}{z-C_{1,1}-\cfrac{d_1}{z-C_{1,2}-\cfrac{d_2}{z-\cdots\cfrac{\cdots}{z-C_{1,n-1}-\cfrac{d_{n-1}}{z-C_{1,n}-d_nS_{n+1}(z)}}}}}}
\end{equation}
where we have introduced the shorthand notation $S_n(z):=S_{\bar\mu_n}(z)$, $C_{n,s}:=C_n(d\mu_s), \quad n,s=0,1,2,\ldots\,;$ $d_{n}:=C_{2,n}-C_{1,n}^2, \quad n=1,2,3,\ldots$, and $d_0:=1$.
\end{lemma}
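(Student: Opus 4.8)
The plan is to prove this by induction on the length $n$ of the sequence, peeling off one level of the continued fraction at a time directly from the defining relation (\ref{norm 2nd meausres eqs1}).

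The key preliminary step is a one-step identity. Since the Stieltjes transformation of Definition (\ref{sitjes def}) is homogeneous in the measure, the normalisation (\ref{norm 2nd meausres eqs3}) gives $S_{\bar\rho_{k+1}}(z)=C_0(d\rho_{k+1})\,S_{k+1}(z)$ for every $k\geq 0$. Substituting this into (\ref{norm 2nd meausres eqs1}) and solving for $S_k$ yields
\begin{equation*}
S_k(z)=\cfrac{1}{\,z-C_{1,k}-C_0(d\rho_{k+1})\,S_{k+1}(z)\,},\qquad k=0,1,2,\ldots\,.
\end{equation*}
The constant $C_0(d\rho_{k+1})$ is then identified with $d_k$: setting the first index equal to zero in the moment recursion (\ref{Rolands R eq}) makes the sum there empty and leaves $C_0(d\rho_{k+1})=C_{2,k}-C_{1,k}^2$, which is the shorthand $d_k$ of the statement. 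Hence $S_k(z)=\big(z-C_{1,k}-d_k\,S_{k+1}(z)\big)^{-1}$.

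The continued fraction (\ref{continued frac 1}) now follows by iterating this identity. Formally one proves by induction on $n$ that for every sequence of normalised secondary measures $d\mu_0,\ldots,d\mu_n$ the transform $S_0$ has the stated $n$-level form: the base case $n=0$ is the one-step identity itself, and in the inductive step one uses that $d\mu_1,\ldots,d\mu_n$ is again a sequence of normalised secondary measures — its first member satisfies $C_0(d\mu_1)=1$ by Corollary (\ref{normalised sencond coll}) — so that the induction hypothesis applied to it yields the nested fraction sitting below the top level, which one then composes with the one-step identity at $k=0$. Because each application increases the depth by exactly one and the remainder at depth $n$ is the single term $d_n S_{n+1}(z)$, this reconstructs (\ref{continued frac 1}) together with its termination.

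The proof has no deep content, so the main thing to get right is the bookkeeping, in three places: (i) the bridge between the normalised measures $\mu_n$, through which the $S_n$ are defined, and the un-normalised secondary measures $\rho_{n+1}$, through which (\ref{norm 2nd meausres eqs1}) is phrased, which is handled by the homogeneity identity above; (ii) invoking (\ref{Rolands R eq}) at its lowest index to turn the implicit constant $C_0(d\rho_{k+1})$ into the explicit $C_{2,k}-C_{1,k}^2$; and (iii) tracking the termination, i.e. checking that after exactly $n$ peels the tail is $d_n S_{n+1}(z)$ and not a further nested fraction.
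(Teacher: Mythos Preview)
Your proposal is correct and follows essentially the same route as the paper: identify $C_0(d\rho_{k+1})=C_{2,k}-C_{1,k}^2=d_k$ from (\ref{Rolands R eq}) at lowest index, rewrite (\ref{norm 2nd meausres eqs1}) as the one-step relation $S_k(z)=\big(z-C_{1,k}-d_kS_{k+1}(z)\big)^{-1}$ via homogeneity of the Stieltjes transform, and then iterate. The paper phrases the last step as ``repeated substitution'' rather than induction, but the content is identical; your version is simply more explicit about the homogeneity bridge and the induction bookkeeping.
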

\begin{proof} By evaluating Eq. (\ref{Rolands R eq}) for $n=0$, and taking into account the above definition of $d_n$ we see that $d_{n}=C(d\rho_n),$ $n=1,2,3,\ldots$. Using our new notation, Eq. (\ref{norm 2nd meausres eqs1}) for the sequence of gapless measures reads
\begin{equation}\label{Sn sequence eq}
S_{n+1}(z)=\frac{1}{d_n}[z-C_{1,n}-\frac{1}{S_n(z)}], \quad n=0,1,2,\ldots \;.
\end{equation}
Solving this for $S_n(z)$ followed by repeated substitution gives us Eq. (\ref{continued frac 1}).\qed \end{proof}
\begin{theorem}\label{quatient theorem}
The following relations hold for the  continued fraction Eq. \textup{(\ref{continued frac 1})}
\begin{equation}
S_0(z)=\frac{u_n(-d_nS_{n+1}(z))+u_{n+1}}{v_n(-d_nS_{n+1}(z))+v_{n+1}}, \quad n=0,1,2,\ldots \;,\label{the S_0 quotient eq}
\end{equation}
with relations
\begin{equation}
u_{n+1}=(z-C_{1,n})u_n-d_{n-1}u_{n-1},\quad v_{n+1}=(z-C_{1,n})v_n-d_{n-1}v_{n-1},\label{u_n and v_n eqs}
\end{equation}
and starting values
\begin{equation}
u_0=0,\quad u_1=1,\quad v_0=1,\quad v_1=z-C_{0,1}.\label{start u and v}
\end{equation}
\end{theorem}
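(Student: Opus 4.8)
The plan is to prove Eq.~(\ref{the S_0 quotient eq}) by induction on $n$, the only inputs being the single-step relation between consecutive Stieltjes transforms from Lemma~\ref{lemma continued frac} and the defining recurrences for $u_n,v_n$. Rearranging Eq.~(\ref{Sn sequence eq}) gives, for every $n\geq 0$ and every $z\in\mathbb{C}\setminus[a,b]$,
\begin{equation*}
S_n(z)=\cfrac{1}{z-C_{1,n}-d_nS_{n+1}(z)}=\cfrac{1}{(z-C_{1,n})+\bigl(-d_nS_{n+1}(z)\bigr)},
\end{equation*}
and all denominators appearing below are nonzero because each $S_m$ is analytic off $[a,b]$ and each $d_n=C_0(d\rho_n)>0$ (positivity of the secondary measures being the classical fact cited after Definition~\ref{def 2nd dary measure}).

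For the base case $n=0$ I would substitute the starting values $u_0=0$, $u_1=1$, $v_0=1$, $v_1=z-C_{1,0}$ into the right-hand side of Eq.~(\ref{the S_0 quotient eq}), obtaining $\dfrac{1}{(-d_0S_1(z))+(z-C_{1,0})}$, which is precisely the $n=0$ instance of the displayed relation above (I read the starting value of $v_1$ as $z-C_{1,0}$, in agreement with the truncation of the continued fraction in Lemma~\ref{lemma continued frac}).

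For the inductive step, assume Eq.~(\ref{the S_0 quotient eq}) holds for some $n$. In $S_0(z)=\dfrac{u_n(-d_nS_{n+1}(z))+u_{n+1}}{v_n(-d_nS_{n+1}(z))+v_{n+1}}$ I replace $S_{n+1}(z)$ using the one-step relation with index shifted to $n+1$, namely $S_{n+1}(z)=\bigl(z-C_{1,n+1}-d_{n+1}S_{n+2}(z)\bigr)^{-1}$, and then clear the inner fraction by multiplying numerator and denominator by $W:=z-C_{1,n+1}-d_{n+1}S_{n+2}(z)$. The numerator becomes $u_{n+1}(z-C_{1,n+1})-d_nu_n-d_{n+1}S_{n+2}(z)\,u_{n+1}$, and the first two terms collapse to $u_{n+2}$ by the recurrence Eq.~(\ref{u_n and v_n eqs}) taken at index $n+1$; the denominator transforms identically under $u\mapsto v$. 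This yields exactly the $(n+1)$-instance $S_0(z)=\dfrac{u_{n+1}(-d_{n+1}S_{n+2}(z))+u_{n+2}}{v_{n+1}(-d_{n+1}S_{n+2}(z))+v_{n+2}}$, closing the induction.

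There is no real obstacle: the argument is essentially the standard theory of convergents of a continued fraction, with $u_n,v_n$ the canonical numerators and denominators. The only points needing care are the index bookkeeping when shifting $n\to n+1$ simultaneously in the Stieltjes relation and in the $u,v$ recurrences, and confirming that every quantity one divides by is nonzero for $z\notin[a,b]$, which follows from analyticity of the $S_m$ together with $d_n>0$.
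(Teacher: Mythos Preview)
Your induction is correct and is precisely the standard argument from the theory of continued fractions that the paper merely cites (it defers to \cite{baker2}, section~4, without writing anything out). You have simply supplied the details: the base case using the one-step relation $S_0=1/(z-C_{1,0}-d_0S_1)$, and the inductive step via multiplication by $W=z-C_{1,n+1}-d_{n+1}S_{n+2}$ followed by the recurrence for $u,v$. Your reading of the starting value as $v_1=z-C_{1,0}$ rather than the printed $z-C_{0,1}$ is also correct; the latter is a typo, since the first partial denominator in Eq.~(\ref{continued frac 1}) is $z-C_{1,0}$.
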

\begin{proof} These are elementary results from the theory of continued fractions (e.g. see section 4, connection with continued fractions \cite{baker2}).\end{proof}
\begin{lemma}\label{pre theorem Pade}
\begin{eqnarray}
\Delta_{n+1}&:=&u_{n+1}v_n-v_{n+1}u_n=d_0d_1d_2\ldots d_{n-1}, \quad n=1,2,3, \ldots\, ,\label{delta eq}\\
\Delta_{1}&=&1.
\end{eqnarray}
\end{lemma}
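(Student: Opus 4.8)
The plan is to prove the identity by induction on $n$, the point being that $\Delta_{n+1}$ itself satisfies a first-order recurrence $\Delta_{n+2}=d_n\,\Delta_{n+1}$ as a consequence of the fact that $u_n$ and $v_n$ solve the \emph{same} linear three-term recurrence \textup{(\ref{u_n and v_n eqs})}. Once this is established, iterating down to the base value $\Delta_1=1$ gives $\Delta_{n+1}=d_{n-1}d_{n-2}\cdots d_0\,\Delta_1=d_0d_1\cdots d_{n-1}$ at once.

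Concretely, I would first apply \textup{(\ref{u_n and v_n eqs})} with its index shifted from $n$ to $n+1$ (so that it is valid for every $n\geq 0$), writing $u_{n+2}=(z-C_{1,n+1})u_{n+1}-d_n u_n$ and $v_{n+2}=(z-C_{1,n+1})v_{n+1}-d_n v_n$, and substitute these into $\Delta_{n+2}:=u_{n+2}v_{n+1}-v_{n+2}u_{n+1}$. The two terms carrying the factor $(z-C_{1,n+1})u_{n+1}v_{n+1}$ cancel, leaving $\Delta_{n+2}=-d_n u_n v_{n+1}+d_n v_n u_{n+1}=d_n\big(u_{n+1}v_n-v_{n+1}u_n\big)=d_n\Delta_{n+1}$. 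For the base case I would record directly from \textup{(\ref{start u and v})} that $u_1v_0-v_1u_0=1$, matching the stated convention $\Delta_1=1$, and that $u_2=z-C_{1,1}$ and $v_2=(z-C_{1,1})v_1-d_0$, whence $\Delta_2=u_2v_1-v_2u_1=d_0=1$, in agreement both with $\Delta_2=d_0\Delta_1$ and with the product formula at $n=1$.

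I do not expect any real obstacle here: the whole content is the cancellation in the displayed step, which is nothing but the Casoratian (discrete Wronskian) identity for two solutions of one second-order linear recurrence — the same mechanism that makes the determinants $u_{n+1}v_n-v_{n+1}u_n$ of successive continued-fraction convergents telescope. The only thing demanding care is index bookkeeping: making sure \textup{(\ref{u_n and v_n eqs})} is being used at argument $n+1$ rather than $n$, and that the product $d_0d_1\cdots d_{n-1}$ is read with the single factor $d_0$ when $n=1$, so that the formula correctly reduces to $\Delta_2=d_0$.
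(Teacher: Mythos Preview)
Your proposal is correct and follows essentially the same route as the paper: derive the one-step recurrence $\Delta_{n+1}=d_{n-1}\Delta_n$ (which you write equivalently as $\Delta_{n+2}=d_n\Delta_{n+1}$) by substituting the common three-term recurrence \textup{(\ref{u_n and v_n eqs})} into the determinant, check the base case from \textup{(\ref{start u and v})}, and iterate. Your exposition is more detailed---explicitly computing $\Delta_2$ and noting the Casoratian interpretation---but the argument is the same.
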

\begin{proof}
Using Eq. (\ref{u_n and v_n eqs}) to substitute for $u_{n+1}$ and $v_{n+1}$ into Eq. (\ref{delta eq}), we find the relation $\Delta_{n+1}=d_{n-1}\Delta_{n}$. Using Eq. (\ref{start u and v}) to verify Eq. (\ref{delta eq}) for the starting values, Eq. (\ref{delta eq}) follows by induction. \qed
\end{proof}
\begin{definition}\label{defn pade approx}
A \textup{Pad\'e Approximant} for a function $g$ of type $q/p$ in the neighbourhood of $0$ is a rational fraction
\begin{equation}
F(z)=\frac{Q(z)}{P(z)},
\end{equation}
with degree of $Q\leq q$, degree of $P\leq p$ and $g(z)-\frac{Q(z)}{P(z)}$ of order $\mathcal{O}(z^{p+q+1})$ in the neighbourhood of $0$.
\end{definition} For more details, see \cite{baker2}.
\begin{theorem}\label{theorem Pade approx}
$F_n(z)=\frac{u_{n+1}(z)}{v_{n+1}(z)}$ is a Pad\'e Approximant for $S_0(z)$ of type $n/(n+1), \quad n= 0,1,2,\ldots$ .
\end{theorem}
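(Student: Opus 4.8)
The plan is to read off the order of contact between $S_0(z)$ and the rational function $u_{n+1}/v_{n+1}$ directly from the exact quotient representation of Theorem \ref{quatient theorem}, using the determinant identity of Lemma \ref{pre theorem Pade}; throughout, "neighbourhood of $0$" in Definition \ref{defn pade approx} is read in the variable $w=1/z$, which is the sense in which Stieltjes transforms are expanded here (cf.\ Eqs.\ (\ref{expansion mu bar})--(\ref{expansion rho bar})). First I would pin down the degrees: a short induction on the recurrences (\ref{u_n and v_n eqs}) with the starting data (\ref{start u and v}) shows that $v_{n+1}(z)$ is monic of degree exactly $n+1$ and, since $u_1=1$, that $u_{n+1}(z)$ is monic of degree exactly $n$ for $n\ge 0$ (each step multiplies a monic polynomial by $z-C_{1,n}$ and subtracts something of strictly lower degree). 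In particular $\deg u_{n+1}\le n$ and $\deg v_{n+1}\le n+1$, which are exactly the degree bounds demanded of a type $n/(n+1)$ approximant.

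Next I would compute the error exactly. Subtracting $u_{n+1}/v_{n+1}$ from (\ref{the S_0 quotient eq}) and combining over a common denominator,
\begin{equation*}
S_0(z)-\frac{u_{n+1}}{v_{n+1}}=\frac{(-d_nS_{n+1})\,(u_nv_{n+1}-u_{n+1}v_n)}{v_{n+1}\,\bigl(v_{n+1}-d_nS_{n+1}v_n\bigr)}=\frac{(d_0d_1\cdots d_n)\,S_{n+1}(z)}{v_{n+1}(z)\,\bigl(v_{n+1}(z)-d_nS_{n+1}(z)v_n(z)\bigr)},
\end{equation*}
where the last equality uses Lemma \ref{pre theorem Pade} in the form $u_{n+1}v_n-u_nv_{n+1}=\Delta_{n+1}=d_0d_1\cdots d_{n-1}$, so that the numerator is $d_n\Delta_{n+1}S_{n+1}=d_0d_1\cdots d_n\,S_{n+1}$.

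It then remains to count orders as $z\to\infty$. By Corollary \ref{normalised sencond coll}, every measure in a sequence of normalised secondary measures has zeroth moment $1$, so $S_{n+1}(z)=\sum_{k\ge 0}C_k(d\mu_{n+1})z^{-k-1}=z^{-1}+\mathcal{O}(z^{-2})$; in particular $S_{n+1}(z)=\mathcal{O}(z^{-1})$. Since $v_{n+1}$ is monic of degree $n+1$ and $v_n$ of degree $n$, the correction $d_nS_{n+1}v_n$ is $\mathcal{O}(z^{n-1})$, the denominator behaves like $z^{2n+2}$, and the numerator is $\mathcal{O}(z^{-1})$, whence $S_0(z)-u_{n+1}(z)/v_{n+1}(z)=\mathcal{O}(z^{-(2n+3)})$, i.e.\ in $w=1/z$ it vanishes to order $2n+3>n+(n+1)+1$. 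Combined with the degree bounds, this is precisely the assertion that $u_{n+1}/v_{n+1}$ is a Pad\'e approximant of $S_0$ of type $n/(n+1)$.

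The one genuinely delicate point is this last bookkeeping: one must check that the denominator really has leading behaviour $\sim z^{2n+2}$ — i.e.\ that the $d_nS_{n+1}v_n$ term is of strictly lower order and causes no cancellation of the top coefficient of $v_{n+1}$ — and keep track of the $0\leftrightarrow\infty$ change of variable that reconciles Definition \ref{defn pade approx} with the behaviour of a Stieltjes transform at infinity. The degenerate situation in which some $d_k=0$ (the continued fraction terminates and the error is identically zero) is trivial and worth a passing remark.
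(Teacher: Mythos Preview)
Your argument is correct and follows essentially the same route as the paper: both subtract $u_{n+1}/v_{n+1}$ from the quotient representation of Theorem~\ref{quatient theorem}, invoke Lemma~\ref{pre theorem Pade} to identify the numerator as $d_n\Delta_{n+1}S_{n+1}=d_0\cdots d_n\,S_{n+1}$, establish the degrees of $u_{n+1},v_{n+1}$ by induction on the recurrences, and then read off the $\mathcal{O}(z^{-(2n+3)})$ behaviour from $S_{n+1}(z)=z^{-1}+\mathcal{O}(z^{-2})$ (which follows from $C_0(d\mu_{n+1})=1$). Your added remarks on the $w=1/z$ bookkeeping, the non-cancellation in the denominator, and the degenerate case $d_k=0$ are helpful clarifications but do not change the underlying strategy.
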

\begin{proof} Using theorem (\ref{quatient theorem}), we can write $S_0(z)-\frac{u_{n+1}(z)}{v_{n+1}(z)}$ as
\begin{equation}
S_0(z)-\frac{u_{n+1}(z)}{v_{n+1}(z)}=\frac{\Delta_{n+1}d_n S_{n+1}(z)}{v_{n+1}(z)(v_{n+1}(z)-d_nv_n(z)S_{n+1}(z))}, \quad n= 0,1,2,\ldots\,. \label{the frac eq intermed at big}
\end{equation}
Through lemma (\ref{pre theorem Pade}) we see that $\Delta_{n+1}$ is independent of $z$. By Taylor expanding $S_{n+1}(z)$ defined in definition (\ref{sitjes def}) about $x=1/z$, and remembering that $C_0(d\mu_n)=1 \quad n=1,2,3\ldots\,\ $, we find using Eq. (\ref{the frac eq intermed at big}) that
\begin{equation}
S_{n+1}(z)=\frac{1}{z}+\mathcal{O}(\frac{1}{z^2})\quad\ \text{as} \:z \rightarrow \infty\quad n=0,1,2,\ldots\,.
\end{equation}
By induction, we see that $v_{n+1}$ and $u_{n+1}, \quad n=0,1,2,\ldots$ given by Eq. $(\ref{u_n and v_n eqs})$ and $(\ref{start u and v})$, are degree $n+1$ and $n$ polynomials in $z$ respectively, both with leading coefficients equal to unity. Hence we conclude that
\begin{equation}
S_0(z)-\frac{u_{n+1}(z)}{v_{n+1}(z)}=\frac{\Delta_{n+1}d_n}{z^{2n+3}}=\frac{d_0d_1\ldots d_n}{z^{2n+3}}\quad\ \text{as} \:z \rightarrow \infty\quad n=0,1,2,\ldots\,.\label{dod1...}
\end{equation}
Thus by definition (\ref{defn pade approx}) we conclude the proof.\qed \end{proof}
\begin{lemma}\label{lemma un and vn}
$u_{n}(z)=\lambda_{n}Q_n(d\mu_0;z)$ and $v_{n}(z)=\lambda_{n}P_{n}(d\mu_0;z)\quad n=0,1,2,\ldots\,$, with $\lambda_n=1/a_n$ were $a_n$ is defined in definition \textup{(\ref{def of monic})}.
\end{lemma}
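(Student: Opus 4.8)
The plan is to identify the two rational functions $u_{n+1}(z)/v_{n+1}(z)$ and $Q_{n+1}(d\mu_0;z)/P_{n+1}(d\mu_0;z)$ with one and the same object, namely the (unique) Pad\'e approximant of $S_0(z)=S_{\bar\mu_0}(z)$ of type $n/(n+1)$, and then to pin down the individual numerators and denominators by comparing leading coefficients. Theorem \ref{theorem Pade approx} already supplies half of this, since it asserts that $u_{n+1}/v_{n+1}$ is such an approximant. Hence the first substantive step is to prove that $Q_{n+1}(d\mu_0)/P_{n+1}(d\mu_0)$ is one too.

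For that I would exploit the orthogonality of $P_{n+1}(d\mu_0)$. Starting from Definition \ref{defn of 2nd poly} one obtains the identity $S_0(z)P_{n+1}(d\mu_0;z)-Q_{n+1}(d\mu_0;z)=\int_a^b P_{n+1}(d\mu_0;x)/(z-x)\,d\mu_0(x)$ (write $S_0(z)P_{n+1}(d\mu_0;z)=\int_a^b P_{n+1}(d\mu_0;z)/(z-x)\,d\mu_0(x)$ and cancel). Expanding $1/(z-x)=\sum_{k\ge0}x^k z^{-k-1}$ turns the right-hand side into $\sum_{k\ge0}\langle x^k,P_{n+1}(d\mu_0)\rangle_{\bar\mu_0}\,z^{-k-1}$, whose terms with $k=0,\dots,n$ vanish by Definition \ref{def of Pn}; thus the right-hand side is $\mathcal O(z^{-n-2})$. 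Dividing by $P_{n+1}(d\mu_0;z)$, which has exact degree $n+1$, gives $S_0(z)-Q_{n+1}(d\mu_0;z)/P_{n+1}(d\mu_0;z)=\mathcal O(z^{-2n-3})$, and since $\deg Q_{n+1}(d\mu_0)=n$ this is precisely the defining property of a Pad\'e approximant of $S_0$ of type $n/(n+1)$ (Definition \ref{defn pade approx}, read in the variable $1/z$ as in the proof of Theorem \ref{theorem Pade approx}).

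By uniqueness of the Pad\'e approximant of a prescribed type (\cite{baker2}), $u_{n+1}/v_{n+1}$ and $Q_{n+1}(d\mu_0)/P_{n+1}(d\mu_0)$ coincide as rational functions. Positivity of $d\mu_0$ makes this entry of the Pad\'e table non-degenerate --- equivalently, $P_{n+1}(d\mu_0)$ and $Q_{n+1}(d\mu_0)$ have no common factor --- so $v_{n+1}$ is a scalar multiple of $P_{n+1}(d\mu_0)$ and $u_{n+1}$ is that same scalar multiple of $Q_{n+1}(d\mu_0)$. A one-line induction on (\ref{u_n and v_n eqs}) with starting values (\ref{start u and v}) shows $v_{n+1}$ is monic of degree $n+1$, whereas $P_{n+1}(d\mu_0)$ has leading coefficient $a_{n+1}$ (Definition \ref{def of monic}), so the scalar equals $\lambda_{n+1}=1/a_{n+1}$; hence $v_{n+1}=\lambda_{n+1}P_{n+1}(d\mu_0)$ and $u_{n+1}=\lambda_{n+1}Q_{n+1}(d\mu_0)$ for every $n\ge0$. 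The case $n=0$ follows by hand: $C_0(d\mu_0)=1$ gives $\beta_0(d\mu_0)=1$ by Corollary \ref{cor C0 beta0}, so $P_0(d\mu_0;x)=1$, $a_0=1$, $\lambda_0=1$, and therefore $v_0=1=\lambda_0P_0(d\mu_0)$ while $u_0=0=\lambda_0Q_0(d\mu_0)$ since $Q_0=0$.

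The step that needs the most care is the passage from ``the two rational functions coincide'' to ``their numerators and denominators coincide up to one common scalar''; this rests on non-degeneracy of the relevant Pad\'e entry, i.e. on $P_{n+1}(d\mu_0)$ and $Q_{n+1}(d\mu_0)$ being coprime, which can be deduced from the nonvanishing constant Wronskian $P_{n}(d\mu_0;z)Q_{n+1}(d\mu_0;z)-P_{n+1}(d\mu_0;z)Q_{n}(d\mu_0;z)=1/t_n(d\mu_0)$, itself a telescoping consequence of the recurrence of Theorem \ref{recurrence for monics}. An alternative that avoids Pad\'e theory altogether is to check directly that $\{v_n\}$ and $\{\lambda_nP_n(d\mu_0)\}$, and likewise $\{u_n\}$ and $\{\lambda_nQ_n(d\mu_0)\}$, obey the \emph{same} three-term recursion with the \emph{same} initial data; this, however, requires first establishing that the normalised secondary measure $d\mu_n$ carries the Jacobi parameters of $d\mu_0$ shifted by $n$ (so that $C_{1,n}=\alpha_n(d\mu_0)$ and $d_n=\beta_{n+1}(d\mu_0)$), read off inductively from Theorem \ref{theorem stiejes}, together with the fact --- proved by applying the operator $f\mapsto\int_a^b(f(t)-f(x))/(t-x)\,d\mu_0(t)$ to the recurrence of Theorem \ref{recurrence for monics} --- that $Q_n(d\mu_0)$ satisfies that same recurrence. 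Given the Pad\'e machinery already in place, the first route is the shorter one.
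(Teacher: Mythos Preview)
Your proposal is correct and follows essentially the same route as the paper: both identify $u_{n+1}/v_{n+1}$ and $Q_{n+1}(d\mu_0)/P_{n+1}(d\mu_0)$ with the unique Pad\'e approximant of $S_0$ of type $n/(n+1)$, then read off the common scalar $\lambda_n=1/a_n$ from the leading coefficients. The paper simply cites \cite{baker2}, section~5.3, for the fact that $Q_{n+1}/P_{n+1}$ is that Pad\'e approximant, whereas you prove it directly from orthogonality; you also make explicit the coprimality (non-degeneracy) step via the constant Wronskian $P_nQ_{n+1}-P_{n+1}Q_n=1/t_n$, which the paper leaves implicit --- so your argument is more self-contained, but not different in substance.
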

\begin{proof} From section 5.3.: moment problems and orthogonal polynomials (p213-220) of \cite{baker2}, and theorem (\ref{theorem Pade approx}) it follows that
\begin{equation}
\frac{u_{n+1}(z)}{v_{n+1}(z)}=\frac{Q_{n+1}(d\mu_0;z)}{P_{n+1}(d\mu_0;z)}\quad n=0,1,2,\ldots\,.\label{frac q/p}
\end{equation}
By observing the starting values, we also have that
\begin{equation}
\frac{u_{0}(z)}{v_{0}(z)}=\frac{Q_{0}(d\mu_0;z)}{P_{0}(d\mu_0;z)}.
\end{equation}
Hence
\begin{equation}
u_{n}(z)=\lambda_{n}Q_n(d\mu_0;z)\quad\ \text{and} \quad v_{n}(z)=\lambda_{n}P_{n}(d\mu_0;z)\quad n=0,1,2,\ldots\,.
\end{equation}
Given that $u_n$ and $v_n$ have leading coefficients
equal to 1, we must have $\lambda_n=1/a_n$\quad n=0,1,2,\ldots\,.\qed
\end{proof}
\begin{theorem}\label{the dn beta theorem}
$d_n=a_n^2/a_{n+1}^2=\beta_{n+1}(d\mu_0)\quad n=0,1,2,\ldots\,.$
\end{theorem}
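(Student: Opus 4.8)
The plan is to establish the two equalities separately, in each case reducing everything to the three-term recurrence coefficients of the orthogonal polynomials of $d\mu_0$.

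\emph{The second equality.} By Definition~\ref{def of monic} the monic polynomials satisfy $\pi_n(d\mu_0;x)=P_n(d\mu_0;x)/a_n$, so orthonormality (Definition~\ref{def of Pn}) gives $\langle\pi_n(d\mu_0),\pi_n(d\mu_0)\rangle_{\bar\mu_0}=a_n^{-2}\langle P_n(d\mu_0),P_n(d\mu_0)\rangle_{\bar\mu_0}=a_n^{-2}$. Substituting this into the definition of $\beta_{n+1}$, Eq.~(\ref{bet def eq}), yields $\beta_{n+1}(d\mu_0)=\langle\pi_{n+1},\pi_{n+1}\rangle_{\bar\mu_0}/\langle\pi_n,\pi_n\rangle_{\bar\mu_0}=a_n^2/a_{n+1}^2$ for every $n\ge 0$ (for $n=0$ this also forces $a_0=1$, since $\beta_0(d\mu_0)=C_0(d\mu_0)=1$ by Corollary~\ref{cor C0 beta0}). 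Equivalently, one may compare the coefficients of $x^{n+1}$ on the two sides of the orthonormal recurrence of Theorem~\ref{recurrence for monics}, obtaining $t_n(d\mu_0)\,a_{n+1}=a_n$ and hence $\beta_{n+1}(d\mu_0)=t_n(d\mu_0)^2=a_n^2/a_{n+1}^2$.

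\emph{The first equality.} The key observation is that the auxiliary polynomials $v_n$ of Theorem~\ref{quatient theorem} are exactly the monic orthogonal polynomials of $d\mu_0$: Lemma~\ref{lemma un and vn} gives $v_n(z)=P_n(d\mu_0;z)/a_n$, which is $\pi_n(d\mu_0;z)$ by Definition~\ref{def of monic}. Inserting this identification into the recurrence Eq.~(\ref{u_n and v_n eqs}) (valid for $n\ge 1$) turns it into
\[
\pi_{n+1}(d\mu_0;z)=\bigl(z-C_{1,n}\bigr)\pi_n(d\mu_0;z)-d_{n-1}\,\pi_{n-1}(d\mu_0;z),\qquad n\ge 1 .
\]
This has exactly the shape of the monic three-term recurrence of Theorem~\ref{monic recursion relation}, whose coefficients are uniquely determined: subtracting the two identities gives $\bigl(\alpha_n(d\mu_0)-C_{1,n}\bigr)\pi_n(d\mu_0;z)=\bigl(\beta_n(d\mu_0)-d_{n-1}\bigr)\pi_{n-1}(d\mu_0;z)$, and since $\pi_n(d\mu_0)$ and $\pi_{n-1}(d\mu_0)$ have exact degrees $n$ and $n-1$ they are linearly independent, forcing $C_{1,n}=\alpha_n(d\mu_0)$ and $d_{n-1}=\beta_n(d\mu_0)$. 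Re-indexing, $d_m=\beta_{m+1}(d\mu_0)$ for all $m\ge 1$; the case $m=0$ follows by taking $n=1$ in the displayed relation (which only involves $v_0,v_1,v_2$ and the starting values Eq.~(\ref{start u and v})), giving $d_0=\beta_1(d\mu_0)$. Combining with the second equality finishes the proof, and as a by-product one reads off the useful fact $C_1(d\mu_n)=\alpha_n(d\mu_0)$.

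I do not expect a genuine obstacle here: every ingredient --- Lemma~\ref{lemma un and vn}, the recurrences of Theorems~\ref{monic recursion relation} and~\ref{recurrence for monics}, and the elementary uniqueness of recurrence coefficients --- is already in place, so the argument is essentially bookkeeping. The only points needing care are the low-index cases (the $d_0$ term, handled by specialising the recurrence at $n=1$, and $n=0$ for $\beta$, where one also checks $a_0=1$). One could alternatively route through the determinants $\Delta_{n+1}$ of Lemma~\ref{pre theorem Pade}: substituting Lemma~\ref{lemma un and vn} and using that $P_n(d\mu_0)$ and $Q_n(d\mu_0)$ satisfy a common three-term recurrence, one gets $t_n(d\mu_0)\bigl(Q_{n+1}(d\mu_0;z)P_n(d\mu_0;z)-P_{n+1}(d\mu_0;z)Q_n(d\mu_0;z)\bigr)=1$, whence $\Delta_{n+1}=1/a_n^2$ and $d_n=\Delta_{n+2}/\Delta_{n+1}=a_n^2/a_{n+1}^2$; this, however, needs the (standard but not stated above) recurrence for the secondary polynomials, so the recurrence-matching argument is preferable.
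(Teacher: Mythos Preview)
Your argument is correct and noticeably more direct than the paper's. The paper does not match recurrences; instead it compares two asymptotic expansions of $S_0(z)-u_{n+1}(z)/v_{n+1}(z)$ as $z\to\infty$: the one produced in the proof of Theorem~\ref{theorem Pade approx}, which yields the leading coefficient $d_0d_1\cdots d_{n-1}$, and the classical expansion $S_0(z)-Q_n/P_n=\gamma_n/z^{2n+1}$ with $\gamma_n=a_0^2/a_n^2$ taken from Gautschi. Equating these products and then dividing consecutive instances gives $d_n=a_n^2/a_{n+1}^2$; the identification with $\beta_{n+1}(d\mu_0)$ is then obtained exactly as in your ``second equality'' paragraph.

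Your route bypasses the asymptotic/Pad\'e machinery entirely by exploiting the identification $v_n=\pi_n(d\mu_0)$ from Lemma~\ref{lemma un and vn} and reading $d_{n-1}=\beta_n(d\mu_0)$ straight off the uniqueness of monic three-term recurrence coefficients. This is shorter and gives the side identity $C_1(d\mu_n)=\alpha_n(d\mu_0)$ for free (which the paper only derives later, in Eq.~(\ref{the alpah n 1st mo n eq})). A minor remark: since the recurrence Eq.~(\ref{u_n and v_n eqs}) together with the starting values Eq.~(\ref{start u and v}) is already valid at $n=1$, your comparison immediately yields $d_m=\beta_{m+1}(d\mu_0)$ for all $m\ge 0$; the separate treatment of $m=0$ is not actually needed. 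The alternative you sketch through $\Delta_{n+1}$ is essentially the paper's strategy in disguise, so your preference for the recurrence-matching argument is well placed.
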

\begin{proof} Proceeding in the same way as in page (18) of \cite{Gautschi}, we have
\begin{equation}
S_0(z)-\frac{Q_n(d\mu_0;z)}{P_n(d\mu_0;z)}=\frac{\gamma_n}{z^{2n+1}}\quad\ \text{as} \:z \rightarrow \infty\quad n=0,1,2,\ldots\,.\label{stitjes tranform id}
\end{equation}
where
\begin{equation}\label{the gamma eq}
\gamma_n=\frac{1}{a_n}\int_a^bx^nP_n(d\mu_0;x)d\mu_0(x)=\frac{\gamma_0a_0^2}{a_n^2}\frac{\langle P_n(d\mu_0),P_n(d\mu_0)\rangle_{\bar\mu}}{\langle P_0(d\mu_0),P_0(d\mu_0)\rangle_{\bar\mu}}\quad n=0,1,2,\ldots\,.
\end{equation}
Noting that $P_0(d\mu_0;x)/a_0=\pi_0(d\mu_0;x)=1$ and that $C_0(d\mu_0)=1$, Eq. (\ref{the gamma eq}) tells us $\gamma_0=1$.
Comparing Eq. (\ref{stitjes tranform id}) with Eq. (\ref{dod1...}) and (\ref{frac q/p}), we deduce that
\begin{equation}\label{the d0d1d2...equate eq}
d_0d_1\ldots d_{n-1}=\frac{a_0^2}{a_n^2}\frac{\langle P_n(d\mu_0),P_n(d\mu_0)\rangle_{\bar\mu}}{\langle P_0(d\mu_0),P_0(d\mu_0)\rangle_{\bar\mu}}\quad n=1,2,3,\ldots\,.
\end{equation}
By induction it follows
\begin{equation}\label{the d_n equation}
d_n=\frac{a_n^2}{a_{n+1}^2}\frac{\langle P_{n+1}(d\mu_0),P_{n+1}(d\mu_0)\rangle_{\bar\mu}}{\langle P_n(d\mu_0),P_n(d\mu_0)\rangle_{\bar\mu}}\quad n=1,2,3,\ldots\,.
\end{equation}
Due to definition (\ref{def of Pn}), we see that $\langle P_n(d\mu_0),P_n(d\mu_0)\rangle_{\bar\mu}=1\quad n=0,1,2,\ldots\,,$ hence
\begin{equation}
d_n=\frac{a_n^2}{a_{n+1}^2}\quad n=1,2,3,\ldots\,.
\end{equation}
From definition (\ref{def of monic}), we see that Eq. (\ref{the d_n equation}) can be written in the form
\begin{equation}
d_n=\frac{\langle \pi_{n+1}(d\mu_0),\pi_{n+1}(d\mu_0)\rangle_{\bar\mu}}{\langle \pi_n(d\mu_0),\pi_n(d\mu_0)\rangle_{\bar\mu}}\quad n=1,2,3,\ldots\,.
\end{equation}
Hence, from definition (\ref{bet def eq}) we conclude,
\begin{equation}
d_n=\beta_{n+1}(d\mu_0)\quad n=1,2,3,\ldots\,.
\end{equation}
For $n=1$, Eq. (\ref{the d0d1d2...equate eq}) gives us
\begin{equation}
d_0=\frac{\langle \pi_{1}(d\mu_0),\pi_{1}(d\mu_0)\rangle_{\bar\mu}}{\langle \pi_0(d\mu_0),\pi_0(d\mu_0)\rangle_{\bar\mu}}=\frac{a_0^2}{a_{1}^2}=\beta_1(d\mu_0).
\end{equation}\qed
\end{proof}
\begin{theorem}\label{S-P inv theorem}
Gapless measures $d\mu(x)=\bar\mu(x)dx$ can be calculated from their Stieltjes transform by
\begin{equation}
\bar\mu(x)=\frac{1}{2\pi i}\lim_{\epsilon\rightarrow0^+}\left[S_{\bar\mu}(x-i\epsilon)-S_{\bar\mu}(x+i\epsilon)\right].
\end{equation}
\end{theorem}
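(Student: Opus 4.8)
The plan is to establish this Stieltjes--Perron inversion formula by reducing the right-hand side to the convolution of $\bar\mu$ with a Lorentzian (Poisson) approximate identity and then applying the classical approximate-identity argument. First I would insert the definition of the Stieltjes transform (Definition~\ref{sitjes def}) into the bracket. For fixed $\epsilon>0$, since $\bar\mu\in L^1(I)$ by the finiteness of $C_0(d\mu)$ in Definition~\ref{meausre defn} and the integrand is bounded by $\epsilon^{-1}\bar\mu(t)$, one may combine the two integrals and obtain
\begin{equation}
S_{\bar\mu}(x-i\epsilon)-S_{\bar\mu}(x+i\epsilon)=\int_a^b\left(\frac{1}{x-i\epsilon-t}-\frac{1}{x+i\epsilon-t}\right)\bar\mu(t)\,dt=2i\int_a^b\frac{\epsilon\,\bar\mu(t)}{(x-t)^2+\epsilon^2}\,dt .
\end{equation}
Dividing by $2\pi i$ gives
\begin{equation}
\frac{1}{2\pi i}\left[S_{\bar\mu}(x-i\epsilon)-S_{\bar\mu}(x+i\epsilon)\right]=\int_a^b K_\epsilon(x-t)\,\bar\mu(t)\,dt,\qquad K_\epsilon(u):=\frac{1}{\pi}\,\frac{\epsilon}{u^2+\epsilon^2}.
\end{equation}

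Next I would invoke the standard properties of $K_\epsilon$ on the real line: $K_\epsilon\ge 0$, $\int_{\mathbb R}K_\epsilon(u)\,du=1$ for every $\epsilon>0$, and $\int_{|u|>\delta}K_\epsilon(u)\,du\to 0$ as $\epsilon\to 0^+$ for each fixed $\delta>0$. Thus $\{K_\epsilon\}$ is an approximate identity. Extending $\bar\mu$ by zero outside $I$, the usual splitting of $\int K_\epsilon(x-t)\bar\mu(t)\,dt$ into the regions $|x-t|\le\delta$ and $|x-t|>\delta$ --- using continuity of $\bar\mu$ (or, more generally, the Lebesgue-point estimate) together with $\int K_\epsilon=1$ on the first piece, and $\|\bar\mu\|_{L^1(I)}<\infty$ together with the tail bound on the second --- shows that the convolution converges to $\bar\mu(x)$ as $\epsilon\to 0^+$ at every point of continuity of $\bar\mu$, hence at Lebesgue-almost every $x\in I$. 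At an interior point $x$ of the support the zero-extension costs nothing; at the endpoints $a,b$ only half the mass of $K_\epsilon$ sits over $I$, so the displayed identity is to be read for $x$ in the interior of the support (equivalently, a.e.\ on $I$), which is all that is needed in the sequel.

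The only genuine obstacle is the interchange of $\lim_{\epsilon\to 0^+}$ with the $t$-integral when $\bar\mu$ is merely nonnegative and in $L^1$ --- possibly unbounded near isolated points or near a (possibly infinite) endpoint. This is exactly what the Lebesgue-point / approximate-identity machinery controls, and it relies only on $\bar\mu\ge 0$ and $\bar\mu\in L^1(I)$, both guaranteed by Definition~\ref{meausre defn}; no ingredient beyond classical real and harmonic analysis is required, which is why the result can be recorded here essentially by citation.
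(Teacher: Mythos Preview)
Your argument is correct and is in fact the standard derivation of the Stieltjes--Perron inversion formula via the Poisson kernel. The paper itself does not give a proof of this theorem at all: it simply records the name of the result and cites \cite{S-P} and \cite{Gautschi}. In that sense you have supplied strictly more than the paper does, and your closing remark that the result ``can be recorded here essentially by citation'' is precisely what the authors chose to do.

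One small point worth tightening if you keep the full argument: your statement that the convolution converges to $\bar\mu(x)$ ``at every point of continuity of $\bar\mu$, hence at Lebesgue-almost every $x\in I$'' conflates two distinct sufficient conditions. Continuity at $x$ is sufficient, and being a Lebesgue point of $\bar\mu$ is sufficient; the latter holds a.e.\ for any $\bar\mu\in L^1$, but not because of continuity. Since the paper's later uses (e.g.\ in the proof of Theorem~\ref{Roland initial theorem} and in Definition~\ref{def reducer}) only require the identity to hold a.e., the Lebesgue-point version is exactly what is needed, and your invocation of it is appropriate.
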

\begin{proof} This result is known as the \textit{Stieltjes-Perron inversion formula}. See \cite{S-P} for more details or Example 2.50: Stieltjes-Perron inversion formula \cite{Gautschi} for an application and more references.
\end{proof}
\begin{definition}\label{def reducer}
We call $\varphi(d\mu;x)$ the \textup{reducer} of gapless measure $d\mu( x)$. It is given by
\begin{equation}
\varphi(d\mu;x)=\lim_{\epsilon\rightarrow0^+}\left[S_{\bar\mu}(x-i\epsilon)+S_{\bar\mu}(x+i\epsilon)\right].
\end{equation}
\end{definition}
See section \ref{Methods for calculating the reducer} for methods for calculating the reducer.
The reducer allows us to write an explicit expression for the secondary measure associated with $\bar\mu(x)$ as follows.
\begin{theorem}\label{Roland initial theorem}
For a gapless measure $d\mu(x)$ with secondary measure $d\rho(x)$, we have
\begin{equation}\label{eq for rho in temrs of mu}
\bar\rho(x)=\frac{\bar\mu(x)}{\frac{\varphi^2(d\mu;x)}{4}+\pi^2\bar\mu^2(x)}
\end{equation}
\end{theorem}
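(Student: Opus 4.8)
The plan is to compute the Stieltjes transform of $d\rho$ directly from the defining relation \eqref{eq sit rel}, i.e.\ $S_{\bar\rho}(z)=z-C_1(d\mu)-1/S_{\bar\mu}(z)$ (recall $C_0(d\mu)=1$ here, so $d\mu$ is normalised), and then apply the Stieltjes--Perron inversion formula of Theorem~\ref{S-P inv theorem} to recover $\bar\rho(x)$ as a boundary value. First I would introduce the one-sided boundary limits $S_{\bar\mu}(x\pm i0^+)$; since $d\mu$ is gapless on $[a,b]$ with density $\bar\mu$, the Sokhotski--Plemelj-type decomposition built into Theorems~\ref{S-P inv theorem} and Definition~\ref{def reducer} gives
\begin{equation}
S_{\bar\mu}(x+i0^+)=\tfrac12\varphi(d\mu;x)-i\pi\bar\mu(x),\qquad
S_{\bar\mu}(x-i0^+)=\tfrac12\varphi(d\mu;x)+i\pi\bar\mu(x),
\end{equation}
because the half-sum of the two limits is $\varphi(d\mu;x)$ by Definition~\ref{def reducer} and the (suitably normalised) half-difference is $2\pi i\,\bar\mu(x)$ by Theorem~\ref{S-P inv theorem}. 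Thus on the cut the two boundary values of $S_{\bar\mu}$ are complex conjugates of a single complex number $w(x):=\tfrac12\varphi(d\mu;x)-i\pi\bar\mu(x)$.

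Next I would substitute these into the relation for $S_{\bar\rho}$. The terms $z-C_1(d\mu)$ are real and continuous across the cut, so they contribute nothing to the half-difference $S_{\bar\rho}(x-i0^+)-S_{\bar\rho}(x+i0^+)$; only the $-1/S_{\bar\mu}(z)$ term matters. Hence
\begin{equation}
\bar\rho(x)=\frac{1}{2\pi i}\left[\frac{-1}{\overline{w(x)}}-\frac{-1}{w(x)}\right]
=\frac{1}{2\pi i}\cdot\frac{w(x)-\overline{w(x)}}{|w(x)|^2}
=\frac{1}{2\pi i}\cdot\frac{-2\pi i\,\bar\mu(x)}{|w(x)|^2}
=\frac{\bar\mu(x)}{|w(x)|^2}.
\end{equation}
Since $|w(x)|^2=\tfrac{\varphi^2(d\mu;x)}{4}+\pi^2\bar\mu^2(x)$, this is exactly \eqref{eq for rho in temrs of mu}. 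The fact that $d\rho$ really is a gapless measure (so that Theorem~\ref{S-P inv theorem} legitimately applies to it) is already guaranteed by Theorem~\ref{theorem stiejes}.

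The main obstacle is justifying the boundary-value manipulations rigorously: one must know that the nontangential limits $S_{\bar\mu}(x\pm i0^+)$ exist for (Lebesgue-a.e.) $x$, that they are genuinely conjugate with imaginary part $\mp\pi\bar\mu(x)$, and — most delicately — that $w(x)\neq 0$ so that $1/S_{\bar\mu}$ has a well-defined boundary value and the division is legitimate. The zero set of $w$ is where $\bar\mu(x)=0$ and $\varphi(d\mu;x)=0$ simultaneously; on the part of $[a,b]$ where $\bar\mu(x)>0$ there is no problem, and where $\bar\mu(x)=0$ the left-hand side $\bar\rho(x)$ vanishes anyway, so the formula should be read as $0/(\varphi^2/4)$ and holds a.e.\ provided $\varphi(d\mu;x)\neq0$ there; I would remark that this is the generic situation and cite \cite{roland,Sherman} for the measure-theoretic details, since the substance of the identity is the purely algebraic conjugate-reciprocal computation above.
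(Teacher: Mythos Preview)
Your approach is correct and is essentially the standard one; the paper itself does not give a proof here but simply cites \cite{roland}, so there is no in-paper proof to compare against. In fact the computation you carry out is exactly the one the paper performs \emph{later}, in the proof of the lemma following Definition~\ref{Zn defn}: there the authors evaluate $\lim_{\epsilon\to0^+}\big(S_{\bar\mu}(x-i\epsilon)+S_{\bar\mu}(x+i\epsilon)\big)\big/\big(S_{\bar\mu}(x-i\epsilon)S_{\bar\mu}(x+i\epsilon)\big)$ using precisely your identification of the boundary values with $\tfrac12\varphi\pm i\pi\bar\mu$, and your $w(x)$ is the complex conjugate of their $Z_0(x)$. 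So you are reconstructing the intended argument.

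One minor correction: your displayed chain of equalities contains two sign slips that happen to cancel. From $\dfrac{-1}{\overline{w}}-\dfrac{-1}{w}$ one gets $\dfrac{\overline{w}-w}{|w|^2}$, not $\dfrac{w-\overline{w}}{|w|^2}$; and $\dfrac{1}{2\pi i}\cdot\dfrac{-2\pi i\,\bar\mu}{|w|^2}=-\dfrac{\bar\mu}{|w|^2}$, not $+\dfrac{\bar\mu}{|w|^2}$. The correct computation is $\dfrac{1}{2\pi i}\cdot\dfrac{\overline{w}-w}{|w|^2}=\dfrac{1}{2\pi i}\cdot\dfrac{2\pi i\,\bar\mu}{|w|^2}=\dfrac{\bar\mu}{|w|^2}$, which of course lands on the same final answer. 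Your handling of the $w(x)=0$ issue is fine: since $d\mu$ is gapless, $\bar\mu(x)>0$ on the interior of the support, so $\operatorname{Im}w(x)\neq0$ there and the division is legitimate a.e.
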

\begin{proof} See \cite{roland}.
\end{proof}
\begin{definition}\label{Zn defn}
We define the functions $Z_n(x)\epsilon\,\mathbb{C}\quad n=0,1,2,\ldots\ $ as
\begin{equation}\label{1st Z n equantion}
Z_n(x)=\frac{\varphi_n(x)}{2}+i\pi\bar\mu_n(x)\quad n=0,1,2,\ldots\,,
\end{equation}
where
$\varphi_n(x):=\varphi(d\mu_n;x)$.
\end{definition}
\begin{lemma}
The following recursion relation hold for $Z_{n+1}(x)$
\begin{equation}\label{2nd Z n euqation}
Z_{n+1}(x)=\frac{1}{d_n}[x-C_{1,n}-\frac{1}{Z_n(x)}], \quad n=0,1,2,\ldots \;.
\end{equation}
\end{lemma}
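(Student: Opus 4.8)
The plan is to identify $Z_n(x)$ with a one–sided boundary value of the Stieltjes transform $S_n(z):=S_{\bar\mu_n}(z)$ and then to read off the recursion by taking that boundary limit in Eq.~(\ref{Sn sequence eq}).

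\textbf{Step 1.} First I would combine the Stieltjes--Perron inversion formula (Theorem~\ref{S-P inv theorem}) with the definition of the reducer (Definition~\ref{def reducer}): the former gives $2\pi i\,\bar\mu_n(x)=\lim_{\epsilon\to0^+}[S_n(x-i\epsilon)-S_n(x+i\epsilon)]$ and the latter $\varphi_n(x)=\lim_{\epsilon\to0^+}[S_n(x-i\epsilon)+S_n(x+i\epsilon)]$, so that by Definition~\ref{Zn defn}
\[
Z_n(x)=\frac{\varphi_n(x)}{2}+i\pi\bar\mu_n(x)=\frac{1}{2}\lim_{\epsilon\to0^+}\big(2\,S_n(x-i\epsilon)\big)=\lim_{\epsilon\to0^+}S_n(x-i\epsilon).
\]
Thus $Z_n$ is the limit of $S_n(z)$ as the real axis is approached from the lower half plane; the particular sign in Definition~\ref{Zn defn} is exactly what selects this (rather than the upper) boundary value.

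\textbf{Step 2.} Next I would recall Eq.~(\ref{Sn sequence eq}) from Lemma~\ref{lemma continued frac}, valid for every $z\in\mathbb{C}-[a,b]$, namely $S_{n+1}(z)=d_n^{-1}\big[z-C_{1,n}-1/S_n(z)\big]$ with $d_n=\beta_{n+1}(d\mu_0)>0$ by Theorem~\ref{the dn beta theorem}. Setting $z=x-i\epsilon$ and letting $\epsilon\to0^+$, the left–hand side converges to $Z_{n+1}(x)$ and the right–hand side to $d_n^{-1}\big[x-C_{1,n}-1/Z_n(x)\big]$ by Step~1, which is precisely Eq.~(\ref{2nd Z n euqation}).

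\textbf{Main obstacle.} The only genuine point to check is that this passage to the boundary is legitimate, i.e.\ that the relevant limits exist and, in particular, that $1/S_n(x-i\epsilon)\to1/Z_n(x)$, which requires $Z_n(x)$ finite and non-zero. Wherever $\bar\mu_n(x)\neq0$ this is immediate, since then $\mathrm{Im}\,Z_n(x)=\pi\bar\mu_n(x)\neq0$ forces $Z_n(x)\neq0$, and the boundary value exists (finiteness of the principal-value/Hilbert-transform integral defining $\varphi_n$ holding for almost every $x$ for an $L^1$ density). On a gap of the support one instead invokes analyticity of $S_n$ across that gap (Definition~\ref{sitjes def}) together with $S_n\neq0$ off the cut, so that there $Z_n$ is real and non-zero and the recursion holds verbatim. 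I would accordingly state and prove the lemma under the standing understanding -- already implicit in the way $\varphi_n$ and $\bar\mu_n$ are used throughout this section -- that these one-sided limits exist, treating the exceptional set where $\bar\mu_n$ vanishes by the continuity/analyticity argument just indicated.
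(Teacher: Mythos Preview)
Your proof is correct and is in fact more direct than the paper's. The paper does not identify $Z_n(x)$ as the single one-sided boundary value $\lim_{\epsilon\to 0^+}S_n(x-i\epsilon)$; instead it works with the real and imaginary parts separately. Concretely, the paper first takes the \emph{symmetric} limit of the Stieltjes relation Eq.~(\ref{eq sit rel}) to obtain a recursion for the reducer, namely
\[
\varphi_{n+1}(x)=\frac{1}{d_n}\Big[2\big(x-C_{1,n}\big)-\frac{\varphi_n(x)}{\varphi_n^2(x)/4+\pi^2\bar\mu_n^2(x)}\Big],
\]
then invokes Theorem~\ref{Roland initial theorem} to write the analogous recursion for $\bar\mu_{n+1}$, and finally recombines these two real recursions into the complex statement for $Z_{n+1}$ by recognising that $1/Z_n=(\varphi_n/2-i\pi\bar\mu_n)/(\varphi_n^2/4+\pi^2\bar\mu_n^2)$. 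Your route bypasses this split-and-recombine step entirely: once $Z_n$ is seen as the lower boundary value of $S_n$, the recursion is just Eq.~(\ref{Sn sequence eq}) evaluated at $z=x-i\epsilon$ and sent to the boundary. The trade-off is that the paper's argument produces the two real recursions (for $\varphi_n$ and $\bar\mu_n$) as explicit intermediate formulas, whereas your argument gets to the complex recursion in one stroke; in terms of analytic content the two are equivalent, since the paper's passage of the limit through the quotient $[S_n(x-i\epsilon)+S_n(x+i\epsilon)]/[S_n(x-i\epsilon)S_n(x+i\epsilon)]$ already presupposes exactly the same existence and non-vanishing of the one-sided limits that you flag in your ``main obstacle''.
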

\begin{proof} We start by finding a relation between the Stieltjes transformation of a gapless measure $d\mu(x)$, and the  reducer of its associated secondary measure $d\rho(x)$:
By definition, we have
\begin{equation}
\varphi(d\rho; x)=\lim_{\epsilon\rightarrow0^+}\left[S_{\bar\rho}(x-i\epsilon)+S_{\bar\rho}(x+i\epsilon)\right].
\end{equation}
Using Eq. (\ref{eq sit rel}), we find
\begin{equation}\label{intermediate reducer eq}
\varphi(d\rho; x)=2\big[x-C_1(d\mu)\big]-\lim_{\epsilon\rightarrow0^+}\frac{S_{\bar\mu}(x-i\epsilon)+S_{\bar\mu}(x+i\epsilon)}{S_{\bar\mu}(x-i\epsilon)S_{\bar\mu}(x+i\epsilon)}.
\end{equation}
Now using theorem (\ref{S-P inv theorem}) and definition (\ref{def reducer}), we find that
\begin{equation}
\lim_{\epsilon\rightarrow0^+}\frac{S_{\bar\mu}(x-i\epsilon)+S_{\bar\mu}(x+i\epsilon)}{S_{\bar\mu}(x-i\epsilon)S_{\bar\mu}(x+i\epsilon)}=\frac{\varphi(d\mu;x)}{\frac{\varphi^2(d\mu;x)}{4}+\pi^2\bar\mu^2(x)}.
\end{equation}
Hence from Eq. (\ref{intermediate reducer eq}) we arrive at
\begin{equation}\label{intermed reducer reducer}
\varphi(d\rho; x)=2\big[x-C_1(d\mu)\big]-\frac{\varphi(d\mu;x)}{\frac{\varphi^2(d\mu;x)}{4}+\pi^2\bar\mu^2(x)}.
\end{equation}
Using the definition of a sequence of normalised secondary measures, definition (\ref{def 2nd dary measure}) and the definition of $d_n$ in Lemma (\ref{lemma continued frac}), from Eq. (\ref{intermed reducer reducer}) we find
\begin{equation}\label{reducer n}
\varphi_{n+1}(x)=\frac{1}{d_n}\left[2\big[x-C_{1,n}\big]-\frac{\varphi_n(x)}{\frac{\varphi^2_n(x)}{4}+\pi^2\bar\mu^2_n(x)}\right], \quad n=0,1,2,\ldots \;.
\end{equation}
Similarly, we can also write theorem (\ref{Roland initial theorem}) for our sequence of normalised secondary measures
using definition (\ref{def 2nd dary measure}) and $d_n$. We find
\begin{equation}\label{1st mu n equation}
\bar\mu_{n+1}(x)=\frac{1}{d_n}\left[\frac{\bar\mu_n(x)}{\frac{\varphi^2_n(x)}{4}+\pi^2\bar\mu^2_n(x)}\right], \quad n=0,1,2,\ldots \;.
\end{equation}
If we write Eq. (\ref{1st Z n equantion}) for $Z_{n+1}(x)$ and then substitute Eq. (\ref{reducer n}) and Eq. (\ref{1st mu n equation}) into the RHS we arrive at Eq. (\ref{2nd Z n euqation}).\qed \end{proof}
\begin{theorem}\label{quatient theorem for Z n}
The following relations hold between $Z_0(x)$ and $Z_{n+1}(x)$
\begin{equation}
Z_0(z)=\frac{u_n(-d_nZ_{n+1}(z))+u_{n+1}}{v_n(-d_nZ_{n+1}(z))+v_{n+1}}, \quad n=0,1,2,\ldots \;,\label{the Z_0 quotient eq}
\end{equation}
were $u_{n}(z)=\lambda_{n}Q_n(d\mu_0;z)$ and $v_{n}(z)=\lambda_{n}P_{n}(d\mu_0;z)\quad n=0,1,2,\ldots\,$, with $\lambda_n=1/a_n$, $a_n$ defined in definition \textup{(\ref{def of monic})}.
\end{theorem}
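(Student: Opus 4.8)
The plan is to exploit the fact that, by the preceding lemma, the function $Z_n(x)$ has been engineered to obey \emph{exactly} the same three-term recursion as the Stieltjes transforms $S_n(z)$: Eq. (\ref{2nd Z n euqation}) for $Z_{n+1}$ is identical in form to Eq. (\ref{Sn sequence eq}) for $S_{n+1}$, with the same constants $d_n$ and $C_{1,n}$, only the ``seed'' of the recursion being different ($Z_0$ in place of $S_0$). Since the derivation of the continued fraction Eq. (\ref{continued frac 1}) in Lemma (\ref{lemma continued frac}), and the passage to the quotient form Eq. (\ref{the S_0 quotient eq}) in Theorem (\ref{quatient theorem}), used nothing about $S_n$ beyond the recursion Eq. (\ref{Sn sequence eq}) together with the auxiliary recursion Eq. (\ref{u_n and v_n eqs})--(\ref{start u and v}) defining $u_n,v_n$, both arguments transcribe verbatim under the substitution $S_k\mapsto Z_k$.

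Concretely, I would first solve Eq. (\ref{2nd Z n euqation}) for $Z_n(x)$ in terms of $Z_{n+1}(x)$ and iterate, as in Lemma (\ref{lemma continued frac}), obtaining a finite continued fraction for $Z_0$ terminating in $-d_nZ_{n+1}(x)$. Then I would invoke the same standard continued-fraction identity cited from \cite{baker2} in Theorem (\ref{quatient theorem}): the numerators and denominators of the successive convergents satisfy precisely Eq. (\ref{u_n and v_n eqs}) with the starting data Eq. (\ref{start u and v}), which yields Eq. (\ref{the Z_0 quotient eq}). Because the polynomials $u_n,v_n$ arising here are generated by the \emph{same} recursion and initial values as in Theorem (\ref{quatient theorem}), they are literally the same polynomials, so Lemma (\ref{lemma un and vn}) applies and gives $u_n(z)=\lambda_nQ_n(d\mu_0;z)$ and $v_n(z)=\lambda_nP_n(d\mu_0;z)$ with $\lambda_n=1/a_n$.

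The one point requiring some care, and the only place I expect real work, is that $Z_n(x)$ is a complex-valued function of a \emph{real} variable $x$ rather than an analytic function of $z$ off $[a,b]$, so one must justify the algebraic manipulations — in particular dividing by $Z_n(x)$ and by the denominator $v_n\bigl(-d_nZ_{n+1}(x)\bigr)+v_{n+1}$. Since $\mathrm{Im}\,Z_n(x)=\pi\bar\mu_n(x)$, one has $Z_n(x)\neq 0$ wherever $\bar\mu_n(x)>0$, and Eq. (\ref{the Z_0 quotient eq}) should be read as an equality of functions at every $x$ where all intermediate expressions are finite. A clean way to dispatch this, and to cover the remaining exceptional points, is to observe that combining Definition (\ref{Zn defn}), Theorem (\ref{S-P inv theorem}) and Definition (\ref{def reducer}) gives $Z_n(x)=\lim_{\epsilon\to0^+}S_{\bar\mu_n}(x-i\epsilon)$; then Eq. (\ref{the Z_0 quotient eq}) is merely the boundary value as $\epsilon\to0^+$ of Eq. (\ref{the S_0 quotient eq}) evaluated at $z=x-i\epsilon$, the polynomials $u_n,v_n$ being continuous. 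Either route closes the argument, and the identification of $u_n,v_n$ via Lemma (\ref{lemma un and vn}) completes the proof.
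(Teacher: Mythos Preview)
Your proposal is correct and follows essentially the same route as the paper: the paper's proof simply observes that $Z_n$ obeys the identical recursion Eq.~(\ref{Sn sequence eq}) as $S_n$, so Theorem~(\ref{quatient theorem}) applies verbatim under $S_k\mapsto Z_k$, and Lemma~(\ref{lemma un and vn}) identifies $u_n,v_n$. Your additional remarks on well-definedness (nonvanishing of $Z_n$ via $\mathrm{Im}\,Z_n=\pi\bar\mu_n>0$, or the boundary-value interpretation $Z_n(x)=\lim_{\epsilon\to0^+}S_{\bar\mu_n}(x-i\epsilon)$) are a welcome bit of rigor that the paper omits, but do not constitute a different argument.
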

\begin{proof} By comparing Eq. (\ref{2nd Z n euqation}) with eq. (\ref{Sn sequence eq}), we note that $Z_n(x)$ satisfies the same recursion relation as $S_n(x)$. Hence  theorem (\ref{the S_0 quotient eq})  readily applies to Eq. (\ref{2nd Z n euqation}) if we exchange  $S_{n+1}(x)$ with $Z_{n+1}(x)$ and $S_0(x)$ with $Z_0(x)$. The relations between $u_n$, $v_n$ and $Q_n$, $P_n$ are proven in lemma (\ref{lemma un and vn}). \qed
\end{proof}
We are now ready to state our first main theorem:
\begin{theorem}\label{1st main theorem}
A sequence of normalised secondary measures starting from $d\mu_0$\textup{: }$d\mu_0$, $d\mu_1$, $d\mu_2$,\dots , $d\mu_m$,\dots can be generated from the first measure in the sequence $d\mu_0$ by the formula
\begin{equation}\label{the mun sequence eq}
\bar\mu_n(x)=\frac{1}{t_{n-1}^2(d\mu_0)}\frac{\bar\mu_0(x)}{\left(P_{n-1}(d\mu_0;x)\frac{\varphi(d\mu_0;x)}{2}-Q_{n-1}(d\mu_0;x)\right)^2+\pi^2\bar\mu_0^2( x)P_{n-1}^2(d\mu_0;x)}
\end{equation}
$n=1,2,3,\ldots\,,$
\end{theorem}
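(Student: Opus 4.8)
The plan is to use the quotient relation of Theorem~\ref{quatient theorem for Z n} to obtain a closed expression for $Z_n$ in terms of $Z_0$, and then to read off $\bar\mu_n(x)=\frac{1}{\pi}\,\mathrm{Im}\,Z_n(x)$ via Definition~\ref{Zn defn} (which also gives $\mathrm{Re}\,Z_0(x)=\frac{\varphi(d\mu_0;x)}{2}$ and $\mathrm{Im}\,Z_0(x)=\pi\bar\mu_0(x)$). First I would apply Theorem~\ref{quatient theorem for Z n} with its running index chosen so that $Z_n$ appears on the right, i.e. $Z_0=\bigl(u_{n-1}(-d_{n-1}Z_n)+u_n\bigr)/\bigl(v_{n-1}(-d_{n-1}Z_n)+v_n\bigr)$ for $n\geq1$, and solve this M\"obius relation for $Z_n$; clearing denominators gives $Z_n=\frac{1}{d_{n-1}}\,\frac{Z_0 v_n-u_n}{Z_0 v_{n-1}-u_{n-1}}$. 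Substituting $u_k=Q_k(d\mu_0;z)/a_k$, $v_k=P_k(d\mu_0;z)/a_k$ from Lemma~\ref{lemma un and vn}, together with $d_{n-1}=a_{n-1}^2/a_n^2$ from Theorem~\ref{the dn beta theorem}, collapses this to
\begin{equation}\label{the Zn explicit eq}
Z_n(z)=\frac{a_n}{a_{n-1}}\;\frac{Z_0(z)\,P_n(d\mu_0;z)-Q_n(d\mu_0;z)}{Z_0(z)\,P_{n-1}(d\mu_0;z)-Q_{n-1}(d\mu_0;z)}.
\end{equation}

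Next I would evaluate (\ref{the Zn explicit eq}) at real $x$, where the $P_k(d\mu_0;x)$, $Q_k(d\mu_0;x)$ are real and $Z_0(x)=A+iB$ with $A=\frac{\varphi(d\mu_0;x)}{2}$, $B=\pi\bar\mu_0(x)$. Rationalising the quotient in (\ref{the Zn explicit eq}) by multiplying numerator and denominator by the conjugate of the denominator, the denominator becomes $\bigl(AP_{n-1}-Q_{n-1}\bigr)^2+B^2P_{n-1}^2$, which is precisely the bracket in (\ref{the mun sequence eq}), while in the numerator the $A$-dependent cross terms cancel and one is left with
\begin{equation}
\mathrm{Im}\,Z_n(x)=\frac{a_n}{a_{n-1}}\;\frac{B\,W_n}{\bigl(AP_{n-1}-Q_{n-1}\bigr)^2+B^2P_{n-1}^2},\qquad W_n:=P_{n-1}Q_n-P_nQ_{n-1}.
\end{equation}
Hence $\bar\mu_n=\frac{1}{\pi}\mathrm{Im}\,Z_n$ already has the shape of (\ref{the mun sequence eq}), and it remains only to identify the prefactor $\frac{a_n}{a_{n-1}}W_n$ with $1/t_{n-1}^2(d\mu_0)$.

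The evaluation of the ``Wronskian'' $W_n$ is the step I expect to cost the most work, and I would obtain it from Lemma~\ref{pre theorem Pade}: indeed $u_n v_{n-1}-v_n u_{n-1}=\frac{1}{a_{n-1}a_n}\,W_n$, while Lemma~\ref{pre theorem Pade} gives $u_n v_{n-1}-v_n u_{n-1}=\Delta_n=d_0d_1\cdots d_{n-2}$ (the empty product $1$ when $n=1$); since $d_k=a_k^2/a_{k+1}^2$ this telescopes to $a_0^2/a_{n-1}^2$, so $W_n=a_0^2\,a_n/a_{n-1}$. Here $a_0=1$ because $P_0(d\mu_0;x)=1/\sqrt{\beta_0(d\mu_0)}$ with $\beta_0(d\mu_0)=C_0(d\mu_0)=1$ (Corollary~\ref{cor C0 beta0} and Definition~\ref{def 2nd dary measure}), hence $W_n=a_n/a_{n-1}$ for every $n\geq1$. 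Finally, $\langle P_n(d\mu_0),P_n(d\mu_0)\rangle_{\bar\mu_0}=1$ and $\pi_n=P_n/a_n$ give $\langle\pi_n,\pi_n\rangle_{\bar\mu_0}=a_n^{-2}$, so $\beta_n(d\mu_0)=a_{n-1}^2/a_n^2$ and $t_{n-1}(d\mu_0)=\sqrt{\beta_n(d\mu_0)}=a_{n-1}/a_n$. Therefore $\frac{a_n}{a_{n-1}}W_n=(a_n/a_{n-1})^2=1/t_{n-1}^2(d\mu_0)$, and inserting this together with $A=\frac{\varphi(d\mu_0;x)}{2}$ and $B=\pi\bar\mu_0(x)$ into $\bar\mu_n=\frac{1}{\pi}\mathrm{Im}\,Z_n$ reproduces (\ref{the mun sequence eq}) exactly. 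As a consistency check, the resulting expression is manifestly non-negative, as a density must be.
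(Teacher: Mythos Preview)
Your proof is correct and follows essentially the same route as the paper: solve the M\"obius relation of Theorem~\ref{quatient theorem for Z n} for $Z_n$, take imaginary parts, and evaluate the Wronskian $P_{n-1}Q_n-P_nQ_{n-1}$ via Lemma~\ref{pre theorem Pade} together with $a_0^2=1$. The only cosmetic difference is that the paper writes the coefficient in front of the quotient as $\frac{a_{n-1}}{a_n d_{n-1}}=\frac{\kappa_n}{\sqrt{d_{n-1}}}$ and tracks the sign $\kappa_n$ of $a_{n-1}/a_n$ separately through the computation, whereas you simplify straight to $a_n/a_{n-1}$ and thereby avoid introducing $\kappa_n$ at all; your version is slightly cleaner for that reason.
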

where the $t_n$ coefficients are defined in theorem (\ref{recurrence for monics}).
\begin{proof} After solving Eq. (\ref{the Z_0 quotient eq}) for $Z_n(x)$, we find
\begin{equation}\label{intermedate z(x) eq}
Z_n(x)=\frac{a_{n-1}}{a_nd_{n-1}}\frac{Z_0(x)P_n(d\mu_0;x)-Q_{n}(d\mu_0;x)}{ Z_0(x)P_{n-1}(d\mu_0;x)-Q_{n-1}(d\mu_0;x)}\quad n=1,2,3,\ldots\,.
\end{equation}
From theorem (\ref{the dn beta theorem}) we see that
\begin{equation}\label{an quatioen in d eq}
a_{n-1}/a_n=\kappa_n\sqrt{d_{n-1}}\quad n=1,2,3,\ldots\,,
\end{equation}
where $\kappa_n$ is the sign of $a_{n-1}/a_n$.
After using this relation to simplify the $a_{n-1}/a_nd_{n-1}$ coefficient in Eq. (\ref{intermedate z(x) eq}) and substituting for $Z_0(x)$ using definition (\ref{Zn defn}), we take real and imaginary parts to achieve
\begin{equation}\label{the mun eq not finished}
\bar\mu_n(x)=\frac{\kappa_n}{\sqrt{d_{n-1}}}\frac{\bar\mu_0(x)\left[P_{n-1}(d\mu_0;x)Q_n(d\mu_0;x)-P_n(d\mu_0;x)Q_{n-1}(d\mu_0;x) \right]}{\left(P_{n-1}(d\mu_0;x)\frac{\varphi(d\mu_0;x)}{2}-Q_{n-1}(d\mu_0;x)\right)^2+\pi^2\bar\mu_0^2( x)P_{n-1}^2(d\mu_0;x)}
\end{equation}
$n=1,2,3,\ldots\,.$ Using the identities from lemmas (\ref{pre theorem Pade}) and (\ref{lemma un and vn}), we note that \begin{equation}\label{1st im 1}
P_{n}(d\mu_0;x)Q_{n+1}(d\mu_0;x)-P_{n+1}(d\mu_0;x)Q_{n}(d\mu_0;x) =d_0d_1\ldots d_{n-1}/\lambda_n\lambda_{n+1}
\end{equation}
$n=1,2,3,\ldots\;.$ Using theorem (\ref{the dn beta theorem}) and lemma (\ref{lemma un and vn}) to write the RHS in terms of the $a_n$'s, we find
\begin{equation}\label{1st im 2}
d_0d_1\ldots d_{n-1}/\lambda_n\lambda_{n+1}=a_0^2a_{n+1}/a_n=a_0^2/\kappa_{n+1}\sqrt{d_{n}}\quad n=0,1,2,\ldots\;,
\end{equation}
where in the last line we have used Eq. (\ref{an quatioen in d eq}). We now note that by definition (\ref{def of monic}), $P_0(d\mu_0;x)=a_0$. We also have by definition (\ref{def of Pn}), that $\langle P_0(d\mu_0;x),P_0(d\mu_0;x)\rangle_{\bar\mu_0}$ $=1$. Hence
\begin{equation}\label{1st im 3}
1=\langle P_0(d\mu_0;x),P_0(d\mu_0;x)\rangle_{\bar\mu_0}=a_0^2C_0(d\mu_0)=a^2_0.
\end{equation}
Hence from Eq. (\ref{1st im 1}), (\ref{1st im 2}), (\ref{1st im 3}), we find that we can write Eq. (\ref{the mun eq not finished}) as
\begin{equation}\label{the mu n eq}
\bar\mu_n(x)=\frac{1}{d_{n-1}}\frac{\bar\mu_0(x)}{\left(P_{n-1}(d\mu_0;x)\frac{\varphi(d\mu_0;x)}{2}-Q_{n-1}(d\mu_0;x)\right)^2+\pi^2\bar\mu_0^2( x)P_{n-1}^2(d\mu_0;x)}
\end{equation}
$n=1,2,3,\ldots\,.$ Finally,  with the relations from theorems (\ref{the dn beta theorem}) and (\ref{recurrence for monics}) we find $d_{n-1}=t^2_{n-1}(d\mu_0)$ $\quad n=1,2,3,\ldots\,.$\qed
\end{proof}
\subsection{Derivation of the Jacobi matrix theorem}\label{Derivation of the Jacobi matrix theorem}

\begin{lemma}\label{the shift property of alpha beta}
The $\alpha_n(d\mu)$ and $\beta_n(d\mu)$ coefficients defined in theorem \textup{(\ref{monic recursion relation})} are invariant under a change of scale of the measure $d\mu(x)$, while the change in $\beta_0(d\mu)$ scales linearly:
\begin{eqnarray}
\alpha_n(Cd\mu)&=&\alpha_n(d\mu)\quad n=0,1,2,\ldots\label{the alpha shift eq}\\
\beta_n(Cd\mu)&=&\beta_n(d\mu)\quad n=1,2,3,\ldots\\
\beta_0(Cd\mu)&=&C\beta_0(d\mu)\label{the beta 0 shift eq}
\end{eqnarray}
where $C>0$.
\end{lemma}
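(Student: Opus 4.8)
The plan is to reduce all three identities to the single observation that the monic orthogonal polynomials are themselves unchanged by a positive rescaling of the measure, i.e.\ $\pi_n(Cd\mu;x)=\pi_n(d\mu;x)$ for all $n$. First I would note that for any $u,v\in\mathbb{P}$ the inner product of Definition \ref{def inner product} transforms homogeneously: $\langle u,v\rangle_{C\bar\mu}=\int_a^b u(x)v(x)\,C\,d\mu(x)=C\,\langle u,v\rangle_{\bar\mu}$. Hence the conditions that characterise $\pi_n(d\mu;x)$ — namely that $\pi_n$ is monic of degree $n$ and $\langle\pi_n,\pi_m\rangle=0$ for all $m<n$ — are satisfied by exactly the same polynomials with respect to $C\,d\mu$, since multiplying each vanishing inner product by $C>0$ leaves it zero, and monicity and degree are unaffected. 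Because such a family is unique (this is the existence theorem quoted after Definition \ref{def of monic}, uniqueness being the usual Gram--Schmidt argument), we conclude $\pi_n(Cd\mu;x)=\pi_n(d\mu;x)$ for every $n$. Note that $C>0$ is used here only to guarantee that $C\,d\mu$ is again a positive measure with the same support $[a,b]$ and finite moments, so that the earlier theory applies.

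With this in hand the three claimed identities follow by direct substitution into the defining formulas of Theorem \ref{monic recursion relation}. For $\alpha_n$, equation (\ref{the alpha def eq}) gives $\alpha_n(Cd\mu)=\langle x\pi_n(Cd\mu),\pi_n(Cd\mu)\rangle_{C\bar\mu}\big/\langle\pi_n(Cd\mu),\pi_n(Cd\mu)\rangle_{C\bar\mu}$; replacing the polynomials by the $d\mu$ ones and pulling the common factor $C$ out of numerator and denominator, it cancels, leaving $\alpha_n(d\mu)$, which proves (\ref{the alpha shift eq}). For $\beta_n$ with $n\geq 1$, equation (\ref{bet def eq}) is likewise a ratio of two inner products, each of which acquires a single factor of $C$ under rescaling, so the ratio is unchanged. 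For $\beta_0$, definition (\ref{the beta 0 def eq}) gives $\beta_0(Cd\mu)=\langle\pi_0(Cd\mu),\pi_0(Cd\mu)\rangle_{C\bar\mu}=C\,\langle\pi_0(d\mu),\pi_0(d\mu)\rangle_{\bar\mu}=C\,\beta_0(d\mu)$, which is (\ref{the beta 0 shift eq}); equivalently one may invoke Corollary \ref{cor C0 beta0} together with the trivial linearity $C_0(Cd\mu)=C\,C_0(d\mu)$.

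I do not expect any genuine obstacle. The only point deserving a word of care is the appeal to uniqueness of the monic orthogonal family (and the accompanying remark that $C>0$ keeps $C\,d\mu$ an admissible measure in the sense of Definition \ref{meausre defn}); everything beyond that is just tracking the scalar $C$ through bilinear expressions.
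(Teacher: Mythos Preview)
Your proof is correct and follows essentially the same route as the paper: establish that the monic polynomials satisfy $\pi_n(Cd\mu;x)=\pi_n(d\mu;x)$, then read off the three identities from the defining ratios (\ref{the alpha def eq}), (\ref{bet def eq}), (\ref{the beta 0 def eq}) since the common factor $C$ cancels in numerator and denominator. The only cosmetic difference is that the paper reaches $\pi_n(Cd\mu)=\pi_n(d\mu)$ by first rescaling the orthonormal family, $P_n(Cd\mu;x)=P_n(d\mu;x)/\sqrt{C}$, and then dividing by the leading coefficient, whereas you argue directly from the uniqueness of the monic orthogonal family---your route is if anything slightly more economical.
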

\begin{proof} First we will show that $\pi_n(Cd\mu;x)=\pi_n(d\mu;x)\quad n=0,1,2,\ldots$\,. \\ From definition (\ref{monic poly defn}), we have that
\begin{equation}
1=\langle P_n(d\mu),P_n(d\mu)\rangle_{\bar\mu} \quad n=0,1,2,\ldots\,.
\end{equation}
By multiplying and dividing by $C$ we find
\begin{equation}
1=\langle \frac{P_n(d\mu)}{\sqrt{C}},\frac{P_n(d\mu)}{\sqrt{C}}\rangle_{\bar{c\mu}} \quad n=0,1,2,\ldots\,.
\end{equation}
From definition (\ref{monic poly defn}), we conclude
\begin{equation}\label{scaled Pn equu}
P_n(Cd\mu;x)=\frac{P_n(d\mu;x)}{\sqrt{C}}\quad n=0,1,2,\ldots\,.
\end{equation}
By observing definition (\ref{def of monic}) and Eq. (\ref{scaled Pn equu}), we conclude
\begin{equation}\label{a n new old}
a_n(Cd\mu)=a_n(d\mu)/\sqrt{C} \quad n=0,1,2,\ldots\,.
\end{equation}
Hence, from Eq. (\ref{scaled Pn equu}), (\ref{a n new old}) and definition (\ref{monic poly defn}) it follows
\begin{equation}\label{rescaled inv pi}
\pi_n(Cd\mu;x)=P_n(Cd\mu;x)/a_n(Cd\mu)=P_n(d\mu;x)/a_n(d\mu)=\pi_n(d\mu;x)
\end{equation}
$n=0,1,2,\ldots\,.$ Now we can see how the $\alpha_n$ and $\beta_n$ coefficients change:\\
Using definition (\ref{monic recursion relation}) and Eq. (\ref{rescaled inv pi}) we have
\begin{eqnarray}
\alpha_n(Cd\mu)&=&\frac{\langle x\pi_n(Cd\mu),\pi_n(Cd\mu)\rangle_{\bar{c\mu}}}{\langle \pi_n(Cd\mu),\pi_n(Cd\mu)\rangle_{\bar{c\mu}}}\\
&=&\frac{\langle x\pi_n(d\mu),\pi_n(d\mu)\rangle_{\bar{\mu}}}{\langle \pi_n(d\mu),\pi_n(d\mu)\rangle_{\bar{\mu}}}\\
&=&\alpha_n(d\mu),\quad n=0,1,2,\ldots\,.
\end{eqnarray}
Similarly, we find
\begin{equation}
\beta_n(Cd\mu)=\beta_n(d\mu),\quad n=1,2,3,\ldots\,.
\end{equation}
In the case of $\beta_0$ we have
\begin{equation}
\beta_0(Cd\mu)=\langle \pi_0(Cd\mu),\pi_0(Cd\mu)\rangle_{\bar{c\mu}}=C\beta_0(d\mu).
\end{equation}\qed
\end{proof}
\begin{theorem}\label{second 1st meausre theorem}
If $d\rho(x)$ is the secondary measure associated with $d\mu(x)$, then
\begin{eqnarray}
\alpha_{n+1}(d\mu)&=&\alpha_n(d\rho)\quad n=0,1,2,\ldots\\
\beta_{n+1}(d\mu)&=&\beta_n(d\rho)\quad n=1,2,3,\ldots
\end{eqnarray}
\end{theorem}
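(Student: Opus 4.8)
The plan is to match two three-term recurrences. On one side stands the monic recurrence (\ref{3-term recurrece eq}) of Theorem~\ref{monic recursion relation} for $\pi_m(d\rho;\cdot)$, whose coefficients are $\alpha_m(d\rho)$ and $\beta_m(d\rho)$. On the other side I will exhibit the very same polynomials $\pi_m(d\rho;\cdot)$ built from the second-kind polynomials of $d\mu$: Theorem~\ref{theorem stiejes} says that $\{Q_n(d\mu;x)\}_{n=1}^{\infty}$ is orthogonal with respect to $d\rho$, and $Q_n(d\mu;\cdot)$ has degree $n-1$, so after rescaling to monic these are precisely the $\pi_m(d\rho;\cdot)$. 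If I moreover know the three-term recurrence obeyed by the $Q_n(d\mu;\cdot)$ and that its coefficients are the $\alpha_n(d\mu)$ and $\beta_n(d\mu)$, then comparing the two recurrences will force the claimed identities, with the one-step index offset between them producing exactly the shift $\alpha_{n+1}(d\mu)\leftrightarrow\alpha_n(d\rho)$, $\beta_{n+1}(d\mu)\leftrightarrow\beta_n(d\rho)$.

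The key lemma to establish first is therefore the recurrence for the monic second-kind polynomials. Set
\begin{equation}
\hat Q_n(x):=\int_a^b\frac{\pi_n(d\mu;t)-\pi_n(d\mu;x)}{t-x}\,d\mu(t)=\frac{Q_n(d\mu;x)}{a_n(d\mu)},\qquad n=0,1,2,\ldots\,,
\end{equation}
which is monic of degree $n-1$ for $n\ge1$, with $\hat Q_0=0$ and $\hat Q_1=C_0(d\mu)=1$. Applying the map $f\mapsto\int_a^b\frac{f(t)-f(x)}{t-x}\,d\mu(t)$ to both sides of (\ref{3-term recurrece eq}), writing $t-\alpha_n(d\mu)=(t-x)+(x-\alpha_n(d\mu))$ inside the difference quotient, and using $\int_a^b\pi_n(d\mu;t)\,d\mu(t)=\langle\pi_n(d\mu),\pi_0(d\mu)\rangle_{\bar\mu}=0$ for $n\ge1$, one obtains
\begin{equation}
\hat Q_{n+1}(x)=(x-\alpha_n(d\mu))\hat Q_n(x)-\beta_n(d\mu)\hat Q_{n-1}(x),\qquad n=1,2,3,\ldots\,.
\end{equation}
(The same statement can be read off from Eq.~(\ref{u_n and v_n eqs}) of Theorem~\ref{quatient theorem} together with Lemma~\ref{lemma un and vn} and Theorem~\ref{the dn beta theorem}, which identify $u_n=\hat Q_n$, $v_n=\pi_n(d\mu)$, $d_{n-1}=\beta_n(d\mu)$ and force $C_{1,n}=\alpha_n(d\mu)$.)

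Now put $\hat R_m:=\hat Q_{m+1}$. By Theorem~\ref{theorem stiejes} the family $\{\hat R_m\}_{m\ge0}$ is orthogonal with respect to $d\rho$, and since $\hat R_m$ is monic of degree $m$, uniqueness of the monic orthogonal polynomials gives $\hat R_m=\pi_m(d\rho;\cdot)$. Rewriting the lemma with $n=m+1\ge1$ and using $\hat R_{-1}=\hat Q_0=0$ yields
\begin{equation}
\pi_{m+1}(d\rho;x)=(x-\alpha_{m+1}(d\mu))\pi_m(d\rho;x)-\beta_{m+1}(d\mu)\pi_{m-1}(d\rho;x),\qquad m=0,1,2,\ldots\,,
\end{equation}
and subtracting from this the genuine recurrence (\ref{3-term recurrece eq}) of Theorem~\ref{monic recursion relation} for $\pi_m(d\rho;\cdot)$ gives
\begin{equation}
\big(\alpha_m(d\rho)-\alpha_{m+1}(d\mu)\big)\pi_m(d\rho;x)+\big(\beta_m(d\rho)-\beta_{m+1}(d\mu)\big)\pi_{m-1}(d\rho;x)=0
\end{equation}
for all $m\ge0$. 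For $m\ge1$ the polynomials $\pi_m(d\rho;\cdot)$ and $\pi_{m-1}(d\rho;\cdot)$ have distinct degrees, hence are linearly independent, so both brackets vanish, giving $\alpha_{m+1}(d\mu)=\alpha_m(d\rho)$ and $\beta_{m+1}(d\mu)=\beta_m(d\rho)$ for $m\ge1$; for $m=0$ one has $\pi_{-1}(d\rho;\cdot)=0$ and $\pi_0(d\rho;\cdot)=1\ne0$, which leaves only $\alpha_1(d\mu)=\alpha_0(d\rho)$. These are exactly the asserted identities with the asserted ranges.

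I expect the only point needing care to be the bookkeeping of the one-step shift: the second-kind recurrence holds only from $n=1$ onward and carries the initial data $\hat Q_0=0,\ \hat Q_1=1$ in place of $\pi_{-1}(d\mu;\cdot)=0,\ \pi_0(d\mu;\cdot)=1$, and it is precisely this shift that makes the $\beta$-identity begin at $n=1$ while the $\alpha$-identity begins at $n=0$. All the analytic input (positivity and gaplessness of $d\rho$, and the orthogonality of the $Q_n(d\mu;\cdot)$ with respect to it) is already supplied by Theorem~\ref{theorem stiejes}, and the normalization of $d\rho$ is immaterial since, by Lemma~\ref{the shift property of alpha beta}, the coefficients $\alpha_n$ $(n\ge0)$ and $\beta_n$ $(n\ge1)$ are invariant under rescaling of the measure; so no further estimates are required.
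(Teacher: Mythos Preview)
Your proof is correct. The paper does not give its own argument here; it simply cites Gautschi's book (Theorem~1.36), so you have supplied a complete proof where the paper offers only a reference. Your approach---identifying the monic second-kind polynomials $\hat Q_{m+1}$ as the monic orthogonal polynomials $\pi_m(d\rho;\cdot)$ via Theorem~\ref{theorem stiejes}, then reading off the shifted recursion coefficients---is the classical one and is essentially what Gautschi does as well.
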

\begin{proof} See \cite{Gautschi}, theorem 1.36 (page 16).
\end{proof}
\begin{lemma}\label{the norm sequence shif theorem v2}
A sequence of normalised secondary measures $d\mu_0(x),$ $d\mu_1(x),$ $d\mu_2(x),$ $\ldots$, satisfy
\begin{eqnarray}
\alpha_{m+n}(d\mu_0)&=&\alpha_n(d\mu_m)\quad n,m=0,1,2,\ldots\,,\label{alpah n and 0 eq}\\
\beta_{m+n}(d\mu_0)&=&\beta_n(d\mu_m)\quad n=1,2,3,\ldots\,,\,m=0,1,2,\ldots\,.\label{beta n and 0 eq}
\end{eqnarray}
\end{lemma}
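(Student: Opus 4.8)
The plan is to reduce the $m$-fold index shift to a single-step shift and then induct on $m$. The single-step shift is itself just the combination of two facts already at hand: Theorem~\ref{second 1st meausre theorem}, which relates the recurrence coefficients of a measure to those of its (un-normalised) secondary measure, and Lemma~\ref{the shift property of alpha beta}, which says that the $\alpha_n$ (all $n$) and the $\beta_n$ ($n\ge1$) are unchanged when a measure is multiplied by a positive constant.

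First I would record the one-step identities. By Definition~\ref{def 2nd dary measure}, $d\mu_{m+1}$ is precisely the rescaling $\bar\mu_{m+1}=\bar\rho_{m+1}/C_0(d\rho_{m+1})$ of the secondary measure $d\rho_{m+1}$ of $d\mu_m$, and $C_0(d\rho_{m+1})>0$, so Lemma~\ref{the shift property of alpha beta} gives $\alpha_n(d\mu_{m+1})=\alpha_n(d\rho_{m+1})$ for all $n\ge0$ and $\beta_n(d\mu_{m+1})=\beta_n(d\rho_{m+1})$ for all $n\ge1$. Substituting these into Theorem~\ref{second 1st meausre theorem} applied with the pair $(d\mu_m,\,d\rho_{m+1})$ yields, for every $m\ge0$,
\begin{equation*}
\alpha_{n+1}(d\mu_m)=\alpha_n(d\mu_{m+1})\ \ (n\ge0),\qquad \beta_{n+1}(d\mu_m)=\beta_n(d\mu_{m+1})\ \ (n\ge1).
\end{equation*}

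Then I would fix $n$ and argue by induction on $m$. The base case $m=0$ is the tautology $\alpha_n(d\mu_0)=\alpha_n(d\mu_0)$ (resp. $\beta_n(d\mu_0)=\beta_n(d\mu_0)$). For the inductive step, assuming $\alpha_{m+k}(d\mu_0)=\alpha_k(d\mu_m)$ for all $k\ge0$, I write $\alpha_{(m+1)+n}(d\mu_0)=\alpha_{m+(n+1)}(d\mu_0)=\alpha_{n+1}(d\mu_m)=\alpha_n(d\mu_{m+1})$, using the hypothesis at $k=n+1$ and then the one-step identity; this is exactly \eqref{alpah n and 0 eq} with $m\mapsto m+1$. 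The argument for $\beta$ is the same, except that one keeps $n\ge1$ throughout — which is consistent, since the shift $n\mapsto n+1$ only ever feeds indices $\ge2$ into the inductive hypothesis and never touches $\beta_0$.

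The only delicate point, and the place a careless argument would slip, is precisely this index bookkeeping for $\beta$: $\beta_0$ scales rather than being invariant (Lemma~\ref{the shift property of alpha beta}, Eq.~\eqref{the beta 0 shift eq}), and Theorem~\ref{second 1st meausre theorem} shifts $\beta_n$ only for $n\ge1$, so the identity $\beta_{m+n}(d\mu_0)=\beta_n(d\mu_m)$ genuinely fails at $n=0$ and the statement must stay restricted to $n\ge1$. Everything else is routine.
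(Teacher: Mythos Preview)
Your proof is correct and follows essentially the same approach as the paper: both derive the one-step shift $\alpha_{n+1}(d\mu_m)=\alpha_n(d\mu_{m+1})$ (and likewise for $\beta$, $n\ge1$) by combining Theorem~\ref{second 1st meausre theorem} with the scale-invariance Lemma~\ref{the shift property of alpha beta}, and then iterate. The only cosmetic difference is that the paper unrolls the iteration as a telescoping chain down to $\alpha_0(d\mu_{m+s})$ and then combines two instances of that identity, whereas you package the same repeated step as a clean induction on $m$; your organisation is if anything tidier, and your remark about why $n=0$ is excluded for $\beta$ is exactly right.
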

\begin{proof} Using lemma (\ref{the shift property of alpha beta}) and definition (\ref{def 2nd dary measure}) we find
\begin{eqnarray}
\alpha_{n}(d\rho_m)&=&\alpha_{n}(d\rho_m/C_0(d\rho_m))=\alpha_n(d\mu_m)\quad n=0,1,2,\ldots\,m=1,2,3\ldots\,,\\
\beta_{n}(d\rho_m)&=&\beta_{n}(d\rho_m/C_0(d\rho_m))=\beta_n(d\mu_m)\quad n=1,2,3,\ldots\,m=1,2,3\ldots\,.
\end{eqnarray}
Hence taking into account theorem (\ref{second 1st meausre theorem}) we find
\begin{eqnarray}
\alpha_{n+1}(d\mu_p)&=&\alpha_n(d\mu_{p+1})\quad n,p=0,1,2,\ldots\,,\label{alpha n1 and m1}\\
\beta_{n+1}(d\mu_p)&=&\beta_n(d\mu_{p+1})\quad n=1,2,3,\ldots\,,\, p=0,1,2,\ldots\,\label{beta n1 and m1}.
\end{eqnarray}
We will now proceed to prove Eq. (\ref{alpah n and 0 eq}) by construction: by evaluating Eq. (\ref{alpha n1 and m1}), for $(n,p)$ at $(s-1,m)$, $(s-2,m+1)$, $(s-3,m+2)$,\dots ,$(0,m+s-1)$ for any $m\geq 0,$ $s\geq 1$ we have the following sequence of equations
\begin{eqnarray}
\alpha_s(d\mu_m)&=&\alpha_{s-1}(d\mu_{m+1})\label{1st sequence eq}\\
\alpha_{s-1}(d\mu_{m+1})&=&\alpha_{s-2}(d\mu_{m+2})\label{2st sequence eq}\\
\alpha_{s-2}(d\mu_{m+2})&=&\alpha_{s-3}(d\mu_{m+3})\\
&\vdots& \nonumber\\
\alpha_{1}(d\mu_{m+s-1})&=&\alpha_{0}(d\mu_{m+s}).\label{last sequence eq}
\end{eqnarray}
Thus by repeated substitution we arrive at
\begin{equation}
\alpha_s(d\mu_m)=\alpha_{0}(d\mu_{m+s}),\quad m=0,1,2,\ldots\,,s=1,2,3,\ldots\,.\label{intermediate Mar compressed proof}
\end{equation}
We note that this equation is also valid for $s=0$. If we make the change of variable $s= p+n,$ $m=0$ in Eq. (\ref{intermediate Mar compressed proof}) followed by relabeling indices, we find
\begin{equation}\label{the intermediate 2nd apla shift proof second eq}
\alpha_{m+s}(d\mu_0)=\alpha_0(d\mu_{m+s})\quad m,s=0,1,2,\ldots\,.
\end{equation}
Hence from Eq. (\ref{intermediate Mar compressed proof}) and (\ref{the intermediate 2nd apla shift proof second eq}), we arrive at (\ref{alpah n and 0 eq}). Similarly, we prove Eq. (\ref{beta n and 0 eq}).\qed
\end{proof}

\begin{definition}\label{nu sequence defn}
We call the sequence of gapless measures $d\nu_0(x)$, $d\nu_1(x)$, $d\nu_2(x)$, $d\nu_3(x)$, $\dots$ \textup{beta normalised measures}, where $d\nu_0(x)$ defines the sequence
\begin{equation}\label{the nu n eq}
\bar\nu_n(x)=\frac{\bar\nu_0(x)}{\left(P_{n-1}(d\nu_0;x)\frac{\varphi(d\nu_0;x)}{2}-Q_{n-1}(d\nu_0;x)\right)^2+\pi^2\bar\nu_0^2( x)P_{n-1}^2(d\nu_0;x)}
\end{equation}
$n=1,2,3,\ldots\,,$ and $d\nu_n(x)=\bar\nu_n(x)dx\quad n=0,1,2,\ldots\,.$
\end{definition}
\begin{lemma}\label{beta measures proportional to norm measures lemma} For every sequence of beta normalised measures $\{d\nu_n\}_{n=0}^{n=\infty}$, there always exists a sequence  of normalised secondary measures $\{d\mu_n\}_{n=0}^{n=\infty}$ such that
\begin{equation}\label{nu lemma eq}
\bar\nu_n(x)=\beta_n(d\nu_0)\bar\mu_n(x)\quad n=0,1,2,\ldots\,.
\end{equation}
\end{lemma}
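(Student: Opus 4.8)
The plan is to produce the required sequence of normalised secondary measures explicitly, by simply rescaling $d\nu_0$ so that its zeroth moment becomes $1$. Since $d\nu_0$ is gapless with finite moments, set $C:=C_0(d\nu_0)$ and define $d\mu_0(x):=\bar\nu_0(x)\,dx/C$; by linearity of the zeroth moment $C_0(d\mu_0)=1$, so $d\mu_0$ is an admissible starting point for Definition \ref{def 2nd dary measure}, and we let $\{d\mu_n\}_{n=0}^{\infty}$ be the sequence of normalised secondary measures it generates. The claim reduces to showing $\bar\nu_n=\beta_n(d\nu_0)\bar\mu_n$ for every $n$. For $n=0$ this is immediate: by Corollary \ref{cor C0 beta0}, $\beta_0(d\nu_0)=C_0(d\nu_0)=C$, hence $\beta_0(d\nu_0)\bar\mu_0=C\cdot(\bar\nu_0/C)=\bar\nu_0$.

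For $n\geq 1$ the strategy is to feed scaling relations into the closed formula of Theorem \ref{1st main theorem}. First I would record how each ingredient transforms under $d\mu_0=C^{-1}d\nu_0$. From Eq. (\ref{scaled Pn equu}) (applied with rescaling constant $C^{-1}$) one gets $P_{n-1}(d\mu_0;x)=\sqrt{C}\,P_{n-1}(d\nu_0;x)$. The Stieltjes transform (Definition \ref{sitjes def}) and the reducer (Definition \ref{def reducer}) are linear in the measure, so $\varphi(d\mu_0;x)=C^{-1}\varphi(d\nu_0;x)$; and $\bar\mu_0(x)=C^{-1}\bar\nu_0(x)$ trivially. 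Substituting the first two into Definition \ref{defn of 2nd poly} — whose integrand is linear in $P_{n-1}$ and whose measure carries the extra $C^{-1}$ — yields $Q_{n-1}(d\mu_0;x)=C^{-1/2}Q_{n-1}(d\nu_0;x)$. Consequently each of the three quantities $P_{n-1}(d\mu_0)\varphi(d\mu_0)/2$, $Q_{n-1}(d\mu_0)$ and $\pi\bar\mu_0 P_{n-1}(d\mu_0)$ that enter the denominator of Theorem \ref{1st main theorem} carries exactly the factor $C^{-1/2}$, so the whole denominator is $C^{-1}$ times its $d\nu_0$-counterpart, matching the factor $C^{-1}$ in the numerator $\bar\mu_0$. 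The two cancel, and comparing with Definition \ref{nu sequence defn} leaves $\bar\mu_n(x)=t_{n-1}^{-2}(d\mu_0)\,\bar\nu_n(x)$, i.e. $\bar\nu_n=t_{n-1}^2(d\mu_0)\,\bar\mu_n$.

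It then remains to identify the prefactor. By Theorem \ref{recurrence for monics}, $t_{n-1}(d\mu_0)=\sqrt{\beta_n(d\mu_0)}$, and by Lemma \ref{the shift property of alpha beta} the coefficients $\beta_n$ with $n\geq 1$ are invariant under a positive rescaling of the measure, so $t_{n-1}^2(d\mu_0)=\beta_n(d\mu_0)=\beta_n(C^{-1}d\nu_0)=\beta_n(d\nu_0)$. Combined with the previous step this gives $\bar\nu_n=\beta_n(d\nu_0)\bar\mu_n$ for all $n\geq1$, which together with the $n=0$ case completes the proof. I expect the only delicate point to be the bookkeeping of the $\sqrt{C}$ versus $C^{-1/2}$ scalings of $P_{n-1}$ and $Q_{n-1}$: it is precisely the compensating $C^{-1}$ scaling of the reducer $\varphi$ that renders all three terms in the denominator homogeneous of the same degree, and one should take care that Eq. (\ref{scaled Pn equu}) is invoked with the rescaling constant $C^{-1}$ (not $C$) and that the sign/normalisation conventions for $P_{n-1}$ there are used consistently.
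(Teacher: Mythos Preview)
Your proposal is correct and follows essentially the same approach as the paper: define $d\mu_0=d\nu_0/C_0(d\nu_0)$, derive the scaling relations $P_{n-1}(d\mu_0)=\sqrt{C}\,P_{n-1}(d\nu_0)$, $Q_{n-1}(d\mu_0)=C^{-1/2}Q_{n-1}(d\nu_0)$, $\varphi(d\mu_0)=C^{-1}\varphi(d\nu_0)$, $t_{n-1}^2(d\mu_0)=\beta_n(d\nu_0)$, and substitute into Theorem~\ref{1st main theorem} to recover Definition~\ref{nu sequence defn} up to the factor $\beta_n(d\nu_0)$. Your bookkeeping of the $C^{\pm1/2}$ factors is correct and slightly more explicit than the paper's presentation.
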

\begin{proof}
For $n=0$ in Eq. (\ref{nu lemma eq}) and taking into account corollary (\ref{cor C0 beta0}), we have
\begin{equation}\label{nu 0 eq}
\bar\nu_0(x)=C_0(d\nu_0)\bar\mu_0(x).
\end{equation}
The only additional constraint on $d\mu_0$ as compared with any measure $d\mu$, is that $C_0(d\mu_0)=1$. By calculating the zeroth moment of both sides, we see that relation Eq. (\ref{nu 0 eq}) satisfies this additional constraint. Now we will proceed by finding an expression for the sequence of normalised measures generated from $d\mu_0$ in terms of $d\nu_0$. For this we need to find how the quantities $P_n(d\mu_0;x),$ $Q_n(d\mu_0;x),$ $\beta_{n+1}(d\mu_0),$ $n=0,1,2,\ldots$ and $\varphi(d\mu_0;x)$ can be written in terms of  $P_n(d\nu_0;x),$ $ Q_n(d\nu_0;x),$ $\beta_{n+1}(d\nu_0),$ $n=0,1,2,\ldots$ and $\varphi(d\nu_0;x)$ respectively. Using relation Eq. (\ref{nu 0 eq}) and definition (\ref{def of Pn}), we find
\begin{equation}\label{for sub beta proof1}
P_n(d\mu_0;x)=\sqrt{C_0(d\nu_0)}P_n(d\nu_0;x)\quad n=0,1,2,\ldots\,.
\end{equation}
Using this relation, and definitions (\ref{defn of 2nd poly}) and (\ref{def reducer}), we find
\begin{eqnarray}
Q_n(d\mu_0;x)&=&\frac{Q_n(d\nu_0;x)}{\sqrt{C_0(d\nu_0)}}\quad n=0,1,2,\ldots\,,\label{for sub beta proof2}\\
\varphi(d\mu_0;x)&=&\frac{\varphi(d\nu_0;x)}{C_0(d\nu_0)}.\label{for sub beta proof3}
\end{eqnarray}
Lemma (\ref{the shift property of alpha beta}) tells us
\begin{equation}\label{mu nu beta eq}
\beta_n(d\mu_0)=\beta_n(d\nu_0)\quad n=1,2,3,\ldots\,.
\end{equation}
Hence using relation Eq. (\ref{the t beta rel eq}), we find
\begin{equation}
t^2_{n-1}(d\mu_0)=t^2_{n-1}(d\nu_0)=\beta_n(d\nu_0)\quad n=1,2,3,\ldots\,.\label{for sub beta proof4}
\end{equation}
Now substituting Eq. (\ref{for sub beta proof1}), (\ref{for sub beta proof2}), (\ref{for sub beta proof3}), and (\ref{for sub beta proof4}) into Eq. (\ref{the mun sequence eq}),
\begin{equation}
\bar\mu_n(x)=\frac{1}{\beta_n(d\nu_0)}\frac{\bar\nu_0(x)}{\left(P_{n-1}(d\nu_0;x)\frac{\varphi(d\nu_0;x)}{2}-Q_{n-1}(d\nu_0;x)\right)^2+\pi^2\bar\nu_0^2( x)P_{n-1}^2(d\nu_0;x)}
\end{equation}
$n=1,2,3,\ldots\,.$ Hence by observing definition (\ref{nu sequence defn}), we find Eq. (\ref{nu lemma eq}) for $n=1,2,3,\ldots\,.$\qed
\end{proof}
\begin{theorem}
A sequence of beta normalised measures $d\nu_0(x),$ $d\nu_1(x),$ $d\nu_2(x),$ $\ldots$, satisfy
\begin{eqnarray}
\alpha_{m+n}(d\nu_0)&=&\alpha_n(d\nu_m)\quad n,m=0,1,2,\ldots\,,\label{betanorm alpah n and m eq}\\
\beta_{m+n}(d\nu_0)&=&\beta_n(d\nu_m)\quad n,m=0,1,2,\ldots\,.\label{betanorm beta n and n eq}
\end{eqnarray}
\end{theorem}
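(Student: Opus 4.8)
The plan is to reduce this to the shift relations already proved for sequences of normalised secondary measures (Lemma~\ref{the norm sequence shif theorem v2}), using the proportionality supplied by Lemma~\ref{beta measures proportional to norm measures lemma} together with the scaling behaviour of the recursion coefficients under $d\mu\mapsto Cd\mu$ (Lemma~\ref{the shift property of alpha beta}). First I would invoke Lemma~\ref{beta measures proportional to norm measures lemma} to fix the sequence of normalised secondary measures $\{d\mu_n\}$ generated from $d\mu_0:=d\nu_0/C_0(d\nu_0)$, so that $\bar\nu_n(x)=\beta_n(d\nu_0)\,\bar\mu_n(x)$ for all $n$. I would note that each $\beta_n(d\nu_0)$ is strictly positive (for $n\ge1$ it is a ratio of squared norms of nonzero monic polynomials with respect to a positive measure, and $\beta_0(d\nu_0)=C_0(d\nu_0)>0$ by Corollary~\ref{cor C0 beta0}), so that $d\nu_n=\beta_n(d\nu_0)\,d\mu_n$ is a genuine positive rescaling of $d\mu_n$ and Lemma~\ref{the shift property of alpha beta} applies with $C=\beta_n(d\nu_0)$.

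For the $\alpha$ relation~(\ref{betanorm alpah n and m eq}): since $\alpha_k$ is invariant under a positive rescaling of the measure (Eq.~(\ref{the alpha shift eq})), we get $\alpha_k(d\nu_0)=\alpha_k(d\mu_0)$ and $\alpha_k(d\nu_m)=\alpha_k(d\mu_m)$ for every $k\ge0$, whence
\[
\alpha_{m+n}(d\nu_0)=\alpha_{m+n}(d\mu_0)=\alpha_n(d\mu_m)=\alpha_n(d\nu_m),
\]
the middle equality being Eq.~(\ref{alpah n and 0 eq}) of Lemma~\ref{the norm sequence shif theorem v2}.

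For the $\beta$ relation~(\ref{betanorm beta n and n eq}) I would split on $n$. For $n\ge1$, scale invariance of $\beta_n$ ($n\ge1$) gives $\beta_{m+n}(d\nu_0)=\beta_{m+n}(d\mu_0)$ and $\beta_n(d\nu_m)=\beta_n(d\mu_m)$, so Eq.~(\ref{beta n and 0 eq}) of Lemma~\ref{the norm sequence shif theorem v2} yields $\beta_{m+n}(d\nu_0)=\beta_n(d\mu_m)=\beta_n(d\nu_m)$. The case $n=0$ is the genuinely new point, because $\beta_0$ scales linearly rather than being invariant: here I would use Corollary~\ref{cor C0 beta0} to write $\beta_0(d\nu_m)=C_0(d\nu_m)$, then from $\bar\nu_m=\beta_m(d\nu_0)\bar\mu_m$ and $C_0(d\mu_m)=1$ (Corollary~\ref{normalised sencond coll}) conclude $C_0(d\nu_m)=\beta_m(d\nu_0)\,C_0(d\mu_m)=\beta_m(d\nu_0)$, so that $\beta_0(d\nu_m)=\beta_m(d\nu_0)=\beta_{m+0}(d\nu_0)$.

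There is no deep obstacle here; the computation is essentially bookkeeping. The only points requiring care are that the shift relations of Lemma~\ref{the norm sequence shif theorem v2} are stated for $n\ge1$ in the $\beta$ case, so the $n=0$ instance of~(\ref{betanorm beta n and n eq}) must be handled separately through the zeroth-moment argument above (which is exactly where the normalisation factor $\beta_m(d\nu_0)$ is seen to do the right thing), and that one must first confirm positivity of all the constants $\beta_m(d\nu_0)$ so that the rescaling lemma is legitimately applicable.
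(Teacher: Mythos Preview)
Your proposal is correct and follows essentially the same route as the paper: invoke Lemma~\ref{beta measures proportional to norm measures lemma} to get $d\nu_n=\beta_n(d\nu_0)\,d\mu_n$, transfer the $\alpha$ and the $n\ge1$ $\beta$ relations from Lemma~\ref{the norm sequence shif theorem v2} via the scale invariance of Lemma~\ref{the shift property of alpha beta}, and handle $n=0$ separately through $\beta_0(d\nu_m)=C_0(d\nu_m)=\beta_m(d\nu_0)C_0(d\mu_m)=\beta_m(d\nu_0)$. Your explicit check that $\beta_m(d\nu_0)>0$ before applying the rescaling lemma is a nice bit of care that the paper leaves implicit.
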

\begin{proof}
From lemma (\ref{beta measures proportional to norm measures lemma}) we see that $d\mu_n$ and $d\nu_n$ are related by a constant, hence using lemmas (\ref{the shift property of alpha beta}) and (\ref{the norm sequence shif theorem v2}) we find
\begin{eqnarray}
\alpha_{m+n}(d\nu_0)&=&\alpha_n(d\nu_m)\quad n,m=0,1,2,\ldots\,,\\
\beta_{m+n}(d\nu_0)&=&\beta_n(d\nu_m)\quad n=1,2,3,\ldots\,,\, m=0,1,2,\ldots\,.
\end{eqnarray}
For $\beta_0(d\nu_m)$ we find
\begin{equation}
\beta_0(d\nu_m)=\beta_0(\beta_m(d\nu_0)d\mu_m)=\beta_{m}(d\nu_0)C_0(d\mu_m)\quad m=0,1,2,\ldots\,,
\end{equation}
where we have used Eq. (\ref{nu lemma eq}) followed by Eq. (\ref{the beta 0 shift eq}) and then corollary (\ref{cor C0 beta0}). But $C_0(d\mu_m)=1\quad m=0,1,2,\ldots\,,$ by definition.\qed
\end{proof}
\begin{remark}
As we will see in section \textup{\ref{examples section}}, for a wide range of $d\nu_0$\textup{;} $\alpha_{m+n}(d\nu_0)$ ,  $\beta_{m+n}(d\nu_0)$ and $d\nu_m$ can be determined analytically. Hence Eq.
 \textup{(\ref{betanorm alpah n and m eq})} and Eq. \textup{(\ref{betanorm beta n and n eq})} can also be used to find analytical solutions to a wide range of integrals.
\end{remark}
\begin{definition}
We will call the infinite tridiagonal matrix of a measure $d\mu(x)$
\begin{equation}
\mathcal{J}(d\mu)=\begin{bmatrix}\alpha_0(d\mu) & \sqrt{\beta_1(d\mu)} & & & & 0 \\
\sqrt{\beta_1(d\mu)} & \alpha_1(d\mu) & \sqrt{\beta_2(d\mu)} &  &\\
 & \sqrt{\beta_2(d\mu)} & \alpha_2(d\mu) & \sqrt{\beta_3(d\mu)} &\\
 &  & \ddots\ & \ddots\ &\ddots\ \\
0 &  & & & & \\
\end{bmatrix},
\end{equation}
the \textup{Jacobi matrix}. See \textup{\cite{Gautschi}} for more details.
\end{definition}
\begin{definition}
We will call the matrix
\begin{equation}\label{the mth associated jacobi matrix}
\mathcal{J}_{n}(d\mu)=\begin{bmatrix}\alpha_n(d\mu) & \sqrt{\beta_{n+1}(d\mu)} & & & & 0 \\
\sqrt{\beta_{n+1}(d\mu)} & \alpha_{n+1}(d\mu) & \sqrt{\beta_{n+2}(d\mu)} &  & &\\
 & \sqrt{\beta_{n+2}(d\mu)} & \alpha_{n+2}(d\mu) & \sqrt{\beta_{n+3}(d\mu)} & &\\
 &  & \ddots\  & \ddots\  & \ddots\ &\\
0 &  & & & & \\
\end{bmatrix}
\end{equation}
$n=1,2,3,\ldots\,,$ the \textup{nth associated Jacobi matrix} of the Jacobi matrix $\mathcal{J}(d\mu)$.
\end{definition}
We are now ready to state our second main theorem:
\begin{theorem}\label{jacobi matrix theorem}For Jacobi matrices for which its corresponding measure defines a sequence of normalised secondary measures, there exist an infinite sequence of associated  Jacobi matrices corresponding to the sequence of normalised secondary measures. These matrices are formed by crossing-out the first row and column of the previous Jacobi matrix in the sequence:
\begin{equation}\label{jacobi matrix theorem eq}
\mathcal{J}_{n}(d\mu_0)=\mathcal{J}(d\mu_n)\quad n=1,2,3,...
\end{equation}
\end{theorem}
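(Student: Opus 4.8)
The plan is to establish the matrix identity (\ref{jacobi matrix theorem eq}) by comparing the two tridiagonal matrices entry by entry and then invoking the shift relations already proved in Lemma (\ref{the norm sequence shif theorem v2}). For the existence assertion, note that by Definition (\ref{def 2nd dary measure}) every measure $d\mu_n$ in a sequence of normalised secondary measures is again a gapless measure with finite moments, so its recurrence coefficients $\alpha_k(d\mu_n),\beta_k(d\mu_n)$ and hence its Jacobi matrix $\mathcal{J}(d\mu_n)$ are well defined; this produces the claimed infinite sequence $\mathcal{J}(d\mu_1),\mathcal{J}(d\mu_2),\ldots$ attached to the sequence of measures. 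Writing out both sides of (\ref{jacobi matrix theorem eq}): by Eq. (\ref{the mth associated jacobi matrix}) the matrix $\mathcal{J}_n(d\mu_0)$ is $\mathcal{J}(d\mu_0)$ with its first $n$ rows and columns struck out, so its $(i,i)$ entry is $\alpha_{n+i-1}(d\mu_0)$ and its $(i,i+1)=(i+1,i)$ entry is $\sqrt{\beta_{n+i}(d\mu_0)}$, whereas $\mathcal{J}(d\mu_n)$ has $(i,i)$ entry $\alpha_{i-1}(d\mu_n)$ and $(i,i+1)$ entry $\sqrt{\beta_i(d\mu_n)}$. Thus (\ref{jacobi matrix theorem eq}) is equivalent to the scalar identities $\alpha_{n+k}(d\mu_0)=\alpha_k(d\mu_n)$ for $k=0,1,2,\ldots$ together with $\beta_{n+k}(d\mu_0)=\beta_k(d\mu_n)$ for $k=1,2,3,\ldots$.

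At this point the work is already done: these are precisely Equations (\ref{alpah n and 0 eq}) and (\ref{beta n and 0 eq}) of Lemma (\ref{the norm sequence shif theorem v2}), read with the lemma's index $m$ specialised to the present $n$ and its running index renamed $k$. (The range $k\ge 1$ for the $\beta$ identity is exactly the set of indices at which off-diagonal entries occur in a Jacobi matrix, so nothing is lost.) Hence $\mathcal{J}_n(d\mu_0)$ and $\mathcal{J}(d\mu_n)$ coincide in every entry, which is (\ref{jacobi matrix theorem eq}). For the ``crossing-out'' formulation, apply the identity just proved at level $n$ and at level $n+1$: $\mathcal{J}(d\mu_{n+1})=\mathcal{J}_{n+1}(d\mu_0)$, and by Eq. (\ref{the mth associated jacobi matrix}) the matrix $\mathcal{J}_{n+1}(d\mu_0)$ is obtained from $\mathcal{J}_{n}(d\mu_0)=\mathcal{J}(d\mu_n)$ by deleting one further (its first) row and column; starting the iteration from $\mathcal{J}_0(d\mu_0):=\mathcal{J}(d\mu_0)$ gives the stated picture.

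I do not anticipate a genuine obstacle, since all the analytic content --- how $\alpha$ and $\beta$ behave under passage to the secondary measure (Theorem (\ref{second 1st meausre theorem})) together with the scale invariance used to absorb the normalisation (Lemma (\ref{the shift property of alpha beta})) --- has been packaged into Lemma (\ref{the norm sequence shif theorem v2}). The only points that need care are bookkeeping ones: matching the index ranges on the two sides (in particular that the $\beta$-shift is invoked only for $k\ge 1$, which is harmless), and stating explicitly that the $n$th associated Jacobi matrix means the one with the first $n$ rows and columns removed, so that the recursive ``crossing-out'' description is literally accurate.
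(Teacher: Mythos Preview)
Your proposal is correct and follows essentially the same approach as the paper: the paper's own proof consists of exactly the observation that equating matrix elements in (\ref{jacobi matrix theorem eq}) yields the relations (\ref{alpah n and 0 eq}) and (\ref{beta n and 0 eq}), which are supplied by Lemma (\ref{the norm sequence shif theorem v2}). Your write-up is simply a more explicit version of this, with the index bookkeeping and the recursive ``crossing-out'' description spelled out.
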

\begin{proof}
By equating the matrix elements in Eq. (\ref{jacobi matrix theorem eq}), we find Eq. (\ref{alpah n and 0 eq}) and (\ref{beta n and 0 eq}). Hence lemma (\ref{the norm sequence shif theorem v2}) implies Eq. (\ref{jacobi matrix theorem eq}).\qed
\end{proof}
\begin{corollary}\label{jacobi theorem corr}
Theorem \textup{(\ref{jacobi matrix theorem})} is also valid for any sequence of measures which are proportional to a sequence of normalised secondary measures
such as the beta normalised measures.\end{corollary}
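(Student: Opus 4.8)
The plan is to show that crossing out the first row and column of a Jacobi matrix associated with a sequence proportional to a sequence of normalised secondary measures still yields the next Jacobi matrix in the sequence. The key observation is that the Jacobi matrix $\mathcal J(d\mu)$ depends only on the coefficients $\alpha_n(d\mu)$ for $n\geq 0$ and $\beta_n(d\mu)$ for $n\geq 1$; the quantity $\beta_0(d\mu)$ — the only coefficient that is sensitive to an overall rescaling of the measure, by Lemma (\ref{the shift property of alpha beta}), Eq. (\ref{the beta 0 shift eq}) — does not enter $\mathcal J(d\mu)$ at all. Hence the Jacobi matrix is invariant under a change of scale of the measure.

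Concretely, suppose $\{d\sigma_n\}_{n=0}^\infty$ is a sequence of measures proportional to a sequence of normalised secondary measures $\{d\mu_n\}_{n=0}^\infty$, say $\bar\sigma_n(x)=c_n\bar\mu_n(x)$ with $c_n>0$ for each $n$ (the beta normalised measures of Definition (\ref{nu sequence defn}) being the case $c_n=\beta_n(d\nu_0)$ by Lemma (\ref{beta measures proportional to norm measures lemma})). First I would invoke Lemma (\ref{the shift property of alpha beta}) to get $\alpha_n(d\sigma_m)=\alpha_n(d\mu_m)$ for all $n,m\geq 0$ and $\beta_n(d\sigma_m)=\beta_n(d\mu_m)$ for all $n\geq 1$, $m\geq 0$. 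In particular $\mathcal J(d\sigma_m)=\mathcal J(d\mu_m)$ and $\mathcal J_n(d\sigma_0)=\mathcal J_n(d\mu_0)$ for every $n$, since both sides of each identity are built solely from $\alpha$'s with index $\geq 0$ and $\beta$'s with index $\geq 1$, quantities unaffected by the rescaling. Then I would apply Theorem (\ref{jacobi matrix theorem}) to the underlying normalised secondary sequence to conclude $\mathcal J_n(d\mu_0)=\mathcal J(d\mu_n)$ for $n=1,2,3,\ldots$, and chain the three equalities to obtain $\mathcal J_n(d\sigma_0)=\mathcal J_n(d\mu_0)=\mathcal J(d\mu_n)=\mathcal J(d\sigma_n)$.

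There is essentially no hard step here: the corollary is a direct consequence of scale-invariance of the Jacobi matrix together with the already-established Theorem (\ref{jacobi matrix theorem}). The only point requiring a modicum of care is to confirm that, given $\{d\sigma_n\}$ proportional to $\{d\mu_n\}$ with $\{d\mu_n\}$ a sequence of normalised secondary measures, the constants $c_n$ are indeed all strictly positive (so that Lemma (\ref{the shift property of alpha beta}) applies with $C=c_n>0$) — but this is immediate since the $d\mu_n$ are genuine positive measures and the $d\sigma_n$ are assumed to be measures in the sense of Definition (\ref{meausre defn}), forcing $c_n=C_0(d\sigma_n)/C_0(d\mu_n)=C_0(d\sigma_n)>0$. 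I would close by remarking explicitly that the case $\bar\sigma_n=\beta_n(d\nu_0)\bar\mu_n$ — the beta normalised measures — is covered, since $\beta_n(d\nu_0)=\beta_0(d\nu_n)>0$ by the theorem preceding this corollary.
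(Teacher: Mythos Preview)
Your proposal is correct and follows essentially the same approach as the paper: the paper's proof simply observes that the Jacobi matrix does not contain the $\beta_0$ coefficient, so the result follows from Lemma~(\ref{the shift property of alpha beta}). Your argument unpacks exactly this observation in more detail.
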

\begin{proof}
Given that the Jacobi matrix does not contain the $\beta_0$ coefficient, the result follows easily from lemma (\ref{the shift property of alpha beta}).\qed
\end{proof}

\section{Chain mappings of open quantum systems and Markovian embeddings}\label{Chain mappings of open quantum systems and Markovian embeddings}
\subsection{Chain mappings}\label{Generalised bosonic mapping}
An open quantum system can be represented by a system plus bath (also known as environment) model introduced by Caldeira and Leggett \cite{cal and leg}. The Hilbert space of the Hamiltonian for this model $\mathcal{H}$, is the tensor product of the space of the quantum system wavefunctions and the Fock space for the bosonic bath. Formally, $\mathcal{H}=\mathcal{S} \otimes \Gamma (\mathfrak{h})$, where $\mathcal{S}$ is a separable Hilbert space describing the quantum system and $\Gamma (\mathfrak{h})$ is the bosonic Fock space\footnote{also known as symmetric Fock space} over the one particle space $\mathfrak{h}=L^2(\mathbb{R},dk)$, where $k$ is the boson momentum. This describes a field of scalar bosons. An element $\Psi$ of $\mathcal{H}$, is a sequence $\{ \Psi^{(n)} \}$ with $\Psi^{(n)}=\varphi\otimes \psi^{(n)}$, where $\psi=\{\psi^{(n)}\}\in \Gamma(\mathfrak{h})$, $\psi^{(n)}$ is on $\mathbb{R}^n$ and the domain of $\varphi\in \mathcal{S}$ is to be specified with the details of the quantum system (we will give some examples later). The elements of $\mathcal{H}$ satisfy $||\Psi||:=||\varphi||_\mathcal{S}||\psi||_{\Gamma (\mathfrak{h})}<\infty$, where
\begin{equation}
||\Psi||^2=||\varphi||^2_\mathcal{S}\left( |\psi^{(0)}|^2+\sum_{n=1}^\infty \int_{k_{min}}^{k_{max}}\ldots\int_{k_{min}}^{k_{max}} dk_1\ldots dk_n |\psi^{(n)}(k_1,\ldots,k_n)|^2 \right),
\end{equation}
where each $\psi^{(n)}$ is symmetric in $k_1,\ldots,k_n$ and $-\infty \leq k_{min}<k_{max} \leq \infty$.
The Hamiltonian is
\begin{equation}\label{the mother ham}
H=H_{\mathcal{S}}\otimes\mathbb{I}_{\Gamma(\mathfrak{h})}+\mathbb{I}_{\mathcal{S}}\otimes \int_{k_{min}}^{k_{max}}dkg(k)a_k^* a_k+A\otimes\int_{k_{min}}^{k_{max}}dk h(k) (a_k^*+a_k),
\end{equation}
$H_S$ is a bounded below self-adjoint operator on $\mathcal{S}$ and describes the system dynamics, $H_E:=\int_{k_{min}}^{k_{max}}dxg(k)a_k^* a_k$ is the Hamiltonian of the environment where $g\geq 0$. This is also known in the literature as $d\Gamma(g)$, \textit{the second quantisation} of $g$. $a_k^*,$ $a_k$ are creation and annihilation operators with cummutator $[a_k,a_{k'}^*]=\delta(k-k')$. They act on each $\psi^{(n)}$ by
\begin{eqnarray}
(a_k\psi)^{(n)}(k_1,\ldots,k_n)&=&(n+1)^{1/2}\psi^{(n+1)}(k,k_1,\ldots,k_n)\quad n=0,1,2,\ldots\,,\\
(a^*_k\psi)^{(n)}(k_1,\ldots,k_n)&=&n^{-1/2}\sum_{j=1}^n\delta(k-k_j)\psi^{(n-1)}(k_1,\ldots,\hat k_j,\ldots,k_n)\quad n=1,2,3,\ldots\,,\nonumber\\
(a_k^*\psi)^{(0)}&=&0,\\
\end{eqnarray}
where $\hat k_j$ indicates that $k_j$ is omitted. Hence we have that $(\mathbb{I}_\mathcal{S}\otimes H_E\Psi)^{(n)}=\sum_{j=1}^ng(k_j)\Psi^{(n)}$ on domain $\mathcal{D} (\mathbb{I}_\mathcal{S}\otimes H_E)$ of all $\Psi\in \mathcal{H}$ such that $\{ (\mathbb{I}_\mathcal{S}\otimes H_E\Psi)^{(n)} \}$ is again in $\mathcal{H}$. Let $\mathfrak{n}$ be the number of bosons operator defined by
\begin{equation}
(\mathfrak{n}\psi)^{(n)}=n\psi^{(n)}, \quad n=0,1,2,\ldots
\end{equation}
on the domain $\mathcal{D}(\mathfrak{n})$ of all $\psi$ in $\Gamma(\mathfrak{h})$ such that $\{ n\psi^{(n)} \}$ is again in $\Gamma(\mathfrak{h})$. $H_{int}:=A\otimes\int_{k_{min}}^{k_{max}}dk h(k)(a_k^*+a_k)$ describes the interaction between the quantum system and the bosonic environment. $h(k)\in L^2([k_{min},k_{max}])$ and $\int_{k_{min}}^{k_{max}}dk h(k)(a_k^*+a_k)=\int_{k_{min}}^{k_{max}}dk h(k)a_k^*+ \int_{k_{min}}^{k_{max}}dk h(k)a_k$, where
\begin{eqnarray}
\left(\int_{k_{min}}^{k_{max}}dk h(k)a_k\psi\right)^{(n)}(k_1,\ldots,k_n)&=&(n+1)^{1/2}\int_{k_{min}}^{k_{max}}h(k)\psi^{(n+1)}(k,k_1,\ldots,k_n)dk\quad n=0,1,2,\ldots\,,\label{int a* eq}\nonumber\\
\left(\int_{k_{min}}^{k_{max}}dkh(k)a^*_k\psi\right)^{(n)}(k_1,\ldots,k_n)&=&n^{-1/2}\sum_{j=1}^nh(k_j)\psi^{(n-1)}(k_1,\ldots,\hat k_j,\ldots,k_n)\quad n=1,2,3,\ldots\,,\label{int a eq}\\
\left(\int_{k_{min}}^{k_{max}}dkh(k)a^*_k\psi\right)^{(0)}&=&0,\label{int ao eq}
\end{eqnarray}
is a well defined self-conjugate operator (also known in the literature as $\Phi_s(h\sqrt{2})$, the \textit{Segal field operator}) on $\mathcal{D}(\mathfrak{n}^{1/2})$. $A$ is any bounded below self-adjoint operator with domain in $\mathcal{S}$ such that $H$ is a well defined Hamiltonian with domain in $\mathcal{H}$. See Theorem (\ref{selfadjoint}) and remark (\ref{selfadjoint remark}) for more details. We call $H_0=H_{\mathcal{S}}\otimes\mathbb{I}_{\Gamma(\mathfrak{h})}+\mathbb{I}_{\mathcal{S}}\otimes H_E$ the \textit{free Hamiltonian} and define $\omega_{max}:=\textup{sup}\, g$, $\omega_{min}:=\text{inf}\, g$. $\omega_{min}$ is sometimes called as the mass of the bosons. The case of \textit{massless bosons}, i.e. when $\omega_{min}=0$, is of particular interest.
\begin{theorem}\label{selfadjoint}
If $A$ is a bounded operator \footnote{i.e. $||A\varphi||_\mathcal{S}\leq C||\varphi||_\mathcal{S} \forall \varphi\in \mathcal{D}(A),\,\,0\leq C<\infty$.} on $S$, $H$ is self-adjoint on $\mathcal{D}(H)=\mathcal{D}(H_0)$ if
\begin{equation}\label{selfadjon eq}
\int_{k_{min}}^{k_{max}}\frac{h^2(k)}{g(k)}dk<\infty
\end{equation}
\end{theorem}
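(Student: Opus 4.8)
The plan is to realise $H$ as a Kato--Rellich perturbation of the free Hamiltonian $H_0=H_{\mathcal S}\otimes\mathbb I_{\Gamma(\mathfrak h)}+\mathbb I_{\mathcal S}\otimes H_E$. Concretely, one shows that the interaction term $H_{int}=A\otimes\Phi_s(h\sqrt{2})$ is $H_0$-bounded with relative bound $0$ whenever $\int_{k_{min}}^{k_{max}}h^2(k)/g(k)\,dk<\infty$; the Kato--Rellich theorem then gives that $H=H_0+H_{int}$ is self-adjoint on $\mathcal D(H)=\mathcal D(H_0)$.

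First I would establish, directly from the explicit actions of $a_k$ and $a_k^*$ on the components $\psi^{(n)}$ recorded above, the weighted field bound
\begin{equation}
\|\Phi_s(h\sqrt{2})\psi\|\ \le\ 2\left\|\frac{h}{\sqrt g}\right\|_{L^2}\|H_E^{1/2}\psi\|+\|h\|_{L^2}\|\psi\|,\qquad \psi\in\mathcal D(H_E^{1/2}).
\end{equation}
The annihilation part is obtained by applying Cauchy--Schwarz in the integration variable $k$ with weight $g(k)$, writing $h=(h/\sqrt g)\sqrt g$, and then recognising $\sum_n (n+1)\!\int g(k)|\psi^{(n+1)}(k,\cdot)|^2=\|H_E^{1/2}\psi\|^2$ after using the symmetry of $\psi^{(n)}$; the creation part follows from the canonical commutation relation, which gives $\|a^*(h)\psi\|^2=\|a(h)\psi\|^2+\|h\|_{L^2}^2\|\psi\|^2$. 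Hypothesis (\ref{selfadjon eq}) is exactly the statement $\|h/\sqrt g\|_{L^2}<\infty$, while $\|h\|_{L^2}<\infty$ since $h\in L^2([k_{min},k_{max}])$.

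Next I would convert this into a relative bound with small coefficient. Using $H_E\ge 0$ and the elementary spectral estimate $\|H_E^{1/2}\psi\|\le \varepsilon\|H_E\psi\|+(4\varepsilon)^{-1}\|\psi\|$ for every $\varepsilon>0$, the inequality above becomes $\|\Phi_s(h\sqrt{2})\psi\|\le a_\varepsilon\|H_E\psi\|+b_\varepsilon\|\psi\|$ with $a_\varepsilon\to 0$ as $\varepsilon\to 0$. To pass to the full tensor product, I would use that $H_{\mathcal S}$ is bounded below, so after a harmless energy shift one may assume $H_{\mathcal S}\ge 0$ and then $H_{\mathcal S}\otimes\mathbb I$ and $\mathbb I\otimes H_E$ are commuting non-negative self-adjoint operators; the joint spectral theorem gives $\mathcal D(H_0)\subseteq\mathcal D(\mathbb I\otimes H_E)\subseteq\mathcal D(\mathbb I\otimes H_E^{1/2})$ and $\|(\mathbb I\otimes H_E)\Psi\|\le\|H_0\Psi\|$ up to the constant from the shift. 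Since $A$ is bounded, $\|H_{int}\Psi\|\le\|A\|\,\|(\mathbb I\otimes\Phi_s(h\sqrt{2}))\Psi\|\le a_\varepsilon\|A\|\,\|H_0\Psi\|+c(\varepsilon)\|\Psi\|$, so $H_{int}$ is infinitesimally $H_0$-bounded and Kato--Rellich applies, giving self-adjointness on $\mathcal D(H_0)$.

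The genuinely non-routine points are: (i) the ``field-theoretic'' estimate, i.e.\ checking that the combinatorial prefactors $(n+1)^{1/2}$ and $n^{-1/2}$ in the actions of $a_k$, $a_k^*$ combine with the symmetry of the $\psi^{(n)}$ to produce precisely the clean $H_E^{1/2}$-bound above, together with verifying that $\Phi_s(h\sqrt{2})$ is symmetric on $\mathcal D(H_0)$; and (ii) the transfer of this Fock-space estimate to the Hamiltonian on $\mathcal S\otimes\Gamma(\mathfrak h)$ when $H_{\mathcal S}$ is only bounded below rather than bounded. It is worth noting that working with the $H_E^{1/2}$-bound, rather than the cruder $\mathfrak n^{1/2}$-bound valid for arbitrary $h\in L^2$, is exactly what makes the argument go through in the massless case $\omega_{min}=0$, and that (\ref{selfadjon eq}) is the sharp infrared condition controlling $a(h)$ relative to $H_E^{1/2}$.
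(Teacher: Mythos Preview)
Your argument is correct and is the standard Kato--Rellich route to this result. The paper, however, does not give its own proof: it simply records the statement as well known and cites Berezin's \emph{The Method of Second Quantization}, section~7. So there is nothing to compare at the level of strategy; what you have written is essentially the proof one would reconstruct from that reference (or from any standard treatment of Pauli--Fierz-type Hamiltonians), and your identification of the weighted field bound $\|a(h)\psi\|\le\|h/\sqrt g\|_{L^2}\|H_E^{1/2}\psi\|$ as the place where condition~(\ref{selfadjon eq}) enters is exactly right.
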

\begin{proof} This is a well-known result from the literature. See e.g. \cite{Be}, section 7.
\end{proof}
\begin{remark}\label{selfadjoint remark}
In the case that $A$ is an unbounded operator on $\mathcal{S}$, there also exist conditions under which $H$ can be shown to be self-adjoint on some appropriate domain. See Theorem 2.2 of \cite{Jan}.
\end{remark}
Two examples of Hamiltonians satisfying Lemma (\ref{selfadjoint}) are the Spin-Boson model, where $H_\mathcal{S}=\mathcal{\alpha}\sigma_z$, $A=\sigma_x,$ where $\mathcal{\alpha}$ is a positive constant and $\sigma_x,$ $\sigma_z$ are the Pauli matrices. $\varphi=(\gamma_1,\gamma_2)^T$ with $\gamma_1,\gamma_2\in \mathbb{C}$. Another example is when the quantum system is comprised of $N$ spinless nucleons of mass $M>0$. In this case $H_\mathcal{S}=-\sum_{j=1}^N\nabla^2_j/(2M)$ where $\nabla^2_j$ is the Laplacian in the variable $x_j$ with usual domain which makes it a self-adjoint operator and $\varphi=\varphi(x_1,\ldots,x_N)$. $A$, for example could be the operator of multiplication by a positive function $f\in L^2( \mathbb{R}^N)$.\\
\textbf{Additional assumptions:} In order to prove the results of this paper, we will need to make some assumptions in addition to those which make the Hamiltonian (\ref{the mother ham}) well-defined:
\begin{itemize}
\item [A1.]  $g(x)$ is invertable and differentiable on the interval $[k_{min},k_{max}]$ satisfying $dg(x)/dx\geq 0$ or $dg(x)/dx\leq  0$. \footnote{Note however that if this is not the case, there should be ways to get around this difficulty.}
\item [A2.] $J(x)$ has finite moments  on the interval $[\omega_{min}, \omega_{max}]$.\footnote{Note that if this is not the case because $[\omega_{min}, \omega_{max}]$ is an unbounded interval, one can define a cut-off hamiltonian such that the interval is bounded and the theorems developed here will then apply.}
\end{itemize}
\begin{corollary}
If $A$ is a bounded operator on $\mathcal{S}$, and assumptions $A1.$ and $A2.$ are satisfied, then $H$ is self-adjoint on $\mathcal{D}(H)=\mathcal{D}(H_0)$. 
\end{corollary}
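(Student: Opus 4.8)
The plan is to reduce the assertion to the sufficient criterion already proved in Theorem \ref{selfadjoint} and then to verify that criterion. Since $A$ is bounded, Theorem \ref{selfadjoint} guarantees that $H$ is self-adjoint on $\mathcal{D}(H)=\mathcal{D}(H_0)$ provided the single inequality $\int_{k_{min}}^{k_{max}}h^2(k)/g(k)\,dk<\infty$ holds; no further analysis of $H$, $H_0$ or their domains is then needed, so the whole problem collapses to establishing this bound. Assumption A1 will be used to change variables from bath momentum to bath frequency, and assumption A2 to control the resulting integral.

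First I would change variables. By A1 the dispersion $k\mapsto g(k)$ is a monotone differentiable bijection of $[k_{min},k_{max}]$ onto $[\omega_{min},\omega_{max}]$ with a differentiable inverse $g^{-1}$; since the integrand is non-negative its orientation is irrelevant, and the substitution $\omega=g(k)$ gives
\begin{equation}
\int_{k_{min}}^{k_{max}}\frac{h^2(k)}{g(k)}\,dk=\int_{\omega_{min}}^{\omega_{max}}\frac{h^2\!\left(g^{-1}(\omega)\right)}{\omega}\left|\frac{dg^{-1}(\omega)}{d\omega}\right|d\omega\;\propto\;\int_{\omega_{min}}^{\omega_{max}}\frac{J(\omega)}{\omega}\,d\omega,
\end{equation}
the proportionality resting on the fact that the spectral density $J(\omega)$ is a fixed positive multiple of $h^2\!\left(g^{-1}(\omega)\right)\left|dg^{-1}(\omega)/d\omega\right|$ --- precisely the identification between spectral density and bath coupling/dispersion that is set up in the proof of Theorem \ref{generalised mapping theorem}.

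Next I would bound $\int_{\omega_{min}}^{\omega_{max}}J(\omega)/\omega\,d\omega$. In the massive case $\omega_{min}>0$ this is immediate: $1/\omega\le 1/\omega_{min}$ on $[\omega_{min},\omega_{max}]$, hence the integral is at most $\omega_{min}^{-1}\int_{\omega_{min}}^{\omega_{max}}J(\omega)\,d\omega$, which is finite because it is $\omega_{min}^{-1}$ times the zeroth moment of $J$ and A2 guarantees all moments of $J$ are finite. Tracing this back through the reduction above completes the argument whenever the bosons have non-zero mass.

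The main obstacle is the massless case $\omega_{min}=0$. There one must in addition control the behaviour of $J(\omega)/\omega$ as $\omega\to0^{+}$, and the finiteness of the non-negative moments of $J$ does not by itself deliver this; one genuinely needs the $(-1)$-st moment $\int_0^{\omega_{max}}J(\omega)/\omega\,d\omega$ to be finite, equivalently that $J$ vanishes at the origin at least like $\omega^{\varepsilon}$ for some $\varepsilon>0$. This is satisfied by the ohmic and super-ohmic spectral densities that motivate the applications, and is most naturally regarded as the content that A2 should carry in the massless regime; granting it, the three steps combine to give the corollary.
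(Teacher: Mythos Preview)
Your approach is exactly the paper's: invoke Theorem~\ref{selfadjoint}, use A1 to change variables $k=g^{-1}(\omega)$ so that the integrand in \eqref{selfadjon eq} becomes $J(\omega)/(\pi\omega)$, and then appeal to A2. The paper's proof stops there without further comment.

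You have, however, been more careful than the paper and correctly isolated a subtlety it passes over. After the change of variables the condition of Theorem~\ref{selfadjoint} reads $\int_{\omega_{min}}^{\omega_{max}} J(\omega)/\omega\,d\omega<\infty$, which is the $(-1)$st moment of $J$, while A2 as stated (``$J$ has finite moments'') refers only to the non-negative moments $C_n=\int \omega^n J(\omega)\,d\omega$, $n\geq 0$. Your massive-case argument ($\omega_{min}>0$) is clean and complete: the $(-1)$st moment is dominated by $\omega_{min}^{-1}C_0$. In the massless case $\omega_{min}=0$ you are right that A2 alone does not secure the bound; a flat $J$ near the origin gives a logarithmic divergence while all non-negative moments remain finite. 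So the gap you identify is real and lives in the paper's corollary (or in the intended reading of A2), not in your reasoning. Your closing remark---that the physically relevant ohmic/super-ohmic densities do satisfy the missing integrability at zero, and that this is what A2 should be understood to include---is the honest resolution.
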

\begin{proof}
We can use $A1$ to make the change of variable $k=g^{-1}(x)$ in Eq. (\ref{selfadjon eq}) and hence write the integrand in terms on J. Now $A2.$ implies that Theorem (\ref{selfadjoint}) is satisfied.
\end{proof}
Physically $g(x)$ represents the dispersion relation of the environment and $h(x)$ determines the system-environment coupling strength.  Together they determine the spectral density in the following way
\begin{definition}\label{SD eq defn} We call the function $J(x)$ the spectral density,
\begin{equation}\label{SD_eq}
J(\omega)=\pi h^2\left(g^{-1}(\omega)\right)\left|\frac{dg^{-1}(\omega)}{d\omega}\right|,
\end{equation}
where $g^{-1}(g(x))=g(g^{-1}(x))=x$, \textup{Dom}$[J]\in [\omega_{min},\omega_{max}]$, and $|.|$ denotes the absolute value. Recall that $\omega_{max}$ can be finite or infinite such that the interval $[\omega_{min},\omega_{max}]$ can be bounded or unbounded.\\
Analogously to the measures \textup{(definition (\ref{meausre defn}))}, we call the spectral density \textup{gapless} if $J$ has support on all of the interval $(\omega_{min},\omega_{max})$ and \textup{gapped} if the support of $J$ is a disjoint set of intervals.
\end{definition}
\begin{remark}
As we will see later in the proof to theorem (\ref{generalised mapping theorem}) (more specifically in Eq. \ref{the mesure again}), we use $J$ in the definition of a measure with weight function $M^q(x)$ in such a way that the measure will be gapped iff $J$ is gapped. Hence if we want to form a gapless measure, in accordance with definition (\ref{meausre defn}), we need to use a gapless spectral density. For cases where this does not hold, i.e. $J$ is gapped, we can define a set of spectral densities corresponding to gapless measures by redefining their domain such that they contain only non positive values at the boundaries of their domain. It is not necessary to do this, but has some advantages as some additional theorems will then apply. Also see remark (\ref{gaps}).
\end{remark}

\begin{figure}
\centering 
\includegraphics[scale=0.4]{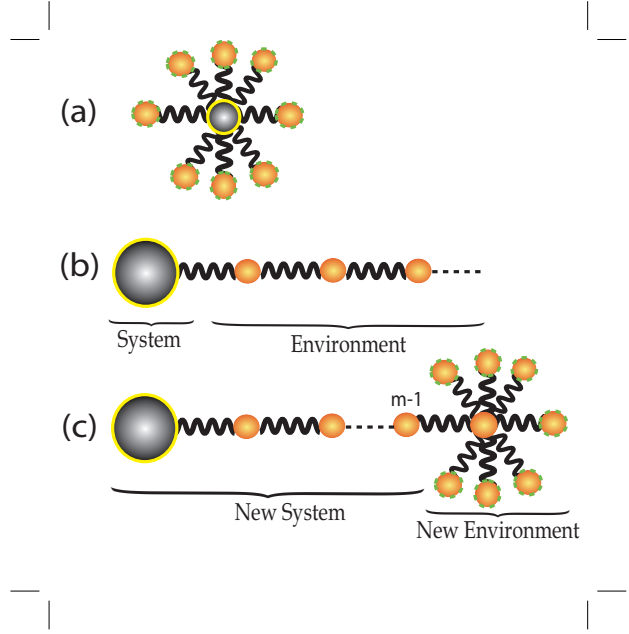} 
\caption{\label{fig:myFig} \textbf{(a)} The initial system-environment Hamiltonian before the mapping has been performed: the system (depicted by the gray ball in the centre) couples directly to the degrees of freedom of the enviroment (orange balls).
\textbf{(b)} The system-environment Hamilatonian after the chain mapping has been performed: the environment has been mapped onto a semi-infinite chain of nearest neighbour interactions where the system now only couples to the first element in the chain.
\textbf{(c)} The system-environment Hamiltonian after the $m$th environmental degree of freedom has been embedded: the new system, interacts with the new  enviroment via the $m$th residual spectral density $J_m(x)$.}
\end{figure}

\begin{lemma}\label{generalised orthogonality lem}
Given a measure $d\mu(x)$ with $I=[G_q(g(k_{min})), G_q(g(k_{max}))]$ if $g$ is non-decreasing and $I=[G_q(g(k_{max})), G_q(g(k_{min}))]$ if $g$ is non-increasing and its corresponding monic orthogonal polynomials $\{\pi_n( {d\mu};x)\}_{n=0}^{n=\infty}$, the following holds. One can construct the set of functions $\{\underline\pi_n(\underline {d\mu};x)\}_{n=0}^{n=\infty}$, 
\begin{equation}
\underline \pi_n (d \underline \mu;x):= \pi_n\big(d\mu; G_q(g(x))\big), \quad n=0,1,2,\ldots\,,
\end{equation}
where
\begin{equation}
G_q(x):=\frac{-q(1+q^2)+2\sqrt{q^4+4(1-q^2)x^2}}{ 4(1-q^2)} \quad x\geq0,\,q \in [0,1]
\end{equation}
which satisfy the 3-term recurrence relation
\begin{eqnarray}\label{genralised tree recurrence rel}
\underline\pi_{n+1}(\underline {d\mu};x)&=\left(G_q(g(x))-\alpha_n(d\mu)\right)\underline\pi_n(\underline {d\mu};x)-\beta_n(d\mu)\underline\pi_{n-1}(\underline {d\mu};x),& \label{3-term recurrece eq 3}\\
&\underline\pi_{-1}(\underline {d\mu};x):=0 \quad n=0,1,2,\ldots\,
\end{eqnarray}
and are orthogonal with respect to the measure $d \underline \mu(x)$
\begin{equation}\label{new ortho real}
\int_{k_{min}}^{k_{max}} \underline\pi_n(\underline {d\mu};x) \underline\pi_m(\underline {d\mu};x)\underline{\bar \mu}(x)dx=\langle \pi_n(d\mu),\pi_n(d\mu)\rangle_{\bar\mu}\delta_{nm} \quad n,m=0,1,2,\ldots\,,
\end{equation}
where $\underline {d\mu}(x)=:\underline{\bar \mu}(x)dx$, with
\begin{equation}\label{new measure inroems of alod}
\underline{\bar \mu}(x)=\bar \mu\left(G_q(g(x))\right)\left|\frac{dG_q(g(x))}{dx}\right|,
\end{equation}
with $I=[k_{min}, k_{max}]$. Furthermore, the converse is true: given the set of functions $\{\underline\pi_n(\underline{d\mu};x)\}_{n=0}^\infty$ defined in terms of a set of monic polynomials $\{\pi_n(x)\}_{n=0}^\infty$ which satisfy Eqs. (\ref{genralised tree recurrence rel}) and (\ref{new ortho real}) for a weight function $\underline{\bar\mu}$ defined in terms of a weight function $\bar\mu$ with $I=[G_q(g(k_{min})), G_q(g(k_{max}))]$ if $g$ is non-decreasing and $I=[G_q(g(k_{max})), G_q(g(k_{min}))]$ if $g$ is non-increasing, then the set $\{\pi_n(x)\}_{n=0}^\infty$ are the corresponding monic orthogonal polynomials for the weight function $\bar\mu$.
\end{lemma}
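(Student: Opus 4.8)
The whole lemma reduces to a single change of variables; set $\phi(x):=G_q(g(x))$.

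First I would check that $\phi$ is a differentiable, strictly monotone bijection of $[k_{min},k_{max}]$ onto $I$. By assumption A1, $g$ is differentiable and strictly monotone on $[k_{min},k_{max}]$ with $g\ge 0$, so it suffices to show $G_q$ is strictly increasing on $[0,\infty)$ for each $q\in[0,1]$. For $q<1$ the radicand $q^4+4(1-q^2)x^2$ is strictly increasing in $x\ge 0$, hence so is $G_q$; at $q=1$ the apparent $0/0$ is removable and a short expansion in $1-q^2$ gives $G_1(x)=x^2$, again strictly increasing. Consequently $\phi$ is strictly monotone in the same sense as $g$, it sends the endpoints to $G_q(g(k_{min}))$ and $G_q(g(k_{max}))$, so $\phi([k_{min},k_{max}])=I$ exactly as stated, and $\underline{\bar\mu}(x)=\bar\mu(\phi(x))\,|\phi'(x)|$ is Eq. (\ref{new measure inroems of alod}).

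The recurrence (\ref{genralised tree recurrence rel}) is then immediate: replacing the argument $y$ by $\phi(x)$ in the three--term recurrence of Theorem (\ref{monic recursion relation}) for $\{\pi_n(d\mu;\cdot)\}$ and using $\underline\pi_n(\underline{d\mu};x)=\pi_n(d\mu;\phi(x))$ reproduces (\ref{genralised tree recurrence rel}) term by term, with $\underline\pi_{-1}(\underline{d\mu};x)=\pi_{-1}(d\mu;\phi(x))=0$.

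For the orthogonality (\ref{new ortho real}) I would substitute $y=\phi(x)$ in $\int_{k_{min}}^{k_{max}}\underline\pi_n(\underline{d\mu};x)\underline\pi_m(\underline{d\mu};x)\underline{\bar\mu}(x)\,dx$. Since $dy=\phi'(x)\,dx$ and the absolute value in the definition of $\underline{\bar\mu}$ absorbs the sign of $\phi'$ (i.e.\ the reversal of the limits of integration when $g$, hence $\phi$, is non--increasing), the integral becomes $\int_I\pi_n(d\mu;y)\pi_m(d\mu;y)\,d\mu(y)=\langle\pi_n(d\mu),\pi_n(d\mu)\rangle_{\bar\mu}\,\delta_{nm}$, using the orthogonality of the $\pi_n(d\mu;\cdot)$ which follows from Definition (\ref{def of monic}). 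The converse runs the same computation backwards: given monic degree-$n$ polynomials $\{\pi_n\}$ with $\underline\pi_n(\underline{d\mu};x)=\pi_n(G_q(g(x)))$ satisfying (\ref{genralised tree recurrence rel}) and (\ref{new ortho real}) for $\underline{\bar\mu}$ built from $\bar\mu$ on the stated interval, the inverse change of variables $x=\phi^{-1}(y)$ turns (\ref{new ortho real}) into $\int_I\pi_n(y)\pi_m(y)\bar\mu(y)\,dy=c_n\delta_{nm}$ with $c_n>0$; since the $\pi_n$ are monic of degree $n$ and pairwise $\bar\mu$--orthogonal, uniqueness of the monic orthogonal polynomial sequence of the positive measure $\bar\mu(x)\,dx$ forces $\pi_n=\pi_n(d\mu;\cdot)$.

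The only step that is not pure bookkeeping is the first: confirming that $G_q\circ g$ is an admissible change of variables, meaning $G_q$ is monotone and well defined on $[0,\infty)$ for all $q\in[0,1]$ (the $q=1$ endpoint needing the removable-singularity check) and that it carries the interval endpoints as claimed. Everything after that is the substitution rule applied to the recurrence and to the orthogonality integral.
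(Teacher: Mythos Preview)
Your proof is correct and follows the same change-of-variables approach as the paper: the paper's proof simply notes that Eqs.~(\ref{new ortho real}) and (\ref{new measure inroems of alod}) follow from a change of variable in the orthogonality relation, that the recurrence (\ref{genralised tree recurrence rel}) follows from a change of variable in the standard three-term recurrence, and that the converse holds because $G_q$ and $g$ are invertible. You have been more explicit than the paper in two places---verifying the monotonicity of $G_q$ directly and handling the $q=1$ endpoint as a removable singularity yielding $G_1(x)=x^2$---both of which are improvements in rigor over the terse original.
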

\begin{proof}
Eqs. (\ref{new ortho real}) and (\ref{new measure inroems of alod}) follow from perfoming a change of variable in \\$\langle \pi_n(d\mu),\pi_m(d\mu)\rangle_{\bar\mu}=\langle \pi_n(d\mu),\pi_n(d\mu)\rangle_{\bar\mu}\delta_{nm}$ and noting that $dG_q(g(x))/dx\geq 0$ if $g$ is non-decreasing and $dG_q(g(x))/dx\leq 0$ if $g$ is non-increasing. Eq. (\ref{genralised tree recurrence rel}) follows from a change of variable in Eq. (\ref{3-term recurrece eq}). The fact that the converse is true follows from the fact that $G_q$ and $g$ have well-defined inverse functions.
\end{proof}
\begin{theorem}\label{generalised mapping theorem}
There is a class of quantum systems linearly coupled with a reservoir with
spectral density $J(x)$ which are equivalent to semi-infinite chains with only nearest-neighbors interactions, where the system only couples to the first site in the chain. More specifically, starting from an initial Hamiltonian,
\begin{equation}\label{the 1st eq of genre theorem}
H= H_S\otimes\mathbb{I}_{\Gamma(\mathfrak{h})}+\mathbb{I}_\mathcal{S}\otimes\int_{k_{min}}^{k_{max}}dkg(k)a_k^* a_k+A\otimes\int_{k_{min}}^{k_{max}}dk h(k)(a_k^*+a_k),
\end{equation}
satisfying the assuptions $A1,A2$, there exists, for every $q\in[0,1],$ a countably infinite set of new creation $b_n^*(q)$ and annihilation $b_n(q)$ operators
\begin{eqnarray}
b^*_n(q) &=&\int_{k_{min}}^{k_{max}}dxU_n(d\lambda^q;x)\left[\cosh r_q\big(g(x)\big)a^*_x-\sinh r_q\big(g(x)\big) a_x\right]\quad n=0,1,2,\ldots\,,\\
b_n(q) &=&\int_{k_{min}}^{k_{max}}dxU_n(d\lambda^q;x)\left[\cosh r_q\big(g(x)\big)a_x-\sinh r_q\big(g(x)\big) a_x^*\right],
\end{eqnarray}
well-defined on $\mathcal{D}(\mathfrak{n}^{1/2})$
which satisfy the commutation relations
\begin{equation}
[b_n(q), b^*_m(q)] = \delta_{nm}\quad n,m=0,1,2,\ldots\,,
\end{equation}  with transformed Hamiltonian
\begin{eqnarray}\label{transform general map eq}
:\!H\!:&=& H_S\otimes\mathbb{I}_{\Gamma(\mathfrak{h})}+\mathbb{I}_\mathcal{S}\otimes H_{E,q}+H_{int,q}\,,\\
H_{E,q}&=&\sum_{n=0}^\infty \Big\{E_{1n}(q)(b_n^*(q) b_n^*(q)+b_n(q) b_n(q))+E_{2n}(q)b^*_n(q)b_n(q) \\
&&+E_{3n}(q)(b_n^*(q) b_{n+1}^*(q)+b_n(q) b_{n+1}(q))\\
&&+E_{4n}(q)(b_n^*(q) b_{n+1}(q)+b_n(q) b^*_{n+1}(q)) \Big\},\\
H_{int,q}&=&E_{5}(q) A\otimes(b_0^*(q)+b_0(q)).
\end{eqnarray}
where $E_{1n}(q)$, $E_{2n}(q)$, $E_{3n}(q)$, $E_{4n}(q)$, $E_{5}(q)$ $\in\mathbb{R}$ and $q\,\in[0,1]$ is a free parameter of the mapping which determines the particular version. $:\!H\!:$ is to indicate that the hamiltonian $H$ has been renormalized\footnote{Specifically, a constant factor $C \mathbb{I}_\mathcal{S}\otimes\mathbb{I}_\Gamma{(\mathfrak{h})},$ $C\in\mathbb{R}$ has been neglected.}. The constants $E_{1n}(q)$, $E_{2n}(q)$, $E_{3n}(q)$, $E_{4n}(q)$, $E_{5}(q)$ are determined when a particular spectral density $J(x)$ and value of $q$ are specified. Expressions for the functions $U (d\lambda^q; x)$ and $r_q(x)$ are derived in the proof. \textup{For a pictorial representation of this theorem, see \textbf{(a)} and \textbf{(b)} of figure (\ref{fig:myFig}).}
\end{theorem}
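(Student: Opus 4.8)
\emph{Step 1 (frequency-dependent squeezing).} The formulas for $b_n(q)$ in the statement already dictate a two-stage construction — a mode-by-mode Bogoliubov transformation followed by an orthogonal-polynomial resolution — so I would prove the theorem in that order. For a profile $r_q$ to be fixed later, set $\tilde a_x:=\cosh r_q(g(x))\,a_x-\sinh r_q(g(x))\,a_x^*$. Since this acts independently in each mode $x$ it is canonical, $[\tilde a_x,\tilde a_y^*]=\delta(x-y)$, and it is implementable on $\mathcal D(\mathfrak n^{1/2})$ once $r_q(g(\cdot))$ is bounded and $\int_{k_{min}}^{k_{max}}dx\,g(x)\sinh^2 r_q(g(x))<\infty$ — the latter being exactly the c-number $C$ dropped in $:\!H\!:$, both holding under $A2$ once $r_q$ is known. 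Inverting ($a_x=\cosh r_q(g(x))\tilde a_x+\sinh r_q(g(x))\tilde a_x^*$, $a_x+a_x^*=e^{r_q(g(x))}(\tilde a_x+\tilde a_x^*)$) turns (\ref{the 1st eq of genre theorem}) into
\begin{equation}
:\!H\!:=H_S\otimes\mathbb{I}+\mathbb{I}\otimes\!\int_{k_{min}}^{k_{max}}\!\!dx\,g(x)\Big[\cosh 2r_q(g(x))\,\tilde a_x^*\tilde a_x+\tfrac12\sinh 2r_q(g(x))\big(\tilde a_x^{*2}+\tilde a_x^2\big)\Big]+A\otimes\!\int_{k_{min}}^{k_{max}}\!\!dx\,h(x)e^{r_q(g(x))}\big(\tilde a_x^*+\tilde a_x\big).
\end{equation}

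\emph{Step 2 (polynomial modes).} Let $d\lambda^q$ be the measure whose weight is $J$ times a $q$-dependent factor $M^q$, supported on the $G_q$-image of $[\omega_{min},\omega_{max}]$; by $A2$ it has finite moments, so Lemma (\ref{generalised orthogonality lem}), applied with $\mu=\lambda^q$, supplies functions $U_n(d\lambda^q;x)$ — a degree-$n$ polynomial in $G_q(g(x))$ times the square root of the transformed weight (\ref{new measure inroems of alod}) — that are orthonormal in $L^2([k_{min},k_{max}],dx)$ and satisfy a three-term recurrence in the variable $G_q(g(x))$. Put $b_n(q):=\int_{k_{min}}^{k_{max}}dx\,U_n(d\lambda^q;x)\,\tilde a_x$; then $[b_n(q),b_m^*(q)]=\int dx\,U_nU_m=\delta_{nm}$, and finiteness of $\int U_n^2$ makes them well defined on $\mathcal D(\mathfrak n^{1/2})$. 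Substituting $\tilde a_x=\sum_n U_n(d\lambda^q;x)\,b_n(q)$ into the displayed Hamiltonian, every bath integral becomes $\sum_{n,m}\big(\int dx\,F(g(x))\,U_nU_m\big)(\cdots)$, and the change of variable $y=G_q(g(x))$ reduces $\int dx\,F(g(x))\,U_nU_m$ to $\int dy\,\widehat F(y)\,\bar\lambda^q(y)P_n(y)P_m(y)$ with $\widehat F=F\circ G_q^{-1}$ read in the frequency variable.

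\emph{Step 3 (the two matching conditions).} It remains to choose $r_q$ and $M^q$. Requiring the interaction weight $h(x)e^{r_q(g(x))}$ to be proportional to $U_0$, i.e.\ to the square root of the transformed weight, is (via (\ref{SD_eq}) and (\ref{new measure inroems of alod})) one equation fixing $M^q$, and it gives at once $H_{int,q}=E_5(q)\,A\otimes(b_0^*(q)+b_0(q))$. For $H_{E,q}$ to contain only on-site and nearest-neighbour couplings we need $\widehat F$ affine for $F(\omega)=\omega\cosh 2r_q(\omega)$ and for $F(\omega)=\tfrac12\omega\sinh 2r_q(\omega)$, since then $\int dy\,\widehat F\,\bar\lambda^q P_nP_m$ is, by Theorem (\ref{recurrence for monics}), a combination of $\delta_{nm}$ and $\delta_{n,m\pm1}$ — precisely the $E_{1n},E_{2n},E_{3n},E_{4n}$ pattern, with the coefficients expressed through $\alpha_n(d\lambda^q)$ and $\beta_n(d\lambda^q)$. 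Imposing both affine conditions and using $\cosh^2-\sinh^2=1$ forces $\omega^2$ to be a prescribed quadratic in $G_q(\omega)$; this is exactly the relation the function $G_q$ of Lemma (\ref{generalised orthogonality lem}) was designed to obey, so the system is consistent and determines $r_q(\omega)$ explicitly (with $r_0\equiv0$ recovering the bare chain mapping and a well-defined $q\to1$ limit giving the other distinguished case). Collecting the remaining c-numbers into $C$ completes the proof.

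\emph{Main obstacle.} The crux is Step 3: verifying that a single squeezing profile $r_q$ can simultaneously render the number-type weight $\omega\cosh2r_q(\omega)$ and the anomalous pairing-type weight $\omega\sinh2r_q(\omega)$ affine in one and the same variable $G_q(g(x))$ — the algebraic identity from which the unusual closed form of $G_q$ is reverse-engineered — while keeping $r_q(g(\cdot))$ bounded and the normal-ordering constant finite, so that the squeezing is a bona fide operation on $\mathcal D(\mathfrak n^{1/2})$. The remaining items (the CCR, the domains, convergence of the infinite chain sum) are routine given $A1$, $A2$, Theorem (\ref{selfadjoint}) and the orthonormality from Lemma (\ref{generalised orthogonality lem}).
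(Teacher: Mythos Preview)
Your proposal is correct and follows essentially the same route as the paper: a mode-local Bogoliubov (squeezing) transformation, then an orthogonal-polynomial resolution via Lemma~\ref{generalised orthogonality lem}, with the squeezing profile fixed by demanding that the two bath weights be affine in $G_q$. The paper's only cosmetic difference is that it parametrises the squeezing through $\xi_q:=e^{r_q}$, so your conditions on $\omega\cosh 2r_q(\omega)$ and $\tfrac12\omega\sinh 2r_q(\omega)$ appear there as the trial equations $\tfrac{g}{2\xi_q^2}(\xi_q^4+1)=c_2+g_2G_q(g)$ and $\tfrac{g}{4\xi_q^2}(\xi_q^4-1)=c_1+g_1G_q(g)$, which are the same identities; the paper then exhibits an explicit one-parameter family $(c_1,c_2,g_1,g_2)=(-q^2/8,\,q/4,\,q/2,\,1)$ solving them, which is the concrete content your Step~3 would need to supply.
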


\begin{proof}
The proof is by construction. Let us start by defining a local transformation of the creation and annihilation operators $a_x^*$ and $a_x$ into another set, $c_x$ and $c^*_x$, which preserves the commutation relations. We can do this via a so called \textit{Bogoliubov transformation}\cite{bogo}
\begin{eqnarray}
a_x&=&\cosh\big(r_q(g(x))\big)c_x+\sinh\big(r_q(g(x))\big)c_x^* \label{bogolu1},\\
a_x^*&=&\cosh\big(r_q(g(x))\big)c_x^*+\sinh\big(r_q(g(x))\big)c_x,\label{bogolu2}
\end{eqnarray}
where $r_q(x)\in  \mathbb{R}$ and $[a_x,a^*_y]=[c_x,c^*_y]=\delta(x-y)$. Notice how we have parametrised the argument of the $\cosh$ and $\sinh$ functions in terms of $r_q(g(x))$. As we will see later, there are a familiy of functions $r_q(x)$ for which it is usefull to perform this tranformation. The $q$ is to denote which particular function is being used and will become clear soon. \footnote{We can replace the $cosh$ and $sinh$ functions with $cos$ and $sin$ functions for fermions. By doing so, we could work out a different version of this theorem which would be valid for the case where the initial bosonic creation-anhilation operators $\{a_x, a_x^*\}$ $x\in [k_{min}, k_{max}]$ where fermionic instead}
If we now parameterise $r_q(x)$ by introducing a new function $\xi_q(x)$ through $r_q(x)=\ln \xi_q(x)$ where $\xi_q(x)\in  [0,+\infty)$ using Eq. (\ref{bogolu1}) and (\ref{bogolu2}) we find after renormalising
\begin{eqnarray}\label{interaction temrms +bath}
 :\!\int_{k_{min}}^{k_{max}}dxg(x) a_x^{*} a_x\!:\,&=&\int_{k_{min}}^{k_{max}}dx\frac{g(x)}{4\xi_q^2(g(x))}\Big( (\xi_q^4(g(x))-1)(c_x^* c_x^*+c_x c_x)\nonumber \\&&+2(\xi_q^4(g(x))+1)c_x^* c_x \Big).
\end{eqnarray}
System-environment term simplifies to
\begin{equation}\label{general counplig term}
A\otimes\int_{k_{min}}^{k_{max}}dxh(x)(a_x+a_x^{*})=A\otimes\int_{k_{min}}^{k_{max}}dx\,h(x)\xi_q(g(x))(c_x^*+c_x).
\end{equation}
For appropriate choice of the function $\xi_q(x)$, we can define a measure 
\begin{eqnarray}
d\underline \lambda^q(x)&=&\underline M^q(x)dx,\label{generalised mesure}\\
\underline M^q(x)&=&h^2(x)\xi_q^2(g(x)).\label{generalised mesure234}
\end{eqnarray}
With the set of monic polynomials $\{\pi_n(d\lambda^q ;x)\}_{n=0}^\infty$ with measure $d\lambda^q(x)$, we are able to construct the set $\{\underline \pi_n(d\underline \lambda^q ;x)\}_{n=0}^\infty$ of orthogonal functions with respect to measure $d\underline \lambda^q(x)$ defined in Lemma \ref{generalised orthogonality lem}. We can use these to define the set of functions $\{U_n(d\lambda^q; x)\}_{n=0}^\infty$ through the relation
\begin{equation}
U_n(d\lambda^q;x)=\frac{\underline\pi_n(d\underline\lambda^q;x)\sqrt{\underline M^q(x)}}{\sqrt{\langle \underline\pi_n(d\underline\lambda^q),\underline\pi_n(d\underline\lambda^q)\rangle_{\underline M^q}}}=\frac{\underline\pi_n(d\underline\lambda^q;x)h(x)\xi_q(g(x))}{\sqrt{\langle \underline\pi_n(d\underline\lambda^q),\underline\pi_n(d\underline\lambda^q)\rangle_{\underline M^q}}}.
\end{equation}
We can now define the set of creation and annihilation operators of the chain (we will specify their domain and show that they are well-defined later in the proof)
\begin{eqnarray}
b_n^*(q)&=&\int_{k_{min}}^{k_{max}}dxU_n(d\lambda^q;x)c_x^*(q),\label{b n * Un}\\
b_n(q)&=&\int_{k_{min}}^{k_{max}}dxU_n(d\lambda^q;x)c_x(q).\label{b n  Un}
\end{eqnarray}
Substituting the inverse relations
\begin{eqnarray}
c_x^*(q)&=&\sum_{n=0}^\infty U_n(d\lambda^q;x)b_n^*(q),\label{c_x inver rel}\\
c_x(q)&=&\sum_{n=0}^\infty U_n(d\lambda^q;x)b_n(q)\label{c^dag_x inver rel},
\end{eqnarray}
into Eq. (\ref{general counplig term}) we obtain
\begin{equation}\label{interaction sys env generalised mapping eq}
A\otimes\int_{k_{min}}^{k_{max}}dxh(x)(a_x+a_x^{*})=\sqrt{\beta_0(d\lambda^q)}\, A\otimes(b_0^*(q)+b_0(q)).
\end{equation}
So we note that for all functions $\xi_q(x)$ that result in a valid measure Eq. (\ref{generalised mesure}), one can achieve a coupling between system and reservoir which only interacts with the first element in the chain.

Now we will examine carefully what type of chain can be generated via this transformation.

Using the orthogonality conditions of the orthogonal polynomials and Eq. (\ref{c_x inver rel}) and (\ref{c^dag_x inver rel}), we can transform terms of the form $\int_{k_{min}}^{k_{max}}c_{ax}c_{bx}dx$ into terms of the form $\sum_{n=0}^\infty W_nb_{an}b_{bn}$ where the $W_n$'s are constants and the sub indices $a,b=0$ denote that the operator is an annihilation operator and $a,b=1$ denote that they are creation operators. Also, we can transform terms of the form $\int_{k_{min}}^{k_{max}}G_q(g(x))c_{ax}c_{bx}dx$ into $\sum_{n=0}^\infty W_nb_{an}b_{bn}+W_1b_{a(n+1)}b_{bn}+W_{2n}b_{an}b_{b(n+1)}$ by using the three term recurrence relations Eq. (\ref{genralised tree recurrence rel}) to eliminate the $G_q(g(x))$. We can also map terms of the form $\int_{k_{min}}^{k_{max}}\left(G_q(g(x))\right)^kc_{ax}c_{bx}dx\quad k=2,3,4...$ using the three term recurrence relations in Eq. (\ref{genralised tree recurrence rel}) $k$ times, but this would result in every chain site coupling to its $k$th nearest neighbours. Given this resoning, we propose a trial solution to Eq.  (\ref{interaction temrms +bath}) which will reduce it to a chain of nearest neighbour interactions. This is:
\begin{eqnarray}
\frac{g(x)}{4\xi_q^2(g(x))}(\xi_q^4(g(x))-1)&=&c_1+g_1G_q(g(x)),\label{fist tria 1}\\
\frac{g(x)}{2\xi_q^2(g(x))}(\xi_q^4(g(x))+1)&=&c_2+g_2G_q(g(x)).\label{fist tria 2}
\end{eqnarray}
Not all values of the real constants $c_1, c_2, g_1, g_2$ will result in a valid trial solution. We will parametrise a valid set of these constants in terms of $q$ and hence there will be a different valid solution for the functions $\xi_q$ and $G_q$ for different values of $q$. This is where the $q$ dependency enters in the proof. A valid and usefull parametrisation is $2g_1=q, g_2=1, 8c_1=-q^2, 4c_2=q$, $q \in [0,1]$. Solving Eqs. (\ref{fist tria 1}) and (\ref{fist tria 2}) for $G_q$ and $\xi_q$ for this parametrisation of the constants gives us
\begin{eqnarray}
G_q(x)&=&\frac{-q(1+q^2)+2\sqrt{q^4+4(1-q^2)x^2}}{ 4(1-q^2)}\quad x\geq0, \label{xi  g, g}\\
\xi_q(x)&=&\left[\frac{q(1-q)+(1+q)4G_q(x)}{q(1+q)+(1-q)4G_q(x)}\right]^{1/4}\quad x\geq0\label{xi  g, xi}.
\end{eqnarray}
Now that we have explicit expresions for $G_q$ and $\xi_q$, we will derive an expression for the measure $d\lambda^q(x)=M^q(x)dx$ from $d\underline \lambda^q=\underline M^q(x)dx$. From Eqs (\ref{generalised mesure234}), (\ref{generalised mesure}) and (\ref{new measure inroems of alod}), we have
\begin{equation}
h^2(x)\xi^2(g(x))=\underline M^q(x)=M^q\big(G_q\left(g(x)\right)\big)\Big|\frac{dG_q\left(g(x)\right)}{dx}\Big|.
\end{equation}
Solving this for $M^q$ using Eqs (\ref{xi  g, g}) ,(\ref{xi  g, xi}) and definition (\ref{SD eq defn}), we find
\begin{equation}\label{the mesure again}
M^q(x)=\frac{J\big(G_q^{-1}(x)\big)}{\pi}\frac{(1+q^2)q+4(1-q^2)x}{(1+q)q+4(1-q)x},
\end{equation}
where $G_q(G^{-1}_q(x))=G^{-1}_q(G_q(x))=x$ and is given by
\begin{equation}
G^{-1}_q(x)=\frac{1}{4}\sqrt{[q(1-q)+4(1+q)x][q(1+q)+4(1-q)x]}\quad x\geq\frac{-q(1-q)}{4(1+q)},
\end{equation}
and we recall from Lemma \ref{generalised orthogonality lem}, that the interval $I$ for $d\lambda^q=M^q(x)$ is $I=[G_q(g(k_{min})), G_q(g(k_{max}))]$ if $g$ is a non-decreasing function and $I=[G_q(g(k_{max})), G_q(g(k_{min}))]$ if $g$ is a non-increasing function.
Noting that $M^q$ has finite moments on $I$ if $J$ does on $[\omega_{min},\omega_{max}]$ and taking into account assumption $A2.$ and Eqs. (\ref{b n * Un}), (\ref{b n  Un}), (\ref{xi  g, g}), (\ref{xi  g, xi}), we find that $f_{1,q,n}(x):=U_n(d\lambda^q;x)\cosh r_q\big(g(x)\big)$, $f_{2,q,n}(x):=U_n(d\lambda^q;x)\sinh r_q\big(g(x)\big)$ $\in L^2([k_{min},k_{max}])$. $b_n^*(q)\Psi^{(n)}$ and $b_n(q)\Psi^{(n)}$ are defined as in Eqs (\ref{int a* eq}) to (\ref{int ao eq}) when exchanging $h$ for $f_{1,q,n}$ and $f_{2,q,n}$ accordingly, and hence $b_n^*(q)$ and $b_n(q)$ are well-defined operators on $\mathcal{D}(\mathfrak{n}^{1/2})$. Furthermore we can verify that they satisfy
\begin{equation}
\left( \Psi_1, b_n(q)\Psi_2\right)=\left(b_n^*(q) \Psi_1, \Psi_2\right),
\end{equation}
for all $\Psi_1,\Psi_2\in\mathcal{D}(\mathfrak{n}^{1/2})$ hence confirming that $b_n^*(q)$ is the adjoint of $b_n(q)$. The commutation relation $[b_n(q),b_m^*(q)]=\delta_{n,m}$ follows from the orthogonality conditions for the monic polinomials and the commutator relation $[a_k,a_{k'}^*]=\delta(k-k')$.
Now let us perform the trasnformation. After substituting Eqs. (\ref{c_x inver rel}) and (\ref{c^dag_x inver rel}) into Eq. (\ref{interaction temrms +bath}) and using the orthogonality conditions and 3-term recurrence relations of Lemma \ref{generalised orthogonality lem}, we find
\begin{eqnarray}\label{generalised enviroment mapping eqarray}
&&:\!\int_{k_{min}}^{k_{max}}dxg(x) a_x^{*} a_x\! :\\
&=&\sum_{n=0}^\infty \Big\{\left( \frac{q}{2}\alpha_n(q)-\frac{q^2}{8} \right)(b_n^*(q) b_n^*(q)+b_n(q) b_n(q))+ \left( \alpha_n(q)+\frac{q}{4} \right)b^*_n(q)b_n(q) \quad\quad \\
&+& \sqrt{\beta_{n+1}(q)}\,\Big(q(b_n^*(q) b_{n+1}^*(q)+b_n(q) b_{n+1}(q))+ (b_n^*(q) b_{n+1}(q)+b_n(q) b^*_{n+1}(q))\Big)  \Big\},\nonumber
\end{eqnarray} where $\alpha(q):=\alpha(d\lambda^q)$, $\beta(q):=\beta(d\lambda^q)$.\qed
\end{proof}

\begin{remark}
Note that the chain mapping, theorem (\ref{generalised mapping theorem}), the spectral density can represent continuous modes and/or discrete modes. The discrete modes are represented by dirac delta distributions and the continuous modes by continuous functions. In the case of $N$ discrete modes only, the chain mapping will map the system onto a chain with $N$ sites. Mathematically, the reason for this is because the inner product of the measure (\ref{def inner product}) in this case can only be defined for functions living in the space spanned by a set of $N$ orthogonal polynomials of finite degree (See \textit{discrete measure}, page 4 following Theorem 1.8 of \cite{Gautschi}). Physically, this is because the number of degrees of freedom corresponding to the Hamiltonian after and before the mapping have to be the same. Also see remark (\ref{gaps}) regarding other features of spectral densities.\\
In addition, it is also worth noting that if the system interacts linearly with more than one environment through different system operators $A$ for each environment, then one can easily generalise the results of the generalised mapping (theorem (\ref{generalised mapping theorem})) such that we can map the Hamiltonian onto a Hamiltonian where the system interacts with more than one chain (i.e. one chain for each environment).
\end{remark}
\begin{corollary}\label{particle mapping case}
The generalised mapping \textup{Eq. (\ref{transform general map eq})} reduces to
\begin{equation}\label{marticle mapping hamiltonian eq}
H = H_S\otimes\mathbb{I}_{\Gamma{(\mathfrak{h})}}+\sqrt{\beta_0(0)} A\otimes(b_0(0)+b_0^*(0))+\mathbb{I}_\mathcal{S}\otimes\sum_{n=0}^\infty \alpha_n(0)b_n^*(0) b_n(0) +\sqrt{\beta_{n+1}(0)}(b_{n+1}^*(0) b_n(0) +h.c.)
\end{equation}
when $q=0$. We also find
\begin{equation}
M^0(x)=\frac{J(x)}{\pi},\label{the measure for the particle mapping sd eq}
\end{equation}
with
\begin{equation}
I=[\omega_{min},\omega_{max}]
\end{equation}
and
\begin{eqnarray}
G_0(x)&=&x,\\
b_n^*(0)&=&\int_{k_{min}}^{k_{max}}\frac{dx\,\pi_n(d\lambda^0;g(x))}{\sqrt{\langle \pi_n(d\lambda^0),\pi_n(d\lambda^0)\rangle_{ M^0}}}\sqrt{\frac{J(g(x))}{\pi}\left|\frac{dg(x)}{dx}\right|}a_x^*,\quad n=0,1,2,\ldots\,.
\end{eqnarray}
If we set $g(x)=\kappa x,\; \omega_{min}=0$, for some $\kappa>0$, then this chain representation reduces to the result found in \textup{\cite{Javi,Alex}}. Otherwise, the chain coefficients are the same, but here the correct relation between the operators $\{a_x, a_x^*\}$ and $\{b_n(0)\}_{n=0}^\infty$ is given.
\end{corollary}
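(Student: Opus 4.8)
The plan is to obtain the corollary by straightforward specialisation of the proof of Theorem~(\ref{generalised mapping theorem}) to the value $q=0$; no additional machinery is needed. First I would evaluate the two structural functions at $q=0$. In Eq.~(\ref{xi  g, g}) the numerator becomes $2\sqrt{4x^2}=4x$ and the denominator $4$, so $G_0(x)=x$ for $x\ge 0$. Substituting $G_0(x)=x$ into Eq.~(\ref{xi  g, xi}) gives $\xi_0(x)=\bigl[4x/4x\bigr]^{1/4}=1$ for every $x>0$; the single ambiguous point $x=0$ lies on a null set and affects none of the integrals below, so effectively $\xi_0\equiv 1$. Hence $r_0=\ln\xi_0\equiv 0$, and the Bogoliubov transformation (\ref{bogolu1})--(\ref{bogolu2}) reduces to the identity $c_x=a_x$, $c_x^*=a_x^*$: in the $q=0$ version there is no squeezing at all.

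Next I would read off the measure, the interval and the chain operators. Since $G_0^{-1}(x)=x$, Eq.~(\ref{the mesure again}) collapses to $M^0(x)=J(x)/\pi$; and because $G_0\circ g=g$, together with assumption $A1$ which fixes the monotonicity and hence the ordering of $g(k_{min})$ and $g(k_{max})$, the interval attached to $d\lambda^0$ in Lemma~(\ref{generalised orthogonality lem}) is $I=[\omega_{min},\omega_{max}]$. Lemma~(\ref{generalised orthogonality lem}) also yields $\underline\pi_n(d\underline\lambda^0;x)=\pi_n(d\lambda^0;g(x))$ and, via Eq.~(\ref{new ortho real}) with $n=m$, the identity $\langle\underline\pi_n,\underline\pi_n\rangle_{\underline M^0}=\langle\pi_n(d\lambda^0),\pi_n(d\lambda^0)\rangle_{M^0}$. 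Inserting these into the definition of $U_n$ and using Definition~(\ref{SD eq defn}) to rewrite $h(x)=\sqrt{(J(g(x))/\pi)\,|dg(x)/dx|}$ turns Eq.~(\ref{b n * Un}) into precisely the stated expression for $b_n^*(0)$, now acting with $a_x^*$ rather than $c_x^*(0)$ because the Bogoliubov map is trivial.

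Then I would specialise the transformed Hamiltonian itself. The system-environment term is already in final form in Eq.~(\ref{interaction sys env generalised mapping eq}): $E_{5}(0)=\sqrt{\beta_0(d\lambda^0)}$ and the coupling is $\sqrt{\beta_0(0)}\,A\otimes(b_0^*(0)+b_0(0))$. In the environment term (\ref{generalised enviroment mapping eqarray}) the coefficient multiplying $b_n^* b_n^*+b_n b_n$ is $\tfrac{q}{2}\alpha_n(q)-\tfrac{q^2}{8}$ and the one multiplying $b_n^* b_{n+1}^*+b_n b_{n+1}$ is $q\sqrt{\beta_{n+1}(q)}$, both of which vanish identically at $q=0$; what survives is $\sum_{n=0}^{\infty}\bigl[\alpha_n(0)b_n^*(0)b_n(0)+\sqrt{\beta_{n+1}(0)}\bigl(b_n^*(0)b_{n+1}(0)+b_n(0)b_{n+1}^*(0)\bigr)\bigr]$, which is Eq.~(\ref{marticle mapping hamiltonian eq}). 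Finally, for the comparison with \cite{Javi,Alex} I would set $g(x)=\kappa x$ and $\omega_{min}=0$: then $d\lambda^0=(J/\pi)\,dx$ on $[0,\omega_{max}]$ is exactly the measure used there, $|dg(x)/dx|=\kappa$ is constant, and matching the recurrence coefficients $\alpha_n(d\lambda^0)$, $\beta_n(d\lambda^0)$ and the operators $b_n^*(0)$ term by term with their chain gives the claimed identity; for general non-linear $g$ the same chain coefficients appear and the corollary merely records the correct $a_x\leftrightarrow b_n(0)$ dictionary, which was not supplied in \cite{Javi,Alex}.

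Since the argument is pure substitution, there is no genuine obstacle; the only points needing mild care are the degenerate $0/0$ limit of $\xi_q$ at $q=0$ (resolved by noting it occurs on a set of measure zero) and carrying the orthogonality normalisation constants correctly through Lemma~(\ref{generalised orthogonality lem}) so that the denominator $\sqrt{\langle\pi_n(d\lambda^0),\pi_n(d\lambda^0)\rangle_{M^0}}$ appearing in $b_n^*(0)$ comes out right.
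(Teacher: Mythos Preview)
Your proposal is correct and follows exactly the same approach as the paper, which simply states that the result ``follows from setting $q=0$ in theorem~(\ref{generalised mapping theorem}) and simplifying the resultant expressions.'' You have merely supplied the explicit simplifications (evaluating $G_0$, $\xi_0$, $M^0$, and the vanishing of the non-number-conserving terms at $q=0$) that the paper leaves to the reader.
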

\begin{proof}
Follows from setting $q=0$ in theorem (\ref{generalised mapping theorem}) and simplifying the resultant expressions.\qed
\end{proof}
Given that the coupling of the chain elements is excitation number preserving, the elementary excitations of the chain can be viwed as particles hopping on a 1d lattice. We therefore make the following definition.
\begin{definition}\label{particle case def}
We shall refer to the transformation described in corollary \textup{(\ref{particle mapping case})} as the \textup{particle mapping}.
\end{definition}
\begin{corollary}\label{phonon mapping case}
When $q=1$, the generalised mapping \textup{Eq. (\ref{transform general map eq})} reduces to
\begin{equation}\label{phonon chain hamiltonian}
:\!H\!:\,= H_{S}\otimes\mathbb{I}_{\Gamma{(\mathfrak{h})}}+\sqrt{\beta_0(1)} A\otimes X_0+\mathbb{I}_\mathcal{S}\otimes\sum_{n=0}^{\infty} \left(\sqrt{\beta_{n+1}(1)}X_nX_{n+1}+ \frac{\alpha_n(1)}{2}X^2_n +\frac{1}{2}P_n^2\right).
\end{equation}
where $X_n$ and $P_n$ are position and momentum operators, $X_n:=(b_n^*(1)+b_n(1))$, $P_n:=i(b_n^*(1)-b_n(1))/2,$ on $\mathcal{D}(\mathfrak{n}^{1/2}),$  $ n=0,1,2,\ldots\,.$ We also find
\begin{equation}
M^1(x)=\frac{J(\sqrt{x})}{\pi},\label{M1 eq}
\end{equation}
with
\begin{equation}
I=[\omega_{min}^2,\omega_{max}^2]\label{min max int for phonon}
\end{equation}
and
\begin{eqnarray}
G_1(x)&=&x^2,\\
b_n^*(1)&=&\int_{k_{min}}^{k_{max}}\frac{dx\,\pi_n(d\lambda^1;g^2(x))}{\sqrt{\langle \pi_n(d\lambda^1),\pi_n(d\lambda^1)\rangle_{ M^1}}}\sqrt{\frac{J(g(x))}{\pi}\left|\frac{dg(x)}{dx}\right|}\Big(\frac{2g(x)+1}{2}a_x^* - \frac{2g(x)-1}{2}a_x\Big).\nonumber
\end{eqnarray}
\end{corollary}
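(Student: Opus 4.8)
The plan is to specialise Theorem~\ref{generalised mapping theorem} to the value $q=1$ of the free parameter and to simplify the resulting expressions, in complete parallel with Corollary~\ref{particle mapping case} (the case $q=0$). The one genuine subtlety is that the closed forms for $G_q$, $\xi_q$ and $M^q$ produced in the proof of that theorem all carry a factor $1-q^2$ (or $1-q$) in a denominator and are nominally of the indeterminate form $0/0$ at $q=1$. I would bypass this not by taking limits of those formulas but by returning to the defining ``trial'' equations~(\ref{fist tria 1}) and~(\ref{fist tria 2}) with the parameter values inherited from the parametrisation $2g_1=q$, $g_2=1$, $8c_1=-q^2$, $4c_2=q$ evaluated at $q=1$, i.e. $g_1=\tfrac12$, $g_2=1$, $c_1=-\tfrac18$, $c_2=\tfrac14$. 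Writing $y:=\xi_1^2(g(x))$, $G:=G_1(g(x))$, $w:=g(x)$, those equations read $\tfrac{w}{4y}(y^2-1)=-\tfrac18+\tfrac{G}{2}$ and $\tfrac{w}{2y}(y^2+1)=\tfrac14+G$; subtracting twice the first from the second gives $w/y=\tfrac12$, hence $\xi_1^2(g(x))=2g(x)$, and back-substituting yields $G_1(g(x))=g^2(x)$, i.e. $G_1(x)=x^2$. (Equivalently one can check that the general closed forms converge to these expressions as $q\to1^-$ by a Taylor expansion in $1-q$.) This degeneration of the closed forms is, in my view, the only real obstacle; the rest is the bookkeeping of a Bogoliubov-type change of variables.

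With $G_1(x)=x^2$ one has $G_1^{-1}(x)=\sqrt{x}$ for $x\ge 0$, so the rational prefactor in the measure formula~(\ref{the mesure again}) tends to $1$ and $M^1(x)=J(\sqrt{x})/\pi$, which is~(\ref{M1 eq}). The support interval is read off from Lemma~\ref{generalised orthogonality lem}: it is $[G_1(g(k_{min})),G_1(g(k_{max}))]$ or the reversed interval according to the monotonicity of $g$, and since $g\ge 0$ both cases collapse to $[\omega_{min}^2,\omega_{max}^2]$, which is~(\ref{min max int for phonon}). One also checks that $M^1$ inherits finite moments from $J$ under assumption $A2.$ (the change of variable $u=\sqrt{x}$ turns $\int x^n M^1\,dx$ into $\tfrac{2}{\pi}\int u^{2n+1}J(u)\,du$), so that the chain operators remain well defined on $\mathcal{D}(\mathfrak{n}^{1/2})$ exactly as in the proof of Theorem~\ref{generalised mapping theorem}.

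Next I would specialise the chain operators. Substituting $\underline\pi_n(d\underline\lambda^1;x)=\pi_n(d\lambda^1;g^2(x))$, using~(\ref{new ortho real}) to replace the norm $\langle\underline\pi_n,\underline\pi_n\rangle_{\underline M^1}$ by $\langle\pi_n(d\lambda^1),\pi_n(d\lambda^1)\rangle_{M^1}$, and noting the identities $\xi_q\cosh r_q=(\xi_q^2+1)/2$ and $\xi_q\sinh r_q=(\xi_q^2-1)/2$, the factors $\xi_1(g(x))\cosh r_1(g(x))$ and $\xi_1(g(x))\sinh r_1(g(x))$ appearing in $U_n(d\lambda^1;x)[\cosh r_1(g(x))\,a_x^*-\sinh r_1(g(x))\,a_x]$ become $\tfrac{2g(x)+1}{2}$ and $\tfrac{2g(x)-1}{2}$ respectively; combining this with $h(x)=\sqrt{\tfrac{J(g(x))}{\pi}\,|dg(x)/dx|}$ (from definition~\ref{SD eq defn} at $\omega=g(x)$) gives precisely the stated expression for $b_n^*(1)$, and similarly for $b_n(1)$.

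Finally, for the Hamiltonian I would evaluate the coefficients in~(\ref{generalised enviroment mapping eqarray}) at $q=1$, obtaining $E_{1n}(1)=\tfrac{\alpha_n(1)}{2}-\tfrac18$, $E_{2n}(1)=\alpha_n(1)+\tfrac14$, $E_{3n}(1)=E_{4n}(1)=\sqrt{\beta_{n+1}(1)}$, together with $E_5(1)=\sqrt{\beta_0(1)}$ from~(\ref{interaction sys env generalised mapping eq}). Then, using $[b_n(1),b_n^*(1)]=1$, I expand $X_n^2=b_n^{*2}+b_n^2+2b_n^*b_n+1$, $P_n^2=-\tfrac14(b_n^{*2}+b_n^2-2b_n^*b_n-1)$ and $X_nX_{n+1}=(b_n^*b_{n+1}^*+b_nb_{n+1})+(b_n^*b_{n+1}+b_nb_{n+1}^*)$, and check that $\tfrac{\alpha_n(1)}{2}X_n^2+\tfrac12 P_n^2+\sqrt{\beta_{n+1}(1)}X_nX_{n+1}$ equals the $n$-th summand of the $q=1$ environment term plus the c-number $\tfrac{\alpha_n(1)}{2}+\tfrac18$; summing over $n$ produces an additive constant which is discarded under the normal-ordering/renormalisation convention $:\!\cdot\!:$. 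Adding the system term $H_S\otimes\mathbb{I}_{\Gamma(\mathfrak{h})}$ and the interaction term $\sqrt{\beta_0(1)}\,A\otimes X_0$ then yields~(\ref{phonon chain hamiltonian}). As an optional consistency check, one may also specialise directly the matrix-element relations between $\{a_x,a_x^*\}$ and $\{b_n(1),b_n^*(1)\}$, just as indicated for the $q=0$ case in Corollary~\ref{particle mapping case}.
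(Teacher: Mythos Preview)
Your proposal is correct and follows the same approach as the paper, which simply states that the result ``follows from setting $q=1$ in theorem~\ref{generalised mapping theorem} and simplifying the resultant expressions.'' You have carefully filled in the details the paper omits---in particular the resolution of the $0/0$ indeterminacy in the closed form for $G_q$ at $q=1$ by reverting to the trial equations~(\ref{fist tria 1})--(\ref{fist tria 2}), and the explicit verification that the $X_n,P_n$ form reproduces the $q=1$ chain coefficients up to a discarded c-number---and all of these computations check out.
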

\begin{proof}
Follows from setting $q=1$ in theorem (\ref{generalised mapping theorem}) and simplifying the resultant expressions.\qed
\end{proof}
Given that the coupling of the chain elements in Eq. (\ref{phonon chain hamiltonian}) resemble that of springs obeying hooks law, the elementary excitations are phonons such as in solid state physics. We therefore make the following definition.
\begin{definition}\label{phonon case def}
We shall refer to the transformation described in corollary \textup{(\ref{phonon mapping case})} as the \textup{phonon mapping}.
\end{definition}
\begin{remark}
In light of definitions \textup{(\ref{particle case def})} and \textup{(\ref{phonon case def})}, we note that Eq.  \textup{(\ref{transform general map eq})} interpolates between the two solutions.
\end{remark}

We will now re-write the generalised mapping in terms of Jacobi matrices, this is to illustrate the connection with Jacobi matrix theory and to write the Hamiltonian in a more compact form. 
\begin{corollary}\label{chain in jacobi for conc}
The generalised mapping in terms of Jacobi matrices is:
\begin{eqnarray}\label{the generalised mapping in terms of the jacobi mat eq}
H&=&H_{S}\otimes\mathbb{I}_{\Gamma{(\mathfrak{h})}}+\sqrt{\beta_0(q)}\, A\otimes(b_0^*+b_0)\\
&+&\mathbb{I}_\mathcal{S}\otimes\frac{q}{2}\left[ \vec b_0^T\left(\mathcal{J}(d\lambda^q)-\frac{q}{4}\mathbb{I}\right)\vec b_0 +h.c.\right]+\mathbb{I}_\mathcal{S}\otimes\vec b_0^* \left(\mathcal{J}(d\lambda^q) +\frac{q}{4}\mathbb{I}\right)\vec b_0,\nonumber
\end{eqnarray}
where
\begin{eqnarray}\label{the b n vector eq}
\vec b_n^*=\vec b_n^*(q)&:=&(b_n^*(q),b_{n+1}^*(q),b_{n+2}^*(q),b_{n+3}^*(q),\ldots)^T,\\
\vec b_n=\vec b_n(q)&:=&(b_n(q),b_{n+1}(q),b_{n+2}(q),b_{n+3}(q),\ldots)^T \quad n=0,1,2,\ldots\,.
\end{eqnarray}
\end{corollary}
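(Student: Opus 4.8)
The plan is to start from the transformed Hamiltonian of Theorem~(\ref{generalised mapping theorem})---more precisely from the expanded environment term obtained in its proof, Eq.~(\ref{generalised enviroment mapping eqarray}), together with the interaction term Eq.~(\ref{interaction sys env generalised mapping eq})---and to observe that the site-dependent coefficients appearing there are precisely the matrix elements of the tridiagonal operators $\mathcal{J}(d\lambda^q)\pm\frac{q}{4}\mathbb{I}$ contracted with the vectors $\vec b_0(q)$ and $\vec b_0^*(q)$ defined in Eq.~(\ref{the b n vector eq}).

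First I would write out the excitation-number-conserving quadratic form. Using the definition of the Jacobi matrix together with those of $\vec b_0$ and $\vec b_0^*$,
\begin{equation*}
\vec b_0^*\Big(\mathcal{J}(d\lambda^q)+\tfrac{q}{4}\mathbb{I}\Big)\vec b_0=\sum_{n=0}^\infty\Big(\alpha_n(q)+\tfrac{q}{4}\Big)b_n^*(q)b_n(q)+\sum_{n=0}^\infty\sqrt{\beta_{n+1}(q)}\,\big(b_n^*(q)b_{n+1}(q)+b_{n+1}^*(q)b_n(q)\big).
\end{equation*}
Invoking the commutation relation $[b_n(q),b_m^*(q)]=\delta_{nm}$, which vanishes here since $n\neq n+1$, I identify $b_{n+1}^*(q)b_n(q)=b_n(q)b_{n+1}^*(q)$, so that this expression reproduces exactly the $b_n^*b_n$- and $(b_n^*b_{n+1}+b_nb_{n+1}^*)$-terms of Eq.~(\ref{generalised enviroment mapping eqarray}).

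Next I would treat the anomalous (pairing) part. Because $b_n(q)$ and $b_{n+1}(q)$ commute, the transpose form gives
\begin{equation*}
\frac{q}{2}\,\vec b_0^T\Big(\mathcal{J}(d\lambda^q)-\tfrac{q}{4}\mathbb{I}\Big)\vec b_0=\sum_{n=0}^\infty\Big(\tfrac{q}{2}\alpha_n(q)-\tfrac{q^2}{8}\Big)b_n(q)b_n(q)+q\sum_{n=0}^\infty\sqrt{\beta_{n+1}(q)}\,b_n(q)b_{n+1}(q),
\end{equation*}
and adding the Hermitian conjugate produces the $(b_n^*b_n^*+b_nb_n)$- and $(b_n^*b_{n+1}^*+b_nb_{n+1})$-terms of Eq.~(\ref{generalised enviroment mapping eqarray}) with the correct coefficients $\frac{q}{2}\alpha_n(q)-\frac{q^2}{8}$ and $q\sqrt{\beta_{n+1}(q)}$. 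Collecting this with $H_S\otimes\mathbb{I}_{\Gamma{(\mathfrak{h})}}$ and the coupling $\sqrt{\beta_0(d\lambda^q)}\,A\otimes(b_0^*(q)+b_0(q))$ from Eq.~(\ref{interaction sys env generalised mapping eq}) then yields Eq.~(\ref{the generalised mapping in terms of the jacobi mat eq}).

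The only genuinely delicate point is operator ordering: the quadratic forms $\vec b_0^*M\vec b_0$ and $\vec b_0^TM\vec b_0$ must be read with the ordering fixed above, and one has to check that any $c$-number generated by re-ordering is consistent with the normal-ordering symbol $:\!\cdot\!:$ (in fact none appears, since all relevant commutators $[b_n,b_{n+1}^*]$ and $[b_n,b_{n\pm1}]$ vanish). A secondary, purely formal matter is convergence of the infinite sums, which is inherited verbatim from the proof of Theorem~(\ref{generalised mapping theorem}) via the $L^2$ property of the $U_n(d\lambda^q;\cdot)$.
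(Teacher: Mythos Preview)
Your proposal is correct and is precisely the natural verification: expand the tridiagonal quadratic forms $\vec b_0^*\big(\mathcal{J}(d\lambda^q)+\tfrac{q}{4}\mathbb{I}\big)\vec b_0$ and $\tfrac{q}{2}\vec b_0^T\big(\mathcal{J}(d\lambda^q)-\tfrac{q}{4}\mathbb{I}\big)\vec b_0+\text{h.c.}$ using the definition of the Jacobi matrix, invoke $[b_n,b_m]=0$ and $[b_n,b_m^*]=\delta_{nm}$ to symmetrise the off-diagonal terms, and match against Eq.~(\ref{generalised enviroment mapping eqarray}). The paper itself does not spell out a proof for this corollary---it is stated as an immediate rewriting of Theorem~(\ref{generalised mapping theorem})---so your argument is exactly the intended one, made explicit.
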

Let us define (for every $q\in[0,1]$) the orthonormal Fock basis
\begin{equation}
F_q:=\Big{\{} \frac{{b^*_0}^{m_0}(q)|0\rangle}{\sqrt{m_0!}}\otimes\frac{{b^*_1}^{m_1}(q)|0\rangle}{\sqrt{m_1!}}\otimes \frac{{b^*_2}^{m_2}(q)|0\rangle}{\sqrt{m_2!}}\otimes \ldots\Big{\}}_{\{m_n \}_{n=0}^\infty=0}^\infty,
\end{equation}
where the creation/annihilation operators (which we recall satisfy $[b_n(q),b_m^*(q)]=\delta_{n,m},$ $[b_n(q),b_m(q)]=[b_n^*(q),b_m^*(q)]=0$) act on the kets $|n\rangle$ in the standard way: $b_m^*(q)|n\rangle=\sqrt{n+1}|n+1\rangle,$ $b_m(q)|n\rangle=\sqrt{n}|n-1\rangle.$ Let $\mathcal{K}_n$ be the space of all complex linear combinations of $\Big{\{} \frac{{b^*_n}^{m}(q)|0\rangle}{\sqrt{m!}}\Big{\}}_{m=0}^\infty$ such that $\langle \gamma|\gamma\rangle\ <\infty$ for all $|\gamma\rangle\in\mathcal{K}_n$.
\begin{definition}
We call \textup{$m$th embedded system} to the new system-environment interaction produced when the new system is composed of the initial system plus the first $m$ sites of the chain formed by the environment in the chain representation, this is to say, the quantum system described by the Hamiltonian $H_{S^q_m}$ on $\mathcal{S}\otimes_{n=0}^{m-1} \mathcal{K}_n,$
\begin{eqnarray}\label{H s q eq}
H_{S^q_m}&:=&H_{S}\otimes\mathbb{I}_{\Gamma{(\mathfrak{h})}}+\sqrt{\beta_{0}(q)}\, A\otimes(b_0^*+b_0)\\  &+&\mathbb{I}_\mathcal{S}\otimes\sum_{n=0}^{m-2} \sqrt{\beta_{n+1}(q)}\,\Big(q(b_n^* b_{n+1}^*+b_n b_{n+1})+ (b_n^* b_{n+1}+b_n b^*_{n+1})\Big)\quad\quad\\&+&\mathbb{I}_\mathcal{S}\otimes\sum_{n=0}^{m-1}\left( \frac{q}{2}\alpha_n(q)-\frac{q^2}{8} \right)(b_n^* b_n^*+b_n b_n)+ \left( \alpha_n(q)+\frac{q}{4} \right)b^*_nb_n.
\end{eqnarray}
The \textup{$m$th environment} is formed by the remaining environment terms in the Hamilitonian, in other words the quantum system described by the Hamiltonian $H_{E_m^q}$ on $\otimes_{n=m}^\infty \mathcal{K}_n,$
\begin{eqnarray}\label{jacobi rep of gen map}
H_{E_m^q}&=&\frac{q}{2}\left[ \vec b_m^T\left(\mathcal{J}(d\lambda^q_m)-\frac{q}{4}\mathbb{I}\right)\vec b_m +h.c.\right]+\vec b_m^* \left(\mathcal{J}(d\lambda^q_m) +\frac{q}{4}\mathbb{I}\right)\vec b_m \\ \quad m&=&1,2,3,\ldots\,.
\end{eqnarray}
\end{definition}
Hence we have
\begin{equation}\label{particle embedding eq}
H=H_{S^q_m}+\sqrt{\beta_{m}(q)}\,\Big(q(b_{m-1}^* b_{m}^*+b_{m-1} b_{m})+ (b_{m-1}^* b_{m}+b_{m-1} b^*_{m})\Big)+H_{E_m^q}.
\end{equation}
See figure \textbf{(c)} for a pictorial representation. As we will see in section \ref{convergence section}, the chain coefficients converge for a wide range of spectral densities, and hence all the specific features of an environment appear in the first sites of the chain. Consequently, these can be progressively (or directly, all in one go) absorbed into the system by making an embedding; to reduce the complexity of the effective environment.
\begin{definition}
We call the \textup{nth residual spectral density} $J_n(\omega)$ to the spectral density which describes the system-environment interaction of the nth embedding. We call the initial spectral density $J_0(\omega)$ such that $J_0(\omega)\equiv J(\omega)$. See figure (c).
\end{definition}

\subsection{Connection between the phonon mapping to previous work and the sequence of residual spectral densities}\label{Connection between the phonon mapping to previous work and the sequence of partial spectral densities}
\begin{theorem}\label{the partial sd generade from a measure in Bassano et all}
The sequence of residual gapless spectral densities
in \textup{\cite{Bassano}} are generated by
\begin{equation}\label{partial SD's phonon case}
J_n(\omega)=\frac{J_0(\omega)}{(P_{n-1}(d\lambda^1; \omega^2)\frac{\varphi(d\lambda^1;\omega^2)}{2}-Q_{n-1}(d\lambda^1; \omega^2))^2+ J_0^2(\omega)P_{n-1}^2(d\lambda^1; \omega^2)}
\end{equation}
$n=1,2,3,\ldots\,.$
\end{theorem}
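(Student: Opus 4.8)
The plan is to recognise the measure that the $n$th residual environment of the phonon mapping produces, identify it with the $n$th term of a sequence of beta normalised measures built on $d\lambda^1$, and then read off $J_n$ by inverting the relation $M^1(x)=J(\sqrt{x})/\pi$ of corollary (\ref{phonon mapping case}). Throughout, gaplessness of $J_0\equiv J$ is exactly what makes $d\lambda^1$ gapless and hence the secondary-measure machinery of Section \ref{Secondary measures} (Stieltjes transform, reducer, beta normalised measures) applicable.

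First I would pin down the measure $d\lambda^1_n$ attached to the $n$th environment Hamiltonian $H_{E^1_n}$ of (\ref{jacobi rep of gen map}). By construction its tridiagonal generator is the $n$th associated Jacobi matrix $\mathcal{J}_n(d\lambda^1)$; moreover, by (\ref{particle embedding eq}) the last embedded site is coupled to the first residual site (for $q=1$, through $\sqrt{\beta_n(1)}\,X_{n-1}X_n$), and in a fresh phonon mapping this coupling is precisely $\sqrt{\beta_0}$ of the new measure, so $d\lambda^1_n$ must satisfy $C_0(d\lambda^1_n)=\beta_0(d\lambda^1_n)=\beta_n(d\lambda^1)$. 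Since a gapless measure with finite moments is determined by its Jacobi matrix together with its zeroth moment -- equivalently, since the phonon chain mapping is inverted by a Stieltjes/Stieltjes--Perron argument as in Section \ref{Secondary measures} -- these two data characterise $d\lambda^1_n$ uniquely.

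Next I would introduce the sequence of beta normalised measures $\{d\nu_m\}$ with $d\nu_0:=d\lambda^1$. Corollary (\ref{jacobi theorem corr}) gives $\mathcal{J}(d\nu_n)=\mathcal{J}_n(d\lambda^1)$, and (\ref{betanorm beta n and n eq}) with $m=n$ gives $\beta_0(d\nu_n)=\beta_n(d\nu_0)=\beta_n(d\lambda^1)$; hence $d\nu_n$ carries exactly the Jacobi matrix and zeroth moment of $d\lambda^1_n$, so $d\lambda^1_n=d\nu_n$. Consequently $\bar\lambda^1_n$ is given by the closed formula of definition (\ref{nu sequence defn}) with $d\nu_0=d\lambda^1$ and $\bar\nu_0(x)=M^1(x)=J_0(\sqrt{x})/\pi$ (using $J_0\equiv J$ and (\ref{M1 eq})). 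Finally I invert the phonon relation once more: $M^1_n(x)=J_n(\sqrt{x})/\pi$ gives $J_n(\omega)=\pi\,\bar\lambda^1_n(\omega^2)=\pi\,\bar\nu_n(\omega^2)$; substituting $\bar\nu_0(x)=J_0(\sqrt{x})/\pi$ into the formula of definition (\ref{nu sequence defn}), evaluating at $x=\omega^2$ and multiplying by $\pi$, the prefactor $\pi$ cancels the $\pi$ of the numerator while the $\pi^2$ in the denominator is cancelled by $\bar\nu_0^2=J_0^2(\omega)/\pi^2$, leaving exactly (\ref{partial SD's phonon case}). Non-negativity and gaplessness of $J_n$ then follow from the corresponding properties of the beta normalised measures.

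The main obstacle is the first step: making precise that the residual spectral density of \cite{Bassano} is the spectral density whose phonon chain mapping is the $n$th embedded environment, and in particular getting the normalisation right. The Jacobi matrix alone fixes the residual measure only up to an overall scale, so the identification $C_0(d\lambda^1_n)=\beta_n(d\lambda^1)$, forced by the inter-site coupling in (\ref{particle embedding eq}), is the crucial input that singles out the beta normalised measures (rather than the merely secondary-normalised ones $d\mu_n$) as the right objects. Once that is in place, the remainder is an appeal to the equivalence between the construction of \cite{Bassano} and the phonon mapping established earlier in this section, together with routine bookkeeping of the $\pi$ factors.
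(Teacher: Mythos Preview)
Your argument is internally coherent, but it proves the wrong statement. What you establish is that the residual spectral densities \emph{of the phonon mapping} are given by (\ref{partial SD's phonon case}); that is the content of Corollary~(\ref{the phonon mapping corollary rof psd}), and indeed the paper remarks there that exactly your line of reasoning (mimicking the proof of Theorem~(\ref{theorem for the sequence of sd in the particle m case})) gives an independent proof of that corollary. The present theorem, however, is a statement about the objects defined in \cite{Bassano}: there, $J_n$ is produced by an explicit iterative scheme built from the functions $w_n$, the integrals $D_n^2$, $\Omega_{n+1}^2$, and a continued fraction for the propagator. Nothing you wrote touches those definitions.

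Your closing sentence appeals to ``the equivalence between the construction of \cite{Bassano} and the phonon mapping established earlier in this section'', but in the paper that equivalence is Theorem~(\ref{the partial sd Bassano et all equiv theorem}), which comes \emph{after} the present theorem and is proved using identities (in particular $D_n^2=\beta_n(d\lambda^1)$, $\Omega_{n+1}^2=\alpha_n(d\lambda^1)$, and $d\eta_0=d\lambda^1$) that are derived inside the proof of the present theorem. So invoking that equivalence here is circular. The paper's proof instead unpacks \cite{Bassano} directly: it rewrites $w_n$ as (a multiple of) a Stieltjes transform via the substitution $t=\omega^2$, recognises the recursion $w'_n(\sqrt z)=D_n^2/(z-\Omega_{n+1}^2-w'_{n+1}(\sqrt z))$ as precisely the defining relation (\ref{norm 2nd meausres eqs1}) for secondary measures, identifies the resulting sequence with normalised secondary measures, upgrades to beta normalised measures to absorb the $D_n^2$ factors, and only then invokes the closed formula of definition~(\ref{nu sequence defn}). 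That is the step you are missing: you must engage with \cite{Bassano}'s own recursion, not bypass it through a Jacobi-matrix uniqueness argument applied to the phonon chain.
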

\begin{proof}
As observed by Leggett \cite{legett prop},  when the spectral density of an open quantum system has support on a real interval (in contrast to having support on disjoint intervals), one can easily obtain it from it's  propagator $L_0(z)$. The authors of  \cite{Bassano} have developed this to find a continued fraction representation for the case of the mapping of the propagator presented in their paper as follows.
\begin{eqnarray}
L_0(z)&=&-z^2-w_0(z),\\\label{continued frac}
w_0(z)&=& \cfrac{D^2_0}{\Omega_1^2-z^2-\cfrac{D^2_1}{\Omega_2^2-z^2-\cfrac{D^2_2}{\Omega_3^2-z^2-\ldots}}} \,,
\end{eqnarray}
where
\begin{eqnarray}
D_n^2&=&\frac{2}{\pi}\int^\infty_0 d\omega J_n(\omega)\omega \quad n=0,1,2,\ldots\,, \label{D^2_n 1st} \\
\Omega^2_{n+1}&=&\frac{2}{\pi D^2_n}\int^\infty_0d\omega J_n(\omega)\omega^3 \quad n=0,1,2,\ldots\,,\label{omega^2_n 1st}
\end{eqnarray}
and $z\in \mathbb{C}-\,\textup{support}\left(J_0\right)$, where $\textup{support}\left(J_0\right)$ is an interval on the real line by definition \cite{Bassano}.
Alternatively, we note that we can write the continued fraction Eq. (\ref{continued frac}) as a recurrence relation
\begin{equation}\label{sequance eq}
w'_n(\sqrt{z})=\frac{D^2_n}{z-\Omega^2_{n+1}-w'_{n+1}(\sqrt{z})} \quad n=0,1,2,\ldots\,,
\end{equation}
where $w'_n(z):=-w_n(z)$. From \cite{Bassano}, we have an alternative expression for $w_n$ in terms of the residual spectral densities though the relation,
\begin{equation}\label{alsternative omnega n eq}
w_n(z)=\frac{2}{\pi}\int^\infty_0d\omega\frac{J_n(\omega)\omega}{\omega^2-z^2} \quad n=0,1,2,\ldots\,,
\end{equation}
which by a change of variables and taking into account the definition of $\omega'_n$ in Eq. (\ref{sequance eq}), can be written
in the form
\begin{equation}\label{int wprime}
w'_n(\sqrt{z})=\frac{1}{\pi}\int^\infty_0d\omega\frac{J_n(\sqrt{\omega})}{z-\omega} \quad n=0,1,2,\ldots\,.
\end{equation}
Furthermore, via a change of variables the integrals Eq. (\ref{D^2_n 1st}) and (\ref{omega^2_n 1st}) can be written as
\begin{eqnarray}\label{D_n int}
D_n^2&=&\frac{1}{\pi}\int^\infty_0 d\omega J_n(\sqrt{\omega}) \quad n=0,1,2,\ldots\,, \label{int Dn}\\
\Omega^2_{n+1}&=&\frac{1}{\pi D^2_n}\int^\infty_0d\omega J_n(\sqrt{\omega})\omega \quad n=0,1,2,\ldots\,.\label{int omegan}
\end{eqnarray}
We note that in \cite{Bassano} the support of the spectral densities corresponds with the domain of the spectral densities defined in definition (\ref{SD eq defn}) and hence integrals Eq. (\ref{int wprime}), (\ref{int Dn}) and  (\ref{int omegan}) are zero outside of the domain of $J_n(x),\quad n=0,1,2,\ldots$, therefore we can change the lower limit of $0$ and upper limit of $\infty$ of the integrals by $\omega^2_{min}$ and $\omega^2_{max}$ respectively. Now let us define the set of measures
\begin{eqnarray}
d\gamma_n(t)&=&dt\bar\gamma_ n(t) \quad n=0,1,2,\ldots\,,\\ \label{the SDs1}
\bar\gamma_n(t)&=&\frac{J_n(\sqrt{t})}{\pi D_n^2} \quad n=0,1,2,\ldots\,.\label{the SDs}
\end{eqnarray}
This definition of the measure has some important consequences:
\begin{itemize}
\item[1)] From Eq. (\ref{int omegan}) we note that $\Omega^2_{n+1}$ are the first moments of the measures $d\gamma_n(t)$,
\begin{equation}
C_1(d\gamma_n)=\Omega^2_{n+1}\quad n=0,1,2,\ldots\,.
\end{equation}
\item[2)]From Eq. (\ref{int wprime}) and definition (\ref{sitjes def}) we see that $w'_n(\sqrt{z})$ is proportional to the Stieltjes transformations of the measure $d\gamma_n(t)$
\begin{equation}
w'_n(\sqrt{z})=D^2_nS_{n}(z) \quad n=0,1,2,\ldots\,.
\end{equation}

\item[3)] From Eq. (\ref{int Dn}) we see that the zeroth moments of the measures  $d\gamma_n(t)$ are unity
\begin{equation}\label{normalised gamma sequence eq}
C_0(d\gamma_n)=1\quad n=0,1,2,\ldots\,.
\end{equation}
\end{itemize}

We are now able to re-write Eq. (\ref{sequance eq}) in the form
\begin{equation}\label{sitches rel}
S_{n+1}(z)D^2_{n+1}=z-C_1(d\gamma_n)-\frac{1}{S_{n}(z)}, \quad n=0,1,2,\ldots\,,
\end{equation}
where we have used the short hand $S_{m_n}(z)=:S_{n}(z)$.
By comparing this recursion relation with Eq. (\ref{eq sit rel}) and definition (\ref{def dary measure}), we deduce that $D^2_{n+1}d\gamma_{n+1}$ is the secondary measure associated with $d\gamma_n$, for $n=0,1,2,\ldots\,.$
We can also identify a sequence of normalised secondary measures. Noting that $C_0(d\gamma_0)=1$ from Eq. (\ref{normalised gamma sequence eq}), definition  (\ref{def 2nd dary measure}) tells us
that the sequence of secondary normalised measures starting from $d\gamma_0$ is
\begin{equation}
 d\gamma_0,\, D_1^2d\gamma_1/C_0(D_1^2d\gamma_1),\,  D_2^2d\gamma_2/C_0(D_2^2d\gamma_2),\ldots ,D_m^2d\gamma_m/C_0(D_m^2d\gamma_m),\ldots\,.
\end{equation}
 However,

\begin{eqnarray}
C_0(D_n^2d\gamma_n)&=&D_n^2C_0(d\gamma_n)=D_n^2\quad n=1,2,3\ldots\,,
\end{eqnarray}
so the sequence of normalised secondary measures is
 $d\gamma_{0}(t),$ $d\gamma_{1}(t),$ $d\gamma_{2}(t),$ $d\gamma_{3}(t),\ldots\,.$
Taking into account lemma (\ref{lemma continued frac}) and theorem (\ref{the dn beta theorem}) we see that
\begin{equation}
D_n^2=d_{n-1}=\beta_n(d\gamma_0)\quad n=1,2,3\ldots\,.
\end{equation}
Due to corollary (\ref{cor C0 beta0}) and Eq. (\ref{normalised gamma sequence eq}) we can also write $D^2_0$ in terms of $\beta_0$,
\begin{equation}\label{the d0 beta0 nu eq}
D^2_0= \beta_0(d\eta_0)
\end{equation}
where
\begin{equation}\label{the d d0 beta0 dnu eq}
d\eta_0:=D^2_0d\gamma_0.
\end{equation} We can now construct a sequence of beta normalised measures from $d\eta_0$, denoted by $d\eta_0,$ $d\eta_1,$ $d\eta_2,\ldots\,.$ From Eq. (\ref{the d0 beta0 nu eq}) and (\ref{the d d0 beta0 dnu eq}) we see that $d\eta_0$ satisfies Eq. (\ref{nu 0 eq}) for $d\gamma_0$, hence lemma (\ref{beta measures proportional to norm measures lemma}) tells us
\begin{equation}\label{nu gamma lemma eq}
\bar\eta_n(x)=\beta_n(d\eta_0)\bar\gamma_n(x)\quad n=0,1,2,\ldots\,.
\end{equation}
Taking into account that $\beta_n(d\eta_0)=\beta_n(d\gamma_0)\quad n=1,2,3,\ldots$ due to lemma (\ref{the shift property of alpha beta}) and Eq. (\ref{the d d0 beta0 dnu eq}), from Eq. (\ref{nu gamma lemma eq}) and (\ref{the SDs1}) we gather
\begin{eqnarray}
d\eta_n(t)&=&dt\,\bar \eta_n(t) \quad n=0,1,2,\ldots\,,\\ \label{the SDs2}
\bar \eta_n(t)&=&\frac{J_n(\sqrt{t})}{\pi} \quad n=0,1,2,\ldots\,,\label{the SDs2 }
\end{eqnarray}
hence we note that
\begin{equation}
d\eta_0=d\lambda^1.\label{the nu lambda equiv eq}
\end{equation} Substituting this into Eq. (\ref{the nu n eq}) gives us Eq. (\ref{partial SD's phonon case}).\qed
\end{proof}
In \cite{Bassano}, a similar mapping as in theorem (\ref{generalised mapping theorem}) is developed. Starting from a Hamiltonian of the form Eq. (\ref{the mother ham})\footnote{We say "of the form" because the authors do not define the hamiltonian rigorously.}, they show that it is equivalent to a Hamiltonian of the form
\begin{equation}\label{bassano eq}
H_S-D_0s\otimes X'_1+\mathbb{I}_\mathcal{S}\otimes\sum_{n=1}^{\infty}\left(-D_nX'_nX'_{n+1}+\frac{\Omega^2_n}{2}X^{\prime 2}_n+ \frac{1}{2}P_n^{\prime 2} \right),
\end{equation}
where $H_S= \frac{P^2}{2m}+V(s)+\Delta V(s)$ describes the quantum system dynamics and the self-adjoint operator $s$ couples the quantum system to the bosonic bath. The domain of the operators $\{X'_n,P'_n\}_{n=1}^\infty$ is not defined by the authors nor the precise relation to the operators in the initial Hamiltonian. See \cite{Bassano} for more details. Here $X_n$ and $P_n$ are position and momentum operators satisfying $[X_n,P_m]=i\delta_{n,m}$. The derivation of the coefficients $D_n,$ $\Omega_n$ involve repeated integration and apriori, seem unrelated to our method. However, given the apparent similarity between Eqs. (\ref{bassano eq}) and (\ref{phonon chain hamiltonian}), a more detailed analysis using some of the theorems developed in section 2 allows one to derive the following theorem which demonstrates that the the mapping of \cite{Bassano} is a special case of the phonon mapping, i.e. it is essentially the same as the phonon mapping for the case of a gappless spectral density.

\begin{theorem}\label{the partial sd Bassano et all equiv theorem}
Define $X'_{n}:=(-1)^nX_{n-1},$ $P'_{n}=(-1)^nP_{n-1},$ $s=A,$ $\mathcal{D}(X'_n)=\mathcal{D}(X_n),$ $\mathcal{D}(P'_n)=\mathcal{D}(P_n),$ $\mathcal{D}(s)=\mathcal{D}(A),$ $n=1,2,3,\ldots\,,$ and let the spectral density $J_0$ be gapless, then
\begin{eqnarray}
&&-D_0s\otimes X'_1+\mathbb{I}_\mathcal{S}\otimes\sum_{n=1}^{\infty}\left(-D_nX'_nX'_{n+1}+\frac{\Omega^2_n}{2}X^{\prime 2}_n+ \frac{1}{2}P_n^{\prime 2} \right)=\\
&&\sqrt{\beta_0(1)} A\otimes X_0+\mathbb{I}_\mathcal{S}\otimes\sum_{n=0}^{\infty} \left(\sqrt{\beta_{n+1}(1)}X_nX_{n+1}+ \frac{\alpha_n(1)}{2}X^2_n +\frac{1}{2}P_n^2\right).
\end{eqnarray}
\end{theorem}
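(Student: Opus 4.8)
The plan is to obtain the statement as essentially a bookkeeping corollary of the identifications already made inside the proof of Theorem~\ref{the partial sd generade from a measure in Bassano et all}. There one shows that the measures $d\gamma_n$ built from the residual spectral densities of \cite{Bassano} become, after rescaling, a sequence of normalised secondary measures; that $d\eta_0=d\lambda^1$ (Eq.~\ref{the nu lambda equiv eq}); that $D_n^2=\beta_n(d\gamma_0)$ for $n\geq1$ with $D_0^2=\beta_0(d\eta_0)$; and that $\Omega^2_{n+1}=C_1(d\gamma_n)$. So the first step is to translate each of these into the phonon-chain parameters $\alpha_n(1),\beta_n(1)$.

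For the couplings: since $\beta_n$ with $n\geq1$ is invariant under rescaling of the measure (Lemma~\ref{the shift property of alpha beta}) and $d\eta_0=D_0^2\,d\gamma_0$, we get $D_n^2=\beta_n(d\gamma_0)=\beta_n(d\eta_0)=\beta_n(d\lambda^1)=\beta_n(1)$ for $n\geq1$, and $D_0^2=\beta_0(d\eta_0)=\beta_0(d\lambda^1)=\beta_0(1)$ directly. For the frequencies: because $C_0(d\gamma_n)=1$ (Eq.~\ref{normalised gamma sequence eq}), the definition of $\alpha_0$ gives $\alpha_0(d\gamma_n)=C_1(d\gamma_n)/C_0(d\gamma_n)=\Omega^2_{n+1}$; scale invariance of $\alpha_0$ (Lemma~\ref{the shift property of alpha beta}) with Eq.~\ref{nu gamma lemma eq} makes this equal to $\alpha_0(d\eta_n)$, and the shift relation Eq.~\ref{betanorm alpah n and m eq} for the beta normalised sequence $\{d\eta_k\}$ yields $\alpha_0(d\eta_n)=\alpha_n(d\eta_0)=\alpha_n(d\lambda^1)=\alpha_n(1)$. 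Hence $\Omega^2_n=\alpha_{n-1}(1)$ for $n\geq1$.

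The remaining step is to carry out the substitution $X'_n=(-1)^nX_{n-1}$, $P'_n=(-1)^nP_{n-1}$, $s=A$ on the left-hand side. The alternating sign is chosen precisely so that $X'_nX'_{n+1}=(-1)^{2n+1}X_{n-1}X_n=-X_{n-1}X_n$, turning each $-D_nX'_nX'_{n+1}$ into $+\sqrt{\beta_n(1)}\,X_{n-1}X_n$; the squares $X'^2_n=X_{n-1}^2$ and $P'^2_n=P_{n-1}^2$ are sign-blind, so $\tfrac{\Omega^2_n}{2}X'^2_n=\tfrac{\alpha_{n-1}(1)}{2}X_{n-1}^2$ and $\tfrac12 P'^2_n=\tfrac12 P_{n-1}^2$; and $X'_1=-X_0$ converts $-D_0 s\otimes X'_1$ into $\sqrt{\beta_0(1)}\,A\otimes X_0$. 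Reindexing the sum by $m=n-1$ then produces exactly the right-hand side $\sqrt{\beta_0(1)}\,A\otimes X_0+\mathbb{I}_\mathcal{S}\otimes\sum_{m=0}^{\infty}\big(\sqrt{\beta_{m+1}(1)}X_mX_{m+1}+\tfrac{\alpha_m(1)}{2}X_m^2+\tfrac12 P_m^2\big)$, which matches Eq.~\ref{phonon chain hamiltonian}. The operator identities hold on the domains fixed by the definitions $\mathcal{D}(X'_n)=\mathcal{D}(X_n)$, $\mathcal{D}(P'_n)=\mathcal{D}(P_n)$, $\mathcal{D}(s)=\mathcal{D}(A)$ in the statement, and one checks in passing that $[X'_n,P'_m]=(-1)^{n+m}i\delta_{n-1,m-1}=i\delta_{nm}$, consistent with the phonon operators.

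I expect the only genuinely non-mechanical point to be the identification $\Omega^2_n=\alpha_{n-1}(1)$: it needs the normalisation $C_0(d\gamma_n)=1$, the scale invariance of $\alpha_0$, and the index-shift property of beta normalised sequences to be combined; once that and the analogous statement $D_n^2=\beta_n(1)$ are in place, the rest is the sign bookkeeping of the substitution and an index shift.
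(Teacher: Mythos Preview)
Your proposal is correct and follows essentially the same route as the paper: establish $D_n^2=\beta_n(1)$ and $\Omega_{n+1}^2=\alpha_n(1)$ via the identification $d\eta_0=d\lambda^1$ together with the scale invariance and index-shift properties of the $\alpha,\beta$ coefficients for the beta normalised sequence $\{d\eta_k\}$, then perform the substitution $X'_n=(-1)^nX_{n-1}$, $P'_n=(-1)^nP_{n-1}$, $s=A$ and reindex. Your sign bookkeeping and reindexing are in fact more explicit than the paper's, which simply writes ``using the definitions stated in the theorem, we conclude the proof'' after recording Eqs.~(\ref{the omega alpha eq}) and (\ref{Dn beta eq}).
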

\begin{proof}
From Eq. (\ref{int omegan}) and (\ref{the alpha def eq}) we see that we can write $\Omega_n^2$ as
\begin{equation}
\Omega_{n+1}^2=\alpha_0(d\eta_n)\quad n=0,1,2,\ldots\,.
\end{equation}
Hence,
\begin{equation}\label{the omega alpha eq}
\Omega_{n+1}^2=\alpha_n(d\eta_0)=\alpha_n(d\lambda^1)=\alpha_n(1)\quad n=0,1,2,\ldots\,,
\end{equation}
where we have used Eq. (\ref{betanorm alpah n and m eq}) followed by Eq. (\ref{the nu lambda equiv eq}). From Eq. (\ref{int Dn}), and (\ref{the beta 0 def eq}) we see that
\begin{equation}
D_n^2=\beta_0(d\eta_n)\quad n=0,1,2,\ldots\,.
\end{equation}
Hence  using Eq. (\ref{betanorm beta n and n eq}) followed by Eq. (\ref{the nu lambda equiv eq}), we find
\begin{equation}\label{Dn beta eq}
D_n^2=\beta_n(d\eta_0)=\beta_n(d\lambda^1)=\beta_n(1)\quad n=0,1,2,\ldots\,.
\end{equation}
Now using the definitions stated in the theorem, we conclude the proof.\qed
\end{proof}
\begin{remark}\label{gaps}
We note that the three term recurrence relations Eq. (\ref{3-term recurrece eq}) for gapped measures, still hold. Hence the generalised mapping (and therefore the phonon mapping) is still valid. However, the chain mapping presented in \cite{Bassano} is not valid under these conditions because the relation between the Stieltjes transformations of two consecutive measures, Eq. (\ref{eq sit rel}) is not valid anymore (see (\ref{break sidt}) in appendix for a proof) and hence one cannot calculate the chain coefficients from the sequence of residual spectral densities. In this sense, the phonon mapping presented here is a more general result. This is an important difference because if a spectral density is gapped, then the corresponding measure is also gapped. There are physical systems (such as photonic crystals and diatomic chains) which have these properties.
\end{remark}
\begin{corollary}\label{the phonon mapping corollary rof psd}
In the phonon mapping case, if the spectral density is gapless the sequence of residual spectral densities is given by
\begin{equation}
J_n(\omega)=\frac{J_0(\omega)}{(P_{n-1}(d\lambda^1;\omega^2)\frac{\varphi(d\lambda^1;\omega^2)}{2}-Q_{n-1}(d\lambda^1; \omega^2))^2+ J_0^2(\omega)P_{n-1}^2(d\lambda^1; \omega^2)}
\end{equation}
$n=1,2,3,\ldots\,.$
\end{corollary}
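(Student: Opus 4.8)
The plan is to deduce the corollary from the two results that immediately precede it, with almost no new work. Step one: by Theorem \ref{the partial sd Bassano et all equiv theorem}, once we set $s=A$ and relabel $X'_n=(-1)^nX_{n-1}$, $P'_n=(-1)^nP_{n-1}$, the phonon-mapping Hamiltonian (\ref{phonon chain hamiltonian}) is term-by-term equal to the Hamiltonian (\ref{bassano eq}) obtained in \cite{Bassano}. Step two: observe that the $m$th embedding and the $m$th residual spectral density are intrinsic to the chain Hamiltonian — the $m$th environment Hamiltonian and the system--environment coupling that defines $J_m$ are built solely from the coefficients $\{\alpha_n(1),\beta_n(1)\}_{n\ge m}$, which enter only through the squared combinations $\sqrt{\beta_{n+1}(1)}X_nX_{n+1}$ and $\frac{1}{2}\alpha_n(1)X_n^2$. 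Hence the $(-1)^n$ relabeling, being a product of local parity operators and thus a canonical transformation, is invisible to the construction, and the sequence of residual spectral densities of the phonon mapping coincides, term by term, with that of \cite{Bassano}. Step three: apply Theorem \ref{the partial sd generade from a measure in Bassano et all}, which expresses the latter explicitly as (\ref{partial SD's phonon case}); this is the claimed formula. The gapless hypothesis on $J_0$ is exactly what makes Theorem \ref{the partial sd generade from a measure in Bassano et all} — and the Stieltjes relation (\ref{eq sit rel}) underlying it — applicable (cf. Remark \ref{gaps}).

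If one prefers a self-contained derivation that does not route through \cite{Bassano}, I would argue directly. By construction the $m$th environment of the phonon chain carries the Jacobi matrix $\mathcal{J}(d\lambda^1_m)$, and Corollary \ref{jacobi theorem corr}, applied to the beta-normalised sequence generated from $d\lambda^1_0=d\lambda^1$, gives $\mathcal{J}(d\lambda^1_m)=\mathcal{J}_m(d\lambda^1)$, consistent with the splitting (\ref{particle embedding eq}). Undoing the phonon mapping of Corollary \ref{phonon mapping case} on that $m$th environment identifies its spectral density via $\bar\lambda^1_m(x)=J_m(\sqrt{x})/\pi$, in parallel with $\bar\lambda^1(x)=M^1(x)=J_0(\sqrt{x})/\pi$ from (\ref{M1 eq}). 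Since $d\lambda^1_m$ is the $m$th term of the beta-normalised sequence with first member $d\lambda^1$ — this is the identification (\ref{the nu lambda equiv eq}) together with the measure identifications in the proof of Theorem \ref{the partial sd generade from a measure in Bassano et all} — Definition \ref{nu sequence defn}, which is legitimate by Theorem \ref{1st main theorem} combined with Lemma \ref{beta measures proportional to norm measures lemma}, yields the closed form of $\bar\lambda^1_m$ in terms of $P_{m-1}(d\lambda^1;\cdot)$, $Q_{m-1}(d\lambda^1;\cdot)$ and $\varphi(d\lambda^1;\cdot)$. Substituting $x=\omega^2$, writing $\bar\lambda^1_m(\omega^2)=J_m(\omega)/\pi$, using $\pi^2\bar\lambda^1(\omega^2)^2=J_0^2(\omega)$, and clearing the overall factor of $\pi$ produces the stated expression.

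The only part that is conceptual rather than mechanical — and therefore the one I would take care to state cleanly — is the invariance claim in step two: that the ``residual spectral density'' attached to an embedding is a property of the chain Hamiltonian itself and is therefore insensitive both to the differing chain-mapping conventions of \cite{Bassano} and of this paper and to the $(-1)^n$ sign bookkeeping relating them. Once that is granted, the rest is a one-line substitution.
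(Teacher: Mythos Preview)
Your proposal is correct and mirrors the paper's own proof almost exactly: the paper too says the corollary follows directly from Theorems \ref{the partial sd generade from a measure in Bassano et all} and \ref{the partial sd Bassano et all equiv theorem}, and then remarks that an alternative, self-contained derivation can be obtained by rerunning the argument of Theorem \ref{theorem for the sequence of sd in the particle m case} for the phonon chain---precisely your two routes. Your step two makes explicit the invariance point that the paper leaves tacit, which is a small improvement in clarity rather than a different approach.
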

\begin{proof}
Follows directly from theorems (\ref{the partial sd generade from a measure in Bassano et all}) and (\ref{the partial sd Bassano et all equiv theorem}). Alternatively, we note that by applying the same line of reasoning of the proof for theorem (\ref{theorem for the sequence of sd in the particle m case}) to the phonon mapping Hamiltonian Eq. (\ref{phonon chain hamiltonian}), we can easily provide an alternative proof for this corollary. This has the advantage of being an independent derivation from the results of \cite{Bassano}, but has the downside of not illustrating the connections between this paper and their results.\qed
\end{proof}
\subsection{Sequence of residual spectral densities for the particle mapping case}\label{Sequence of partial spectral densities for the particle mapping case}
\begin{theorem}\label{theorem for the sequence of sd in the particle m case}
In the particle mapping case, if the spectral density is gapless the sequence of residual spectral densities is given by
\begin{equation}\label{partial SD's particle case}
J_n(\omega)=\frac{J_0(\omega)}{(P_{n-1}(d\lambda^0; \omega)\frac{\varphi(d\lambda^0;\omega)}{2}-Q_{n-1}(d\lambda^0; \omega))^2+ J_0^2(\omega)P_{n-1}^2(d\lambda^0; \omega)}
\end{equation}
$n=1,2,3,\ldots\,.$
\end{theorem}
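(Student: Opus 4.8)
The plan is to invert the particle mapping on the residual environment and then recognise the resulting measure as the $n$th member of the beta normalised sequence generated by $d\lambda^0$. Recall from Corollary \ref{particle mapping case} that the particle mapping sends a gapless spectral density $\tilde J$ to a semi-infinite chain with on-site energies $\alpha_k(d\tilde\lambda)$ and hoppings $\sqrt{\beta_{k+1}(d\tilde\lambda)}$, where $d\tilde\lambda(x)=(\tilde J(x)/\pi)\,dx$, coupled to the system at its first site through $\sqrt{\beta_0(d\tilde\lambda)}\,A\otimes(b_0+b_0^*)$. Specialising Eq. (\ref{particle embedding eq}) to $q=0$, the $n$th embedded system couples to the $n$th environment by $\sqrt{\beta_n(0)}\,(b_{n-1}^*b_n+b_{n-1}b_n^*)$, and the $n$th environment $H_{E_n^0}$ is a chain of precisely the above form with Jacobi matrix $\mathcal{J}_n(d\lambda^0)$. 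Hence, by the very definition of the $n$th residual spectral density, $J_n(x)=\pi\,\bar\lambda^0_n(x)$, where $d\lambda^0_n$ is the measure characterised by $\mathcal{J}(d\lambda^0_n)=\mathcal{J}_n(d\lambda^0)$ together with $\beta_0(d\lambda^0_n)=\beta_n(d\lambda^0)$; the latter is forced because $\sqrt{\beta_0}$ is the first-site coupling produced by the particle mapping while the embedding couples with strength $\sqrt{\beta_n(0)}$.

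Next I would identify $d\lambda^0_n$. Since $J_0$ is gapless, the discussion following Definition \ref{SD eq defn} shows $d\lambda^0$ is a gapless measure, and by assumption $A2$ it has finite moments; therefore it generates a sequence of beta normalised measures $d\lambda^0=d\nu_0,\,d\nu_1,\,d\nu_2,\dots$ in the sense of Definition \ref{nu sequence defn}. Corollary \ref{jacobi theorem corr}, which extends the Jacobi matrix theorem to beta normalised measures, gives $\mathcal{J}(d\nu_n)=\mathcal{J}_n(d\nu_0)=\mathcal{J}_n(d\lambda^0)$, and Eq. (\ref{betanorm beta n and n eq}) (with the second index set to $0$) gives $\beta_0(d\nu_n)=\beta_n(d\nu_0)=\beta_n(d\lambda^0)$. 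A measure with finite moments is fixed by its full list of recurrence coefficients $\alpha_k,\beta_k$ (including $\beta_0$) whenever its moment problem is determinate, which covers the cases of interest here; so $d\lambda^0_n=d\nu_n$ and $J_n(x)=\pi\,\bar\nu_n(x)$.

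Finally I would substitute into the closed-form expression for the beta normalised sequence. Using $\bar\nu_0(x)=M^0(x)=J_0(x)/\pi$ from Eq. (\ref{the measure for the particle mapping sd eq}), Definition \ref{nu sequence defn} reads
\begin{equation*}
\bar\nu_n(x)=\frac{J_0(x)/\pi}{\left(P_{n-1}(d\lambda^0;x)\tfrac{\varphi(d\lambda^0;x)}{2}-Q_{n-1}(d\lambda^0;x)\right)^2+\pi^2\bigl(J_0(x)/\pi\bigr)^2 P_{n-1}^2(d\lambda^0;x)}.
\end{equation*}
Multiplying through by $\pi$ and simplifying $\pi^2(J_0/\pi)^2=J_0^2$ yields Eq. (\ref{partial SD's particle case}); moreover each $J_n$ is gapless because, by Lemma \ref{beta measures proportional to norm measures lemma}, the $d\nu_n$ are proportional to normalised secondary measures, which are gapless by definition.

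The step I expect to be the main obstacle is the first one: making rigorous the sense in which the residual environment ``is'' the particle-mapping image of a spectral density, so that $J_n$ is well defined --- specifically that $b_{n-1}$ plays the role of the coupling operator $A$, that the residual chain retains the nearest-neighbour, first-site-coupled structure, and that $\beta_0(d\lambda^0_n)=\beta_n(d\lambda^0)$ is the correct normalisation. Once this identification is in place, the remaining steps are immediate applications of Corollary \ref{jacobi theorem corr}, Eq. (\ref{betanorm beta n and n eq}) and Definition \ref{nu sequence defn}.
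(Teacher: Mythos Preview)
Your proposal is correct and rests on the same core identity the paper uses: the residual chain at level $n$ has recurrence coefficients $\alpha_{k}(d\nu_n)=\alpha_{n+k}(d\lambda^0)$, $\beta_k(d\nu_n)=\beta_{n+k}(d\lambda^0)$, so the $n$th residual spectral density is $\pi\bar\nu_n$. The difference is in the direction of the argument. The paper proceeds \emph{constructively}: it postulates a family of Hamiltonians $H_m$ whose spectral densities are, by fiat, the beta normalised measures $d\vartheta_m=d\nu_m$, maps each $H_m$ to a chain, and then uses Eqs.~(\ref{betanorm alpah n and m eq})--(\ref{betanorm beta n and n eq}) to show that every $H_m$ coincides with the original particle-mapping Hamiltonian; no uniqueness-of-measure step is invoked. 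You run the argument \emph{in reverse}: start from the residual chain, read off its Jacobi matrix $\mathcal{J}_n(d\lambda^0)$ and first-site coupling $\sqrt{\beta_n(d\lambda^0)}$, and then appeal to determinacy of the moment problem to pin down $d\lambda^0_n=d\nu_n$. Your route is cleaner and more conceptual, but it genuinely needs the determinacy hypothesis (automatic for bounded support by Theorem~\ref{finite interval moment problem theorem}, but an extra assumption otherwise); the paper's constructive route sidesteps this at the cost of introducing the auxiliary family $\{H_m\}$ and verifying their equality term by term. The ``main obstacle'' you flag --- that $b_{n-1}$ plays the role of $A$ and that the residual chain really is a particle-mapping image --- is exactly what the paper's construction makes explicit, so if you wanted to close that gap fully you could borrow its device of writing down $H_m$ in the form of Eq.~(\ref{the sequence of hamiltonians}).
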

\begin{proof}
The proof will start by defining a set of initially independent Hamiltonians. Each one of these will represent some system-environment interaction of a similar type discussed in this paper. When they are in this form, we can easily find an expression for the spectral density representing the system-environment interaction. We then put constrains on the Hamiltonians in such a way that they are no-longer independent from one another. Finally we show that these spectral densities correspond to the sequence of spectral densities for the particle mapping.\\
Let us start by defining a set of independent Hamiltonians on $\mathcal{H}$ labelled by $m$
\begin{equation}\label{the sequence of hamiltonians}
H_m=H_{S_m^0}+\mathbb{I}_\mathcal{S}\otimes\int_{k_{min}}^{k_{max}}g(x)a^*_{x,m}a_{x,m}dx+\mathbb{I}_\mathcal{S}\otimes\int_{k_{min}}^{k_{max}}h_m(x)(a_{x,m}^* b_{m-1}(0)+a_{x,m}b^*_{m-1}(0))dx
\end{equation}
$m=1,2,3\ldots,$ where $H_{S_m^0}$ is defined in (\ref{H s q eq}), the creation $b^*_{m}(0)$ annihilation $b_{m}(0)$ operators in theorem (\ref{generalised mapping theorem}) and the creation $a_{k,m}^*$, annihilation $a_{k,m}$ satisfy $[a_{k,m},a_{k',m}^*]=\delta(k-k')$ and will be defined later in the proof in terms of $b^*_{m}(0),b_{m}(0)$. For the particular choice of the functions $\{h_m\}_{m=1}^\infty$ that we will make in the proof, we will show that the $H_m$ are equal to the Hamiltonian (\ref{marticle mapping hamiltonian eq}), and hence self-adjoint on $\mathcal{H}$. We note that the $m$th Hamitonian $H_m$, has a spectral density $J^m(x)$ given by\footnote{We have used a superindex here rather than a subindex to denote the spectral densities so as not to confuse them with residual spectral densities, as at this stage we cannot identify them as such.} Eq. (\ref{SD_eq}):
\begin{equation}\label{SD_eq for mth particle}
J^m(\omega)=\pi h_m^2[g^{-1}(\omega)]\left|\frac{dg^{-1}(\omega)}{d\omega}\right|.
\end{equation}
Eq. (\ref{SD_eq for mth particle}) gives us $h^2_m(x)=J^m\left(g(x)\right)|dg(x)/dx|/\pi$.
We now define the set of measures
\begin{equation}\label{def of d theta for particle SD proof eq}
d\vartheta_m(x)=\frac{J^m\left(g(x)\right)}{\pi}dx\quad m=1,2,3,\ldots\,,
\end{equation}
and the new set of creation and annihilation operators
\begin{eqnarray}
b_{n+m}^{*}&=&\int_{k_{min}}^{k_{max}}dxU_n^m(x)a^*_{x,m},\\
b_{n+m}&=&\int_{k_{min}}^{k_{max}}dxU_n^m(x)a_{x,m}\quad n=0,1,2,\ldots,\,m=1,2,3,\ldots\,,
\end{eqnarray}
where $U_n^m(x)=h_m(x)P_n(d\vartheta_m;g(x))\quad n=0,1,2,\ldots\,,\, m=1,2,3,\ldots\,.$ Substituting the inverse relations
\begin{eqnarray}
a_{x,m}^{*}&=&\sum_{n=0}^\infty U_n^m(x)b_{n+m}^*,\\
a_{x,m}&=&\sum_{n=0}^\infty U_n^m(x)b_{n+m}\quad n=0,1,2,\ldots\,,\, m=1,2,3,\ldots\,,
\end{eqnarray}
into the RHS of Eq. (\ref{the sequence of hamiltonians}) and using the three term recurrence relations and orthogonality conditions in Lemma (\ref{generalised orthogonality lem}) for $q=0$, we find that we can write $H_m$ as
\begin{eqnarray}\label{sequence hams in chain rep}
H_m&=&H_{Sm}+\sqrt{\beta_0(d\vartheta_m)}(b_{m}^* b_{m-1}+b_{m}b_{m-1}^*)\\
&+&\sum_{n=0}^\infty\sqrt{\beta_{n+m+1}(d\vartheta_m)}(b^*_{n+m+1}b_{n+m}+h.c.)+ \alpha_n(d\vartheta_m)b_{n+m}^* b_{n+m}
\end{eqnarray}
$m=1,2,3,\ldots\,.$ We note that at this stage, the set of spectral densities $\{J^m(x)\}_{m=1}^\infty$ are independent and undefined. This freedom allows us to let the set of measures $\{d\vartheta_m \}_{m=1}^\infty$ be a sequence of beta normalised measures generated from the measure $d\lambda^0(x)=M^0(x)dx$   which from Eq.  (\ref{the measure for the particle mapping sd eq}) we see is given in terms of another spectral density $J(x)$. Now the spectral densities  $\{J^m(x)\}_{m=1}^\infty$ are fully determined by $J(x)$ through the definition of a beta normalised sequence of measures, definition (\ref{nu sequence defn}). Hence using Eq. (\ref{the nu n eq}) and (\ref{def of d theta for particle SD proof eq}) we find
\begin{equation}
J^n(\omega)=\frac{J(\omega)}{(P_{n-1}(d\lambda^0; \omega)\frac{\varphi(d\lambda^0;\omega)}{2}-Q_{n-1}(d\lambda^0; \omega))^2+ J^2(\omega)P_{n-1}^2(d\lambda^0; \omega)}
\end{equation}
$n=1,2,3,\ldots\,.$ Now we have to show that $J^n$ is the nth residual spectral density for the particle mapping Hamiltonian Eq. (\ref{marticle mapping hamiltonian eq}) i.e. the interaction described by Eq. (\ref{particle embedding eq}) when $q=0$.
First note that using Eq. (\ref{betanorm alpah n and m eq}) and (\ref{betanorm beta n and n eq}), we have
\begin{eqnarray}
\beta_0(d\vartheta_m)&=&\beta_m(d\lambda^0)\quad m=1,2,3,\ldots\,,\\
\alpha_n(d\vartheta_m)&=&\alpha_{n+m}(d\lambda^0)\quad m=1,2,3,\ldots\,,\,n=0,1,2,\ldots\,,\\
\beta_{n+1}(d\vartheta_m)&=&\beta_{n+1+m}(d\lambda^0)\quad m=1,2,3,\ldots\,,\,n=0,1,2,\ldots\,.
\end{eqnarray}
Substituting these  identities into Eq. (\ref{sequence hams in chain rep}), we see that all $\{H_m\}_{m=1}^\infty$ are equal to one another and equal to (\ref{marticle mapping hamiltonian eq}).\qed
\end{proof}
\subsection{Residual spectral densities sequence convergence}\label{convergence section}
\begin{definition}\label{sego class tdef}We say that the chain mapping for some $q$ and a particular spectral density $J(x)$, will belong to the \textup{Szeg\"o class} if
the measure $d\lambda^q(x)=M^q(x)dx$ satisfies
\begin{equation}\label{sego condition eq}
\int^{G_q(\omega_{max})}_{G_q(\omega_{min})}\frac{\ln M^q(x)\,dx}{\sqrt{\big(G_q(\omega_{max})-x\big)\big(x-G_q(\omega_{min})\big)}}>-\infty.
\end{equation}
\end{definition}
\begin{remark}
Examples of spectral densities which for any $q$ do not belong to the Szeg\"o class, are those which are gapped and those with unbounded support.
\end{remark}
\begin{theorem}\label{alpha and beta coefficients convergence}
If for some $q$ and spectral density $J(\omega)$, chain mapping  belongs to the Szeg\"o class, then the sequences $\alpha_0(q),$\,$\alpha_1(q),$\,$\alpha_2(q),\ldots$ and $\beta_0(q),$\,$\beta_1(q),$\,$\beta_2(q),\ldots$ converge to:
\begin{eqnarray}
\lim_{n \rightarrow\infty}\alpha_n(q)&=&\frac{G_q(\omega_{max})+G_q(\omega_{min})}{2},\label{lim alpha x eq}\\
\lim_{n \rightarrow\infty}\beta_n(q)&=&\frac{\big(G_q(\omega_{max})-G_q(\omega_{min})\big)^2}{16}.\label{lim beta x eq}
\end{eqnarray}
\end{theorem}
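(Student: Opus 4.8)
The plan is to reduce the statement to the classical asymptotics of the three–term recurrence coefficients for measures in the Szeg\"o class on a compact interval, and then to undo an affine change of variables.

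First I would record that, by the definitions introduced in the generalised mapping, $\alpha_n(q)=\alpha_n(d\lambda^q)$ and $\beta_n(q)=\beta_n(d\lambda^q)$, where $d\lambda^q(x)=M^q(x)\,dx$ is the purely absolutely continuous measure of Eq.~(\ref{the mesure again}). By assumption A1 the dispersion $g$ is monotone, and $G_q$ is increasing on $[0,\infty)$, so the support interval is $I=[a,b]$ with $a:=G_q(\omega_{min})$, $b:=G_q(\omega_{max})$. Moreover the Szeg\"o condition (\ref{sego condition eq}) cannot hold if $M^q$ vanishes on a subset of $[a,b]$ of positive Lebesgue measure, since the weight $\big((b-x)(x-a)\big)^{-1/2}$ is positive and, away from the endpoints, locally bounded; hence membership in the Szeg\"o class forces $M^q>0$ a.e.\ on $[a,b]$, so that $\operatorname{supp}(d\lambda^q)=[a,b]$ exactly, with no point masses because $d\lambda^q$ is absolutely continuous.

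Next I would perform the affine change of variable $x=ct+d$ with $c:=(b-a)/2>0$, $d:=(b+a)/2$, pushing $d\lambda^q$ forward to a measure $d\tilde\lambda$ on $[-1,1]$ with density $\tilde M(t)=c\,M^q(ct+d)$. Substituting $x=ct+d$ into the monic recurrence (\ref{3-term recurrece eq}) and rescaling $\pi_n$ by $c^{-n}$ gives $\alpha_n(d\lambda^q)=c\,\alpha_n(d\tilde\lambda)+d$ and $\beta_n(d\lambda^q)=c^{2}\beta_n(d\tilde\lambda)$ for $n\ge1$. The same substitution turns $\big((b-x)(x-a)\big)^{-1/2}dx$ into $(1-t^2)^{-1/2}dt$ and changes $\ln M^q$ only by the additive constant $-\ln c$; since the arcsine weight has finite total mass $\pi$, the finiteness of the integral in (\ref{sego condition eq}) is preserved, so $d\tilde\lambda$ again lies in the Szeg\"o class, now on $[-1,1]$.

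Finally I would invoke the classical result (Szeg\"o's theorem on recurrence coefficients, sharpened by Rakhmanov's theorem; see e.g.~\cite{Gautschi}): a measure on $[-1,1]$ supported on all of $[-1,1]$ whose absolutely continuous part is strictly positive a.e.\ — in particular any measure in the Szeg\"o class — has recurrence coefficients converging to those of the Chebyshev measure, $\alpha_n(d\tilde\lambda)\to 0$ and $\beta_n(d\tilde\lambda)\to \tfrac14$. Substituting back through $\alpha_n(q)=c\,\alpha_n(d\tilde\lambda)+d$ and $\beta_n(q)=c^{2}\beta_n(d\tilde\lambda)$ yields
\[
\lim_{n\to\infty}\alpha_n(q)=d=\frac{G_q(\omega_{max})+G_q(\omega_{min})}{2},\qquad
\lim_{n\to\infty}\beta_n(q)=c^{2}\cdot\tfrac14=\frac{\big(G_q(\omega_{max})-G_q(\omega_{min})\big)^2}{16},
\]
as claimed. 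The only genuinely delicate point is checking that the hypotheses of the cited convergence theorem are met — that the Szeg\"o condition forces the density to be positive a.e.\ and that $d\lambda^q$ has no mass outside $[a,b]$ and no point masses — which is precisely what the first step supplies; the affine covariance of $\alpha_n,\beta_n$ and the invariance of the Szeg\"o integral are routine.\qed
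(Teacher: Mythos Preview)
Your proof is correct and takes essentially the same approach as the paper: the paper's proof simply says ``follows from shifting the support region of the Szeg\"o theorem in \cite{Alex}; for the original theorem see \cite{original szego}'', and what you have written is precisely an explicit unpacking of that sentence --- the affine change of variable from $[G_q(\omega_{min}),G_q(\omega_{max})]$ to $[-1,1]$, the invariance of the Szeg\"o integral under that map, and the classical limits $\tilde\alpha_n\to 0$, $\tilde\beta_n\to\tfrac14$. Your additional care in checking that the Szeg\"o condition forces $M^q>0$ a.e.\ (so that $\operatorname{supp}(d\lambda^q)$ is the full interval and the cited convergence theorem applies) is a welcome detail that the paper leaves implicit.
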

\begin{proof}
Follows from shifting the support region of the Szeg\"o theorem in \cite{Alex}. For the original theorem see \cite{original szego}. \qed
\end{proof}
\begin{corollary}\label{the other main corrollary} If for some $q$ and spectral density $J(\omega)$ the chain mapping belongs to the Szeg\"o class,
the tail of the semi-infinite chain mapping tends to a translational invariant chain. In other words,
\begin{equation}
\lim_{n\rightarrow \infty} E_{p,n}(q)=C_p(q) \quad p=1,\ldots,5\,,
\end{equation}
where $C_p\in\mathbb{R}$ are finite constants for all constant $q\in[0,1]$ and $E_{p,n}$ are defined in theorem (\ref{generalised mapping theorem}). 
\end{corollary}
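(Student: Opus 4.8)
The plan is to read the five coefficients $E_{1n}(q),\ldots,E_{5}(q)$ off the explicit chain Hamiltonian produced in the proof of theorem (\ref{generalised mapping theorem}) and then feed in the convergence statement of theorem (\ref{alpha and beta coefficients convergence}). Comparing the expression for $H_{E,q}$ in Eq. (\ref{transform general map eq}) with Eq. (\ref{generalised enviroment mapping eqarray}), and $H_{int,q}$ with Eq. (\ref{interaction sys env generalised mapping eq}), one has the closed forms
\begin{equation}
E_{1n}(q)=\frac{q}{2}\alpha_n(q)-\frac{q^2}{8},\qquad E_{2n}(q)=\alpha_n(q)+\frac{q}{4},
\end{equation}
\begin{equation}
E_{3n}(q)=q\sqrt{\beta_{n+1}(q)},\qquad E_{4n}(q)=\sqrt{\beta_{n+1}(q)},\qquad E_{5}(q)=\sqrt{\beta_0(q)},
\end{equation}
with $\alpha_n(q)=\alpha_n(d\lambda^q)$ and $\beta_n(q)=\beta_n(d\lambda^q)$ as in that proof. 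Thus each site-dependent coefficient is a fixed continuous function --- an affine map, or the composition of an affine map with $\sqrt{\,\cdot\,}$ --- of a single recursion coefficient $\alpha_n(q)$ or $\beta_{n+1}(q)$, while $E_{5}(q)$ carries no site index (so in the corollary one reads $E_{5,n}(q):=E_{5}(q)$).

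Next I would observe that membership in the Szeg\"o class forces $[\omega_{min},\omega_{max}]$, and hence $[G_q(\omega_{min}),G_q(\omega_{max})]$, to be a bounded interval, since the integrand of Eq. (\ref{sego condition eq}) only makes sense on a bounded interval (cf. the remark after definition (\ref{sego class tdef})). Consequently the right-hand sides of Eq. (\ref{lim alpha x eq}) and Eq. (\ref{lim beta x eq}) are finite real numbers, which I abbreviate $\alpha_\infty(q)$ and $\beta_\infty(q)\geq 0$. Theorem (\ref{alpha and beta coefficients convergence}) gives $\alpha_n(q)\to\alpha_\infty(q)$ and $\beta_n(q)\to\beta_\infty(q)$; a shift of index leaves a limit unchanged, so $\beta_{n+1}(q)\to\beta_\infty(q)$ as well, and by continuity of $\sqrt{\,\cdot\,}$ on $[0,\infty)$ also $\sqrt{\beta_{n+1}(q)}\to\sqrt{\beta_\infty(q)}$.

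Combining the two points and passing to the limit $n\to\infty$ in the formulas above gives the finite constants
\begin{equation}
C_1(q)=\frac{q}{2}\alpha_\infty(q)-\frac{q^2}{8},\quad C_2(q)=\alpha_\infty(q)+\frac{q}{4},\quad C_3(q)=q\sqrt{\beta_\infty(q)},\quad C_4(q)=\sqrt{\beta_\infty(q)},\quad C_5(q)=\sqrt{\beta_0(q)},
\end{equation}
where $\alpha_\infty(q)$ and $\beta_\infty(q)$ are given explicitly by Eqs. (\ref{lim alpha x eq}) and (\ref{lim beta x eq}); for $p=5$ the limit is degenerate because $E_{5}(q)$ is already constant in $n$. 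Each $C_p(q)$ is a finite real number for every fixed $q\in[0,1]$, which is exactly the assertion, and it exhibits the tail of the chain as a translationally invariant chain with couplings $C_1(q),\ldots,C_5(q)$.

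The only genuinely nontrivial input is theorem (\ref{alpha and beta coefficients convergence}) itself (a Szeg\"o-type statement on the asymptotics of the three-term recursion coefficients), which is imported here; everything else is continuity together with the boundedness of the support forced by the Szeg\"o condition, plus the bookkeeping needed to keep the five coefficients and the $q=0,1$ special cases straight. Hence I expect no real obstacle --- the work is entirely in organising what is already available.
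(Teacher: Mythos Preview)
Your proposal is correct and follows exactly the same route as the paper: the paper's proof is the one-liner ``Follows from theorem (\ref{alpha and beta coefficients convergence}) and Eq. (\ref{generalised enviroment mapping eqarray})'', and you have simply unpacked that by writing out the explicit dependence of each $E_{p,n}(q)$ on $\alpha_n(q)$ and $\beta_{n+1}(q)$ and then invoking continuity. The additional remarks about boundedness of the support and the explicit formulas for $C_p(q)$ are welcome elaborations but not new ideas.
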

\begin{proof}
Follows from theorem (\ref{alpha and beta coefficients convergence}) and Eq. (\ref{generalised enviroment mapping eqarray}).\qed
\end{proof}
\begin{definition}
The \textup{moment problem} for a measure $d\mu$ is said to be \textup{determined}, if it is uniquely determined by its moments.
\end{definition}
\begin{theorem}\label{finite interval moment problem theorem}
If a measure $d\mu$ has a finite support interval $I$, then its moment problem is determined.
\end{theorem}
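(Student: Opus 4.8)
The plan is to appeal to the classical Hausdorff moment problem, which is exactly the statement that a measure supported on a bounded interval is uniquely determined by its moments. First I would reduce to the case $I=[0,1]$ by an affine change of variable $x\mapsto (x-a)/(b-a)$, which is a bijection of $[a,b]$ onto $[0,1]$ and transforms moments into finite linear combinations of moments; uniqueness is clearly preserved under such a map. So it suffices to treat a measure $d\mu$ on $[0,1]$ with prescribed moments $C_n(d\mu)$, $n=0,1,2,\dots$.

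The key analytic step is a density argument. Suppose $d\mu_1$ and $d\mu_2$ are two (positive, finite) measures on $[0,1]$ with $C_n(d\mu_1)=C_n(d\mu_2)$ for all $n$. Then $\int_0^1 p(x)\,d\mu_1(x)=\int_0^1 p(x)\,d\mu_2(x)$ for every polynomial $p\in\mathbb{P}$, by linearity. By the Weierstrass approximation theorem, polynomials are dense in $C([0,1])$ with the uniform norm, so $\int_0^1 f\,d\mu_1=\int_0^1 f\,d\mu_2$ for every $f\in C([0,1])$ (here one uses that both measures are finite, so uniform convergence of $p_k\to f$ gives convergence of the integrals). By the Riesz representation theorem, a finite Borel measure on the compact set $[0,1]$ is determined by the functional it induces on $C([0,1])$, hence $d\mu_1=d\mu_2$. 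Alternatively, and more elementarily, one can test against $f$ the indicator-approximating continuous functions to conclude the cumulative distribution functions of $d\mu_1$ and $d\mu_2$ agree at every continuity point, which already forces equality of the measures.

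The only genuine subtlety — and the step I would be most careful about — is the role of \emph{boundedness} of $I$: Weierstrass approximation fails on unbounded intervals, and indeed on $[0,\infty)$ or $\mathbb{R}$ the moment problem can be indeterminate (the classical log-normal counterexample), which is precisely why remark \ref{measure detials} flags that several later theorems are restricted to the finite-interval case. So the compactness of $I$ is used twice: once to invoke Weierstrass/Stone, and once so that finiteness of $C_0(d\mu)$ already guarantees all moments are finite and the Riesz representation argument applies on a compact space. I would simply cite this as the Hausdorff moment theorem (e.g. \cite{Gautschi}, or a standard reference on the classical moment problem) and include the short Weierstrass argument above as the proof. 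No hard calculation is involved; the content is entirely the density of polynomials in $C(I)$ for compact $I$.

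\begin{proof}
This is the classical Hausdorff moment problem. After an affine change of variables we may assume $I=[0,1]$. If $d\mu_1,d\mu_2$ are finite positive measures on $[0,1]$ with the same moments, then $\int_0^1 p\,d\mu_1=\int_0^1 p\,d\mu_2$ for all $p\in\mathbb{P}$; since $\mathbb{P}$ is uniformly dense in $C([0,1])$ by the Weierstrass approximation theorem and both measures are finite, the two measures induce the same bounded linear functional on $C([0,1])$, and hence coincide by the Riesz representation theorem. Boundedness of $I$ is essential: it is what makes $I$ compact and polynomials dense in $C(I)$. See \cite{Gautschi} and references therein.
\end{proof}
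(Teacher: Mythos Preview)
Your proposal is correct and aligns with the paper's own treatment: the paper simply cites \cite{Gautschi} for this classical Hausdorff moment theorem, and you do the same while additionally supplying the standard Weierstrass--Riesz argument that underlies it. There is nothing to correct; your sketch is exactly the proof one finds behind that citation.
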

\begin{proof}
See \cite{Gautschi}.
\end{proof}
\begin{theorem}\label{convergence of bounded mesures}
If for some gappless measure $d\mu(x)$ with finite support interval $I$ the limits
\begin{eqnarray}
\lim_{n \rightarrow\infty}\alpha_n(d\mu)&=&\frac{a+b}{2},\label{lim alpha ab eq}\\
\lim_{n \rightarrow\infty}\beta_n(d\mu)&=&\frac{\left(b-a\right)^2}{16},\label{lim beta ab eq}
\end{eqnarray}
exist,
then the sequence of beta normalised and normalised secondary measures generated from $d\nu$ and $d\mu$ respectively,  converge weakly to
\begin{eqnarray}
\lim_{n\rightarrow\infty}\bar\nu_n(x)&=&\frac{\sqrt{(x-a)(b-x)}}{2\pi},\label{lim beta measure}\\
\lim_{n\rightarrow\infty}\bar\mu_n(x)&=&\frac{8\sqrt{(x-a)(b-x)}}{\pi(b-a)^2}.\label{lim norm measure}
\end{eqnarray}
\end{theorem}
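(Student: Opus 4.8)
The plan is to route everything through the Jacobi matrices and reduce the theorem to identifying a single limiting measure. First I would invoke Theorem~\ref{jacobi matrix theorem}: the Jacobi matrix $\mathcal{J}(d\mu_n)$ of the $n$th term of the normalised secondary sequence is obtained from $\mathcal{J}(d\mu_0)=\mathcal{J}(d\mu)$ by deleting its first $n$ rows and columns, so its entries are $\alpha_{n+j}(d\mu)$ on the diagonal and $\sqrt{\beta_{n+j+1}(d\mu)}$ off the diagonal, $j=0,1,2,\dots$. By the hypotheses (\ref{lim alpha ab eq})--(\ref{lim beta ab eq}), for each fixed $j$ these entries converge as $n\to\infty$ to those of the constant tridiagonal Toeplitz matrix $\mathcal{J}_\infty$ with diagonal $\tfrac{a+b}{2}$ and sub-/super-diagonal $\tfrac{b-a}{4}$. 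I would also record that each $d\mu_n$ is a probability measure ($C_0(d\mu_n)=1$ by Corollary~\ref{normalised sencond coll}) supported in $[a,b]$: it is the normalisation of the secondary measure $d\rho_n$, whose Stieltjes transform, by (\ref{eq sit rel}), is analytic on $\mathbb{C}\setminus[a,b]$ and vanishes at infinity.

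Second, I would identify $\mathcal{J}_\infty$ as the Jacobi matrix of a rescaled Chebyshev measure of the second kind. The monic Chebyshev polynomials of the second kind have constant recursion coefficients $\alpha_n=0$, $\beta_n=\tfrac14$ $(n\ge 1)$ and are orthogonal on $[-1,1]$ with respect to the probability measure $\tfrac{2}{\pi}\sqrt{1-t^2}\,dt$; pushing this forward along $t\mapsto x=\tfrac{b-a}{2}t+\tfrac{a+b}{2}$ multiplies the $\alpha_n$ by $\tfrac{b-a}{2}$ and shifts them by $\tfrac{a+b}{2}$, multiplies the $\beta_n$ $(n\ge1)$ by $\bigl(\tfrac{b-a}{2}\bigr)^2$, and leaves the total mass equal to $1$. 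Hence $\mathcal{J}_\infty=\mathcal{J}(d\mu_\infty)$ with $d\mu_\infty(x)=\bar\mu_\infty(x)\,dx$ and $\bar\mu_\infty(x)=\tfrac{8\sqrt{(x-a)(b-x)}}{\pi(b-a)^2}$ on $[a,b]$, which is exactly the claimed limit (\ref{lim norm measure}).

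The core step — and the one I expect to be the main obstacle — is to upgrade the entrywise convergence $\mathcal{J}(d\mu_n)\to\mathcal{J}_\infty$ to weak convergence $d\mu_n\Rightarrow d\mu_\infty$. I would argue via moments: for each $k$, the moment--Jacobi identity $C_k(d\mu_n)=\langle e_1,\mathcal{J}(d\mu_n)^k e_1\rangle$ (legitimate since $C_0(d\mu_n)=1$) exhibits $C_k(d\mu_n)$ as a fixed polynomial in the finitely many entries $\alpha_0(d\mu_n),\dots,\alpha_k(d\mu_n)$, $\beta_1(d\mu_n),\dots,\beta_k(d\mu_n)$, so $C_k(d\mu_n)\to\langle e_1,\mathcal{J}_\infty^k e_1\rangle=C_k(d\mu_\infty)$. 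Since all $d\mu_n$ are supported in the fixed compact interval $[a,b]$, the family $\{d\mu_n\}$ is tight, and by Theorem~\ref{finite interval moment problem theorem} $d\mu_\infty$ is determined by its moments; hence every weak limit point of $\{d\mu_n\}$ equals $d\mu_\infty$, and therefore $d\mu_n\Rightarrow d\mu_\infty$. The delicate points are the careful justification of the moment--Jacobi identity and of the tightness-plus-determinacy conclusion; the rest of this step is routine.

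Finally, for the beta normalised sequence I would put $d\nu_0=d\nu$ and take $d\mu_0=d\nu/C_0(d\nu)$ as in Lemma~\ref{beta measures proportional to norm measures lemma}, so the hypotheses apply to this $d\mu$. That lemma gives $\bar\nu_n=\beta_n(d\nu_0)\,\bar\mu_n$, and Lemma~\ref{the shift property of alpha beta} gives $\beta_n(d\nu_0)=\beta_n(d\mu_0)\to\tfrac{(b-a)^2}{16}$. Thus for every bounded continuous $f$ one has $\int f\,d\nu_n=\beta_n(d\nu_0)\int f\,d\mu_n\to\tfrac{(b-a)^2}{16}\int f\,d\mu_\infty$, i.e. $\bar\nu_n$ converges weakly to $\tfrac{(b-a)^2}{16}\,\bar\mu_\infty(x)=\tfrac{\sqrt{(x-a)(b-x)}}{2\pi}$, which is (\ref{lim beta measure}).
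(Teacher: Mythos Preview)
Your argument is correct and gives a genuinely different route from the paper's. Both proofs ultimately show moment convergence and then invoke Theorem~\ref{finite interval moment problem theorem}, but they reach the moments differently. The paper first \emph{constructs} the fixed point by solving the functional equation $\bar\rho=A\bar\mu$ directly in terms of the reducer (Eqs.~(\ref{intermed reducer reducer}) and (\ref{eq for rho in temrs of mu})), arriving at the semicircle density by hand; it then uses the explicit moment recurrence (\ref{Rolands R eq}) (rewritten as (\ref{the moments recurrence eq})) coming from the Stieltjes relation, together with the identifications $c_1^n=\alpha_n(d\mu_0)$ and $c_2^n-(c_1^n)^2=\beta_{n+1}(d\mu_0)$, to pin down all limiting moments from the first three. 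Your approach bypasses both steps: Theorem~\ref{jacobi matrix theorem} plus the moment--Jacobi identity $C_k(d\mu_n)=\langle e_1,\mathcal J(d\mu_n)^k e_1\rangle$ give moment convergence directly from the hypotheses, and the limit is identified a priori as the affinely rescaled Chebyshev-2 measure rather than discovered from a fixed-point equation. The paper's route keeps everything internal to the machinery of secondary measures developed in Section~\ref{Secondary measures}; yours is shorter and more structural, at the cost of importing the (standard) spectral/moment identity for Jacobi matrices and the classical Chebyshev recursion. Your tightness argument (all $d\mu_n$ supported in $[a,b]$ since $S_{\bar\mu_n}$ is analytic off $[a,b]$, cf.\ Appendix~\ref{break sidt}) and the final passage to $\bar\nu_n$ via Lemma~\ref{beta measures proportional to norm measures lemma} are the same in spirit as the paper's.
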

\begin{proof}
First we will show that the limit exists by construction (this is to say, by finding the fixed points of the sequence), then we will show that it corresponds to when the limits Eq. (\ref{lim alpha ab eq}) and (\ref{lim beta ab eq}) are accomplished.\\
Substituting Eq. (\ref{intermed reducer reducer}) into Eq. (\ref{eq for rho in temrs of mu}), we find
\begin{equation}\label{ver eq for proof of con}
\varphi(d\rho; x)=2\big[x-C_1(d\mu)\big]-\frac{\varphi(d\mu;x)\bar\rho(x)}{\bar\mu(x)}.
\end{equation}
If the limit exists, then there must be at least one solution to $\bar\rho(x)=A\bar\mu(x)$ where $d\rho(x)=\bar\rho(x)dx$ is the secondary measure associated with the gapless measure $d\mu(x)=\bar\mu(x)dx$ and $A>0$.
 From definitions (\ref{def reducer}) and (\ref{sitjes def}) we see that $\varphi(d\rho;x)=A\varphi(d\mu;x)$ and hence from Eq. (\ref{ver eq for proof of con}) we see that
\begin{equation}
\varphi(d\mu;x)=\frac{x-C_1(d\mu)}{A}.
\end{equation}
Thus substituting this into Eq. (\ref{eq for rho in temrs of mu}), and solving for $\bar\mu(x)$ we find
\begin{equation}
\bar\mu^2(x)=\frac{4A-(x-C_1(d\mu))^2}{4\pi^2A^2}.
\end{equation}
Taking into account the definition of a gapless measure, definition (\ref{meausre defn}), we see that if the limit exits, then it must be bounded. Moreover, it must belong to the interval centered at $C_1(d\mu)$ and of length $4\sqrt{A}$. Making the change of variable $x-C_1(d\mu)=2\sqrt{A}t$ gives us
\begin{eqnarray}\label{-1,1 mesure eq}
d\mu(t)=\bar\mu(t)dt=\frac{2\sqrt{1-t^2}}{\pi}dt,
\end{eqnarray}
with support interval [-1,1]. We now can check that Eq. (\ref{-1,1 mesure eq}) exists by direct substitution. Using definition (\ref{sitjes def}) we find that the Stieltjes Transform of Eq. (\ref{-1,1 mesure eq}) is $S_{\bar\mu}(z)=2[z-\sqrt{z^2-1}]$ and hence from definition (\ref{def reducer}) we find $\varphi(d\mu;x)=4x$. Thus using Eq. (\ref{eq for rho in temrs of mu}) we have $\bar\rho(x)=\bar\mu(x)/4$, hence  the limit exists. By performing the change of variable $t=(2/(b-a))y+(b+a)/(a-b)$, we shift the support of Eq. (\ref{-1,1 mesure eq}) to the general case $[a,b]$ and the measure is now given by Eq. (\ref{lim norm measure}).\\
Now we will proceed to show that if Eq. (\ref{lim alpha ab eq}) and (\ref{lim beta ab eq}) are satisfied, then the sequence of normalised secondary measures converge weakly to Eq. (\ref{lim norm measure}).
By taking the nth moment of the measures in Eq. (\ref{norm 2nd meausres eqs3}) and taking into account lemma (\ref{Rolands R eq}) for $n=0$ , we have
\begin{equation}\label{next line sub eq}
C_n(d\rho_{n+1})=\left[C_2(d\mu_n)-C_1(d\mu_n)^2\right]C_n(d\mu_{n+1})\quad n,m=0,1,2,\ldots\,.
\end{equation}
Writing Eq. (\ref{Rolands R eq}) for a sequence of measures followed by substituting in our expression for $C_n(d\rho_{n+1})$ using Eq. (\ref{next line sub eq}), we find
\begin{equation}\label{the moments recurrence eq}
\left(c_2^s-(c_1^s)^2 \right)c_n^{s+1}=c_{n+2}^s-c_1^sc_{n+1}^s-\sum_{j=0}^{n-1}\left(c_2^s-(c_1^s)^2\right)c_j^{s+1}c_{n-j}^s\quad n,s=0,1,2,\ldots\,,
\end{equation}
where $c_n^s:=C_n(d\mu_s)\quad n,s=0,1,2,\ldots$. Let us define the limit $l_n=\lim_{s\rightarrow\infty}c_n^s$. We can now draw the following conclusions
\begin{itemize}
\item[1)] Due to corollary (\ref{normalised sencond coll}), we have $C_0(d\mu_n)=1\quad n=0,1,2,\ldots\,.$ Therefore $l_0=1$.
\item[2)] From Eq. (\ref{alpah n and 0 eq}) and (\ref{the alpha def eq}) we see that
\begin{equation}\label{the alpah n 1st mo n eq}
\alpha_n(d\mu_0)=\alpha_0(d\mu_n)=\frac{C_1(d\mu_n)}{C_0(d\mu_n)}=c_1^n.
\end{equation}
Hence taking into account assumption Eq. (\ref{lim alpha ab eq}), we have $l_1=(a+b)/2$.
\item[3)] Noting the definition of $d_n$ in theorem (\ref{lemma continued frac}), theorem (\ref{the dn beta theorem}) tells us
\begin{equation}
\beta_{n+1}(d\mu_0)=c_2^n-\left(c_1^n\right)^2\quad n=0,1,2,\ldots\,.
\end{equation}
Hence taking into account assumptions Eq. (\ref{lim alpha ab eq}) and (\ref{lim beta ab eq}), and Eq. (\ref{the alpah n 1st mo n eq}) we have $l_2=(5a^2+6ab+5b^2)/16$.
\end{itemize}
We note that for any $s$, all the moments $c_n^s$ in Eq. (\ref{the moments recurrence eq}) are fully determined by the starting values $c_0^s$, $c_1^s$, and $c_2^s$, hence we conclude from the above points 1), 2) and 3), that under the assumptions Eq.  (\ref{lim alpha ab eq}) and (\ref{lim beta ab eq}),
all $l_n \quad n=0,1,2,\ldots$are finite and determined by
\begin{equation}\label{the moments recurrence  for l eq}
\left(l_2-l_1^2 \right)l_n=l_{n+2}-l_1l_{n+1}-\left(l_2-l_1^2 \right)\sum_{j=0}^{n-1}l_jl_{n-j}\quad n,s=0,1,2,\ldots\,,
\end{equation}
with starting values $l_0=1$, $l_1=(a+b)/2$, and $l_2=(5a^2+6ab+5b^2)/16$.\\ For the case of the normalised measure Eq. (\ref{lim norm measure}), we conclude that $c_n^{s}=c_n^{s+1}\quad n,s=0,1,2,\ldots$ since its secondary normalised
measure is equal to itself. Hence by denoting $m_n=c_n^s\quad n,s=0,1,2,\ldots$ we can write Eq. (\ref{the moments recurrence eq}) for this measure as
\begin{equation}\label{the moments recurrence for m eq}
\left(m_2-(m_1)^2 \right)m_n=m_{n+2}-m_1m_{n+1}-\left(m_2-(m_1)^2\right)\sum_{j=0}^{n-1}m_jm_{n-j}
\end{equation}
$n,s=0,1,2,\ldots\,. $ By direct calculation of the moments of Eq. (\ref{lim norm measure}), we find that $m_0=l_0$, $m_1=l_1$, $m_2=l_2$, and hence by comparing Eq. (\ref{the moments recurrence  for l eq}) with Eq. (\ref{the moments recurrence for m eq})
we conclude that $m_n=l_n\quad n=0,1,2,\ldots\,.$ Thus using lemma (\ref{finite interval moment problem theorem}) we conclude  Eq. (\ref{lim norm measure}) under the assumptions  Eq. (\ref{lim alpha ab eq}) and (\ref{lim beta ab eq}).
 For Eq. (\ref{lim beta measure}), we note that Eq. (\ref{nu lemma eq}) and Eq. (\ref{mu nu beta eq}) tell us $\lim_{n\rightarrow\infty}\bar\nu_n(x)=\lim_{n\rightarrow\infty}\beta_n(d\mu_0)\bar\mu_n(x)$. Hence from Eq. (\ref{lim beta ab eq}) and (\ref{lim norm measure}) we conclude Eq. (\ref{lim beta measure}).\qed
 \end{proof}
\begin{definition}
We will call \textup{terminal spectral density} $J_T(\omega)$ the spectral density to which a sequence of residual spectral densities converge to weakly if such a limit exists: $J_T(\omega)=\lim_{n\rightarrow\infty}J_n(\omega)$. \end{definition}
We are now ready to state our fourth main theorem:
\begin{theorem}\label{SD convergence thorem}
If for the particle or phonon mapping the spectral density $J(\omega)$ belongs to the Szeg\"o class, the sequence of residual spectral densities converge weakly to
the Wigner semicircle distribution \textup{\cite{wigner}}
\begin{equation}\label{wigner semicircle eq}
J_T(\omega)=\frac{\sqrt{(\omega-\omega_{min})(\omega_{max}-\omega)}}{2},
\end{equation}
and
the Rubin model spectral density \textup{\cite{weiss}}
\begin{equation}\label{rubin terminal SD eq}
J_T(\omega)=\frac{\sqrt{(\omega^2-\omega_{min}^2)(\omega_{max}^2-\omega^2)}}{2},
\end{equation}
respectively, $\omega\in[\omega_{min},\omega_{max}]$.
\end{theorem}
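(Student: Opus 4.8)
The plan is to reduce both cases to Theorem~(\ref{convergence of bounded mesures}), applied to the measures $d\lambda^0$ and $d\lambda^1$ of Corollaries~(\ref{particle mapping case}) and (\ref{phonon mapping case}). Recall first that, by the remark following Definition~(\ref{sego class tdef}), membership of $J$ in the Szeg\"o class forces $J$ to be gapless and to have bounded support $[\omega_{min},\omega_{max}]$. Hence $d\lambda^0(x)=(J(x)/\pi)\,dx$ is a gapless measure with finite support interval $[\omega_{min},\omega_{max}]$, and $d\lambda^1(x)=(J(\sqrt{x})/\pi)\,dx$ is a gapless measure with finite support interval $[\omega_{min}^2,\omega_{max}^2]$ --- exactly the setting in which Theorem~(\ref{convergence of bounded mesures}) operates.

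\textbf{Particle case.} Since $G_0(x)=x$, Theorem~(\ref{alpha and beta coefficients convergence}) with $q=0$ gives $\alpha_n(d\lambda^0)\to(\omega_{max}+\omega_{min})/2$ and $\beta_n(d\lambda^0)\to(\omega_{max}-\omega_{min})^2/16$, which are precisely hypotheses~(\ref{lim alpha ab eq}) and (\ref{lim beta ab eq}) with $a=\omega_{min}$, $b=\omega_{max}$. Theorem~(\ref{convergence of bounded mesures}) then says that the sequence of beta normalised measures generated from $d\nu_0:=d\lambda^0$ converges weakly to the density $\sqrt{(x-\omega_{min})(\omega_{max}-x)}/(2\pi)$. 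Comparing the formula for $J_n$ in Theorem~(\ref{theorem for the sequence of sd in the particle m case}) with Definition~(\ref{nu sequence defn}) and using $\bar\nu_0=J/\pi$, one reads off $\bar\nu_n(\omega)=J_n(\omega)/\pi$ for every $n$; multiplying the weak limit by $\pi$ gives $J_n\to\sqrt{(\omega-\omega_{min})(\omega_{max}-\omega)}/2$ weakly, i.e.\ Eq.~(\ref{wigner semicircle eq}). As a consistency check, the total masses $\int J_n\,d\omega=\pi\,C_0(d\nu_n)=\pi\,\beta_n(d\nu_0)$ converge to $\pi(\omega_{max}-\omega_{min})^2/16$, the mass of the limiting density.

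\textbf{Phonon case.} Now $G_1(x)=x^2$, so Theorem~(\ref{alpha and beta coefficients convergence}) with $q=1$ gives $\alpha_n(d\lambda^1)\to(\omega_{max}^2+\omega_{min}^2)/2$ and $\beta_n(d\lambda^1)\to(\omega_{max}^2-\omega_{min}^2)^2/16$. Applying Theorem~(\ref{convergence of bounded mesures}) to $d\lambda^1$ on $[\omega_{min}^2,\omega_{max}^2]$ (with $a=\omega_{min}^2$, $b=\omega_{max}^2$), the beta normalised sequence generated from $d\eta_0:=d\lambda^1$ converges weakly to $\sqrt{(t-\omega_{min}^2)(\omega_{max}^2-t)}/(2\pi)$. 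Using the identification $\bar\eta_n(t)=J_n(\sqrt{t})/\pi$ obtained in the proof of Theorem~(\ref{the partial sd generade from a measure in Bassano et all}) (equivalently, by comparing Corollary~(\ref{the phonon mapping corollary rof psd}) with Definition~(\ref{nu sequence defn})), I push this convergence forward along the continuous map $t\mapsto\sqrt{t}$: the image of $d\eta_n$ is the measure on $[\omega_{min},\omega_{max}]$ with density $(2\omega/\pi)J_n(\omega)$, which therefore converges weakly to the measure with density $(2\omega/\pi)J_T(\omega)$, where $J_T(\omega)=\sqrt{(\omega^2-\omega_{min}^2)(\omega_{max}^2-\omega^2)}/2$. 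Dividing out the continuous positive factor $2\omega/\pi$ yields $J_n\to J_T$ weakly, i.e.\ Eq.~(\ref{rubin terminal SD eq}).

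The analytic content is carried entirely by Theorems~(\ref{alpha and beta coefficients convergence}) and (\ref{convergence of bounded mesures}); the rest is bookkeeping. The one step needing genuine care is the very last one in the phonon case when $\omega_{min}=0$ (massless bosons): there the Jacobian $2\omega/\pi$ of the substitution $t=\omega^2$ degenerates at the left endpoint, so one cannot simply divide it out. This is handled by a standard localisation --- split off a small interval $[0,\varepsilon]$, control the mass that $J_n$ can place there uniformly in $n$ (using the boundedness of the $\beta_n(d\lambda^1)$ together with the fact, visible from Eq.~(\ref{partial SD's phonon case}), that the residual densities vanish near the origin at a rate that does not worsen with $n$), and pass to the limit on $[\varepsilon,\omega_{max}]$ where the weight is bounded away from zero. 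I expect this localisation to be the only real obstacle; in the particle case the argument is a direct application with no such subtlety.
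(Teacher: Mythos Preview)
Your proposal is correct and follows essentially the same route as the paper: identify the residual spectral densities with the beta normalised sequences via $\bar\nu_n(\omega)=J_n(\omega)/\pi$ (particle) and $\bar\eta_n(t)=J_n(\sqrt t)/\pi$ (phonon), invoke Theorem~(\ref{alpha and beta coefficients convergence}) to verify the hypotheses of Theorem~(\ref{convergence of bounded mesures}), and read off the limits. The paper's proof is terser --- it passes from the weak limit of $\bar\eta_n(t)$ to $J_T(\omega)$ by the direct substitution $t=\omega^2$ without discussing the Jacobian or the endpoint degeneracy at $\omega_{min}=0$ --- so your explicit pushforward argument and the localisation remark are added rigour rather than a different strategy.
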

\begin{proof}
From Eq. (\ref{the SDs2 }) and theorem (\ref{convergence of bounded mesures}) we gather that if Eq. (\ref{lim alpha ab eq}) and (\ref{lim beta ab eq}) are satisfied, then
for the phonon case
\begin{equation}
J_T(\omega)=\frac{\sqrt{(\omega^2-a)(b-\omega^2)}}{2}.
\end{equation}
From Eq. (\ref{min max int for phonon}) we have that $a=\omega_{min}^2$ and $b=\omega_{max}^2$.
Now taking into account theorem (\ref{alpha and beta coefficients convergence}), we find Eq. (\ref{rubin terminal SD eq}).
Proceeding in a similar manner, we find Eq. (\ref{wigner semicircle eq}).\qed
\end{proof}
\section{Examples}\label{examples section}
\subsection{power law spectral densities with finite support}
The widely studied power law spectral densities are \cite{SB exp cut,weiss}
\begin{equation}\label{SD for SB no exp cut eq}
J(x)=2\pi\alpha\omega_c^{1-s}x^s,
\end{equation}
with domain $[0, \omega_c]$ and $s>-1$. Let us start by calculating the sequence of residual spectral densities for the case of the particle mapping.\\
From Eq. (\ref{the measure for the particle mapping sd eq}) and (\ref{SD for SB no exp cut eq}) we have
\begin{equation}
M^0(x)=2\alpha\omega_c^{1-s}x^s.
\end{equation}
For simplicity, we will scale out the $\omega_c$ dependency and show how to put it back again afterwards. Let us start by defining the weight function
$m^0_0(x):=\omega_cM^0(x\omega_c)=2\alpha\omega_0^2x^s$ with support interval $[0,1]$. From lemma (\ref{scaling lemma}), we find that $m^0_n(x)=M^0_n(x\omega_c)/\omega_c$ $\quad n=1,2,3,\ldots$. where $m^0_n(x)$ and $M^0_n(x)$ are the sequence of beta normalised measures generated from $m^0_0(x)$ and $M^0(x)$ respectively. Now let us define the weight function
\begin{equation}\label{tilde m 0}
\tilde m^0_0(x)=x^s
\end{equation}
with support interval $[0,1]$. Given that this new measure is proportional to $m^0_0(x)$, from lemma (\ref{scaling constant I lemma}) we conclude that it's sequence of beta normalised measures $\tilde m^0_n(x)$ $\quad n=1,2,3,\ldots\,,$ are equal. Hence we have $M^0_n(x)=\omega_c\tilde m^0_n(x/\omega_c)$ $\quad n=1,2,3,\ldots\,.$ Thus taking into account Eq. (\ref{the measure for the particle mapping sd eq}) and (\ref{def of d theta for particle SD proof eq}), we conclude
\begin{equation}\label{jn for particle sb in tilde m}
J_n(\omega)=\pi M^0_n(\omega)=\omega_c\pi \tilde m^0_n(\omega/\omega_c) \quad n=1,2,3,\ldots\,.
\end{equation}
The real polynomials orthogonal to the weight function $\tilde m^0_0(x)$ are $P_n^s(x):=P_n^{(0,s)}(2x-1)\sqrt{n+s+1} \quad n=0,1,2,\ldots\,;$ which are normalised shifted counterparts of the \textit{Jacobi polynomials} $P_n^{(\alpha,\beta)}(x) \quad n=0,1,2,\ldots\,.$ The reducer for the case $s\geq0$ is given by theorem (\ref{dev measure reducer theorem})
\begin{equation}
\varphi(d\tilde m^0;t)=2\left[\ln\left(\frac{t}{1-t}\right)+s\int_0^{1}x^{s-1}\ln\left|\frac{t-x}{t}\right|dx\right],
\end{equation}
which has analytic solutions when a particular value of $s$ is specified. For example, the first three in the sequence for the ohmic case ($s=1$) are
\begin{eqnarray}
\tilde m^0_1(x)&=&\frac{x}{2\left(\pi^2x^2+[1+x\ln(\frac{1-x}{x})]^2 \right)},\\
\tilde m^0_2(x)&=&\frac{x}{4\pi^2(2-3x)^2x^2+\left[1-6x+(4-6x)x\ln(\frac{1-x}{x}) \right]^2},\\
\tilde m^0_3(x)&=&\frac{6x}{36\pi^2x^2(3-12x+10x^2)^2+\left[30x-16+(18-72x+60x^2)(1+x\ln(\frac{1-x}{x})) \right]^2}.\nonumber
\end{eqnarray}
The chain coefficients are
calculated in \cite{Alex} to be
\begin{eqnarray}
\alpha_n(0)(s)&=&\frac{\omega_c}{2}\left(1+\frac{s^2}{(s+2n)(2+s+2n)}\right)\quad n=0,1,2,\ldots\,,\\
\sqrt{\beta_{n+1}(0)}(s)&=&\frac{\omega_c(1+n)(1+s+n)}{(s+2+2n)(3+s+2n)}\sqrt{\frac{3+s+2n}{1+s+2n}}\quad n=0,1,2,\ldots\,,\quad \\
\end{eqnarray}
 where the $(s)$ is to remind us of their $s$ dependency. System environment coupling coefficient
is \cite{Alex}
\begin{equation}
\sqrt{\beta_0(0)}=\omega_c\sqrt{\frac{2\alpha}{s+1}}.
\end{equation}
Now we will find the sequence of residual spectral densities for the case of the phonon mapping. From Eq. (\ref{M1 eq}) and (\ref{SD for SB no exp cut eq}) we have
\begin{equation}\label{the M1 finally eq}
M^1(x)=2\alpha\omega_c^{1-s}x^{s/2},
\end{equation}
with support $[0, \omega_c^2]$. Proceeding as in the particle mapping case and taking into account Eq. (\ref{the SDs2 }), we find
\begin{equation}\label{jn for phonon sb in tilde m}
J_n(\omega)=\pi M^1_n(\omega^2)=\omega_c^2\pi \tilde m^1_n(\omega^2/\omega_c^2) \quad n=1,2,3,\ldots\,,
\end{equation}
where $ \tilde m^1_n(x) \quad n=1,2,3,\ldots\,,$ are the sequence of beta normalised measures generated from
\begin{equation}\label{tilde m 1}
\tilde m^1_0(x)=x^{s/2},
\end{equation}
with support $[0,1]$. Let us denote $\tilde m^1_0(x)$ and $\tilde m^0_0(x)$ by $\tilde m^1_{0s}(x)$ and $\tilde m^0_{0s}(x)$ respectively to remind us of their $s$ dependency. By comparing eq. (\ref{tilde m 1}) with eq. (\ref{tilde m 0}), we have that $\tilde m^1_{0s}(x)=\tilde m^0_{0s/2}(x)$ for all $s>-1$ and hence  from Eq. (\ref{jn for particle sb in tilde m}) and (\ref{jn for phonon sb in tilde m}) we see that for the Spin-Boson models, there is a simple relationship between the residual spectral densities of the particle and phonon mappings for different $s$ values. For the same example as in the particle case $(s=1)$, we need to evaluate $m^0_n(x)$ $n=1,2,3,\ldots\,,$ for $s=1/2$. The first one in the sequence is
\begin{equation}
\tilde m^1_1(x)=\frac{2\sqrt{x}}{3\left(\pi^2x+(2-2\sqrt{x}\,\text{tanh}^{-1}(\sqrt{x}))^2\right)},
\end{equation}
where $\text{tanh}^{-1}(x)$ is the inverse hyperbolic tangent function. We can readily calculate the chain coefficients from the particle example. By comparing the expression for the $\alpha_n$ and $\beta_n$ coefficients for the weight functions for the particle and phonon mappings, we find
\begin{eqnarray}
\alpha_n(1)(s)&=&\omega_c\alpha_n(0)(s/2)\quad n=0,1,2,\ldots\,,\\ \sqrt{\beta_{n+1}(1)}(s)&=&\omega_c\sqrt{\beta_{n+1}(0)}(s/2)\quad\ n=0,1,2,\ldots\,, \end{eqnarray}
and system environment coupling term to be
\begin{equation}
\sqrt{\beta_0(1)}=2\omega_c\sqrt{\frac{\alpha\omega_c}{s+2}}.
\end{equation}
For both the particle and phonon mappings, it is easy to verify that the chain coefficients and sequence of residual spectral densities will converge because Eq. (\ref{sego condition eq}) is satisfied in both cases as long as $s<\infty$. We also see that the sequence of residual spectral densities calculated in the above examples converge very rapidly to this limit after about the 3rd residual spectral density.
\subsection{The power law spectral densities with exponential cut off}\label{The power law spectral densities with exponential cut offff}
The power law spectral densities with exponential cut off is \cite{SB exp cut,weiss}
\begin{equation}\label{SD for SB exp cut eq}
J(x)=2\pi\alpha\omega_c^{1-s}x^se^{-x/\omega_c},
\end{equation}
with domain $[0, \infty)$ and $s>-1$. Let us start by calculating the sequence of residual spectral densities for the case of the particle mapping.\\

From Eq. (\ref{the measure for the particle mapping sd eq}) and (\ref{SD for SB exp cut eq}) we have
\begin{equation}
M^0(x)=2\alpha\omega_c^{1-s}x^se^{-x/\omega_c}.
\end{equation}
Let us define the measure $m^0(x):=\omega_cM^0(x\omega_c)=2\alpha\omega_0^2x^se^{-x}$ with support interval $[0,\infty)$. From lemma (\ref{scaling lemma}), we find that $m^0_n(x)=M^0_n(x\omega_c)/\omega_c$ $\quad n=1,2,3,\ldots$. where $m^0_n(x)$ and $M^0_n(x)$ are the sequence of beta normalised measures generated from $m^0(x)$ and $M^0(x)$ respectively. We will now define a 3rd measure by $\tilde m^0(x)=x^se^{-x}$. We note that it is proportional to the measure $m^0(x)$ and hence lemma (\ref{scaling constant I lemma}) tells us that its sequence of beta normalised measures are equal. Thus we have the relation
\begin{equation}\label{M0 and bar m0 eq}
M^0_n(x)=\omega_c\tilde m^0_n(x/\omega_c) \quad n=1,2,3,\ldots\,.
\end{equation}
The real polynomials orthogonal to the weight function $\bar m^0(x)$ are called the \textit{associated Laguerre polynomials} $L_n^s(x) \quad n=0,1,2,\ldots\,.$
Their normalised counterparts are $P_n^s(x):=L_n^s(x)\!\,\ n!/\Gamma(n+s+1) \quad n=0,1,2,\ldots\,.$ The reducer in this case is given by theorem (\ref{dev measure reducer theorem})
\begin{equation}
\varphi(d\tilde m^0;x)=2\int_0^{+\infty}(s-t)t^{s-1}e^{-t}\ln\left|\frac{t-x}{x}\right|dt,
\end{equation}
which has analytic values when a particular value of $s$ is specified. For example, if $s$ is integer, we find
\begin{equation}
\varphi(d\tilde m^0;x)=2\left[x^se^{-x}Ei(x)-\sum_{k=0}^{k=s-1}(s-k-1)!\, x^k\right],
\end{equation}
where $Ei(x)$ is the \textit{exponential integral} function\cite{exp func}.\\
From Eq. (\ref{def of d theta for particle SD proof eq}) followed by Eq. (\ref{M0 and bar m0 eq}) we have
\begin{equation}
J_n(\omega)=\pi M^0_n(\omega)=\omega_c\pi \tilde m^0_n(\omega/\omega_c) \quad n=1,2,3,\ldots\,,
\end{equation}
and Eq. (\ref{the nu n eq}) tells us
\begin{equation}
\tilde m^0_n(x)=\frac{ x^se^{-x}}{\left(P_{n-1}^s(x)\frac{\varphi(d\tilde m^0; x)}{2}-Q^s_{n-1}(x)\right)^2+\pi^2 x^{2s}e^{-2x}P_{n-1}^s(x)^2}\quad n=1,2,3,\ldots\,.
\end{equation}
For example, the first two in the sequence for the ohmic case ($s=1$) are
\begin{eqnarray}
 \tilde m^0_1(x)&=&\frac{x e^{x}}{e^{2x}+\pi^2x^2-2x Ei(x)e^{x}+x^2 Ei(x)^2},\\
 \tilde m^0_2(x)&=&\frac{2x e^{x}}{e^{2x}(1-x)^2+\pi^2x^2(x-2)^2-2x(2-3x+x^2) Ei(x)e^{x}+x^2 (x-2)^2Ei(x)^2}.\quad\quad
\end{eqnarray}
We also have analytic expressions for the chain coefficients. From \cite{Alex}, we have that the chain coefficients are
\begin{eqnarray}
\sqrt{\beta_{n+1}(0)}&=&\omega_c\sqrt{(n+1)(n+s+1)}\quad n=0,1,2,\ldots\,,\\
\alpha_n(0)&=&\omega_c(2n+1+s) \quad n=0,1,2,\ldots\,,
\end{eqnarray}
with the system-environment coupling coefficient given by
\begin{equation}
\sqrt{\beta_0(0)}=\omega_c\sqrt{2\alpha\Gamma(s+1)}.
\end{equation}
Similarly, we can calculate the residual spectral densities and chain coefficients for the phonon mapping case.\\
Because the support interval is infinite, the Spin-Boson model with exponential cut off does not belong to the Szeg\"o class for either the particle mapping nor the phonon mapping as can be easily verified from Eq. (\ref{sego condition eq}). This is reflected in the example above as the sequence of residual spectral densities do not converge\footnote{However, we do note that the ratio $\alpha_n(q)/\sqrt{\beta_{n+1}(q)}$  for $q=0$ and $q=1$ does converge. This suggests that there is a universal asymptotic expansion for large $n$ for the $n$th residual spectral density.}. \\
\begin{remark}
We note that for spectral densities for which the corresponding orthogonal polynomials are unknown analytically, we can easily calculate their coefficients using very stable numerical algorithms. See \cite{Javi} or \cite{Alex} and references herein for more details.
\end{remark}

\section{Summary and conclusion}
By developing the method of \cite{Alex,Javi}, we have established a general formalism for mapping an open quantum system of arbitrary spectral density of the form Eq. (\ref{the 1st eq of genre theorem}) onto chain representations. The different versions of chain mappings are generated by choosing particular values of 4 real constants.  This has also provided a very general connection between the theory of open quantum systems and Jacobi operator theory as the semi-infinite chains can be written in terms of Jacobi matrices as can be seen in corollary (\ref{chain in jacobi for conc}). There has been a wealth of research into the properties of Jacobi operator theory \cite{Gerard}, and hence this opens up the theorems of this field to the possibility of being applied to the theory of open quantum systems. Likewise, the new theorem regarding Jacobi matrices (\ref{jacobi matrix theorem}) developed here, could turn out to be useful in the field of Jacobi operator theory.\\
There were two previously known exact chain mappings; the one of \cite{Alex} which is the same as the particle mapping defined here (definition \ref{particle case def}), and the mapping by \cite{Bassano}. We show that the phonon mapping derived in this article (see definition \ref{phonon case def}) has a wider range of validity (remark \ref{gaps}) than the mapping of \cite{Bassano} and prove that they are formally equivalent in the range of validity of \cite{Bassano}; see theorem (\ref{the partial sd Bassano et all equiv theorem}).\\
The concept of embedding degrees of freedom into the system has been around for some time \cite{pseudomodes,Bassano}, however, we see that in chain representations, there is a natural way of shifting the system-environment boundary, that is to say, there is a natural and systematic way of embedding degrees of freedom into the system one by one (or all in one go). To solve quantitatively this problem, we  have to first embark on finding the solution to an old problem in mathematics; an analytical solution to the sequence of secondary measures in terms of the initial measure, it's associated orthogonal polynomials and reducer; see theorem (\ref{1st main theorem}). Not only does this provide a means to find analytical expressions for the sequence of spectral densities corresponding to the new system-environment interaction after embedding environmental degrees of freedom into the system for the particle and phonon mappings in the gapless spectral densities case (theorem (\ref{theorem for the sequence of sd in the particle m case}) and corollary (\ref{the phonon mapping corollary rof psd}) respectively), but it provides physical meaning to this abstract mathematical construct.\\
Using convergence theorems of Szeg\"o and deriving the fixed point in the sequence of secondary measures; we have combined these results to obtain a convergence theorem of the sequence of residual spectral densities for the particle and phonon cases, theorem (\ref{SD convergence thorem})  (or equivalently, the sequence of secondary measures, theorem (\ref{convergence of bounded mesures})). What is more, because the criterion for convergence (definition \ref{sego class tdef}) is solely in terms of the initial spectral density once the desired chain mapping is chosen, the convergence criterion is readily applicable to a particular problem. Furthermore, we see that any unbounded spectral density will not satisfy this criterion. This is reflected in the sequence generated in the examples section for the case of the family of power-law spectral densities with exponential cut off as the sequence does not converge, section \ref{The power law spectral densities with exponential cut offff}.\\
We give two examples where we can find explicit analytical expressions for the chain coefficients and the sequence of residual spectral densities. These are the family of spectral densities used in the Spin-Boson model, which have spectral densities of the form $x^s$ with finite support $I=[0,\omega_c]$ and $x^se^{x/\omega_c}$ with semi-infinite support $I=[0,+\infty)$, where $s>-1$. Furthermore, we show how the residual spectral densities of the mapping for both phonon and particle cases are related for different families of the mapping as demonstrated by the identity $\tilde m^1_{0s}(x)=\tilde m^0_{0s/2}(x)$ and Eq. (\ref{jn for particle sb in tilde m}) and (\ref{SD for SB exp cut eq}).\\
We note that when this convergence of the embeddings is achieved (i.e. inequality (\ref{sego condition eq}) is satisfied), the part of the chain corresponding to the new environment has a very universal property: all its couplings and frequencies become constant (Corollary \ref{alpha and beta coefficients convergence}). This is to say, they are translationally invariant. This suggests a universal way of simulating the environment as all the characteristic features of the environment are now embedded into the system, as discussed for the mapping of \cite{Alex} in \cite{bookchap}. What is not so clear however, is what is the most effective chain mapping for simulating the environment. The general formalism of chain mapping for open quantum systems developed here, paves the way to answering these questions.

\section{Appendix}
\subsection{Proof that the relation between the Stieltjes transforms of two consecutive measures is invalid for gapped measures}\label{break sidt}
In the case of a gapless interval $I=[a,b],$ the relation between the Stieltjes transforms of subsequent measures, Eq (\ref{eq sit rel}): $S_\rho(z)=z-C_1(d\mu)-1/S_\mu(z),\, z\in \mathbb{C}-I$ is well defined because $S_\mu(z)$ does not vanish on $\mathbb{C}-I$:
\begin{proof}
Let $z=x+iy\quad x,y\in \mathbb{R}$;
\begin{equation}
S_\mu(z)=\int_a^b\frac{\mu(t)dt}{x+iy-t}=\int_a^b\frac{(x-t)\mu(t)dt}{(x-t)^2+y^2}-iy\int_a^b\frac{\mu(t)dt}{(x-t)^2+y^2}.
\end{equation}
Hence for $S_\mu(z)=0$ we need $y=0,$ thus $z=x \in \mathbb{C}-I,$ however, under these conditions $(x-t)\mu(t)$ cannot change sign in $t \in I$, therefore $S_\mu(z) \neq 0$. \qed
\end{proof}
In the case of a gapped interval $I=[a,b]\cup [c,d]$, $S_\mu(z)$ the Stieltjes transformation is defined by
\begin{equation}
S_\mu(z)=\int_a^b\frac{\mu(t)dt}{z-t}+\int_c^d\frac{\mu(t)dt}{z-t}.
\end{equation}
Unfortunately, this expression vanishes on a point in $[b,c]$
\begin{proof}
Taking real and imaginary parts of $S_\mu(z)$ followed by setting the imaginary part to zero, we find
\begin{equation}
S_\mu(x)=\int_a^b\frac{(x-t)\mu(t)dt}{(x-t)^2}+\int_c^d\frac{(x-t)\mu(t)dt}{(x-t)^2},
\end{equation}
where $z=x \in \mathbb{R}-I$. However, the numerators of the first and second integrals have a different sign for $x \in [b,c]$. Thus due to the continuity of $S_\mu(x)$ in $]b,c[$ and the existence of the limits $\lim_{x\rightarrow b}S_\mu(x)=+\infty$, $\lim_{x\rightarrow c}S_\mu(x)=-\infty$, there exists a point $z_0$ in $\mathbb{C}-I$ such that $S_\mu(z_0)=0$. \qed
\end{proof}
Consequently, one cannot define a secondary measure as in definition (\ref{def dary measure}) which satisfy theorem (\ref{theorem stiejes}), through this relation because the Stieltjes transform must be holomorphic.
\subsection{Methods for calculating the reducer}\label{Methods for calculating the reducer}
\begin{theorem}
If for a measure $d\mu(x)=\bar \mu(x)dx$, $\bar \mu(x)$ satisfies a Lipschitz condition over its support interval $[a,b]$, then
\begin{equation}\label{the lipships version of reducer eq}
\varphi(d\mu;x)=2\bar\mu(x)\ln\left(\frac{x-a}{b-x}\right)-2\int_a^b\frac{\bar\mu(t)-\bar\mu(x)}{t-x}dt.
\end{equation}
\end{theorem}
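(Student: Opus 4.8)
The plan is to work directly from Definition~(\ref{def reducer}) of the reducer together with Definition~(\ref{sitjes def}) of the Stieltjes transform. First I would combine the two boundary values: since $S_{\bar\mu}(z)=\int_a^b \bar\mu(t)/(z-t)\,dt$, putting the $z=x+i\epsilon$ and $z=x-i\epsilon$ contributions over a common denominator gives
\[
S_{\bar\mu}(x-i\epsilon)+S_{\bar\mu}(x+i\epsilon)=\int_a^b \frac{2(x-t)\,\bar\mu(t)}{(x-t)^2+\epsilon^2}\,dt ,
\]
so that $\varphi(d\mu;x)=\lim_{\epsilon\rightarrow 0^+}\int_a^b \frac{2(x-t)\bar\mu(t)}{(x-t)^2+\epsilon^2}\,dt$. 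Morally this is the principal-value integral $2\,\mathrm{P.V.}\!\int_a^b \bar\mu(t)/(x-t)\,dt$, and it is precisely here that the Lipschitz hypothesis will be needed to tame the singularity at $t=x$.

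The key manipulation is to write $\bar\mu(t)=(\bar\mu(t)-\bar\mu(x))+\bar\mu(x)$ in the numerator, splitting the integral into a \emph{regular} piece $\int_a^b \frac{2(x-t)(\bar\mu(t)-\bar\mu(x))}{(x-t)^2+\epsilon^2}\,dt$ and a \emph{singular} piece $\bar\mu(x)\int_a^b \frac{2(x-t)}{(x-t)^2+\epsilon^2}\,dt$. The singular piece is computed explicitly: the substitution $u=x-t$ turns it into $\ln\frac{(x-a)^2+\epsilon^2}{(b-x)^2+\epsilon^2}$, which tends to $2\ln\frac{x-a}{b-x}$ as $\epsilon\rightarrow 0^+$ for $x\in(a,b)$ (both $x-a$ and $b-x$ being positive there). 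Multiplying by $\bar\mu(x)$ yields the first term on the right-hand side of Eq.~(\ref{the lipships version of reducer eq}).

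For the regular piece I would invoke dominated convergence. The Lipschitz bound $|\bar\mu(t)-\bar\mu(x)|\le L|t-x|$ gives $\bigl|\frac{(x-t)(\bar\mu(t)-\bar\mu(x))}{(x-t)^2+\epsilon^2}\bigr|\le L$ uniformly in $\epsilon$ and $t$, and since $[a,b]$ has finite length the constant $2L$ is an integrable dominating function; the same bound shows the limit integrand $-2(\bar\mu(t)-\bar\mu(x))/(t-x)$ is absolutely integrable. Pointwise, for each $t\neq x$ the integrand converges to that limit, so the regular piece converges to $-2\int_a^b \frac{\bar\mu(t)-\bar\mu(x)}{t-x}\,dt$, which is the second term. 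Adding the two limits gives the identity.

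The only genuine subtlety --- the main obstacle --- is justifying the passage to the limit inside the integral for the regular piece; everything else is an elementary substitution. That subtlety is dispatched entirely by the Lipschitz hypothesis, which is exactly why it appears in the statement. I would also remark in passing that this same estimate shows $\mathrm{P.V.}\!\int_a^b \bar\mu(t)/(x-t)\,dt$ exists for every $x\in(a,b)$, so that $\varphi(d\mu;x)$ is well defined pointwise on the open support interval.
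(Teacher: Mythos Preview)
Your proof is correct and follows essentially the same route as the paper's: start from the definition of the reducer, combine the boundary values to get the regularised kernel $2(x-t)/((x-t)^2+\epsilon^2)$, split $\bar\mu(t)=\bar\mu(x)+(\bar\mu(t)-\bar\mu(x))$, compute the logarithmic term explicitly, and use the Lipschitz bound to control the difference quotient in the regular piece. The only cosmetic difference is that you pass to the limit in the regular piece by dominated convergence (with dominating constant $L$), whereas the paper subtracts off the limiting integrand algebraically and shows the resulting $\epsilon^{2}$-remainder vanishes via an explicit $\arctan$ estimate; both arguments rest on exactly the same uniform bound $|(\bar\mu(t)-\bar\mu(x))/(t-x)|\le L$.
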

\begin{proof}
From definitions (\ref{def reducer}) and (\ref{sitjes def}) we have
\begin{equation}
\varphi(d\mu;x)=\lim_{\epsilon\rightarrow 0^{\!+}}2\int_a^b\frac{(x-t)\bar \mu(t)dt}{(x-t)^2+\epsilon^2}.
\end{equation}
Writing this as
\begin{eqnarray}
\varphi(d\mu;x)&=&\lim_{\epsilon\rightarrow 0^{\!+}}2\int_a^b\frac{(x-t)\bar \mu(x)}{(x-t)^2+\epsilon^2}dt+\lim_{\epsilon\rightarrow 0^{\!+}}2\int_a^b\frac{(x-t)( \bar\mu(t)-\bar\mu(x))}{(x-t)^2+\epsilon^2}dt\quad \\
&=&\lim_{\epsilon\rightarrow 0^{\!+}}2\int_a^b\frac{(x-t)\bar \mu(x)}{(x-t)^2+\epsilon^2}dt-2\int_a^b\frac{\bar\mu(t)-\bar\mu(x)}{t-x}dt\label{second eq in long phi eq}\\
&&-\lim_{\epsilon\rightarrow 0^{\!+}}\int_a^b\left(\frac{2\epsilon^2}{(x-t)^2+\epsilon^2} \right)\frac{\bar\mu(t)-\bar\mu(x)}{x-t}dt.\label{last eq in long phi eq}
\end{eqnarray}
\begin{itemize}
\item[1)]The first integral in line (\ref{second eq in long phi eq}) reduces to
\begin{eqnarray}
\lim_{\epsilon\rightarrow 0^{\!+}}2\int_a^b\frac{(x-t)\bar \mu(x)}{(x-t)^2+\epsilon^2}dt&=&\lim_{\epsilon\rightarrow 0^{\!+}}\bar\mu(x)\left[ \ln((a-x)^2+\epsilon^2)-\ln((b-x)^2+\epsilon^2)\right]\quad \quad\\
&=&2\bar\mu(x)\ln\left(\frac{x-a}{b-x}\right).
\end{eqnarray}
\item[2)] Now imposing the Lipschitz condition on $\bar\mu(x)$:  $|\bar\mu(x)-\bar\mu(t)|\leq K|t-x|$ for some $K$ over interval $[a,b]$, the absolute value of the expression on line (\ref{last eq in long phi eq}) reduces to
\begin{eqnarray}
&&\lim_{\epsilon\rightarrow 0^{\!+}}\int_a^b\left(\frac{2\epsilon^2}{(x-t)^2+\epsilon^2} \right)\frac{|\bar\mu(t)-\bar\mu(x)|}{|t-x|}dt\\
&\leq&\lim_{\epsilon\rightarrow 0^{\!+}}\int_a^b\frac{2K\epsilon^2}{(x-t)^2+\epsilon^2}dt\\
&=&\lim_{\epsilon\rightarrow 0^{\!+}}2K\epsilon\left[\arctan\left(\frac{b-x}{\epsilon}\right)-\arctan\left(\frac{a-x}{\epsilon}\right)\right]\\
&=&0.
\end{eqnarray}
Hence expression on line (\ref{last eq in long phi eq}) vanishes.
\end{itemize}\qed
\end{proof}
\begin{lemma}\label{the lipships c1 lemma}
If $f(x)$ possesses a bounded continuous first derivative  on its domain, then it also satisfies a Lipschitz condition on its domain.
\end{lemma}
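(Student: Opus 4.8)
The plan is to invoke the Mean Value Theorem. Throughout I assume, as is implicit in the way this lemma is used (for weight functions $\bar\mu$ on a support interval), that the domain of $f$ is an interval $I$; the statement is false on disconnected domains, so connectedness is the hypothesis that makes it work. First I would set $K:=\sup_{x\in I}|f'(x)|$, which is finite precisely because $f'$ is assumed bounded.

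Next, fix any two distinct points $x,y\in I$, and assume without loss of generality that $x<y$. Since $f$ has a derivative at every point of $[x,y]\subseteq I$, it is in particular continuous on $[x,y]$ and differentiable on $(x,y)$, so the Mean Value Theorem supplies a point $\xi\in(x,y)$ with
\begin{equation}
f(y)-f(x)=f'(\xi)\,(y-x).
\end{equation}
Taking absolute values and using $|f'(\xi)|\le K$ gives $|f(y)-f(x)|\le K\,|y-x|$, which is exactly the Lipschitz condition with constant $K$ on $I$.

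There is essentially no obstacle here: the continuity of $f'$ is not even needed, mere boundedness suffices to define $K$ and run the argument, so the hypothesis is slightly stronger than necessary. The only point requiring a word of care is the domain being an interval, which I would state explicitly at the outset.
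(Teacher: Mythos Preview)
Your proof is correct. The paper itself does not give an argument at all---it simply cites a textbook (Adams and Fournier, \emph{Sobolev Spaces})---so your Mean Value Theorem argument is in fact more explicit than what appears in the paper. Your remark that the domain must be an interval is well taken and worth stating, and your observation that continuity of $f'$ is superfluous (boundedness alone suffices) is also correct.
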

\begin{proof}
See \cite{class 1 lipships rel}.
\end{proof}
\begin{theorem}\label{dev measure reducer theorem}
If for a measure $d\mu(x)=\bar \mu(x)dx$, $\bar \mu(x)$ possesses a bounded continuous first derivative  on its support interval  $I$ and $\bar \mu(x)$ can be evaluated on the limits $a,b$, then
\begin{equation}
\varphi(d\mu;x)=2\left[\bar\mu(a)\ln\left|\frac{x-a}{x}\right|+\bar\mu(b)\ln\left|\frac{x}{b-x}\right|+\int_a^b\bar\mu'(t)\ln\left|\frac{t-x}{x}\right|dt\right],
\end{equation}
where $\bar \mu'(t)$ is the first derivative of $\bar \mu(t)$.
\end{theorem}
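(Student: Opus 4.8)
The plan is to bootstrap from the Lipschitz form of the reducer, Eq. (\ref{the lipships version of reducer eq}), and then integrate by parts. First I would invoke Lemma (\ref{the lipships c1 lemma}): since $\bar\mu$ has a bounded continuous first derivative on $I=[a,b]$, it satisfies a Lipschitz condition there, so the hypotheses of the preceding theorem hold and we may write
\[
\varphi(d\mu;x)=2\bar\mu(x)\ln\!\left(\frac{x-a}{b-x}\right)-2\int_a^b\frac{\bar\mu(t)-\bar\mu(x)}{t-x}\,dt .
\]
Because $\bar\mu\in C^1$, the integrand $(\bar\mu(t)-\bar\mu(x))/(t-x)$ extends continuously across $t=x$, so the integral is an ordinary Riemann integral rather than a principal value.

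Next I would integrate this integral by parts, taking $u=\bar\mu(t)-\bar\mu(x)$ and $dv=dt/(t-x)$, so that $du=\bar\mu'(t)\,dt$ and $v=\ln|t-x|$. Splitting the interval at $t=x$ to be scrupulous, the boundary terms produced there cancel because $(\bar\mu(t)-\bar\mu(x))\ln|t-x|\to 0$ as $t\to x$, and the new integral $\int_a^b\bar\mu'(t)\ln|t-x|\,dt$ converges since $\bar\mu'$ is bounded and $\ln|\cdot|$ is locally integrable. This gives
\[
\int_a^b\frac{\bar\mu(t)-\bar\mu(x)}{t-x}\,dt=\big(\bar\mu(b)-\bar\mu(x)\big)\ln|b-x|-\big(\bar\mu(a)-\bar\mu(x)\big)\ln|a-x|-\int_a^b\bar\mu'(t)\ln|t-x|\,dt .
\]

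Substituting back and using $\ln\!\big((x-a)/(b-x)\big)=\ln|x-a|-\ln|b-x|$ for $x\in(a,b)$, the terms carrying $\ln|x-a|$ collapse to $2\bar\mu(a)\ln|x-a|$ and those carrying $\ln|b-x|$ to $-2\bar\mu(b)\ln|b-x|$, so that
\[
\varphi(d\mu;x)=2\bar\mu(a)\ln|x-a|-2\bar\mu(b)\ln|b-x|+2\int_a^b\bar\mu'(t)\ln|t-x|\,dt .
\]
The final cosmetic step is to divide each logarithm by $|x|$: this changes the right-hand side by $-2\ln|x|\big(\bar\mu(a)-\bar\mu(b)+\int_a^b\bar\mu'(t)\,dt\big)$, which is zero by the fundamental theorem of calculus, $\int_a^b\bar\mu'(t)\,dt=\bar\mu(b)-\bar\mu(a)$. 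Rewriting $-2\bar\mu(b)\ln\big|(b-x)/x\big|=2\bar\mu(b)\ln\big|x/(b-x)\big|$ then produces exactly the claimed identity.

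The one genuinely delicate point is the integration by parts across the singularity at $t=x$: one must verify that the endpoint contribution at $t=x$ vanishes and that the resulting $\ln|t-x|$ integral is finite, both of which rest on $\bar\mu$ being $C^1$ with bounded derivative. A minor additional subtlety is the position of $x$: the computation is carried out for $x$ in the open support interval (where the reducer is the relevant object); the endpoint cases $x=a,b$ are handled as limits, the residual logarithmic singularities being integrable.
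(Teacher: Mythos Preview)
Your proposal is correct and follows exactly the route the paper indicates: invoke Lemma~(\ref{the lipships c1 lemma}) to justify using Eq.~(\ref{the lipships version of reducer eq}), then integrate the integral term by parts with $u=\bar\mu(t)-\bar\mu(x)$ and $dv=dt/(t-x)$. The paper's own proof is a one-line sketch (``Follows from integrating Eq.~(\ref{the lipships version of reducer eq}) by parts and applying lemma~(\ref{the lipships c1 lemma})''), and you have simply filled in the details it omits, including the careful handling of the vanishing boundary contribution at $t=x$ and the observation that the insertion of the $\ln|x|$ factors is free by the fundamental theorem of calculus.
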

\begin{proof}
Follows from integrating Eq. (\ref{the lipships version of reducer eq}) by parts and
applying lemma (\ref{the lipships c1 lemma}).\qed
\end{proof}
\subsection{Scaling properties of the sequence of beta normalised measures}
\begin{lemma}\label{scaling lemma}
Suppose we have two measures $d\nu^1(x)=\bar\nu^1(x)dx$ and $d\nu^2(x)=\bar\nu^2(x)dx$ with support intervals $I^1,\,I^2$ bounded by $\lambda a,\lambda b$ and $ a, b$ respectively. If they are related by
\begin{equation}
\bar\nu^1(x)=\frac{\bar\nu^2(x/\lambda)}{\lambda},\quad\  \lambda>0,\label{scaling eq}
\end{equation}
then their sequence of beta normalised measures have the relation
\begin{equation}\label{scaling lemma result eq}
\bar\nu^1_n(x)=\lambda\bar\nu^2_n(x/\lambda)\quad n=1,2,3,\ldots\,.
\end{equation}
\end{lemma}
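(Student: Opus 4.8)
The plan is to substitute directly into the closed-form expression for the beta normalised sequence in Definition (\ref{nu sequence defn}), after recording how each of its constituents --- the orthonormal polynomials $P_{n-1}$, the secondary polynomials $Q_{n-1}$, and the reducer $\varphi$ of the seed measure --- transforms under the rescaling (\ref{scaling eq}). The key observation is that (\ref{scaling eq}) is nothing but the change of variable $y=x/\lambda$: for any polynomial $p$ one has $\int p(x)\,d\nu^1(x)=\int p(\lambda y)\,d\nu^2(y)$, so $C_n(d\nu^1)=\lambda^n C_n(d\nu^2)$ and in particular $C_0(d\nu^1)=C_0(d\nu^2)$. Note also that the support interval of $d\nu^1$ is precisely $[\lambda a,\lambda b]$, consistent with the hypothesis, so that every formula below is applied on the correct interval, and gaplessness of $d\nu^2$ is inherited by $d\nu^1$.

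First I would pin down the polynomial scalings. The moment relation together with the orthogonality characterisation of monic polynomials forces $\pi_n(d\nu^1;x)=\lambda^n\pi_n(d\nu^2;x/\lambda)$; evaluating $\langle\pi_n(d\nu^1),\pi_n(d\nu^1)\rangle_{\bar\nu^1}=\lambda^{2n}\langle\pi_n(d\nu^2),\pi_n(d\nu^2)\rangle_{\bar\nu^2}$ gives $a_n(d\nu^1)=\lambda^{-n}a_n(d\nu^2)$ for the leading coefficients of Definition (\ref{def of monic}), and hence the clean relation $P_n(d\nu^1;x)=P_n(d\nu^2;x/\lambda)$. Substituting this into the integral of Definition (\ref{defn of 2nd poly}) and changing variables there yields $Q_n(d\nu^1;x)=\tfrac1\lambda Q_n(d\nu^2;x/\lambda)$.

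Next I would treat the reducer. A change of variable in Definition (\ref{sitjes def}) gives $S_{\bar\nu^1}(z)=\tfrac1\lambda S_{\bar\nu^2}(z/\lambda)$; since $\lambda>0$ the boundary points $x\mp i\epsilon$ map to $x/\lambda\mp i\epsilon/\lambda$ with $\epsilon/\lambda\to0^+$, so Definition (\ref{def reducer}) yields $\varphi(d\nu^1;x)=\tfrac1\lambda\varphi(d\nu^2;x/\lambda)$. Writing $\bar\nu^1_0=\bar\nu^1$ and $\bar\nu^2_0=\bar\nu^2$, the expression $P_{n-1}(d\nu^1_0;x)\tfrac{\varphi(d\nu^1_0;x)}{2}-Q_{n-1}(d\nu^1_0;x)$ equals $\tfrac1\lambda$ times $P_{n-1}(d\nu^2_0;x/\lambda)\tfrac{\varphi(d\nu^2_0;x/\lambda)}{2}-Q_{n-1}(d\nu^2_0;x/\lambda)$, and similarly $\pi^2(\bar\nu^1_0)^2(x)P_{n-1}^2(d\nu^1_0;x)=\tfrac1{\lambda^2}\pi^2(\bar\nu^2_0)^2(x/\lambda)P_{n-1}^2(d\nu^2_0;x/\lambda)$. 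Hence in (\ref{the nu n eq}) the denominator of $\bar\nu^1_n(x)$ is $\tfrac1{\lambda^2}$ times the denominator of $\bar\nu^2_n(x/\lambda)$, while its numerator $\bar\nu^1_0(x)=\tfrac1\lambda\bar\nu^2_0(x/\lambda)$ is $\tfrac1\lambda$ times the numerator of $\bar\nu^2_n(x/\lambda)$; the quotient is $\lambda\,\bar\nu^2_n(x/\lambda)$, which is (\ref{scaling lemma result eq}).

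No serious obstacle arises; the computation is mechanical once the three scaling identities are established. The only points needing care are the bookkeeping of powers of $\lambda$ --- the $P_n$ are scale invariant whereas $Q_n$, $\varphi$ and $S_{\bar\nu}$ each acquire a single factor $\tfrac1\lambda$ --- and the harmless fact that the statement is restricted to $n\ge1$: Eq. (\ref{the nu n eq}) only defines $\bar\nu_n$ for $n\ge1$, and indeed $\bar\nu^1_0(x)=\bar\nu^2_0(x/\lambda)/\lambda$ does not equal $\lambda\,\bar\nu^2_0(x/\lambda)$ unless $\lambda=1$.
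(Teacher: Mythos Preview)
Your proposal is correct and follows essentially the same route as the paper: derive the scaling identities $P_n(d\nu^1;x)=P_n(d\nu^2;x/\lambda)$, $Q_n(d\nu^1;x)=\tfrac1\lambda Q_n(d\nu^2;x/\lambda)$, and $\varphi(d\nu^1;x)=\tfrac1\lambda\varphi(d\nu^2;x/\lambda)$, then substitute into the defining formula (\ref{the nu n eq}). Your version is in fact slightly more explicit than the paper's (which simply states the three scaling relations and then invokes Definition~\ref{nu sequence defn}), since you track the $\lambda$ powers in numerator and denominator and justify the $P_n$ scaling via the monic polynomials and leading coefficients rather than asserting it directly from the inner-product relation.
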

\begin{proof}
Using the definition of inner product, definition (\ref{def inner product}), and relation eq. (\ref{scaling eq}), we find
\begin{equation}
\langle f,g\rangle_{\bar\nu^2}=\int_{a}^{b}f(\lambda t)g(\lambda t)d\nu^1(t).
\end{equation}
Hence we conclude
\begin{equation}\label{the polys nu1nu2 rel eq}
P_n(d\nu^1;x)=P_n(d\nu^2;x/\lambda)\quad n=0,1,2,\ldots\,.
\end{equation}
Using this relation and definition (\ref{defn of 2nd poly}), we find
\begin{equation}\label{the 2nd polys nu1nu2 rel eq}
Q_n(d\nu^1;x)=\frac{Q_n(d\nu^2;x/\lambda)}{\lambda}\quad n=0,1,2,\ldots\,.
\end{equation}
Similarly, we have from definition (\ref{def reducer})
\begin{equation}\label{the reducer nu1nu2 rel eq}
\varphi(d\nu^1;x)=\frac{\varphi(d\nu^1;x/\lambda)}{\lambda}.
\end{equation}
Finally, substituting Eq. (\ref{the polys nu1nu2 rel eq}), (\ref{the 2nd polys nu1nu2 rel eq}), and (\ref{the reducer nu1nu2 rel eq}) into definition (\ref{nu sequence defn}) we arrive at Eq. (\ref{scaling lemma result eq}).\qed
\end{proof}
\begin{lemma}\label{scaling constant I lemma}
The sequence of beta normalised measures generated from a measure $d\nu(x)=\bar\nu(x)dx$ are invariant under the mapping $d\nu(x)\rightarrow \lambda d\nu(x)$, $\lambda>0$ while keeping the support interval $I$ unchanged.
\end{lemma}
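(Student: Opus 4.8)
The plan is to work directly with the closed formula Eq.~(\ref{the nu n eq}) that defines a beta normalised sequence in definition~(\ref{nu sequence defn}), and to check that every power of $\lambda$ introduced by the rescaling $\bar\nu_0(x)\mapsto\lambda\bar\nu_0(x)$ (with the support interval $I$ held fixed) cancels out of $\bar\nu_n(x)$ for $n=1,2,3,\ldots$. For this I first need the transformation laws, under $d\nu_0\mapsto\lambda\,d\nu_0$, of the three ingredients appearing on the right-hand side of Eq.~(\ref{the nu n eq}): the orthonormal polynomials $P_{n-1}(d\nu_0;\cdot)$, the secondary polynomials $Q_{n-1}(d\nu_0;\cdot)$, and the reducer $\varphi(d\nu_0;\cdot)$. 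These are precisely the scaling relations Eqs.~(\ref{for sub beta proof1})--(\ref{for sub beta proof3}) already used inside the proof of lemma~(\ref{beta measures proportional to norm measures lemma}), now with the constant $1/\lambda$ playing the role of $C_0(d\nu_0)$; I will re-derive them in one line each. Note that, unlike in lemma~(\ref{scaling lemma}), there is no change of the variable $x$ here, so all the polynomials, secondary polynomials and reducers remain attached to measures supported on the same interval $I$.

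From Eq.~(\ref{scaled Pn equu}) of lemma~(\ref{the shift property of alpha beta}) one has $P_n(\lambda\,d\nu_0;x)=\lambda^{-1/2}P_n(d\nu_0;x)$. Substituting this together with the rescaled measure $\lambda\,d\nu_0$ into definition~(\ref{defn of 2nd poly}), the factor $\lambda$ coming from the integrating measure and the factor $\lambda^{-1/2}$ coming from $P_n$ combine to give $Q_n(\lambda\,d\nu_0;x)=\lambda^{1/2}Q_n(d\nu_0;x)$. Finally, the Stieltjes transform of definition~(\ref{sitjes def}) is linear in the weight, $S_{\lambda\bar\nu_0}(z)=\lambda S_{\bar\nu_0}(z)$, so definition~(\ref{def reducer}) immediately gives $\varphi(\lambda\,d\nu_0;x)=\lambda\,\varphi(d\nu_0;x)$.

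Inserting the three laws into Eq.~(\ref{the nu n eq}) written for $\lambda\,d\nu_0$: the numerator acquires a factor $\lambda$; in the first term of the denominator the combination $P_{n-1}\tfrac{\varphi}{2}-Q_{n-1}$ becomes $\lambda^{1/2}\bigl(P_{n-1}\tfrac{\varphi}{2}-Q_{n-1}\bigr)$ (since $\lambda^{-1/2}\cdot\lambda=\lambda^{1/2}$ matches the factor carried by $Q_{n-1}$), so its square carries a factor $\lambda$; and the second term $\pi^2\bar\nu_0^2P_{n-1}^2$ acquires a factor $\lambda^2\cdot\lambda^{-1}=\lambda$. Hence the whole denominator scales by $\lambda$, which cancels the $\lambda$ in the numerator, and $\bar\nu_n(x)$ is unchanged for every $n=1,2,3,\ldots$, which is the asserted invariance. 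I do not expect any genuine obstacle: the argument is pure bookkeeping of the scaling exponents $(-\tfrac12,+\tfrac12,+1)$ attached to $(P_{n-1},Q_{n-1},\varphi)$, the only point requiring care being that the support interval is held fixed throughout, which is what separates this statement from lemma~(\ref{scaling lemma}).
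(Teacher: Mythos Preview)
Your proof is correct and follows essentially the same route the paper intends: the paper's own proof is the single line ``Follows with a similar line of reasoning as for lemma~(\ref{scaling lemma})'', and that lemma's proof consists precisely of deriving the transformation laws for $P_n$, $Q_n$ and $\varphi$ and substituting them into definition~(\ref{nu sequence defn}), which is exactly what you do. Your observation that the required scaling relations are a relabelled version of Eqs.~(\ref{for sub beta proof1})--(\ref{for sub beta proof3}) is apt and makes the argument especially transparent.
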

\begin{proof}
Follows with a similar line of reasoning as for lemma (\ref{scaling lemma}).\qed
\end{proof}

\begin{acknowledgement}
We acknowledge Michael Keyl, Christian Gerard, Sebastian Karl Egger for discussions regarding mathematical technicalities, Nicolas Neumann for pointing out typos and Ramil Nigmatullin for help with latex formatting.
This work was supported by the EPSRC CDT on Controlled Quantum Dynamics, the EU STREP project CORNER and the EU Integrating project Q-ESSENCE as well as the Alexander von Humboldt Foundation.
\end{acknowledgement}



\begin{thebibliography}{}
\bibitem{weiss}Weiss, U.: \textit{Quantum Dissipative Systems}, World Scientific, Singapore, 2001
\bibitem{OQS} Breuer, H.P. and Petruccione, F.:\textit{The Theory of Open Quantum Systems}, Oxford University Press, 2002
%
\bibitem{rh} Rivas A. and Huelga, S.F.:\textit{Open Quantum Systems. An Introduction}, Springer Briefs in Physics, Springer, 2011, ISBN: 978-3-642-23353-1
%
\bibitem{Engel} Engel, G.S., Calhoun, T.R., Read, E.L., Ahn, T.K., Caronal, T.M., Cheng, Y.C., Blankenship, R.E. and Fleming, G.R.: Evidence for wavelike energy transfer through quantum coherence in photosynthetic systems, Nature \textbf{446}, 782-786 (2007)
\bibitem{Mohseni} Mohseni, M., Rebentrost, P., Lloyd, S. and Aspuru-Guzik, A.: Environment-assisted quantum walks in photosynthetic energy transfer, J. Chem. Phys. \textbf{129}, 174106 (2008)
\bibitem{MartinSusana}  Plenio, M.B. and Huelga, S.F.: Dephasing-assisted transport: quantum networks and biomolecules, New J. Phys. \textbf{10}, 113019 (2008)
\bibitem{Caruso} Caruso, F., Chin, A.W., Datta, A., Huelga, S.F. and Plenio, M.B.: Highly efficient energy excitation transfer in light-harvesting complexes: The fundamental role of noise-assisted transport, J. Chem. Phys. \textbf{131}, 105106 (2009)
\bibitem{BioGen} Proceedings of the 22nd Solvay Conference on
Chemistry {\em Quantum Effects in Chemistry and Biology}, Procedia Chemistry {\bf 3}, 1 - 366 (2011)
%
\bibitem{BioGen2} Fleming, G.R., Huelga, S.F. and Plenio, M.B.: Focus on quantum effects and noise in biomolecules, New J. Phys. {\bf 13} 115002 (2011)
%
%
\bibitem{SB exp cut} Leggett, A.J., Chakravarty, S., Dorsey, A.T., Fisher, M.P.A., Garg, A. and Zwerger, W.: Dynamics of the dissipative two-state system, Rev. Mod. Phys. \textbf{59}, 1-85 (1987)
\bibitem{Angel} Rivas, A., Huelga, S.F. and Plenio, M.B.: Entanglement and Non-Markovianity of Quantum Evolutions, Phys. Rev. Lett. \textbf{105}, 050403 (2010)
\bibitem{BreuerNon} Laine, E.M., Piilo, J. and Breuer, H.P.: Measure for the non-Markovianity of quantum processes, Phy. Rev. A, \textbf{81}, 062115 (2010)
%
\bibitem{jens} Wolf, M.M., Eisert, J., Cubitt, T.S. and Cirac, J.I.: Assessing non-Markovian dynamics, Phys. Rev. Lett. {\bf 101}, 150402 (2008)
%
%
%
\bibitem{polaron} Silbey, R. and Munn, R.W.: General theory of electronic transport in molecular crystals. I. Local linear
electron-phonon coupling. J. Chem. Phys. {\bf 72}, 2763–2773 (1980)
%
\bibitem{quapi} Makri, N.: Numerical path integral techniques for long-time quantum dynamics of dissipative systems, J. Math. Phys. {\bf 36}, 2430 (1995)\\
Thorwart M., Eckel, J., Reina, J.H., Nalbach, P., Weiss, S.: Enhanced quantum entanglement in the non-Markovian
dynamics of biomolecular excitons, Chem. Phys. Lett. {\bf 478}, 234 (2009)
%
\bibitem{aki} Ishizaki, A., Tanimura, Y.: Quantum Dynamics of System Strongly Coupled to Low-Temperature
Colored Noise Bath: Reduced Hierarchy Equations Approach, J. Phys. Soc. Jap. {\bf 12}, 3131–3134 (2005)\\
Ishizaki, A., Fleming, G.R.: Unified treatment of quantum coherent and incoherent hopping dynamics
in electronic energy transfer: Reduced hierarchy equation approach, J. Chem. Phys. {\bf 130}, 234111 (2009)
%
\bibitem{strunz} Roden, J., Eisfeld, A., Wolff, W., Strunz, W. T.: Influence of Complex Exciton-Phonon Coupling on
Optical Absorption and Energy Transfer of Quantum Aggregates, Phys. Rev. Lett. {\bf 103}, 058301 (2009)
\bibitem{Alex} Chin, A.W., Rivas, A., Huelga, S.F. and Plenio, M.B.: Exact mapping between system-reservoir quantum models and semi-infinite discrete chains using orthogonal polynomials, J Math Phys \textbf{51}, 092109 (2010)

\bibitem{Javi} Prior, J., Chin, A.W., Huelga, S.F., Plenio, M.B.: Efficient Simulation of Strong System-Environment Interactions, Phys. Rev. Lett. \textbf{105}, 050404, (2010)
\bibitem{Bulla} Vojta, M., Tong, N.H., Bulla, R.: Quantum Phase Transitions in the
Sub-Ohmic Spin-Boson Model: Failure of the Quantum-Classical Mapping,
Phys. Rev. Lett. \textbf{94}, 070604 (2005)\\
Bulla, R., Costi, T. and Pruschke, T.: Numerical renormalization group method for quantum
impurity systems, Rev. Mod. Phys. \textbf{80}, 395 (2008)
\bibitem{Bassano} Martinazzo, R., Vacchini, B., Hughes, K.H. and Burghardt, I.: Universal Markovian reduction of brownian particle dynamics, J. Chem. Phys. \textbf{134}, 011101, (2011)
\bibitem{bookchap} Chin, A.W., Huelga, S.F. and Plenio, M.B.: \textit{Chapter 7: Chain representations of open quantum systems and their numerical simulation with time-adapative density matrix renormalisation group methods}, To appear in Semiconductors and Semimetals, Volume 85, Elsevier, London, (2012)
\bibitem{irene new paper} Martinazzo, R., Hughes, K.H. and Burghardt, I.: Unraveling a Brownian particle's memory with effective mode chains, Phys. Rev. E 84, 030102 (2011)
%
\bibitem{pseudomodes} Imamoglu, A.: Stochastic wave-function approach to non-Markovian systems, Phys. Rev. A {\bf 50}, 3650-3653 (1994)\\
Garraway, B.M.: Nonperturbative decay of an atomic system in a cavity, Phys. Rev. A {\bf 55}, 2290-2303 (1997)\\
Breuer, H.-P.: Genuine quantum trajectories for non-Markovian processes, Phys. Rev. A {\bf 70}, 012106 (2004)
%
\bibitem{Gerard} Teschl, G.: \textit{Jacobi Operators and Completely Integrable Nonlinear Lattices}, Mathematical Surveys and Monographs, Volume 72, Amer. Math. Soc., Providence (2000)
%
\bibitem{cristal} John, S.: Strong localization of photons in certain disordered dielectric superlattices, Phys. Rev. Lett. {\bf 58}, 2486 (1987)\\
Yablonovitch, E.: Inhibited Spontaneous Emission in Solid-State Physics and Electronics, Phys. Rev. Lett. {\bf 58}, 2059 (1987)
%
\bibitem{Gautschi} Gautschi, W.: \textit{Orthogonal Polynomials, Computation and Approximation}, (Numerical Mathematics and Scientific Computation), Oxford Science Publications, ISBN 0-19-850672-4
\bibitem{Berg} Berg, C.: Annales de la Facult\'e des Sciences de Toulouse, S\'er 6 Vol. S5 (1996), p.932
\bibitem{Sherman} Shermann, J.: On the numerators of the convergents of the Stieltjes continued fractions, Trans.Amer. Math.Soc. {\bf 35}, 64-87 (1933)
\bibitem{roland} Groux, R.: Sur une mesure rendant orthogonaux les polynomes secondaires, C.R Acad.Sci. Paris, Ser.I. {\bf 345}, 373-376 (2007)
\bibitem{baker2} Backer, G.A., and Graves-Morris, P.: \textit{Pade approximants}, Cambridge University Press, (1965)
\bibitem{S-P} Buchwalter, H., \textit{Theorie Spectrale}, Cours de DEA universite Claude Bernard Lyon, Chapter 1, (1983)
\bibitem{cal and leg} Caldeira, A. and Leggett, A.: Quantum Tunnelling in a Dissipative System, Annals of Physics \textbf{149}, 374-456 (1983)
\bibitem{Be} Berezin, F.A.: \textit{The Method of Second Quantization}, Academic Press New York San Francisco London, (1966) ISBN 0-12-089450-5
\bibitem{Jan} Derezinski, J. and Bruneau, L.: \textit{Pauli-Fierz Hamiltonians defined as quadratic forms}, Rep. Math. Phys \textbf{54}, 169-199 (2004) 
\bibitem{RS} Reed, M. and Simon, B.: \textit{Methods of Mathematical Physics}, Volume 2, sec X.2: Perturbations of self-adjoint operators, ISBN: 0-12-585050-6
\bibitem{Nelson} Nelson, E.: Interation of Nonrelativistic Particles with a Quantized Scalar Field, J. Math. Phys. \textbf{5} 9 (1964)
\bibitem{bogo} Blaizot, J.-P. and Ripka, G.: \textit{Quantum Theory of Finite Systems}. MIT Press (1985)
\bibitem{legett prop} Leggett, A.J.: Quantum tunneling in the presence of an arbitrary linear dissipation mechanism, Phy. Rev. B. {\bf 30}, 1208–1218 (1984)\\
Garg, A., Onuchic, J.N. and Ambegaokar, V.: Effect of friction on electron transfer in biomolecules,
J. Chem. Phys. \textbf{83}, 4491 (1985)
\bibitem{original szego} Szeg\"o, G.: \textit{Orthogonal Polynomials}, American Mathematical Society, Providence,
Rhode Island (1975). See chapter 12
\bibitem{wigner} Arnold, L.: On Wigner's Semicircle Law
for the Eigenvalues of Random Matrices, Journal of Probability theory and related fields, p191-198, Vol 19, issue 3. Springer (1971)
\bibitem{exp func} Abramowitz, M. and Stegun, I.A.: \textit{Handbook of Mathematical Functions with Formulas, Graphs, and Mathematical Tables},  New York: Dover. ISBN 0-486-61272-4
\bibitem{class 1 lipships rel} Adams, R.A. and Fournier, J.F.: \textit{Sobolev Spaces}, Volume 140, Second Edition, Elservier Science Ltd (2003), ISBN: 0-12-044143-8
\end{thebibliography}
\end{document}